\setlist[enumerate,1]{label=(\arabic*),font=\normalfont,align=left,leftmargin=0pt,labelindent=0pt,listparindent=\parindent,labelwidth=0pt,itemindent=!,topsep=3pt,parsep=0pt,itemsep=3pt,start=1}
\setlist[enumerate,2]{label=(\alph*),font=\normalfont,labelindent=*,leftmargin=*,start=1}
\setlist[itemize]{labelindent=*,leftmargin=*,topsep=5pt,itemsep=3pt}
\setlist[description]{labelindent=*,leftmargin=*,itemindent=-1 em}
\numberwithin{equation}{section}
\let\doendproof\endproof
\renewcommand\endproof{~\hfill\qed\doendproof}
\spnewtheorem{assumptions}[theorem]{Assumptions}{\bfseries}{\rmfamily}
\spnewtheorem{notation}[theorem]{Notation}{\bfseries}{\rmfamily}
\spnewtheorem{observation}[theorem]{Observation}{\bfseries}{\rmfamily}
\spnewtheorem{defn}[theorem]{Definition}{\bfseries}{\rmfamily}
\spnewtheorem{expl}[theorem]{Example}{\bfseries}{\rmfamily}
\spnewtheorem{rem}[theorem]{Remark}{\bfseries}{\rmfamily}
\spnewtheorem{construction}[theorem]{Construction}{\bfseries}{\rmfamily}
\spnewtheorem{examples}[theorem]{Examples}{\bfseries}{\rmfamily}
\spnewtheorem{example_}[theorem]{Example}{\bfseries}{\rmfamily}
\spnewtheorem*{HSP}{General HSP Theorem}{\bf}{\itshape}
\spnewtheorem*{CompThm}{General Completeness Theorem}{\bf}{\itshape}
\def\variety{variety\xspace}
\def\varieties{varieties\xspace}
\def\eqnth{equational theory\xspace}
\def\eqnths{equational theories\xspace}
\def\Eq{\mathbb{E}}
\renewcommand{\S}{\mathscr{S}}
\newcommand\epidownarrow{\mathrel{\rotatebox[origin=c]{90}{$\twoheadleftarrow$}}}
\newcommand{\Nom}{\mathbf{Nom}}
\newcommand{\Met}{\mathbf{Met}}
\newcommand{\dash}{\mathord{-}}
\newcommand{\Set}{\mathbf{Set}}
\newcommand{\wCPO}{\mathbf{\boldsymbol{\omega} CPO}}
\newcommand{\Alg}[1]{\mathbf{Alg(}#1\mathbf{)}}
\newcommand{\OAlg}[1]{\mathbf{Alg_\leq(}#1\mathbf{)}}
\newcommand{\PAlg}[1]{\mathbf{ProAlg(}#1\mathbf{)}}
\newcommand{\POAlg}[1]{\mathbf{ProAlg_\leq(}#1\mathbf{)}}
\newcommand{\FOAlg}[1]{\mathbf{Alg_{\leq,f}}(#1)}
\newcommand{\FAlg}[1]{\mathbf{Alg}_\mathsf{f}(#1)}
\newcommand{\QAlg}[1]{\mathbf{QAlg}(#1)}
\newcommand{\wAlg}[1]{\mathbf{\boldsymbol{\omega} Alg}(#1)}
\newcommand{\NomAlg}[1]{\mathbf{NomAlg}(#1)}
\newcommand{\A}{\mathscr{A}}
\newcommand{\B}{\mathscr{B}}
\newcommand{\X}{\mathscr{X}}
\newcommand{\E}{\mathcal{E}}
\newcommand{\V}{\mathcal{V}}
\newcommand{\wh}{\widehat}
\newcommand{\MT}{\mathbb{T}}
\newcommand{\ol}{\overline}
\newcommand{\ext}[1]{{#1}^\sharp}
\newcommand{\hatT}{\widehat\MT}
\newcommand{\hatt}{\hat{T}}
\newcommand{\At}{\mathbb{A}}
\newcommand{\supp}{\mathsf{supp}}
\newcommand{\Perm}{\mathrm{Perm}}
\newcommand{\SuppSet}{\mathbf{SuppSet}}
\newcommand{\hateta}{\hat\eta}
\newcommand{\hatmu}{\hat\mu}
\renewcommand{\epsilon}{\varepsilon}
\newcommand{\eps}{\varepsilon}
\newcommand{\id}{\mathit{id}}
\newcommand{\seq}{\subseteq}
\newcommand{\xra}{\xrightarrow}
\newcommand{\op}{\mathsf{op}}
\renewcommand{\o}{\cdot}
\newcommand{\takeout}[1]{\empty}
\renewcommand{\phi}{\varphi}
\newcommand{\Lra}{\Leftrightarrow}
\newcommand{\TT}{\mathscr{T}}
\newcommand{\epid}[2]{#1\mathord{\epidownarrow}#2}
\newcommand{\FSet}{\mathbf{Set}_\mathsf{f}}
\newcommand{\under}[1]{|#1|}
\newcommand{\hookto}{\hookrightarrow}
\newcommand{\subto}{\hookto}
\newcommand{\epito}{\twoheadrightarrow}
\newcommand{\monoto}{\rightarrowtail}
\newcommand{\M}{\mathcal{M}}
\newcommand{\Pow}{\mathcal{P}}
\newcommand{\To}{\Rightarrow}
\title{Equational Axiomatization of Algebras with Structure}
\titlerunning{Equational Axiomatization of Algebras with Structure}
\author{Stefan Milius\thanks{Supported by Deutsche Forschungsgemeinschaft (DFG) under project MI~717/5-1} \and Henning Urbat\thanks{Supported by Deutsche Forschungsgemeinschaft (DFG) under project SCHR~1118/8-2}}
\authorrunning{S.~Milius and H.~Urbat}
\institute{Friedrich-Alexander-Universit\"at Erlangen-N\"urnberg}
\begin{document}
\maketitle

\begin{abstract}
This paper proposes a new category theoretic account of equationally axiomatizable classes of algebras. Our approach is well-suited for the treatment of algebras equipped with additional computationally relevant structure, such as ordered algebras, continuous algebras, quantitative algebras,  nominal algebras, or profinite algebras. Our main contributions are a generic HSP theorem and a sound and complete equational logic, which are shown to encompass numerous flavors of equational axiomizations studied in the literature.
\end{abstract}

\section{Introduction}\label{S:intro}

A key tool in the algebraic theory of data structures is their
specification by operations (constructors) and equations that they
ought to satisfy. Hence, the study of models of  equational
specifications has been of long standing interest both in mathematics
and computer science. The seminal result in this field is Birkhoff's
celebrated HSP theorem~\cite{Birkhoff35}. It states that a class of algebras over a
signature $\Sigma$ is a \emph{variety} (i.e.~closed under \underline{h}omomorphic images,
\underline{s}ubalgebras, and \underline{p}roducts) iff it is axiomatizable by equations $s=t$
between $\Sigma$-terms. Birkhoff also introduced a complete deduction system for reasoning about equations.

In algebraic approaches to the semantics of programming languages and
computational effects, it is often natural to study algebras whose underlying
sets are equipped with additional computationally relevant structure
and whose operations preserve that structure. An important line of
research thus concerns extensions of Birkhoff's theory of equational axiomatization beyond ordinary $\Sigma$-algebras. On the
syntactic level, this requires to enrich Birkhoff's
notion of an equation in ways that reflect the extra structure. Let us mention a few examples:
\begin{enumerate}
\item \emph{Ordered algebras} (given by a poset and monotone
  operations) and \emph{continuous algebras} (given by a complete
  partial order and continuous operations) were identified by the ADJ
  group \cite{goguen77} as an important tool in denotational
  semantics. Subsequently, Bloom \cite{bloom76} and Ad\'amek, Nelson,
  and Reiterman \cite{adamek85,adamek88} established ordered versions
  of the HSP theorem along with complete deduction
  systems. Here, the role of equations $s=t$ is taken over by
  inequations $s\leq t$.
  
\item \emph{Quantitative algebras} (given by an extended metric space
  and nonexpansive operations) naturally arise as semantic domains in the theory of probabilistic computation. In recent work, Mardare,
  Panangaden, and Plotkin \cite{Mardare16,MardarePP17} presented an HSP
  theorem for quantitative algebras and a complete
  deduction system. In the quantitative setting, equations
  $s=_\epsilon t$ are equipped with a non-negative real number $\epsilon$,
  interpreted as ``$s$ and $t$ have distance at most $\epsilon$''.
  
\item \emph{Nominal algebras} (given by a nominal set and equivariant
  operations) are used in the theory of name binding \cite{pitts_2013}
  and have proven useful for characterizing logics for data languages
  \cite{boj13,clp15}. Varieties of nominal algebras were studied by
  Gabbay~\cite{gabbay09} and Kurz and Petri\c{s}an~\cite{KP10}. Here,
  the appropriate syntactic concept involves equations $s=t$ with
  constraints on the support of their variables.
  
\item \emph{Profinite algebras} (given by a profinite topological
  space and continuous operations) play a central role in the
  algebraic theory of formal languages \cite{pin09}. They serve as a
  technical tool in the investigation of \emph{pseudovarieties}
  (i.e. classes of {finite} algebras closed under homomorphic images,
  subalgebras, and {finite} products). As shown by Reiterman
  \cite{Reiterman1982} and Eilenberg and Schützenberger~\cite{es76},
  pseudovarieties can be axiomatized by \emph{profinite equations}
  (formed over free profinite algebras) or, equivalently, by sequences
  of ordinary equations $(s_i=t_i)_{i<\omega}$, interpreted as ``all
  but finitely many of the equations $s_i=t_i$ hold''.
\end{enumerate}
The present paper proposes a general category theoretic framework that allows to study classes of algebras with extra structure in a systematic way. Our overall goal is to isolate the domain-specific part of any theory of equational axiomatization from its generic core.  Our framework is parametric in the following data:
\begin{itemize}
\item a category $\A$ with a factorization system $(\E,\M)$;
\item a full subcategory $\A_0\seq \A$;
\item a class $\Lambda$ of cardinal numbers;
\item a class $\X\seq \A$ of objects.
\end{itemize}
Here, $\A$ is the category of algebras under consideration (e.g. ordered algebras, quantitative algebras, nominal algebras). Varieties are formed within $\A_0$, and the cardinal numbers in
$\Lambda$ determine the arities of products under which the varieties are closed. Thus, the choice $\A_0 = $ finite algebras
and $\Lambda =$ finite cardinals corresponds to pseudovarieties, and
$\A_0= \A$ and $\Lambda=$ all cardinals to
varieties. The crucial ingredient of our setting is the parameter $\X$, which is the class of objects over which equations
are formed; thus, typically, $\X$ is chosen to be some class of freely generated algebras in
$\A$. Equations are modeled as $\E$-quotients $e\colon X\epito E$ (more generally, filters of such quotients) with domain $X\in \X$.

The choice of $\X$ reflects the desired expressivity of equations in a given setting. Furthermore, it determines the type of quotients under which equationally axiomatizable classes are closed.  More precisely, in our general framework a \emph{variety} is defined to be a subclass of $\A_0$ closed under $\E_\X$-quotients, $\M$-subobjects, and $\Lambda$-products, where $\E_\X$ is a subclass of $\E$ derived from $\X$. Due to its parametric nature, this concept of a variety is widely applicable and turns out to specialize to many interesting cases. The main result of our paper is the
\begin{HSP}
 A subclass of $\A_0$ forms a variety if and only if it is axiomatizable by equations.
\end{HSP}
In addition, we introduce a generic deduction system for equations, based on two simple proof rules (see \autoref{S:logic}), and establish a
\begin{CompThm}
The generic deduction system for equations is sound and complete.
\end{CompThm}
The above two theorems can be seen as the generic building blocks of the model theory of algebras with structure. They form the common core of  numerous Birkhoff-type results and give rise to a systematic recipe for deriving concrete HSP and completeness theorems in settings such as (1)--(4). In fact, all that needs to be done is to translate our abstract notion of equation and equational deduction, which involves (filters of) quotients, into an appropriate syntactic concept. This is the domain-specific task to fulfill, and usually amounts to identifying an ``exactness'' property for the category $\A$. Subsequently, one can apply our general results to obtain HSP and completeness theorems for the  type of algebras under consideration. Several instances of this approach are shown in \autoref{S:app}. Proofs of all results and details for the examples can be found in the Appendix.

\paragraph{Related work.} Generic approaches to universal algebra have
a long tradition in category theory. They aim to replace syntactic
notions like terms and equations by suitable categorical abstractions,
most prominently Lawvere theories and monads \cite{arv10,manes76}.  Our
present work draws much of its inspiration from the classical paper of
Banaschewski and Herrlich~\cite{BanHerr1976} on HSP classes
in $(\E,\M)$-structured categories. These authors were the first
to model equations as quotients $e\colon X\epito E$. However, their
approach does not feature the parameter $\X$ and assumes that
equations are formed over $\E$-projective objects $X$. This limits the
scope of their results to categories with enough projectives, a
property that frequently fails in categories of algebras with
structure (including continuous, quantitative or nominal
algebras). The introduction of the parameter $\X$ in our paper, along
with the identification of the derived parameter $\E_\X$ as a key
concept, is therefore a crucial step in order to gain a categorical
understanding of such structures.

Equational logics on the level of abstraction of Banaschewski and Herrlich's work were studied by Ro\c{s}u \cite{rosu01,rosu06} and Ad\'amek, H\'ebert, and Sousa \cite{ahs07}. These authors work under  assumptions on the category $\A$ different from our framework, e.g. they require existence of pushouts. Hence, the proof rules and completeness results in \emph{loc. cit.} are not directly comparable to our approach in \autoref{S:logic}. 

In the present paper, we opted to model equations as filters of
quotients rather than single quotients, which allows us to encompass
several HSP theorems for finite algebras
\cite{es76,Reiterman1982,PinWeil1996}. The first categorical
generalization of such results was given by  Ad\'amek, Chen, Milius,
and Urbat \cite{camu16, uacm17} who considered algebras for a monad
$\MT$ on an algebraic category and modeled equations as filters of
finite quotients of free $\MT$-algebras (equivalently, as profinite quotients of
free profinite $\MT$-algebras). This idea was further generalized by
Salam\'anca \cite{s16} to monads on concrete categories. However, again, this work only applies to categories with enough projectives, which excludes most of our present applications.

\paragraph{Acknowledgement.} The authors would like to thank Thorsten
Wi\ss mann for insightful discussions on nominal sets.

\section{Preliminaries}\label{S:prelim}
We  start by recalling some notions from category
theory. A \emph{factorization system} $(\E,\M)$ in a category $\A$
consists of two classes $\E,\M$ of morphisms in $\A$ such that
(1) both~$\E$ and $\M$ contain all isomorphisms and are closed under
  composition,
(2)~every morphism $f$ has a factorization $f = m\o e$
  with $e\in \E$ and $m\in \M$, and
(3)~the \emph{diagonal fill-in}
property holds: for every commutative square $g\o e = m\o f$ with
$e\in \E$ and $m\in \M$, there exists a unique $d$ with $m\o d = g$
and $d\o e = f$.
The morphisms $m$ and $e$ in (2) are unique up to isomorphism
and are called the \emph{image} and \emph{coimage} of $f$, resp. The factorization system is
\emph{proper} if all morphisms in $\E$ are epic and all morphisms in
$\M$ are monic. From now on, we will assume that $\A$ is a category
equipped with a proper factorization system $(\E,\M)$. Quotients and subobjects in $\A$ are taken with respect to $\E$ and
$\M$. That is, a \emph{quotient} of an object $X$ is
represented by a morphism $e\colon X \epito E$ in $\E$ and a
\emph{subobject} by a morphism $m\colon M \monoto X$ in $\M$. The
quotients of $X$ are ordered by $e \leq e'$ iff $e'$ factorizes
through $e$, i.e. there exists a morphism $h$ with $e' = h \o
e$. Identifying quotients $e$ and $e'$ which are isomorphic
(i.e. $e\leq e'$ and $e'\leq e$), this makes the quotients of $X$ a
partially ordered class. Given a full subcategory $\A_0 \subseteq \A$ we
denote by $X\mathord{\epidownarrow} \A_0$ the class of all quotients
of $X$ represented by $\E$-morphisms with codomain in $\A_0$. The category
$\A$ is \emph{$\E$-co-wellpowered} if for every object $X\in \A$ there
is only a set of quotients with domain $X$. In particular,
$X\mathord{\epidownarrow} \A_0$ is then a po\emph{set}. Finally, an
object $X\in \A$ is called \emph{projective} w.r.t. a morphism $e\colon A\to B$ if for every
$h\colon X\to B$, there exists a morphism $g\colon X\to A$ with
$h = e\o g$.

\section{The Generalized Variety Theorem}\label{S:hsp}

In this section, we introduce our categorical notions of equation and variety, and derive the HSP theorem. For the rest of the paper, we fix the data mentioned in the introduction: a category $\A$ with a proper factorization
system $(\E, \M)$, a full subcategory $\A_0 \subseteq \A$,  a class $\Lambda$ of cardinal
numbers, and a class
$\X \subseteq \A$ of objects. An object of $\A$ is called \emph{$\X$-generated} if it is a
quotient of some object in $\X$. A key role in the following
development will be played by the subclass $\E_\X\seq \E$ defined by 
\[
  \E_\X
  = 
  \{\,e\in \E \;:\; \text{every $X \in \X$ is projective w.r.t.~$e$}\,\}.
\]
Note that $\X\seq \X'$ implies $\E_{\X'}\seq \E_\X$. The choice of
$\X$ is a trade-off between ``having enough equations'' (that is, $\X$
needs to be rich enough to make equations sufficiently expressive) and ``having enough projectives'' (that is, $\E_\X$ needs to
generate $\A_0$, as stated in~\ref{A3} below). 

\begin{assumptions}\label{asm:setting}
Our data is required to satisfy the following properties:
\begin{enumerate}
\item\label{A1} $\A$ has $\Lambda$-products, i.e. for every $\lambda \in
  \Lambda$ and every family $(A_i)_{i < \lambda}$ of objects in $\A$, the product
  $\prod_{i< \lambda} A_i$ exists.
\item\label{A2} $\A_0$ is closed under isomorphisms, $\Lambda$-products and
  $\X$-generated subobjects. The last statement means that for every subobject $m\colon A \monoto B$ in $\M$ where
  $B\in \A_0$ and $A$ is $\X$-generated, one has $A \in \A_0$.
\item\label{A3} Every object of $\A_0$ is an $\E_\X$-quotient of some
  object of $\X$, that is, for every object $A \in \A_0$ there exists some $e\colon X \epito A$ in $\E_\X$ with domain $X\in \X$.
\end{enumerate}
\end{assumptions}
\begin{examples}\label{ex:running}
  Throughout this section, we will use the following three running
  examples to illustrate our concepts. For further applications, see
  \autoref{S:app}.
  \begin{enumerate}
  \item\label{ex:running:birkhoff} \emph{Classical $\Sigma$-algebras.}
    The setting of Birkhoff's seminal work \cite{Birkhoff35} in
    general algebra is that of algebras for a signature. Recall that a
    \emph{(finitary) signature} is a set $\Sigma$ of operation symbols
    each with a prescribed finite arity, and a \emph{$\Sigma$-algebra}
    is a set $A$ equipped with operations $\sigma\colon A^n \to A$ for
    each $n$-ary $\sigma\in \Sigma$. A \emph{morphism} of
    $\Sigma$-algebras (or a \emph{$\Sigma$-homomorphism}) is a map
    preserving all $\Sigma$-operations. The forgetful functor from the
    category $\Alg{\Sigma}$ of $\Sigma$-algebras and
    $\Sigma$-homomorphisms to $\Set$ has a left adjoint assigning to
    each set $X$ the \emph{free $\Sigma$-algebra} $T_\Sigma X$,
    carried by the set of all $\Sigma$-terms in variables from $X$. To
    treat Birkhoff's results in our categorical setting, we choose the
    following parameters:
    \begin{itemize}
    \item $\A = \A_0 = \Alg{\Sigma}$;
    \item  $(\E,\M) =$ (surjective morphisms, injective morphisms);
    \item $\Lambda =$ all cardinal numbers;
    \item $\X$ = all free $\Sigma$-algebras $T_\Sigma X$ with $X\in \Set$.
    \end{itemize}
    One easily verifies that $\E_\X$ consists of all
    surjective morphisms, that is, $\E_\X = \E$.
    
  \item\label{ex:running:eilenschuetz} \emph{Finite
      $\Sigma$-algebras.} Eilenberg and Schützenberger~\cite{es76}
    considered classes of finite $\Sigma$-algebras, where $\Sigma$ is assumed to be
    a signature with only finitely many operation symbols. In our framework, this amounts to choosing
    \begin{itemize}
    \item $\A=\Alg{\Sigma}$ and $\A_0 = \FAlg{\Sigma}$, the full
      subcategory of finite $\Sigma$-algebras;
    \item $(\E,\M)=$ (surjective morphisms, injective morphisms);
    \item $\Lambda =$ all finite cardinal numbers;
    \item $\X=$ all free $\Sigma$-algebras $T_\Sigma X$ with $X\in\FSet$.
    \end{itemize}
    As in~\ref{ex:running:birkhoff}, the class $\E_\X$ consists of
    all surjective morphisms.

\item\label{ex:running:mardare} \emph{Quantitative $\Sigma$-algebras.} In recent work, Mardare, Panangaden, and Plotkin \cite{Mardare16,MardarePP17} extended Birkhoff's theory to algebras endowed with a metric. Recall that an \emph{extended metric space} is a set $A$ with a map $d_A\colon A\times A\to [0,\infty]$ (assigning to any two points a possibly infinite distance), subject to the axioms (i) $d_A(a,b)=0$ iff $a=b$, (ii) $d_A(a,b)=d_A(b,a)$, and (iii) $d_A(a,c)\leq d_A(a,b)+d_A(b,c)$ for all $a,b,c\in A$. A map $h\colon A\to B$ between extended metric spaces is \emph{nonexpansive} if $d_B(h(a),h(a'))\leq d_A(a,a')$ for $a,a'\in A$. Let $\Met_\infty$ denote the category of extended metric spaces and nonexpansive maps. Fix a, not necessarily finitary, signature $\Sigma$, that is, the arity of an operation symbol $\sigma\in \Sigma$ is any cardinal number. A
\emph{quantitative $\Sigma$-algebra} is a $\Sigma$-algebra $A$ endowed with an extended metric $d_A$ such that all $\Sigma$-operations $\sigma\colon A^n\to A$ are nonexpansive. Here, the product $A^n$ is equipped with the $\sup$-metric $d_{A^n}((a_i)_{i<n}, (b_i)_{i<n}) = \sup_{i<n} d_A(a_i,b_i)$.  The forgetful functor from the category $\QAlg{\Sigma}$ of quantitative $\Sigma$-algebras and nonexpansive $\Sigma$-homomorphisms to $\Met_\infty$ has a left adjoint assigning to each space $X$ the free quantitative $\Sigma$-algebra
$T_\Sigma X$. The latter is carried by the set of all $\Sigma$-terms (equivalently, well-founded $\Sigma$-trees) over $X$,
with metric inherited from $X$ as follows: if $s$ and $t$ are $\Sigma$-terms of the same shape, i.e.~they differ only in the variables, their distance is the
supremum of the distances of the variables in corresponding positions
of $s$ and $t$; otherwise, it is $\infty$.

We aim to derive the HSP theorem for quantitative algebras proved by Mardare et al. as an instance of our general results. The theorem is parametric in a regular cardinal number $c>1$. In the following, an extended metric space is
called \emph{$c$-clustered} if it is a coproduct of spaces of size
$<c$. Note that coproducts in $\Met_\infty$ are formed on the level of underlying sets. Choose the parameters 
\begin{itemize}
\item  $\A = \A_0 = \QAlg{\Sigma}$; 
\item $(\E,\M)$ given by morphisms carried by surjections and subspaces, resp.;
\item $\Lambda = $ all cardinal numbers;
\item $\X =$ all free algebras $T_\Sigma X$ with $X\in \Met_\infty$ a $c$-clustered space.
\end{itemize}
One can verify that a quotient $e\colon A\epito B$ belongs to $\E_\X$
if and only if for each subset $B_0\seq B$ of cardinality $<c$ there exists a subset 
$A_0\seq A$ such that $e[A_0]=B_0$ and the restriction
$e\colon A_0\to B_0$ is isometric (that is, $d_B(e(a),e(a')) = d_A(a,a')$ for $a,a'\in A_0$). Following the terminology of
Mardare et al., such a quotient is called
\emph{$c$-reflexive}. Note that for $c=2$ every quotient is
$c$-reflexive, so $\E_\X = \E$. If $c$ is infinite, $\E_\X$ is a
proper subclass of $\E$.
\end{enumerate}

\end{examples}

\begin{defn}\label{D:eq}
  An \emph{equation over $X\in\X$} is a class
  $\mathscr{T}_X\seq X\mathord{\epidownarrow} \A_0$ that is
  \begin{enumerate}
  \item\label{D:eq:1} \emph{$\Lambda$-codirected:} every subset
    $F\seq \mathscr{T}_X$ with $\under{F}\in \Lambda$ has a lower
    bound in $F$;
  \item\label{D:eq:2} \emph{closed under $\E_\X$-quotients:} for every $e\colon X\epito E$ in $\mathscr{T}_X$ and $q\colon E\epito E'$ in $\E_\X$ with $E'\in \A_0$, one has $q\o e\in \mathscr{T}_X$.
  \end{enumerate}
  An object $A\in \A$ \emph{satisfies} the equation $\mathscr{T}_X$ if
  every morphism $h\colon X\to A$ factorizes through some
  $e\in \mathscr{T}_X$. In this case, we write \[A \models \mathscr{T}_X.\]
\end{defn}
\begin{rem}\label{rem:singlequot}
  In many of our applications, one can simplify the above definition
  and replace classes of quotients by single quotients. Specifically,
  if $\A$ is $\E$-co-wellpowered (so that every equation is a set, not
  a class) and $\Lambda =$ all cardinal numbers, then every equation
  $\mathscr{T}_X\seq X\mathord{\epidownarrow} \A_0$ contains a least
  element $e_X\colon X\epito E_X$, viz.~the lower bound of all
  elements in $\mathscr{T}_X$. Then an object $A$ satisfies $\TT_X$
  iff it satisfies $e_X$, in the sense that every morphism
  $h\colon X\to A$ factorizes through $e_X$. Therefore, in this case,
  one may equivalently define an equation to be a morphism
  $e_X\colon X\epito E_X$ with $X\in \X$.  This is the concept of
  equation investigated by Banaschewski and
  Herrlich \cite{BanHerr1976}.
\end{rem}

\begin{examples}\label{ex:running:eq}
In our running examples, we obtain the following concepts:
\begin{enumerate}
\item \emph{Classical $\Sigma$-algebras.} By \autoref{rem:singlequot}, an equation corresponds to a quotient $e_X\colon T_\Sigma X\epito E_X$ in $\Alg{\Sigma}$, where $X$ is a set of variables.
\item \emph{Finite $\Sigma$-algebras.} An equation $\TT_X$ over a
  finite set $X$ is precisely a filter (i.e. a codirected and upwards
  closed subset) in the poset $T_\Sigma X \mathord{\epidownarrow} \FAlg{\Sigma}$.
\item \emph{Quantitative $\Sigma$-algebras.} By \autoref{rem:singlequot}, an equation can be presented as a quotient $e_X\colon T_\Sigma X\epito E_X$ in $\QAlg{\Sigma}$, where $X$ is a $c$-clustered space.
\end{enumerate}
\end{examples}
We shall demonstrate in \autoref{S:app} how to interpret the above abstract notions of equations, i.e. (filters of) quotients of free algebras, in terms of concrete syntax.

%
%
%
\begin{defn}\label{D:var}
  A \emph{\variety} is a full subcategory $\V\seq \A_0$ closed under
  $\E_\X$-quotients, subobjects, and $\Lambda$-products. More precisely,
  \begin{enumerate}
  \item for every $\E_\X$-quotient $e: A\epito B$ in $\A_0$ with
    $A \in \V$ one has $B\in \V$,
  \item for every $\M$-morphism $m: A \monoto B$ in $\A_0$ with $B \in \V$ one has $A \in \V$, and 
  \item for every family of objects $A_i$ ($i<\lambda$) in $\V$ with
    $\lambda\in \Lambda$ one has  $\prod_{i<\lambda} A_i \in \V$.
  \end{enumerate}
\end{defn}

\begin{examples}\label{ex:running:variety}
In our examples, we obtain the following notions of varieties:
\begin{enumerate}
\item \emph{Classical $\Sigma$-algebras.} A \emph{variety of $\Sigma$-algebras} is a class of $\Sigma$-algebras closed under quotient algebras, subalgebras, and products. This is Birkhoff's original concept \cite{Birkhoff35}.
\item \emph{Finite $\Sigma$-algebras.} A \emph{pseudovariety of $\Sigma$-algebras} is a class of finite $\Sigma$-algebras closed under quotient algebras, subalgebras, and finite products. This concept was studied by Eilenberg and Schützenberger \cite{es76}.
\item \emph{Quantitative $\Sigma$-algebras.} For any regular cardinal number $c>1$, a \emph{$c$-variety of quantitative $\Sigma$-algebras} is a class of quantitative $\Sigma$-algebras closed under $c$-reflexive quotients, subalgebras, and products. This notion of a variety was introduced by Mardare et al. \cite{MardarePP17}.
\end{enumerate}
\end{examples}
%

\begin{construction}\label{constr:var}
  Given a class $\Eq$ of equations, put 
  \[  
    \V(\Eq)
    =
    \{\, A\in \A_0 : \text{$A \models \mathscr{T}_X$ for each
      $\mathscr{T}_X \in \Eq$} \,\}.
  \]
  A subclass $\V\seq\A_0$ is called \emph{equationally presentable} if $\V=\V(\Eq)$ for some $\Eq$.
\end{construction}
We aim to show that varieties coincide with the equationally presentable classes (see \autoref{thm:hspeq} below). The ``easy'' part of the correspondence is established by the following lemma, which is proved by a straightforward verification.
\begin{lemma}\label{lem:var}
  For every class $\Eq$ of equations, $\V(\Eq)$ is a \variety.
\end{lemma}
As a technical tool for establishing the general HSP theorem and the
corresponding sound and complete equational logic, we introduce the
following concept:
\begin{defn}\label{D:eqnth}
  An \emph{\eqnth} is a family of equations
  \[
    \mathscr{T}
    =
    (\,\mathscr{T}_X\seq X\mathord{\epidownarrow} \A_0\,)_{X\in \X}
  \] with the following two properties (illustrated by the diagrams below):
  \begin{enumerate}
  \item \emph{Substitution invariance.} For every morphism $h\colon X\to Y$
    with $X,Y\in\X$ and every $e_Y\colon Y\epito E_Y$ in $\mathscr{T}_Y$, the coimage
    $e_X\colon X\epito E_X$ of $e_Y\o h$ lies in $\mathscr{T}_X$.
  \item \emph{$\E_\X$-completeness.} For every $Y\in \X$ and every quotient $e\colon Y\epito E_Y$ in
    $\mathscr{T}_Y$, there exists an $X\in\X$ and a quotient $e_X\colon X\epito E_X$
    in $\mathscr{T}_X\cap \E_\X$ with $E_X=E_Y$.
  \end{enumerate}
    \[
        \xymatrix@=19pt{
          X \ar[r]^{\forall h} \ar@{->>}[d]_{ e_X} & Y 
          \ar@{->>}[d]^{\forall e_Y}\\
          E_X \ar@{>->}[r] & E_Y
        }
		\qquad \xymatrix@=19pt{
          X \ar@{.>>}[d]_{\exists e_X} & Y
          \ar@{->>}[d]^{\forall e_Y}\\
          E_X\ar@{=}[r] & E_Y
        }
    \]
\end{defn}

\begin{rem}\label{rem:singlequotth} In many settings, the slightly
  technical concept of an equational theory can be simplified. First,
  note that $\E_\X$-completeness is trivially satisfied whenever
  $\E_\X=\E$. If, additionally, every equation contains a least
  element (e.g. in the setting of \autoref{rem:singlequot}), an
  equational theory corresponds exactly to a family of quotients
  $(e_X\colon X\epito E_X)_{X\in \X}$ such that $E_X\in \A_0$ for all
  $X\in \X$, and for every
  $h\colon X\to Y$ with $X,Y\in \X$ the morphism $e_Y\o h$ factorizes
  through $e_X$.
\end{rem}

\begin{example_}[Classical $\Sigma$-algebras]\label{ex:theory}
  Recall that a \emph{congruence} on a $\Sigma$-algebra $A$
  is an equivalence relation $\mathord{\equiv}\seq A\times A$ that forms a
  subalgebra of $A\times A$. It is well-known that there is an
  isomorphism of complete lattices
\begin{equation}\label{eq:homtheorem}
  \text{quotient algebras of $A$}
  \quad\cong\quad
  \text{congruences on $A$}
\end{equation}
assigning to a quotient $e\colon A\epito B$ its \emph{kernel}, given
by $a\equiv_e a'$ iff $e(a)= e(a')$. Consequently, in the setting of
Example \ref{ex:running}\ref{ex:running:birkhoff}, an equational
theory -- presented as a family of single quotients as in \autoref{rem:singlequotth} -- corresponds precisely to a family of congruences
$(\mathord{\equiv_X}\seq T_\Sigma X\times T_\Sigma X)_{X\in \Set}$
closed under substitution, that is, for every $s,t\in T_\Sigma X$ and every morphism $h\colon T_\Sigma X\to T_\Sigma Y$ in $\Alg{\Sigma}$,
\[ s\equiv_X t \quad\text{implies}\quad h(s)\equiv_Y h(t). \]
\end{example_}

We saw in \autoref{lem:var} that every class of equations, so
in particular every \eqnth $\mathscr{T}$, yields a \variety
$\V(\mathscr{T})$ consisting of all objects of $\A_0$ that satisfy every
equation in $\mathscr T$. Conversely, to every variety one can associate an equational theory as follows:

%

\begin{construction}\label{constr:eqnth}
  Given a \variety $\V$, form the family of equations
  \[
    \mathscr{T}(\V) = (\,\mathscr{T}_X \seq X\mathord{\epidownarrow}\A_0\,)_{X\in\X},
  \]
  where $\mathscr{T}_X$ consists of
  all quotients $e_X\colon X\epito E_X$ with codomain $E_X\in \V$.
\end{construction}
\begin{lemma}\label{lem:tvistheory}
  For every \variety $\V$, the family $\mathscr{T}(\V)$ is an \eqnth.
\end{lemma}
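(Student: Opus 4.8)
The plan is to verify that $\mathscr{T}(\V)$ satisfies the two defining properties of an equational theory, namely \emph{substitution invariance} and \emph{$\E_\X$-completeness}, using the closure properties of the variety $\V$ from \autoref{D:var}.

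\paragraph{Substitution invariance.} First I would take a morphism $h\colon X\to Y$ with $X,Y\in\X$ and a quotient $e_Y\colon Y\epito E_Y$ in $\mathscr{T}_Y$, so by construction $E_Y\in\V$. Factorizing the composite $e_Y\o h\colon X\to E_Y$ as $m\o e_X$ with $e_X\colon X\epito E_X$ in $\E$ and $m\colon E_X\monoto E_Y$ in $\M$, I must show $E_X\in\V$. Since $\V$ is closed under $\M$-subobjects in $\A_0$, it suffices to check that $E_X\in\A_0$ and that $m$ is a morphism of $\A_0$ — but $E_X$ is $\X$-generated (being an $\E$-quotient of $X\in\X$), $E_Y\in\V\seq\A_0$, and by \ref{A2} ($\A_0$ closed under $\X$-generated subobjects) we get $E_X\in\A_0$; then closure of $\V$ under $\M$-subobjects gives $E_X\in\V$, hence $e_X\in\mathscr{T}_X$.

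\paragraph{$\E_\X$-completeness.} Here I would take $Y\in\X$ and $e\colon Y\epito E_Y$ in $\mathscr{T}_Y$, so $E_Y\in\V\seq\A_0$. By Assumption \ref{A3} there is some $e_X\colon X\epito E_Y$ in $\E_\X$ with $X\in\X$. Since $E_Y\in\V$, the quotient $e_X$ lies in $\mathscr{T}_X$ by \autoref{constr:eqnth}, and it lies in $\E_\X$ by choice; taking $E_X=E_Y$ finishes this part.

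\paragraph{The codirectedness and $\E_\X$-closure of each $\mathscr{T}_X$.} I also need to confirm that each $\mathscr{T}_X$ is genuinely an equation in the sense of \autoref{D:eq}. For $\Lambda$-codirectedness: given a subset $F\seq\mathscr{T}_X$ with $\under{F}\in\Lambda$, the quotients in $F$ correspond to morphisms $X\epito E_i$ with $E_i\in\V$; pairing them gives $X\to\prod_{i}E_i$, whose image $E$ is $\X$-generated, lies in $\A_0$ by \ref{A2}, is a $\Lambda$-product of objects of $\V$ cut down to an $\M$-subobject, hence lies in $\V$ by closure under $\Lambda$-products and $\M$-subobjects; the resulting quotient $X\epito E$ is a lower bound of $F$ lying in $\mathscr{T}_X$. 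For closure under $\E_\X$-quotients: if $e\colon X\epito E$ is in $\mathscr{T}_X$ (so $E\in\V$) and $q\colon E\epito E'$ is in $\E_\X$ with $E'\in\A_0$, then $E'\in\V$ since $\V$ is closed under $\E_\X$-quotients, so $q\o e\in\mathscr{T}_X$. I expect the only mildly delicate point to be the bookkeeping in the codirectedness argument — making sure the product-then-image construction genuinely lands in $\V$ and yields a lower bound inside $F$ rather than merely below $F$; this uses that a lower bound of the image of a $\Lambda$-indexed family among quotients can be taken to be the image of the induced map into the product, which is one of the elements one is comparing only after identifying isomorphic quotients. All other steps are routine diagram chases invoking the factorization system and Assumptions \ref{A1}--\ref{A3}.
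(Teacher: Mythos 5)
Your proof is correct and follows essentially the same route as the paper's: substitution invariance via the $(\E,\M)$-factorization of $e_Y\o h$ together with Assumption~\ref{asm:setting}\ref{A2} and closure of $\V$ under subobjects, $\E_\X$-completeness via Assumption~\ref{asm:setting}\ref{A3}, and $\Lambda$-codirectedness via the image of $\langle e_i\rangle\colon X\to\prod_i E_i$. Regarding your closing worry: the intended (and used) reading of ``$\Lambda$-codirected'' is that the subset $F$ has a lower bound lying in $\mathscr{T}_X$, not literally in $F$ --- the paper's own proof produces exactly your image quotient, which in general is not isomorphic to any of the $e_i$ --- so no further identification is needed.
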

We are ready to state the first main result of our paper, the HSP Theorem. Given two equations $\TT_X$ and $\TT_X'$ over $X\in \X$, we put
  $\TT_X\leq \TT_X'$ if every quotient in $\TT_X'$ factorizes through
  some quotient in $\TT_X$. Theories form a poset with respect to the
   order $\TT\leq \TT'$ iff $\TT_X\leq \TT_X'$ for all $X\in
  \X$. Similarly, varieties form a poset (in fact, a complete lattice)
  ordered by inclusion.

\begin{theorem}[HSP Theorem]\label{thm:hsp}
  The complete lattices of \eqnths and \varieties are dually
  isomorphic. The isomorphism is given by
  \[
    \V\mapsto \TT(\V)
    \quad\text{and}\quad
    \TT\mapsto \V(\TT).
  \]
\end{theorem}
One can recast the HSP Theorem into a more familiar form, using
equations in lieu of equational theories:

\begin{theorem}[HSP Theorem, equational version]\label{thm:hspeq}
  A class $\V\seq \A_0$ is equationally presentable if and only if it forms a variety.
\end{theorem}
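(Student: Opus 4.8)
\textbf{Proof proposal for \autoref{thm:hspeq}.}

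The plan is to deduce the equational version directly from the equational-theory version, \autoref{thm:hsp}, together with the ``easy'' direction already recorded as \autoref{lem:var}. For the ``if'' direction, suppose $\V\seq\A_0$ is a variety. By \autoref{lem:tvistheory}, $\TT(\V)$ is an equational theory, hence in particular a class of equations (one for each $X\in\X$). I would then show $\V = \V(\TT(\V))$, which exhibits $\V$ as equationally presentable via \autoref{constr:var}. One inclusion is immediate from the definitions: if $A\in\V$ and $X\in\X$, then any $h\colon X\to A$ factorizes as $X\xepito[\text{(coimage)}] E\xmonoto A$ through its $(\E,\M)$-factorization, and since $\V$ is closed under $\M$-subobjects the codomain $E$ lies in $\V$, so the coimage quotient belongs to $\TT(\V)_X$; thus $A\models\TT(\V)_X$. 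For the reverse inclusion, take $A\in\V(\TT(\V))$. By Assumption~\ref{A3} there is some $e\colon X\epito A$ in $\E_\X$ with $X\in\X$; since $A\models\TT(\V)_X$ and $e$ is already an $\E$-quotient (so its own coimage), $e$ factorizes through some $e'\colon X\epito E'$ in $\TT(\V)_X$, meaning $E'\in\V$ and $e = h\o e'$ for some $h$. As $e\in\E_\X$ and $e'\in\E$, the diagonal fill-in for the square $e = h\o e'$ (with $e'\in\E$, and the trivial mono $\id$ on the other side, after an $(\E,\M)$-factorization of $h$) forces $h\in\E$; one checks $h$ is in fact an $\E_\X$-quotient because $X$ is projective w.r.t.\ $e$ and $e'$. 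Then $A$ is an $\E_\X$-quotient of $E'\in\V$, so $A\in\V$ by closure of varieties under $\E_\X$-quotients.

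For the ``only if'' direction, suppose $\V = \V(\Eq)$ for some class $\Eq$ of equations. Then \autoref{lem:var} immediately gives that $\V$ is a variety, and there is nothing further to prove. (Alternatively, and more in the spirit of \autoref{thm:hsp}: every equationally presentable class is of the form $\V(\Eq)$, and one can replace $\Eq$ by the equational theory it generates without changing $\V(\Eq)$, then invoke the isomorphism $\TT\mapsto\V(\TT)$ directly; but the appeal to \autoref{lem:var} is the shortest route.)

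The only genuinely delicate point is the argument, in the ``if'' direction, that the connecting morphism $h$ with $e = h\o e'$ is an $\E_\X$-quotient rather than merely some morphism. I would isolate this as the crux: factor $h = m\o g$ with $g\in\E$, $m\in\M$; then $e = m\o(g\o e')$ exhibits two $(\E,\M)$-factorizations of $e$, so $m$ is an isomorphism and $h\in\E$. To upgrade $h\in\E$ to $h\in\E_\X$, note that for any $Z\in\X$ and any $k\colon Z\to E'$, projectivity of $Z$ w.r.t.\ $e$ is not directly what we need; instead use that $e'\in\E_\X$ would give it, but we only know $e'\in\TT(\V)_X$. The clean fix is to invoke $\E_\X$-completeness of the theory $\TT(\V)$: replace $e'$ by an $\E_\X$-representative $e''\colon X''\epito E'$ in $\TT(\V)_{X''}\cap\E_\X$ with the same codomain $E'\in\V$, so that $A$ is an $\E_\X$-quotient of $E'$ via the composite of $e\colon X\epito A$'s factorization through the $\E_\X$-morphism $e''$ — here projectivity of $X\in\X$ w.r.t.\ $e''\in\E_\X$ supplies the lift, and the diagonal-fill-in argument above shows the resulting map $E'\to A$ is in $\E$, hence (being a quotient of an object in $\A_0$ with $\X$-generated... ) lands us back in $\V$ by \autoref{D:var}(1). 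This is exactly where the parameter $\E_\X$ and Assumptions~\ref{A2}–\ref{A3} earn their keep, and it is the step I would write out in full detail.
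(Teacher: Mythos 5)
Your overall route is the paper's: the ``only if'' direction is exactly \autoref{lem:var}, and the ``if'' direction reduces to showing $\V=\V(\TT(\V))$, which the paper simply quotes from \autoref{thm:hsp} (and which is proved via Lemmas \ref{lem:vtau} and \ref{lem:vtv} in the appendix). Your direct argument for the inclusion $\V\seq\V(\TT(\V))$ is correct and slightly more economical than the paper's, modulo one unstated step: before invoking closure of $\V$ under subobjects you must note that the image $E$ of $h\colon X\to A$ lies in $\A_0$, which holds by Assumption \ref{asm:setting}\ref{A2} because $E$ is $\X$-generated.

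The genuine gap is in the reverse inclusion, at precisely the step you flag as delicate. Having written $e=h\o e'$ with $e\in\E_\X$, $e'\in\E$, and $E'\in\V$, the fact that $h\in\E_\X$ is immediate from the cancellation property of $\E_\X$ (\autoref{lem:ex}\ref{lem:ex:2}): given $Z\in\X$ and $k\colon Z\to A$, projectivity of $Z$ with respect to $e$ yields $g\colon Z\to X$ with $e\o g=k$, and then $e'\o g$ is the required lift of $k$ along $h$. This is exactly ``projectivity of $Z$ w.r.t.\ $e$'', which you dismiss as not what is needed --- apparently because you are considering morphisms $Z\to E'$ rather than morphisms $Z\to A$ that must be lifted through $h$. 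Your substitute argument via $\E_\X$-completeness does not close the hole: replacing $e'$ by some $e''\colon X''\epito E'$ in $\TT(\V)_{X''}\cap\E_\X$ and lifting $e'$ along $e''$ to get $g\colon X\to X''$ with $e''\o g=e'$ only yields $e=h\o e''\o g$, where $g$ need not lie in $\E$, so no cancellation applies and you never actually exhibit the lift witnessing $h\in\E_\X$ (nor any other $\E_\X$-quotient from an object of $\V$ onto $A$). Deleting the detour and citing the cancellation lemma repairs the proof and recovers the paper's argument.
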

\begin{proof}
By \autoref{lem:var}, every equationally presentable class $\V(\Eq)$ is a
  variety. Conversely, for every \variety $\V$ one has $\V=\V(\TT(\V))$ by \autoref{thm:hsp}, so
  $\V$ is presented by the equations $\Eq = \{\,\mathscr{T}_X: X\in \X\,\}$ where
  $\mathscr{T} = \mathscr{T}(\V)$.
\end{proof}

\takeout{
\section{The Generalized Variety Theorem}\label{S:hsp}

For the rest of the paper we fix the data mentioned in the
introduction: a category $\A$ equipped with the factorization system
$(\E, \M)$, a full subcategory $\A_0 \subseteq \A$, a class $\X
\subseteq \A$ of objects and a class $\Lambda$ of regular cardinal
numbers.\smnote{Why is regular needed?}  An object of $\A$ is \emph{$\X$-generated} if it is an $\E$-quotient
  of some object in $\X$.  We put
 \[ 
  \E_\X
  = 
  \{\,e\in \E : \text{every $X \in \X$ is projective w.r.t.~$e$}\,\}.
  \]
Notice that $\E_\X$ clearly contains all isomorphisms of $\A$.

\begin{assumptions}\label{asm:setting}
For the remainder of this paper, we assume that:
\begin{enumerate}
\item\label{A1} $\A$ has $\Lambda$-products, i.e., for every $\lambda \in
  \Lambda$ and every family $(A_i)_{i < \lambda}$ of objects in $\A$, the product
  $\prod_{i< \lambda} A_i$ exists;
\item\label{A2} $\A_0$ is closed under $\Lambda$-products and
  $\X$-generated subobjects; the latter means that for every $m: A \monoto B$ where
  $B\in \A_0$ and $A$ is $\X$-generated, we have $A \in \A_0$. 
\item\label{A3} every object of $\A_0$ is an $\E_\X$-quotient of some
  object of $\X$, i.e.,~for every object $A \in \A_0$ there exists a morphism $X \epito A$ in $\E_\X$ with $X\in \X$.
\end{enumerate}
\end{assumptions}
\begin{examples}
  Throughout this section we will use the setting of the classical
  Birkhoff variety theorem, Bloom's version for ordered algebras as
  well as Mardare et al.'s recent metric variety
  theorem as our running examples. Further instances of our theory
  are presented in \autoref{S:app}.
  \begin{enumerate}
  \item The setting of Birkhoff's classical theory is that of algebras
    for a signature. A signature is a set $\Sigma$ of operation
    symbols each with a prescribed finite arity, and a
    $\Sigma$-algebra is a set $A$ equipped with operations $A^n \to A$
    for each $n$-ary operation symbol.

    Here one takes:
    \begin{itemize}
    \item $\A = \A_0 =$ the category $\Alg{\Sigma}$ of all
      $\Sigma$-algebras,
    \item $\X$ the class of all free $\Sigma$-algebras, i.e.~all algebras $T_\Sigma X$ of
      $\Sigma$-terms over any set $X$ of generators, and
    \item $\Lambda =$ all regular cardinals.
    \end{itemize}
  \item Similary, for Blooms variety theorem one takes $\A = \A_0 =$
    the category $\OAlg{\Sigma}$ of ordered
    $\Sigma$-algebras, i.e.~posets $A$ equipped with monotone
    operations, and $\X$ and $\Lambda$ are as above. 
    
  \item Metric algebras\dots\smnote[inline]{TODO.}
  \end{enumerate}
\end{examples}
\begin{rem}
  \smnote[inline]{TODO: \autoref{asm:setting}\ref{A2} needs some
    explanation.}

  One might expect that $\A_0$ be required to be closed under
  \emph{all} subobjects rather than just $\X$-generated
  ones. However, this weaker requirement will allow us to accomodate
  Wilke's HSP theorem~???\smnote{TODO: add citation} as an instance of
  our results. Cf.~the corresponding requirement in \autoref{D:var}.
  \smnote{We will loose the Salehi and Steinby applicatio of tree
    algebras; for this we'd have to assume closure under
    $\X$-generated subobjects of $\Lambda$-products.}
\end{rem}
We note the following properties of the class $\E_\X$:
\begin{lemma}\label{lem:ex}
\begin{enumerate}
\item\label{lem:ex:1} The class $\E_\X$ is closed under composition.
\item\label{lem:ex:2}  $p\in \E$ and $q\o p\in \E_\X$ implies $q\in \E_\X$.
\end{enumerate}
\end{lemma}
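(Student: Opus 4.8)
The plan is to prove the two statements separately, in each case reducing everything to the two defining conditions for membership in $\E_\X$: lying in $\E$, and having every $X\in\X$ projective with respect to the morphism.

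For the closure-under-composition statement, suppose $e_1\colon A\epito B$ and $e_2\colon B\epito C$ both lie in $\E_\X$. That $e_2\o e_1\in\E$ is immediate, since $\E$ is closed under composition by the factorization-system axioms. For projectivity I would fix $X\in\X$ and $h\colon X\to C$ and lift in two steps: projectivity of $X$ with respect to $e_2$ yields $g_1\colon X\to B$ with $e_2\o g_1=h$, and projectivity with respect to $e_1$ then yields $g\colon X\to A$ with $e_1\o g=g_1$. Since $(e_2\o e_1)\o g=e_2\o g_1=h$, the morphism $g$ witnesses projectivity of $X$ with respect to $e_2\o e_1$. This is a routine two-step diagram chase.

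For the cancellation statement, assume $p\colon A\to B$ lies in $\E$ and $q\o p\in\E_\X$, where $q\colon B\to C$. The projectivity half is easy and does not even use $p\in\E$: given $X\in\X$ and $h\colon X\to C$, projectivity of $X$ with respect to $q\o p$ produces $g\colon X\to A$ with $(q\o p)\o g=h$, whence $p\o g\colon X\to B$ satisfies $q\o(p\o g)=h$, so $X$ is projective with respect to $q$. The only remaining obligation is $q\in\E$, and this is the single point where both $p\in\E$ and properness of $(\E,\M)$ enter.

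To establish $q\in\E$, I would prove the general cancellation principle that if a composite $q\o p$ and its right-hand factor $p$ both lie in $\E$, then $q\in\E$. Factor $q=m\o e$ with $e\in\E$ and $m\in\M$; then $q\o p=m\o(e\o p)$, and $e\o p\in\E$ by closure of $\E$ under composition. The commutative square $\id_C\o(q\o p)=m\o(e\o p)$ has $\E$-morphism $q\o p$ and $\M$-morphism $m$, so diagonal fill-in supplies $d$ with $m\o d=\id_C$ and $d\o(q\o p)=e\o p$. From $m\o d=\id_C$ we obtain $m\o d\o m=m=m\o\id$, and since $m$ is monic by properness this forces $d\o m=\id$, so $m$ is an isomorphism. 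Then $q=m\o e\in\E$, because $\E$ contains all isomorphisms and is closed under composition. I expect this cancellation step to be the main obstacle: everything else is a direct unfolding of the definition of $\E_\X$, whereas here one must genuinely use diagonal fill-in together with properness of the factorization system.
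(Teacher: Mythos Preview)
Your proof is correct and follows essentially the same approach as the paper. The only difference is that for the step $q\in\E$ in part~\ref{lem:ex:2}, the paper simply invokes the cancellation law for proper factorization systems as a known fact (citing \cite[Prop.~14.6/14.9]{AdamekEA09}), whereas you spell out its proof explicitly via diagonal fill-in; your argument for that step is correct.
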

\begin{proof}
  For~\ref{lem:ex:1} let $p: A \epito B$ and $q: B \epito C$ be in
  $\E_\X$. Since $\E$ is closed under composition, we have $q \o p \in
  \E$. Projectivity of $q \o p$ easily follows from that of $p$
  and $q$: given any morphism $h: X \to C$ with $X \in \X$ we obtain
  $h': X \to B$ with $q \o h' = h$ by projectivity of $q$, and then we
  obtain $h'': X \to A$ with $p \o h'' = h'$ by projectivity of
  $p$. Thus, we have $(q\o p) \o h'' = h$. 
  
  For~\ref{lem:ex:2}, note that $q\in \E$ by the cancellation law. The
  projectivity of objects of $\X$ w.r.t.~$q$ follows easily from the
  corresponding property of $q\o p$: suppose that $h: X \to C$ with
  $X \in \X$, then we have $h': X \to A$ with $q \o (p \o h') = h$.
\end{proof}

\begin{rem}\label{rem:atoalgt}
  The conditions \ref{A1}--\ref{A3} are inherited by Eilenberg-Moore
  categories over $\A$. Indeed, suppose that $\A$ is a category that
  satisfies \ref{A1}--\ref{A3} w.r.t.~the parameters $(\E,\M)$,
  $\Lambda$, $\X$ and $\A_0$. Let $\MT=(T,\eta,\mu)$ be a monad on
  $\A$ with $T\E\seq \E$. Then the category $\A'=\Alg{\MT}$ of
  $\MT$-algebras has the factorization system of $\E$-carried and
  $\M$-carried $\MT$-algebra morphisms. Choose $\X'$ to be the class of
  all free $\MT$-algebras $\MT X = (TX,\mu_X)$ with $X\in \X$, and let
  $\A_0'$ be the class of all $\MT$-algebras $(A,\alpha)$ with carrier
  $A\in \A_0$. With respect to these parameters, the category $\A'$
  satisfies \ref{A1}--\ref{A3}.

  For~\ref{A1}, this is clear since the forgetful functor
  $\Alg{\MT} \to \A$ creates all limits.
  Likewise for~\ref{A2} since the factorization system on $\A'$ is a
  lifting of the one on $\A$.
  For~\ref{A3}, one easily proves that $\E_\X'$ consists of all
  $\E_\X$-carried $\MT$-algebra morphisms, for suppose we have an
  $\E_\X$-carried $\MT$-algebra morphism
  $e: (A,\alpha) \epito (B,\beta)$ and let
  $h: (TX, \mu_X) \to (B,\beta)$ where $X \in \X$. Then that
  $e: A \to B$ lies in $\E_\X$ on the morphism $h\o \eta_X: X \to
  B$. Hence, we obtain some $h_0: X \to A$ in $\A$ such that
  $e \o h_0 = h\o \eta_X$. Using the freeness of $(TX,\mu_X)$ we
  obtain a unique $\MT$-algebra morphism
  $h': (TX, \mu_X) \to (A,\alpha)$ such that $h' \o \eta_X = h_0$. We
  obtain $e \o h' = h$ since both $e \o h'$ and $h$ are $\MT$-algebra
  morphisms whose precompositions with the universal morphism $\eta_X$
  are equal: $e \o h' \o \eta_X = e \o h_0 = h \o \eta_X$
\end{rem}



\begin{defn}\label{D:eq}
\begin{enumerate}
 \item  An \emph{equation over $X\in\X$} is a class
  $\mathscr{T}_X\seq X\mathord{\epidownarrow} \A_0$ that is \emph{$\Lambda$-codirected}, i.e. every subset
    $F\seq \mathscr{T}_X$ with $\under{F}\in \Lambda$ has a lower
    bound in $F$;
\item   An object $A\in \A$ \emph{satisfies} the equation $\mathscr{T}_X$ if
  every morphism $h: X\to A$ factorizes through some
  $e\in \mathscr{T}_X$. We write $A \models \mathscr{T}_X$ if $A$
  satisfies $\mathscr{T}_X$. 
\end{enumerate}
\end{defn}
\begin{rem}
  Note that if $\A$ is $\E$-cowellpowered, then every equation is a
  set (not a class). Otherwise our theory does not need
  $\E$-cowellpoweredness of $\A$ and we did not assume it in \autoref{asm:setting}.
\end{rem}
\begin{rem}
In two important special cases, we can replace sets of quotients by single quotients:
\begin{enumerate}
\item\label{rem:singlequot:1} For $\Lambda =$ all regular cardinal
  numbers, every equation
  $\mathscr{T}_X\seq X\mathord{\epidownarrow} \A_0$ contains a least
  element $e_X: X\epito E_X$, viz.~the lower bound of all elements in
  $\mathscr{T}_X$. An object $A$ satisfies the equation iff every
  morphism $h: X\to A$ factorizes through $e_X$. Thus, in this case, one
  may equivalently define an equation as a single morphism in $\E$.
  This notion of an equation was investigated by
  Banaschewski~\cite{BanHerr1976}.
\item\label{rem:singlequot:2} Let $\MT = (T,\eta,\mu)$ be a monad on $\Alg{\Sigma,E}$ with $T$
  preserving surjections. In \cite{camu16} it was shown that $\MT$
  induces a monad $\hatT=(\hatt,\hateta,\hatmu)$ on the category
  $\PAlg{\Sigma,E}$,\smnote{TODO: insert definition of profinite
    algebras etc.} called the \emph{profinite monad} of $\MT$, such
  that (a) $\hatt$ preserves surjections and (b) the categories of
  finite $\MT$-algebras and finite $\hatT$-algebras are isomorphic.

  In $\PAlg{\Sigma,E}$, choose $\X$ = free finitely generated
  algebras, $\A_0=$ finite algebras and $\Lambda=\{\omega\}$, and
  inherit these parameters to $\Alg{\hatT}$ as in
  \autoref{rem:atoalgt}. By a \emph{profinite equation} is meant a
  quotient $p_X: \hatT X\epito P_X$ where $P_X$ is a profinite
  $\hatT$-algebra, i.e.~a codirected limit of finite $\hatT$-algebras.
  Every equation
  $\mathscr{T}_X\seq \hatT X\mathord{\epidownarrow} \A_0^{\hatT}$
  yields a profinite equation by forming the codirected limit of the
  inverse systen $\mathscr{T}_X$. This yields a limit cone
  $\pi_e: P_X\epito A$ (where $e: \hatT X\epito A$ ranges over
  $\mathscr{T}_X$) and a unique mediating morphism
  $p_X: \hatT X\epito P_X$ with $\pi_e\o p_X = e$ for all
  $e\in\mathscr{T}_X$. Conversely, every profinite equation
  $p_X: \hatT X\epito P_X$ yields an equation $\mathscr{T}_X$,
  viz.~the set of all finite quotient $e: \hatT X\epito A$ that factor
  through $p_X$. It is easy to see that these two constructions are
  mutually inverse, and that a finite $\hatT$-algebra $A$ satisfies
  $\mathscr{T}_X$ iff every $\hatT$-homomorphism $h: \hatT X\to A$
  factorizes through $p_X$.
\end{enumerate}
\end{rem}
\begin{examples}
  \begin{enumerate}
  \item In the classical setting of General Algebra an equation is by
    virtue of \autoref{rem:singlequot}\ref{rem:singlequot:1} a
    quotient $e: T_\Sigma X \epito A$ of
    $\Sigma$-algebras. Equivalently, we have its kernel pair $E
    \rightrightarrows T_\Sigma X$, which is the congruence relation
    \[
      E = \{(s,t) : \text{$s,t \in T_\Sigma X$ with $e(s) = e(t)$}\}.
    \]
    That means that an equation is, equivalently, a set of pairs of
    $\Sigma$-terms, viz.~the classical notion of equations in General
    algebra. Moreover, it is easy to show that a $\Sigma$-algebra
    satisfies an equation $\mathscr{T}_X$ if and only if it satisfies
    the equations in the set $E$ in the classical sense. 
  \item \smnote[inline]{TODO: metric algebras.}
  \end{enumerate}
\end{examples}
\begin{defn}\label{D:var}
  A \emph{\variety} is a full subcategory $\V\seq \A_0$ closed under
  $\E_\X$-quotients, subobjects, and $\Lambda$-products. That is,
  \begin{enumerate}[wide,labelindent=0pt,itemsep=5pt]
  \item for every $\E_\X$-quotient $e: A\epito B$ in $\A_0$ with
    $A \in \V$ one has $B\in \V$,
  \item for every $\M$-morphism $m: A \monoto B$ in $\A_0$ with $B \in
    \V$ one has $A \in \V$, and 
  \item given objects $A_i$ ($i<\lambda$) in $\V$ for some
    $\lambda\in \Lambda$, one has $\prod_{i<\lambda} A_i\in \V$.
  \end{enumerate}
\end{defn}
%

\begin{construction}\label{constr:var}
  For any class of equations $\Eq$ put 
  \[  
    \V(\Eq)
    =
    \{\, A\in \A_0 \;:\; \text{$A \models \mathscr{T}_X$ for each
      $\mathscr{T}_X \in \Eq$} \,\}.
  \]
  Such classes of objects of $\A$ are called \emph{equationally presentable}.
\end{construction}

\begin{lemma}\label{lem:var}
  For every class $\Eq$ of equations, $\V(\Eq)$ is a \variety.
\end{lemma}
\begin{proof}
Since $\V(\Eq) = \bigcap_{\TT_X\in \Eq} \V(\TT_X)$ and intersections of varieties are varieties, it suffices to consider the case where $\Eq$ consists of a single equation $\TT_X\seq X\epidownarrow \A_0$.
  \begin{enumerate}[wide,labelindent=0pt,itemsep=5pt]
  \item \emph{Closure under $\E_\X$-quotients.} Let
    $q\colon A \epito B$ be an $\E_\X$-quotient with $A\models \TT_X$, and let
    $h: X \to B$. Since $q$ lies in $\E_\X$, there exists some
    $h': X \to A$ such that $h = q \o h'$. Then since
    $A \models \mathscr{T}_X$, there exists $e\colon X\epito E$ in $\E_\X$ and $h''\colon E \epito A$ such that
    $h' = h'' \o e$. Thus $h$ factorizes
    through $e$ via $h = q \o h' = (q \o h'') \o e$. It follows that $B\models \TT_X$.

  \item \emph{Closure under subobjects.} Let $m: A \monoto B$ be a subobject
    in $\A_0$ where $B \models \TT_X$, and let
    $h: X \to A$. Then $m \o h$ factorizes through $e$ since
    $B \models \mathscr{T}_X$, and we see that $h$ factorizes through
    $e$ using diagonal fill-in:
    \[
      \xymatrix{
        X \ar@{->>}[r]^-{e} \ar[d]_h & E \ar[d] \ar@{-->}[ld]
        \\
        A \ar@{ >->}[r]_-m & B
      }
    \]
Therefore, $A\models \TT_X$.
    
  \item \emph{Closure under $\Lambda$-products.} Suppose that $A_i$
    ($i< \lambda$) is a family of objects in $\A_0$, where $\lambda\in \Lambda$ and $A_i\models \TT_X$ for all $i$. We denote by
    $p_i\colon \prod_{i < \lambda} A_i \to A_i$ the product projections.
    First, note that $\prod_{i < \lambda} A_i$ lies in $\A_0$ by
    Assumption \ref{asm:setting}\ref{A2}. Now let
    $h: X \to \prod_{i <\lambda} A_i$. Since
    $A_i\models \mathscr{T}_X$, there exists for every
    $i<\lambda$ some $e_i\colon X \to E_i$ in $\mathscr{T}_X$ and
    $h_i\colon E_i\to A_i$ with $h_i\o e_i = p_i\o h$. Since
    $\mathscr{T}_X$ is $\Lambda$-codirected, we obtain one
    $e\colon X \epito E$ in $\mathscr{T}_X$ through which all $p_i \o h$
    factorize. Indeed, let $e$ be a lower bound of
    $F = \{e_i : i < \lambda\} \seq \mathscr{T}_X$. Then $e \leq e_i$
    means that we have $g_i'$ with $g_i' \o e = e_i$ so that
    $p_i \o h$ factorizes through $e$ via $k_i = g_i' \o h_i$.  Thus,
    $\langle k_i \rangle\colon E \to \prod_{i <\lambda} A_i$ is the
    desired factorization since
    $p_i \o h = p_i \o \langle k_i \rangle \o e$ holds for every
    $i < \lambda$.  This proves
    $\prod_{i< \lambda} A_i \models \mathscr{T}_X$.
  \end{enumerate}
\end{proof}
\begin{defn}\label{D:eqnth}
  An \emph{\eqnth} is a family of equations
  \[
    \mathscr{T}
    =
    (\mathscr{T}_X\seq X\mathord{\epidownarrow} \A_0)_{X\in \X}
  \] that is \emph{substitution invariant}: for every morphism $h: X\to Y$ with $X,Y\in\X$ and every $e_Y\in \TT_Y$, the morphism $e_Y\o h$ factorizes through some $e_X\in \TT_X$.
    \begin{equation}\label{eq:theory}
      \vcenter{
        \xymatrix{
          X \ar[r]^{\forall h} \ar@{->>}[d]_{\exists e_X} & Y 
          \ar@{->>}[d]^{\forall e_Y}\\
          E_X \ar[r]_{\exists \ol h} & E_Y
        }}
    \end{equation}
\end{defn}

\begin{rem}\label{rem:singlequotth}
  Again, we consider the special cases of
  \autoref{rem:singlequot} this time under the additional assumption
  that $\E_\X = \E$. 
  \begin{enumerate}
  \item\label{rem:singlequotth:1} For $\Lambda =$ all regular cardinal numbers, a \eqnth is
    uniquely determined by specifying the least element of every
    $\mathscr{T}_X$,
    cf.~\autoref{rem:singlequot}\ref{rem:singlequot:1}. In this case,
    an equational theory is thus given by a family of quotients
    $\mathscr{Q} = (e_X: X\epito E_X)_{X\in\X}$ such that, for every
    $h: X\to Y$, the morphism $e_Y\o h$ factorizes through $e_X$.
    
  \item Consider the setting of \autoref{rem:singlequot}\ref{rem:singlequot:2}. A
    \emph{profinite theory over $\MT$} \cite{camu16} is a family of
    profinite equations $\rho = (p_X: \hatT X\epito P_X)_{X}$ such
    that, for every morphism $h: \hatT X\to \hatT Y$ with $X,Y\in\X$,
    the morphism $p_Y\o h$ factorizes through $p_X$. Every \eqnth
    yields a profinite theory by replacing the equations
    $\mathscr{T}_X$ by their corresponding profinite equation $p_X$,
    and vice versa. This gives a bijective correspondence between
    \eqnths and profinite theories.
\end{enumerate}
\end{rem}
Our aim is to relate \eqnths to \varieties. We have already seen that every class of equations, so
in particular every \eqnth $\mathscr{T}$, yields a \variety
$\V(\mathscr{T})$ consisting of all algebras that satisfy every
equation in $\mathscr T$ (cf.~\autoref{constr:var}). Conversely, every variety induces an equational theory as follows:

\begin{construction}\label{constr:eqnth}
  Given any \variety $\V$, form the family
  \[
    \mathscr{T}(\V) = (\mathscr{T}_X)_{X\in\X},
  \]
  where $\mathscr{T}_X \seq X\mathord{\epidownarrow}\A_0$ consists of
  all quotients $e: X\epito A$ with codomain $A\in \V$.
\end{construction}

\begin{lemma}
  For every \variety $\V$, the family $\mathscr{T}(\V)$ is a \eqnth.
\end{lemma}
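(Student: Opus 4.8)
The plan is to check that the family $\mathscr{T}(\V)=(\mathscr{T}_X)_{X\in\X}$ meets the three requirements of Definition~\ref{D:eqnth}: first, that each $\mathscr{T}_X$ is an equation over $X$, i.e.\ $\Lambda$-codirected and closed under $\E_\X$-quotients; second, that the family is substitution invariant; and third, that it is $\E_\X$-complete. Throughout I would use only the three closure properties of a \variety from Definition~\ref{D:var} together with Assumptions~\ref{asm:setting}, plus one recurring device: for any morphism $f\colon X\to B$ with $X\in\X$, its $(\E,\M)$-factorization $X\epito E\monoto B$ exhibits the middle object $E$ as an $\X$-generated $\M$-subobject of $B$, so $E\in\A_0$ as soon as $B\in\A_0$ (by Assumption~\ref{asm:setting}\ref{A2}), and then $E\in\V$ as soon as $B\in\V$ (by closure of $\V$ under subobjects).

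Closure of $\mathscr{T}_X$ under $\E_\X$-quotients is immediate: if $e\colon X\epito E$ lies in $\mathscr{T}_X$, so that $E\in\V\seq\A_0$, and $q\colon E\epito E'$ lies in $\E_\X$ with $E'\in\A_0$, then $q$ is an $\E_\X$-quotient in $\A_0$ with domain in $\V$, hence $E'\in\V$; since $\E$ is closed under composition, $q\o e$ is an $\E$-quotient of $X$ with codomain in $\V$, i.e.\ $q\o e\in\mathscr{T}_X$. For $\Lambda$-codirectedness, let $F\seq\mathscr{T}_X$ be a subset with $\under{F}\in\Lambda$, and write its elements as $e_i\colon X\epito E_i$ for $i<\lambda$ where $\lambda=\under{F}$. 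I would form the product $P=\prod_{i<\lambda}E_i$, which exists by~\ref{A1} and lies in $\V$ since $\V$ is closed under $\Lambda$-products, and factor the induced morphism $\langle e_i\rangle_{i<\lambda}\colon X\to P$ as $X\epito E\monoto P$. The recurring device yields $E\in\V$, so the $\E$-quotient $e\colon X\epito E$ lies in $\mathscr{T}_X$; and since each $e_i$ factors through $e$ (namely $e_i=p_i\o m\o e$, with $m\colon E\monoto P$ and $p_i$ the $i$-th projection), $e$ is a lower bound of $F$. When $\lambda=0$ this amounts to taking $P$ to be the terminal object and merely records that $\mathscr{T}_X$ is nonempty. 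Substitution invariance is the same move applied to a composite: for $h\colon X\to Y$ with $X,Y\in\X$ and $e_Y\colon Y\epito E_Y$ in $\mathscr{T}_Y$ (so $E_Y\in\V$), the coimage $e_X\colon X\epito E_X$ of $e_Y\o h$ falls under the recurring device with $B=E_Y$, whence $E_X\in\V$ and $e_X\in\mathscr{T}_X$.

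Finally, $\E_\X$-completeness is essentially a restatement of Assumption~\ref{asm:setting}\ref{A3}: if $e\colon Y\epito E_Y$ lies in $\mathscr{T}_Y$ then $E_Y\in\V\seq\A_0$, so~\ref{A3} supplies an $\E_\X$-quotient $e_X\colon X\epito E_Y$ with $X\in\X$, and since its codomain lies in $\V$ we obtain $e_X\in\mathscr{T}_X\cap\E_\X$ with $E_X=E_Y$, exactly as required. I do not expect a genuine obstacle: the whole argument is a short chain of verifications. The step demanding the most care is $\Lambda$-codirectedness, where one must pull the intermediate object $E$ of the factorization of $\langle e_i\rangle_{i<\lambda}$ back into $\A_0$ via Assumption~\ref{asm:setting}\ref{A2} before the closure of $\V$ under subobjects can be used, and where the empty-family case $\lambda=0$ has to be treated via the terminal object so that $\mathscr{T}_X$ is seen to be nonempty.
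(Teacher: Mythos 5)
Your proposal is correct and follows essentially the same route as the paper's own proof: each clause is verified directly from the closure properties of $\V$, with $\Lambda$-codirectedness obtained by $(\E,\M)$-factorizing $\langle e_i\rangle\colon X\to\prod_{i<\lambda}E_i$ and pulling the image into $\A_0$ (hence into $\V$) via Assumption~\ref{asm:setting}\ref{A2}, substitution invariance by the same factorization device applied to $e_Y\o h$, and $\E_\X$-completeness read off from Assumption~\ref{asm:setting}\ref{A3}. The only (harmless) additions are your explicit treatment of the empty family via the terminal object and the spelled-out composition argument for closure under $\E_\X$-quotients.
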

\begin{proof}
To prove substitution invariance for
  $\mathscr{T}(\V)$, suppose that $e_Y\in \mathscr{T}_Y$ and
  $h\colon X\to Y$ are given, and take the $\E/\M$-factorization
  $e_Y\o h = \ol h\o e_X$ of $e_Y\o h$:
  \[
    \xymatrix{
      X \ar[r]^-{h} \ar@{->>}[d]_{ e_X} & Y 
      \ar@{->>}[d]^{e_Y}
      \\
      E_X \ar@{>->}[r]_-{ \ol h} & E_Y
    }
  \]
  Since $E_Y\in \A_0$ and $\A_0$ is closed under $\X$-generated subobjects by Assumption \ref{asm:setting}.\ref{A2}, we get $E_X\in \A_0$. Moreover, we have $E_Y\in \V$ by definition of $\TT_Y$, so the closure of $\V$  under subobjects in $\A_0$ implies that $E_X\in \V$. Thus $e_X\in \mathscr{T}_X$ by
  definition of $\mathscr{T}_X$.
\takeout{
  For $\E_\X$-completeness, note first that clearly
  $\mathscr{T}_X$ is closed under $\E_\X$-quotients because $\V$
  is. To show that $\mathscr{T}_X$ is $\Lambda$-codirected, let
  $e_i: X\epito A_i$ ($i< \lambda$) be a family of quotients in
  $\mathscr{T}_X$ with $\lambda\in\Lambda$. Form the
  $\E/\M$-factorization of $\langle e_i \rangle: X\to \prod_i A_i$:
  \[
    \xymatrix{
      & X \ar@{->>}[ld]_e \ar[d]^{\langle e_i \rangle}
      \ar@{->>}[rd]^{e_i}
      \\
      A \ar@{ >->}[r]^-m & \prod_{i <\lambda} A_i \ar[r]^-{p_i} & A_i
    }
  \]
  By \autoref{asm:setting}\ref{A2}, $A$ lies in $\A_0$ and, 
  since $\V$ is closed under subobjects and $\Lambda$-products, one
  has $A\in \V$. Thus $e\in \mathscr{T}_X$ and $e$ is an upper bound
  of the $e_i$'s.
}
\end{proof}

\begin{lemma}\label{lem:vtau}
  Let $\mathscr{T}$ be a \eqnth. For every quotient $e_Y\colon Y\epito A$ in $\TT_Y$, one has $A\in\V(\TT)$.
\end{lemma}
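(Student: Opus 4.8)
The plan is to verify that $A$ satisfies every equation of the theory, i.e.\ that for each $X\in\X$ and each $h\colon X\to A$ the morphism $h$ factorizes through some quotient in $\mathscr{T}_X$; together with the observation that $A\in\A_0$ (since $e_Y$ represents a quotient in $Y\mathord{\epidownarrow}\A_0$), this gives $A\in\V(\TT)$. So the task reduces to a single factorization argument, uniformly in $X$.

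The crucial first step is to trade $e_Y$ for a quotient of $A$ that lies in $\E_\X$, since $e_Y$ itself need not be $\E_\X$-quotient. Apply the $\E_\X$-completeness property of $\mathscr{T}$ to $e_Y\colon Y\epito A$ in $\mathscr{T}_Y$: this yields some $Z\in\X$ and a quotient $e_Z\colon Z\epito A$ with $e_Z\in\mathscr{T}_Z\cap\E_\X$ (note the codomain is again $A$). Now fix an arbitrary $X\in\X$ and $h\colon X\to A$. Because $e_Z\in\E_\X$, the object $X$ is projective w.r.t.\ $e_Z$, so $h$ lifts to a morphism $g\colon X\to Z$ with $e_Z\o g=h$. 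Applying substitution invariance of $\mathscr{T}$ to $g\colon X\to Z$ (with $X,Z\in\X$) and $e_Z\in\mathscr{T}_Z$, the coimage $e_X\colon X\epito E_X$ of $e_Z\o g=h$ lies in $\mathscr{T}_X$. Hence, taking the $(\E,\M)$-factorization $h=m\o e_X$ with $m$ the image of $h$, we see that $h$ factorizes through $e_X\in\mathscr{T}_X$, so $A\models\mathscr{T}_X$. Since $X\in\X$ was arbitrary, $A\in\V(\TT)$.

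The only point requiring care is the opening use of $\E_\X$-completeness: without it one cannot guarantee projectivity of $X$ w.r.t.\ the given quotient $e_Y$, and the lifting of $h$ would fail. Once $e_Y$ has been replaced by $e_Z\in\E_\X$ with the same codomain, everything else is a routine chase combining projectivity and substitution invariance. In the special situation $\E_\X=\E$ (for instance the classical setting described in \autoref{rem:singlequotth}) this first step is vacuous — one simply takes $Z=Y$, $e_Z=e_Y$ — and the lemma reduces to the direct argument.
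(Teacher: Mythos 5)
Your proof is correct and follows essentially the same route as the paper's: the paper likewise first invokes $\E_\X$-completeness (phrased as ``we may assume $e_Y\in\E_\X$''), then lifts an arbitrary $h\colon X\to A$ along the $\E_\X$-quotient by projectivity, and concludes via substitution invariance that the coimage of $h$ lies in $\TT_X$. Your write-up merely makes the change of domain from $Y$ to $Z$ in the completeness step explicit, which is a welcome clarification but not a different argument.
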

\begin{proof}
Suppose that $\mathscr{T}_Y$ contains the
  quotient $e_Y: Y\epito A$. By $\E_\X$-completeness of $\TT$,
  we may assume that $e_Y\in \E_\X$. Let $h: X\to A$ with
  $X\in\X$. Since $e_Y\in \E_\X$, there exists a morphism $g: X\to Y$
  with $e_Y\o g = h$. By substitution invariance, we can
  choose $e_X\in\mathscr{T}_X$ and $\ol g$ with
  $\ol g \o e_X = e_Y\o g$. Hence, $h$ factorizes through $e_X$ via
  $\ol g$, as shown by the commutative diagram below, and therefore
  $A\in \V(\mathscr{T})$.
  \[
    \xymatrix{
      X \ar[r]^-g \ar[dr]^h \ar@{->>}[d]_{e_X} & Y \ar@{->>}[d]^{e_Y}\\
      E_X \ar[r]_-{\ol g}& A 
    }
  \]
\takeout{
  For the ``only if'' direction, let $A\in \V(\mathscr{T})$. By
  \autoref{asm:setting}\ref{A3}, we can express $A$ as an
  $\E_\X$-quotient $e: Y\epito A$ of some $Y\in\X$. Since
  $A\in\V(\mathscr{T})$, we know that $A$ satisfies $\mathscr T_Y$,
  i.e.~there exists $e_Y: Y\epito E_Y$ in $\mathscr{T}_Y$ and a
  morphism $\ol e: E_Y\epito A$ with $\ol e \o e_Y = e$. By
  \autoref{lem:ex}\ref{lem:ex:2}, we have $\ol e\in \E_X$, and thus
  $e\in \mathscr{T}_Y$ because $\mathscr{T}_Y$ is closed under
  $\E_\X$-quotients.
}
\end{proof}

\begin{rem}
  Given two equations $\TT_X, \TT_X'\seq \epid{X}{\A_0}$ we put
  $\TT_X\leq \TT_X'$ if every quotient in $\TT_X'$ factorizes through
  some quotient in $\TT_X$. Theories form a poset with respect to the
  order $\TT\leq \TT'$ iff $\TT_X\leq \TT_X'$ for all $X\in
  \X$. Similarly, varieties form a poset (in fact, a complete lattice)
  ordered by inclusion.
\end{rem}

\begin{theorem}\label{thm:galois}
There is an antitone Galois connection
\[ \xymatrix{
\textbf{Varieties~~~} \ar@<0.5ex>[rr]^<<<<<<<<<<<{\TT(\dash)} & &  \textbf{~~~Equational theories} \ar@<0.5ex>[ll]^<<<<<<<<<{\V(\dash)}
} \]
between the posets of varieties and equational theories; that is, the maps $\V(\dash)$ and $\tau(\dash)$ are order-reversing, and for all varieties $\V$ and equational theories $\tau$,
\[ \V(\TT)\seq \V \quad\Longleftrightarrow\quad \TT(\V)\leq \TT. \] 
\end{theorem}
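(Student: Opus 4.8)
The plan is to verify the two defining ingredients of the asserted antitone Galois connection: that $\V(\dash)$ and $\TT(\dash)$ are both order-reversing, and that the equivalence $\V(\TT)\seq\V \Leftrightarrow \TT(\V)\leq\TT$ holds for every \variety $\V$ and every \eqnth $\TT$. Once both are in place, the pair $(\TT(\dash),\V(\dash))$ is by definition an antitone Galois connection.

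\emph{Order reversal.} For $\V(\dash)$ I would argue as follows: if $\TT\leq\TT'$ and $A\in\V(\TT')$, then for each $X\in\X$ and each $h\colon X\to A$, satisfaction of $\TT'_X$ factors $h$ through some $e'\in\TT'_X$, and $\TT_X\leq\TT'_X$ factors $e'$ through some $e\in\TT_X$; composing, $h$ factors through $e$, so $A\models\TT_X$, hence $A\in\V(\TT)$. For $\TT(\dash)$: if $\V\seq\V'$ then $\TT(\V)_X\seq\TT(\V')_X$ as classes of quotients (codomain in $\V$ implies codomain in $\V'$), and the smaller such class is the \emph{larger} element of the factorization preorder, since each of its quotients belongs to the bigger class and factors through itself there; thus $\TT(\V')\leq\TT(\V)$.

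\emph{The equivalence.} The direction $\V(\TT)\seq\V \Rightarrow \TT(\V)\leq\TT$ is routine: given $e_X\colon X\epito A$ in $\TT_X$, \autoref{lem:vtau} (every quotient in $\TT_X$ has codomain in $\V(\TT)$, by substitution invariance of $\TT$) yields $A\in\V(\TT)\seq\V$, so $e_X$ itself lies in $\TT(\V)_X$ and in particular factors through a member of $\TT(\V)_X$; as $X$ was arbitrary, $\TT(\V)\leq\TT$. For the converse $\TT(\V)\leq\TT \Rightarrow \V(\TT)\seq\V$, take $A\in\V(\TT)$ and aim to show $A\in\V$. By Assumption~\ref{asm:setting}\ref{A3} choose an $\E_\X$-quotient $e\colon Y\epito A$ with $Y\in\X$. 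Since $A\models\TT_Y$, factor $e=\ol e\o e_Y$ with $e_Y\colon Y\epito E_Y$ in $\TT_Y$; since $\TT(\V)\leq\TT$, factor $e_Y=s\o q$ with $q\colon Y\epito C$ in $\TT(\V)_Y$, so $C\in\V$. Then $e=(\ol e\o s)\o q$ with $q\in\E$ and $e\in\E_\X$, so the right-cancellation property of $\E_\X$ (\autoref{lem:ex}) makes $\ol e\o s\in\E_\X$. Hence $A$ is an $\E_\X$-quotient of $C\in\V$, and $A\in\V$ because \varieties are closed under $\E_\X$-quotients (\autoref{D:var}).

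\emph{Main obstacle.} The delicate point is this converse implication. One is given two factorizations with no obvious common refinement --- of $e$ through the theory $\TT_Y$, and of $e_Y$ through $\TT(\V)_Y$ --- and must assemble from them a \emph{single} $\E_\X$-quotient $C\epito A$ that certifies $A\in\V$. The key is to have presented $A$ from the start as an $\E_\X$-quotient of a free object via \ref{A3}, after which the cancellation law ``$p\in\E$ and $g\o p\in\E_\X$ imply $g\in\E_\X$'' upgrades the composite remainder $\ol e\o s$ to an $\E_\X$-quotient. All the rest is bookkeeping in the factorization preorder, where the only thing to keep straight is its variance: a larger class of quotients is lower in $\leq$.
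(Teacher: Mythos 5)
Your proof is correct. The order-reversal arguments and the forward implication ($\V(\TT)\seq\V\Rightarrow\TT(\V)\leq\TT$ via \autoref{lem:vtau}) coincide with what the paper writes for this theorem; where you genuinely diverge is the converse implication, which the paper's displayed proof of \autoref{thm:galois} in fact never carries out (it breaks off after the forward direction). In the paper the full equivalence is instead obtained for free from the stronger statement that $\V(\dash)$ and $\TT(\dash)$ are mutually inverse (\autoref{lem:vtv} and \autoref{lem:tvt}): given antitonicity, $\TT(\V)\leq\TT$ implies $\V(\TT)\seq\V(\TT(\V))=\V$. Your direct argument -- present $A$ as an $\E_\X$-quotient of some $Y\in\X$ via \ref{asm:setting}\ref{A3}, factor through $\TT_Y$ and then through $\TT(\V)_Y$, and upgrade the residual morphism $\ol e\o s$ to $\E_\X$ by the cancellation property of \autoref{lem:ex}\ref{lem:ex:2} -- is exactly the technique the paper deploys in the antitonicity step of the proof of \autoref{thm:hsp}, here redirected to land in $\V$ rather than in $\V(\TT)$. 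What your route buys is a self-contained proof of the Galois connection that does not presuppose the unit/counit identities $\V=\V(\TT(\V))$ and $\TT=\TT(\V(\TT))$; what the paper's route buys is that, having proved the dual isomorphism anyway, the adjunction inequalities degenerate to equalities and need no separate argument. One cosmetic remark: your parenthetical gloss of \autoref{lem:vtau} as following ``by substitution invariance'' understates it slightly -- its proof also needs $\E_\X$-completeness -- but since you only cite the lemma rather than reprove it, nothing is affected.
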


\begin{proof}
  \begin{enumerate}[wide,labelindent=0pt,parsep=0pt]
  \item The map $\tau(\dash)$ is order-reversing. To see this, suppose that $\V\seq \V'$ are varieties, and let
    $e\colon X\epito A$ be a quotient in $[\TT(\V)]_X$. Then $A\in \V$
    by definition of $\TT(\V)$, and thus $A\in \V'$, i.e. the quotient
    $e$ also lies in $[\TT(\V')]_X$. This shows $\TT'\leq \TT$.

  \item The map $\V(\dash)$ is order-reversing. Indeed, suppose that $\TT\leq \TT'$ are theories and let $A\in \V(\TT')$. To show that $A\in \V(\TT)$, let $h\colon X\to A$ with $X\in \X$. Since $A\in \V(\TT')$, the morphism $h$ factorizes through some $e'\in \TT_X'$. Since $\TT_X\leq \TT_X'$, the quotient $e'$ factorizes through some $e\in \TT_X$. Thus $h$ factorizes through $e$, which proves that $A\in \V(\TT)$. It follows that $\V(\TT')\seq \V(\TT)$.
\item To show the ``$\To$'' implication of the claimed equivalence, suppose that $\V(\tau)\seq \V$, and let $e\colon X\epito E$ be a quotient in $\TT_X$. Then $E\in \V(\TT)$ by \autoref{lem:vtau}
  \end{enumerate}
\end{proof}

\begin{lemma}\label{lem:vtv}
  For every \variety $\V$, we have $\V=\V(\mathscr{T}(\V))$.
\end{lemma}
\begin{proof}
  To prove $\seq$, let $A\in \V$. By \autoref{asm:setting}\ref{A3},
  there exists a quotient $e: X\epito A$ with $X\in \X$. Thus
  $e\in \mathscr{T}_X$ by the definition of $\mathscr{T}_X$ in \autoref{constr:eqnth}, and
  therefore $A\in \V(\mathscr{T}(\V))$ by \autoref{lem:vtau}.

  For $\supseteq$, let $A\in \V(\mathscr{T}(\V))$. Then, by
  \autoref{lem:vtau}, $\mathscr{T}_X$ contains some quotient
  $e: X\epito A$ with codomain $A$. By the definition of
  $\mathscr{T}_X$, this implies $A\in \V$.
\end{proof}
\begin{lemma}\label{lem:tvt}
  For every \eqnth $\mathscr{T}$, we have $\mathscr{T} = \mathscr{T}(\V(\mathscr{T}))$  
\end{lemma}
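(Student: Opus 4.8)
The plan is to show the two inclusions between the equational theories $\mathscr{T}$ and $\mathscr{T}' \defeq \mathscr{T}(\V(\mathscr{T}))$, levelwise over each $X\in\X$. Recall that by \autoref{constr:eqnth}, $\mathscr{T}'_X$ consists of exactly those quotients $e\colon X\epito E$ with $E\in\V(\mathscr{T})$. For the inclusion $\mathscr{T}_X\seq\mathscr{T}'_X$, take any $e\colon X\epito E$ in $\mathscr{T}_X$. By \autoref{lem:vtau}, its codomain $E$ lies in $\V(\mathscr{T})$, and hence $e\in\mathscr{T}'_X$ by the definition of $\mathscr{T}(\V(\mathscr{T}))$. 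This direction is essentially immediate from the lemmas already established.

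For the reverse inclusion $\mathscr{T}'_X\seq\mathscr{T}_X$, take a quotient $e\colon X\epito E$ with $E\in\V(\mathscr{T})$; we must show $e\in\mathscr{T}_X$. Since $E\in\V(\mathscr{T})$ and $X\in\X$, the identity-induced morphism $e\colon X\to E$ (viewed just as a morphism into $E$) factorizes through some quotient $e_0\colon X\epito E_0$ in $\mathscr{T}_X$, say $e = m\o e_0$ for some $m\colon E_0\to E$. Because $e$ and $e_0$ are both in $\E$ and $\E$ is closed under composition while $\M$-morphisms that are split epi... — more carefully: factor $m = m'\o q$ with $q\in\E$, $m'\in\M$; then $e = m'\o(q\o e_0)$ is an $(\E,\M)$-factorization of $e\in\E$, so $m'$ is an isomorphism, whence $m = m'\o q\in\E$. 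Thus $m\colon E_0\epito E$ is an $\E$-quotient. Now I would use $\E_\X$-completeness of the theory $\mathscr{T}$: by that property (applied after possibly replacing $e_0$ by a suitable representative) together with substitution invariance, one arranges that $m$ is in fact an $\E_\X$-quotient, or alternatively one argues directly that $\mathscr{T}_X$ is closed under $\E_\X$-quotients (which follows from \autoref{D:eq}\ref{D:eq:2}). Then $e = m\o e_0\in\mathscr{T}_X$ as required.

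The main obstacle I anticipate is precisely the last step: showing that the mediating map $m\colon E_0\epito E$ is not merely in $\E$ but in $\E_\X$, so that closure of $\mathscr{T}_X$ under $\E_\X$-quotients (\autoref{D:eq}\ref{D:eq:2}) can be applied. The clean way to handle this is to invoke $\E_\X$-completeness of $\mathscr{T}$ to pick the representative $e_0\in\mathscr{T}_X\cap\E_\X$ with codomain $E_0$, note that $X$ is then projective with respect to $e_0$ so that $e$ lifts to a map $X\to E_0$ reproducing $e_0$ up to the quotient, and then apply \autoref{lem:ex}\ref{lem:ex:2} (the cancellation property: $p\in\E$ and $q\o p\in\E_\X$ imply $q\in\E_\X$) to conclude $m\in\E_\X$. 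Once $m\in\E_\X$, the result is immediate. The remaining bookkeeping — matching codomains exactly rather than up to isomorphism, and checking that $\mathscr{T}_X$ really is closed under $\E_\X$-quotients from clause~\ref{D:eq:2} of \autoref{D:eq} — is routine.
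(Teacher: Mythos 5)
Your first inclusion ($\TT_X\seq[\TT(\V(\TT))]_X$) is exactly the paper's argument and is fine. The reverse inclusion, however, has a genuine gap at precisely the point you flag. Having factored $e=m\o e_0$ with $e_0\in\TT_X$ and $m\in\E$, you need $m\in\E_\X$ in order to invoke closure of $\TT_X$ under $\E_\X$-quotients, and neither of your proposed repairs delivers this. First, $\E_\X$-completeness (\autoref{D:eqnth}) does not let you ``pick the representative $e_0\in\TT_X\cap\E_\X$'': it only produces a quotient $e_{X'}\colon X'\epito E_0$ in $\TT_{X'}\cap\E_\X$ for some possibly \emph{different} $X'\in\X$; there is no guarantee that $\TT_X$ itself contains an $\E_\X$-morphism onto $E_0$. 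Second, even granting $e_0\in\E_\X$, \autoref{lem:ex}\ref{lem:ex:2} applied to $e=m\o e_0$ would require the \emph{composite} $m\o e_0=e$ to lie in $\E_\X$ in order to conclude $m\in\E_\X$ --- but $e$ is an arbitrary $\E$-quotient with codomain in $\V(\TT)$, and membership of $e$ in $\E_\X$ is exactly what is not given (and fails in general, e.g.\ for non-$c$-reflexive quotients of quantitative algebras). So the final step does not go through as outlined.

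The paper closes this direction differently, and the difference is instructive: instead of factoring $e$ through $\TT_X$ and trying to upgrade the mediating map, it works from the codomain side. Since $E\in\V(\TT)$, \autoref{lem:vtau} together with $\E_\X$-completeness yields some $Y\in\X$ and $e_Y\colon Y\epito E$ in $\TT_Y\cap\E_\X$. Projectivity of $X$ w.r.t.\ $e_Y$ gives $h\colon X\to Y$ with $e_Y\o h=e$; substitution invariance puts the coimage $e_X$ of $e_Y\o h$ into $\TT_X$; and since $e=\ol{h}\o e_X$ with $e\in\E$, the cancellation law forces the $\M$-morphism $\ol{h}$ to be an isomorphism, so $e$ and $e_X$ represent the same quotient and $e\in\TT_X$. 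Note that no mediating morphism is ever shown to lie in $\E_\X$, and closure of $\TT_X$ under $\E_\X$-quotients is not needed in this direction. To salvage your outline, replace the step ``factor $e$ through some $e_0\in\TT_X$'' by this lifting argument.
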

\begin{proof}
  Let $\mathscr{T} = (\mathscr{T}_X)_{X\in\X}$ and
  $\mathscr{T}(\V(\mathscr{T})) = (\mathscr{T}_X')_{X\in\X}$. We need
  to prove $\mathscr{T}_X = \mathscr{T}_X'$ for all $X\in \X$.

  For $\seq$, let $e: X\epito A$ in $\mathscr{T}_X$. Then
  $A\in \V(\mathscr{T})$ by \autoref{lem:vtau}, and thus
  $e\in \mathscr{T}_X'$ by the definition of
  $\mathscr{T}(\V(\mathscr{T}))$.

  For $\supseteq$, let $e: X\epito A$ in $\mathscr{T}_X'$. Then
  $A\in \V(\mathscr{T})$ by the definition of
  $\mathscr{T}(\V(\mathscr{T}))$. Thus, by \autoref{lem:vtau} and $\E_\X$-completeness of the theory $\mathscr{T}$, there exists some
  $Y\in \X$ an $\E_\X$-quotient $e_Y: Y\epito A$ in
  $\mathscr{T}_Y$. Since $X$ is projective w.r.t.~$e_Y$, we can choose
  a morphism $h: X\to Y$ with $e_Y\o h = e$. Since $\mathscr{T}$ is a
  \eqnth, the coimage $e_X$ of $e_Y\o h$ lies in $\mathscr{T}_X$:
  \[
    \xymatrix{
      X \ar[r]^h \ar@{->>}[d]_{e_X} \ar@{->>}[dr]^e & Y \ar@{->>}[d]^{e_Y} \\
      E_X \ar@{>->}[r]_{\ol h} & A
    }
  \] 
  Thus $e = \ol h \o e_X$, which implies that $\ol h$ lies in
  $\E$. Since it also lies in $\M$, we have that $\ol h$ is an
  isomorphism, and thus, it is contained in $\E_\X$. Hence, $e
  \in \mathscr{T}_X$ since $\mathscr{T}_X$ is closed under
  $\E_\X$-quotients. 
\end{proof}

From the two previous lemmas we get the main result of this section. 

\begin{theorem}[HSP Theorem]\label{thm:hsp}
  The complete lattices of \eqnths and \varieties are dually
  isomorphic. The isomorphism is given by
  \[
    \V\mapsto \TT(\V)
    \quad\text{and}\quad
    \TT\mapsto \V(\TT).
  \]
\end{theorem}
\proof
  By \autoref{lem:vtv} and \autoref{lem:tvt} the two maps above are
  mutually inverse. Thus, it only remains to show that they are
  antitone.

  \begin{enumerate}[wide,labelindent=0pt,parsep=0pt]
  \item Suppose that $\V\seq \V'$ are varieties, and let
    $e\colon X\epito A$ be a quotient in $[\TT(\V)]_X$. Then $A\in \V$
    by definition of $\TT(\V)$, and thus $A\in \V'$, i.e. the quotient
    $e$ also lies in $[\TT(\V')]_X$. This shows $\TT'\leq \TT$.

  \item\sloppypar Suppose that $\TT\leq \TT'$ are theories, and let
    $A'\in \V(\TT')$. Then, by \autoref{lem:vtau}, there exists a
    quotient $e'\colon X\epito A'$ in $\TT_X$ with codomain $A'$. By
    definition of a theory, we may assume that $e'\in \E_\X$. Since
    $\TT\leq \TT'$, the quotient $e'$ factorizes through some quotient
    $e\colon X\epito A$ in $\TT_X$, i.e. $e'=q\o e$ for some
    $q\colon A\epito A'$. Since $e'\in \E_\X$ we have $q\in \E_\X$ by
    \autoref{lem:ex}\ref{lem:ex:2}. Moreover, $A\in \V(\TT)$ by
    \autoref{lem:vtau}, and thus $A'\in \V(\TT)$ because $\V(\TT)$ is
    closed under $\E_\X$-quotients. This shows $\V(\TT')\seq \V(\TT)$.\qed
  \end{enumerate}
\doendproof
One can recast the HSP Theorem into a more familiar form, using
equations in lieu equational theories. Recall from
\autoref{constr:var} that an equationally presentable class is a class
$\V(\Eq)$ for some class of equations. 

\begin{theorem}[HSP Theorem, equational version]\label{thm:hspeq}
  A class $\V\seq \A_0$ is a \variety iff it is equationally presentable.
\end{theorem}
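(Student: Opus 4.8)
The plan is to handle the two implications separately and to lean on the machinery already assembled. The implication ``equationally presentable $\Rightarrow$ variety'' is immediate: if $\V=\V(\Eq)$ for some class $\Eq$ of equations, then $\V$ is a \variety by \autoref{lem:var}. So the only real work lies in the converse, and for that I would route everything through the HSP Theorem (\autoref{thm:hsp}) together with \autoref{lem:tvistheory}.

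Concretely, given a \variety $\V$, I would form its canonical equational theory $\mathscr{T}(\V)=(\mathscr{T}_X)_{X\in\X}$ as in \autoref{constr:eqnth}; it is an \eqnth by \autoref{lem:tvistheory}. Then I would set $\Eq\defeq\{\,\mathscr{T}_X : X\in\X\,\}$, viewed merely as a class of equations (i.e.\ forgetting the substitution-invariance and $\E_\X$-completeness conditions that upgrade the family to a theory). By \autoref{constr:var}, $\V(\Eq)$ is exactly the class of objects of $\A_0$ satisfying every $\mathscr{T}_X$, which is by definition the \variety $\V(\mathscr{T}(\V))$ assigned to the theory $\mathscr{T}(\V)$ under \autoref{thm:hsp}. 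Since that theorem asserts $\V\mapsto\mathscr{T}(\V)$ and $\mathscr{T}\mapsto\V(\mathscr{T})$ are mutually inverse, we get $\V(\mathscr{T}(\V))=\V$, hence $\V=\V(\Eq)$ is equationally presentable, completing the proof.

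I do not expect a genuine obstacle: essentially all the content has already been discharged in \autoref{lem:var}, \autoref{lem:tvistheory}, and \autoref{thm:hsp}. The one point that needs care is purely bookkeeping, namely keeping straight the distinction between an \emph{\eqnth} (a coherent $\X$-indexed family of equations) and a bare \emph{class of equations} (no coherence imposed): the theorem presents $\V$ by the underlying class of its canonical theory, and one must observe that applying $\V(\dash)$ to that class gives the same subclass of $\A_0$ as applying $\V(\dash)$ to the theory — which is immediate by unfolding the definition of satisfaction.

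For completeness I might remark that the converse can also be argued from first principles, bypassing \autoref{thm:hsp}: the inclusion $\V\seq\V(\mathscr{T}(\V))$ follows by factoring any $h\colon X\to A$ with $A\in\V$ as $h=m\o e$ through its coimage $e\colon X\epito E$ and noting that $E$ is an $\X$-generated $\M$-subobject of $A$, hence $E\in\A_0$ by Assumption~\ref{asm:setting}\ref{A2} and $E\in\V$ by closure under subobjects, so $e\in\mathscr{T}_X$; the reverse inclusion uses Assumption~\ref{asm:setting}\ref{A3} to exhibit $A\in\V(\mathscr{T}(\V))$ as an $\E_\X$-quotient of some $X\in\X$, observes that its factorization through a member of $\mathscr{T}_X$ is again an $\E_\X$-quotient, and invokes closure of $\V$ under $\E_\X$-quotients. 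But invoking the already-established HSP Theorem is shorter, and that is how I would present it.
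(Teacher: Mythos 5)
Your proof is correct and follows exactly the paper's own argument: one direction by \autoref{lem:var}, the converse by forming $\mathscr{T}(\V)$ and invoking $\V=\V(\mathscr{T}(\V))$ from \autoref{thm:hsp}, then presenting $\V$ by the class $\Eq=\{\,\mathscr{T}_X : X\in\X\,\}$. Your closing remark about a direct argument bypassing \autoref{thm:hsp} is also sound — it is essentially the content of the paper's Lemmas \ref{lem:vtau} and \ref{lem:vtv} in the appendix — but the main route you chose is the one the paper takes.
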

\begin{proof}
We saw in \autoref{lem:var} that every equationally presentable class $\V(\Eq)$ is a
  variety. Conversely, by \autoref{lem:vtv}, every \variety $\V$ is
  presented by the equations $\Eq = \{\mathscr{T}_X: X\in \X\}$, where
  $\mathscr{T} = \mathscr{T}(\V)$.
\end{proof}
}

\section{Equational Logic}
\label{S:logic}

The correspondence between theories and varieties gives rise to the second main result of our paper, a generic
sound and complete deduction system for reasoning about equations. The corresponding semantic concept is the following:

\begin{defn}
  An equation $\TT_X\seq \epid{X}{\A_0}$ \emph{semantically entails}
  the equation $\TT_Y'\seq \epid{Y}{\A_0}$ if every $\A_0$-object satisfying $\TT_X$
  also satisfies $\TT_Y'$ (that is, if $\V(\TT_X)\seq \V(\TT_Y)$). In this case, we write $\TT_X\models \TT_Y'$.
\end{defn}
The key to our proof system is a categorical formulation of term substitution:

\begin{defn}\label{def:closure}
  Let $\TT_X\seq \epid{X}{\A_0}$ be an equation over $X\in \X$. The
  \emph{substitution closure} of $\TT_X$ is the smallest theory
  $\ol \TT = (\ol{\TT}_Y)_{Y\in \X}$ such that $\TT_X\leq \ol \TT_X$.
\end{defn}
The substitution closure of an equation can be computed as follows:
\begin{lemma}\label{lem:substclosure}
  For every equation $\TT_X\seq \epid{X}{\A_0}$ one has
  $\ol\TT = \TT(\V(\TT_X))$.
\end{lemma}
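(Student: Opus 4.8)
The plan is to verify that $\TT(\V(\TT_X))$ is itself an \eqnth whose $X$-component lies above $\TT_X$ in the order on equations, and that it sits below every other \eqnth with this property; by \autoref{def:closure} this forces $\ol\TT=\TT(\V(\TT_X))$. For the first point, note that $\V(\TT_X)\seq\A_0$ is a \variety by \autoref{lem:var}, hence $\TT(\V(\TT_X))$ is an \eqnth by \autoref{lem:tvistheory}. By \autoref{constr:eqnth} its $X$-component consists of all quotients $e\colon X\epito E$ with $E\in\V(\TT_X)$; since any such $E$ satisfies $\TT_X$, the morphism $e\colon X\to E$ factorizes through some quotient in $\TT_X$, which is precisely the statement that $\TT_X$ is $\leq$ the $X$-component of $\TT(\V(\TT_X))$.

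For minimality, I would take an arbitrary \eqnth $\TT'$ with $\TT_X\leq\TT'_X$ and first deduce $\V(\TT')\seq\V(\TT_X)$: if $A\in\V(\TT')$, then every $h\colon X\to A$ factorizes through some quotient in $\TT'_X$, which in turn factorizes through some quotient in $\TT_X$ because $\TT_X\leq\TT'_X$; hence $A\models\TT_X$, i.e.\ $A\in\V(\TT_X)$. Now I invoke the dual isomorphism of \autoref{thm:hsp}: since the assignment $\V\mapsto\TT(\V)$ is order-reversing, $\V(\TT')\seq\V(\TT_X)$ yields $\TT(\V(\TT_X))\leq\TT(\V(\TT'))$, and $\TT(\V(\TT'))=\TT'$ because the two maps of \autoref{thm:hsp} are mutually inverse. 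Therefore $\TT(\V(\TT_X))\leq\TT'$.

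Combining the two steps, $\TT(\V(\TT_X))$ is the least \eqnth whose $X$-component dominates $\TT_X$, so $\ol\TT=\TT(\V(\TT_X))$. No categorical input is needed beyond \autoref{lem:var}, \autoref{lem:tvistheory} and \autoref{thm:hsp}, and the only place demanding care is keeping the order conventions straight — a ``smaller'' equation is the weaker one, so that both $\V(\dash)$ and $\TT(\dash)$ in the HSP correspondence are contravariant — after which each step is just an unfolding of the definitions of satisfaction and of the orders on equations and on theories.
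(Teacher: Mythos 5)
Your proof is correct and follows essentially the same route as the paper's: the same two steps (showing $\TT_X\leq[\TT(\V(\TT_X))]_X$ directly from the definition of satisfaction, then establishing minimality by reducing $\TT(\V(\TT_X))\leq\TT'$ to $\V(\TT')\seq\V(\TT_X)$ via the dual isomorphism of \autoref{thm:hsp} and checking the latter inclusion by hand). The only cosmetic difference is that you invoke \autoref{lem:tvistheory} explicitly to certify that $\TT(\V(\TT_X))$ is a theory, which the paper leaves implicit.
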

The deduction system for semantic entailment consists of two proof rules:

{ \flushleft

\begin{tabular}{ l l }
 (Weakening) & $\TT_X\vdash \TT_X'$ for all equations
  $\TT_X'\leq \TT_X$ over $X\in \X$.\\
 (Substitution) &  $\TT_X\vdash \ol\TT_Y$ for all equations
  $\TT_X$ over $X\in \X$ and all $Y\in \X$.
\end{tabular}
}

  \smallskip\noindent Given equations $\TT_X$ and $\TT_Y'$ over $X$ and $Y$, respectively, we write $\TT_X\vdash \TT_Y'$ if $\TT_Y'$ arises
  from $\TT_X$ by a finite chain of applications of the above rules.
%
%
%
%
\begin{theorem}[Completeness Theorem]\label{thm:eqlogicsoundcomplete}
  The deduction system for seman-tic entailment is sound and complete: for every pair of equations $\TT_X$ and $\TT_Y'$,
  \[
    \TT_X\models \TT_Y' \quad\text{iff}\quad \TT_X \vdash \TT_Y'.
  \]
\end{theorem}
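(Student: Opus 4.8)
The plan is to prove the two implications separately: soundness reduces to a rule-by-rule check, and completeness is a two-step derivation that runs through the substitution closure.

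\emph{Soundness.} Since $\vdash$ is generated by finite chains of rule applications and $\models$ is evidently transitive (being inclusion of the classes $\V(\TT_X)$), it suffices to verify that each of the two rules is semantically valid. For (Weakening): if $\TT_X'\leq\TT_X$ and $A\models\TT_X$, then every $h\colon X\to A$ factors through some $e\in\TT_X$, which in turn factors through some $e'\in\TT_X'$; hence $A\models\TT_X'$, so $\V(\TT_X)\seq\V(\TT_X')$. For (Substitution): I would invoke \autoref{lem:substclosure}, which identifies the substitution closure as $\ol\TT=\TT(\V(\TT_X))$, together with the HSP Theorem (\autoref{thm:hsp}), which gives $\V(\TT(\W))=\W$ for every \variety $\W$. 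As $\V(\TT_X)$ is a \variety by \autoref{lem:var}, we get $\V(\ol\TT)=\V(\TT_X)$; and since $\V(\ol\TT)=\bigcap_{Z\in\X}\V(\ol\TT_Z)\seq\V(\ol\TT_Y)$ for every $Y\in\X$, we conclude $\V(\TT_X)\seq\V(\ol\TT_Y)$, i.e.~$\TT_X\models\ol\TT_Y$.

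\emph{Completeness.} Assume $\TT_X\models\TT_Y'$, that is $\V(\TT_X)\seq\V(\TT_Y')$. First apply (Substitution) to obtain $\TT_X\vdash\ol\TT_Y$. By \autoref{lem:substclosure} we have $\ol\TT=\TT(\V(\TT_X))$, so $\ol\TT_Y$ consists precisely of those quotients $e\colon Y\epito E$ whose codomain satisfies $E\in\V(\TT_X)$. Fix such a quotient $e\colon Y\epito E$ in $\ol\TT_Y$. Since $E\in\V(\TT_X)\seq\V(\TT_Y')$, we have $E\models\TT_Y'$, and instantiating the satisfaction condition at the morphism $e\colon Y\to E$ shows that $e$ factors through some quotient in $\TT_Y'$. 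As $e\in\ol\TT_Y$ was arbitrary, this says exactly that $\TT_Y'\leq\ol\TT_Y$; hence (Weakening) gives $\ol\TT_Y\vdash\TT_Y'$. Chaining the two derivations yields $\TT_X\vdash\ol\TT_Y\vdash\TT_Y'$, as desired.

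The argument is short because the substantive content has been pushed into \autoref{lem:substclosure} and into the bijectivity of the correspondence in \autoref{thm:hsp} (it is crucial here that $\V\circ\TT=\mathrm{id}$ on varieties, not merely a Galois connection). I therefore expect the only real obstacle to be bookkeeping the direction of the order $\leq$ on equations, which reverses the direction of ``factors through'': one must keep straight that $\TT_Y'\leq\ol\TT_Y$ means every quotient of $\ol\TT_Y$ factors through one in $\TT_Y'$, and that it is this inequality — not its converse — that licenses the (Weakening) step. No co-wellpoweredness or smallness assumptions enter, since the derivation never enumerates equations and uses only the two rules as stated.
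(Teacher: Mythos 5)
Your proof is correct and follows essentially the same route as the paper's: soundness of (Substitution) via \autoref{lem:substclosure} combined with $\V(\TT(\V(\TT_X)))=\V(\TT_X)$ from \autoref{thm:hsp}, and completeness via the two-step derivation $\TT_X\vdash\ol\TT_Y\vdash\TT_Y'$ after establishing $\TT_Y'\leq\ol\TT_Y$. The only (harmless) difference is that you prove $\TT_Y'\leq\ol\TT_Y$ by directly unwinding the description of $\ol\TT_Y$ as the quotients with codomain in $\V(\TT_X)$ and instantiating satisfaction at $h=e$, whereas the paper routes the same inequality through $\ol\TT_Y'=[\TT(\V(\TT_Y'))]_Y$ and the antitonicity of $\TT(\dash)$.
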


\section{Applications}\label{S:app}

In this section, we present some of the applications of our categorical
results (see Appendix~\ref{app:B} for full details). Transferring the general HSP theorem of \autoref{S:hsp} into a concrete setting requires to perform the following
four-step procedure:

\medskip\noindent \textbf{Step 1.} Instantiate the parameters $\A$, $(\E,\M)$, $\A_0$, $\Lambda$ and $\X$ of our categorical framework, and characterize the quotients in $\E_\X$.

\medskip\noindent \textbf{Step 2.} Establish an \emph{exactness
  property} for the category $\A$, i.e. a correspondence between quotients $e\colon A\epito B$ in $\A$ and suitable relations between elements of $A$.

\medskip\noindent \textbf{Step 3.} Infer a suitable syntactic notion of equation, and prove it to be expressively equivalent to the categorical notion of equation given by \autoref{D:eq}.

\medskip\noindent \textbf{Step 4.} Invoke Theorem \ref{thm:hsp} to deduce an HSP theorem. 

\medskip
\noindent The details of Steps 2 and 3 are application-specific, but
typically straightforward. In each case, the bulk of the usual work required
for establishing the HSP theorem is moved to our general categorical
results and thus comes for free. 

Similarly, to obtain a complete deduction system in a concrete application, it suffices to phrase the two proof rules of our generic equational logic in syntactic terms, using the
correspondence of quotients and relations from Step
2; then \autoref{thm:eqlogicsoundcomplete} gives the completeness result. 

\subsection{Classical $\Sigma$-Algebras}\label{S:birkhoff}
The classical Birkhoff theorem emerges from our general results as follows.

\medskip
\noindent\textbf{Step 1.} Choose the parameters of Example \ref{ex:running}\ref{ex:running:birkhoff}, and recall that $\E_\X = \E$.

\medskip
\noindent\textbf{Step 2.} The exactness property of $\Alg{\Sigma}$ is given by the correspondence \eqref{eq:homtheorem}.

\medskip
\noindent\textbf{Step 3.} Recall from Example
\ref{ex:running:eq}\ref{ex:running:birkhoff} that equations can be
presented as single quotients $e\colon T_\Sigma X\epito E_X$. The
exactness property \eqref{eq:homtheorem} leads to the following
classical syntactic concept: a \emph{term equation} over a set $X$ of
variables is a pair $(s,t)\in T_\Sigma X\times T_\Sigma X$, denoted as
$s=t$. It is \emph{satisfied} by a $\Sigma$-algebra $A$ if for every
map $h\colon X\to A$ we have $\ext h(s) = \ext h(t)$. Here,
$\ext h\colon T_\Sigma X\to A$ denotes the unique extension of $h$ to
a $\Sigma$-homomorphism. Equations and term equations are
expressively equivalent in the following sense:
\begin{enumerate}
\item For every equation $e\colon T_\Sigma X\epito E_X$, the kernel
  $\mathord{\equiv_e}\seq T_\Sigma X\times T_\Sigma X$ is a set of term
  equations equivalent to $e$, that is, a $\Sigma$-algebra satisfies
  the equation $e$ iff it satisfies all term equations in $\equiv_e$. This
  follows immediately from \eqref{eq:homtheorem}.
\item Conversely, given a term equation
  $(s,t)\in T_\Sigma X\times T_\Sigma X$, form the smallest congruence
  $\equiv$ on $T_\Sigma X$ with $s\equiv t$ (viz. the intersection of
  all such congruences) and let $e\colon T_\Sigma X\epito E_X$ be the
  corresponding quotient. Then a $\Sigma$-algebra satisfies $s=t$ iff
  it satisfies $e$. Again, this is a consequence of
  \eqref{eq:homtheorem}.
\end{enumerate}

\medskip
\noindent\textbf{Step 4.}
From \autoref{thm:hspeq} and Example \ref{ex:running:variety}\ref{ex:running:birkhoff}, we deduce the classical

\begin{theorem}[Birkhoff \cite{Birkhoff35}]
  A class of $\Sigma$-algebras is a variety (i.e. closed under quotients,
  subalgebras, products) iff it is axiomatizable by term equations.
\end{theorem}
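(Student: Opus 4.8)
The plan is to recognise the theorem as an instance of the four-step recipe from the beginning of \autoref{S:app}, where Steps~1--3 have essentially been carried out above, so that only the assembly in Step~4 remains. First I would unwind the general notions for the parameters of \autoref{ex:running}\ref{ex:running:birkhoff}. By Step~1, $\A=\A_0=\Alg{\Sigma}$, $(\E,\M)=(\text{surjections},\text{injections})$, $\Lambda=$ all cardinals, and $\E_\X=\E$; Assumptions \ref{asm:setting}\ref{A1}--\ref{A3} hold since $\Alg{\Sigma}$ has all products, $\A_0=\A$ is trivially closed under the required operations, and every $\Sigma$-algebra is a quotient of a free one. With these identifications, \autoref{D:var} says precisely that $\V\seq\Alg{\Sigma}$ is a \variety iff it is closed under homomorphic images (the $\E$-quotients, since $\E_\X=\E$), subalgebras (the $\M$-subobjects), and arbitrary products (the $\Lambda$-products) --- i.e.\ exactly the classical notion recalled in \autoref{ex:running:variety}. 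Moreover, since $\Alg{\Sigma}$ is $\E$-co-wellpowered and $\Lambda$ consists of all cardinals, \autoref{rem:singlequot} lets us present each categorical equation over a set $X$ of variables as a single quotient $e\colon T_\Sigma X\epito E_X$.

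The one place where a small computation is needed is the expressive equivalence of Step~3 between these categorical equations and classical term equations $s=t$; it rests entirely on the homomorphism theorem \eqref{eq:homtheorem} and the universal property of $T_\Sigma X$. In one direction, given $e\colon T_\Sigma X\epito E_X$ with kernel congruence $\mathord{\equiv_e}\seq T_\Sigma X\times T_\Sigma X$, I would show that a $\Sigma$-algebra $A$ satisfies $e$ in the sense of \autoref{D:eq} iff it satisfies every term equation in $\mathord{\equiv_e}$: $A\models e$ means that every $\Sigma$-homomorphism $T_\Sigma X\to A$, equivalently every $\ext h$ with $h\colon X\to A$, factors through $e$, which by \eqref{eq:homtheorem} holds iff $\mathord{\equiv_e}\seq\ker(\ext h)$, i.e.\ iff $\ext h(s)=\ext h(t)$ for all $(s,t)\in\mathord{\equiv_e}$ and all $h$. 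In the other direction, given $s=t$ over $X$, let $e\colon T_\Sigma X\epito E_X$ correspond under \eqref{eq:homtheorem} to the least congruence identifying $s$ and $t$; the same computation gives $A\models s=t$ iff $A\models e$. Hence a class of $\Sigma$-algebras is presentable by a class of categorical equations (in the sense of \autoref{constr:var}) iff it is axiomatizable by term equations.

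It remains to combine the two halves. If $\V\seq\Alg{\Sigma}$ is closed under homomorphic images, subalgebras and products, then it is a \variety in our sense, hence equationally presentable by \autoref{thm:hspeq} (explicitly, $\V=\V(\Eq)$ with $\Eq=\{\,\TT_X:X\in\X\,\}$ for $\TT=\TT(\V)$), hence axiomatizable by term equations by the translation above. Conversely, any class axiomatized by term equations has the form $\V(\Eq)$ for a suitable class $\Eq$ of categorical equations, hence is a \variety by \autoref{lem:var}, hence closed under homomorphic images, subalgebras and products. Once \autoref{thm:hspeq} is in hand there is no real obstacle; the only point requiring care is the bookkeeping of the expressive-equivalence step --- that the (set-sized) kernel congruence of a single categorical equation records satisfaction faithfully in both directions --- since the genuine content, namely that closure under H, S and P already forces equational definability, is supplied wholesale by the general HSP theorem.
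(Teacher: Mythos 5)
Your proposal is correct and follows essentially the same route as the paper: instantiate the parameters of Example~\ref{ex:running}\ref{ex:running:birkhoff}, use the exactness correspondence \eqref{eq:homtheorem} to translate between single-quotient equations $e\colon T_\Sigma X\epito E_X$ and (sets of) term equations in both directions, and then invoke \autoref{thm:hspeq} together with \autoref{ex:running:variety}\ref{ex:running:birkhoff}. The expressive-equivalence computation you sketch via kernels and least congruences is exactly the content of Step~3 in the paper's treatment.
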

Similarly, one can obtain Birkhoff's complete deduction system for term equations as an instance of
\autoref{thm:eqlogicsoundcomplete}; see Appendix \ref{sec:app:birkhoff} for details.

%

\subsection{Finite $\Sigma$-Algebras}\label{S:reiterman}
Next, we derive Eilenberg and Schützenberger's equational characterization of pseudovarieties of algebras over a finite signature $\Sigma$ using our four-step plan:

\medskip
\noindent\textbf{Step 1.} Choose the parameters of Example
\ref{ex:running}\ref{ex:running:eilenschuetz}, and recall that $\E_\X=\E$.

\medskip
\noindent\textbf{Step 2.} The exactness property of $\Alg{\Sigma}$ is
given by~\eqref{eq:homtheorem}.

\medskip
\noindent\textbf{Step 3.}
By Example \ref{ex:running}\ref{ex:running:eilenschuetz}, an equational
theory is given by a family of filters
$\TT_n\seq T_\Sigma n\mathord{\epidownarrow}\FAlg{\Sigma}$
($n<\omega$). The corresponding syntactic concept involves sequences
$(s_i=t_i)_{i<\omega}$ of term equations. We say that a finite
$\Sigma$-algebra $A$ \emph{eventually satisfies} such a sequence if
there exists $i_0<\omega$ such that $A$ satisfies all equations
$s_i=t_i$ with $i\geq i_0$. Equational theories and sequences of term
equations are expressively equivalent:
\begin{enumerate}
\item Let $\TT=(\TT_n)_{n<\omega}$ be a theory. Since $\Sigma$ is a finite signature, for each finite quotient
  $e\colon T_\Sigma n\epito E$ the kernel $\equiv_e$ is a finitely
  generated congruence~\cite[Prop.~2]{es76}. Consequently, for each $n<\omega$ the algebra $T_\Sigma n$ has only countably many finite quotients. In particular, the codirected poset $\TT_n$ is countable, so it contains an $\omega^\op$-chain $e_0^n \geq e_1^n\geq e_2^n\geq \cdots$ that
  is \emph{cofinal}, i.e., each $e\in \TT_n$ is above some $e_i^n$. The $e_i^n$ can be
  chosen in such a way that, for each $m> n$ and $q\colon m\to n$, the
  morphism $e_i^n\o T_\Sigma q$ factorizes through $e_i^m$.
  For each $n<\omega$, choose a finite subset
  $W_n\seq T_\Sigma n\times T_\Sigma n$ generating the kernel of
  $e_n^n$. Let $(s_i=t_i)_{i<\omega}$ be a sequence of term equations
  where $(s_i,t_i)$ ranges over 
  $\bigcup_{n<\omega} W_n$. One can verify that a finite $\Sigma$-algebra lies in
  $\V(\TT)$ iff it eventually satisfies $(s_i=t_i)_{i<\omega}$.
\item Conversely, given a sequence of term equations $(s_i=t_i)_{i<\omega}$ with $(s_i,t_i)\in T_\Sigma m_i\times T_\Sigma m_i$, form the theory $\TT=(\TT_n)_{n<\omega}$ where $\TT_n$ consists of all finite quotients $e\colon T_\Sigma n\epito E$ with the following property:
\[ \exists i_0<\omega:\forall i\geq i_0:\forall (g\colon T_\Sigma {m_i}\to T_\Sigma n): e\o g(s_i)=e\o g(t_i). \]
Then a finite $\Sigma$-algebra eventually satisfies $(s_i=t_i)_{i<\omega}$  iff it lies in $\V(\TT)$.
\end{enumerate}

\medskip
\noindent\textbf{Step 4.}
The theory version of our HSP theorem (\autoref{thm:hspeq}) now implies:

\begin{theorem}[Eilenberg-Schützenberger~\cite{es76}]\\
  A class of finite $\Sigma$-algebras is a pseudovariety (i.e. closed
  under quotients, subalgebras, and finite products) iff it
  is axiomatizable by a sequence of term equations.
\end{theorem}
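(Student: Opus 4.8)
The plan is to derive this from the theory version of our HSP theorem (\autoref{thm:hspeq}), which tells us that a class $\V$ of finite $\Sigma$-algebras is a pseudovariety if and only if it is equationally presentable, i.e.\ of the form $\V(\Eq)$ for some class $\Eq$ of equations over the free algebras $T_\Sigma n$ with $n\in\FSet$. So the entire content of the theorem reduces to the two expressive-equivalence statements formulated in Step~3: every equation (equivalently, every filter $\TT_n\seq T_\Sigma n\mathord{\epidownarrow}\FAlg{\Sigma}$, and hence every equational theory) can be replaced by a sequence of term equations with the same finite models in the ``eventually satisfies'' sense, and conversely. Once both directions are in place, the theorem is immediate: a pseudovariety equals $\V(\TT(\V))$ for the theory $\TT(\V)$ of \autoref{constr:eqnth}, which by direction~(1) is axiomatized by a sequence of term equations; and conversely, by direction~(2) any class axiomatized by such a sequence equals $\V(\TT)$ for a suitable theory $\TT$, hence is a variety, hence a pseudovariety by \autoref{ex:running:variety}.

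For direction~(1), the key point is that over a \emph{finite} signature $\Sigma$ the kernel of any finite quotient $e\colon T_\Sigma n\epito E$ is finitely generated (\cite[Prop.~2]{es76}), so there are only countably many finite quotients of $T_\Sigma n$, and the codirected poset $\TT_n$ therefore contains a cofinal $\omega^{\op}$-chain $e_0^n\geq e_1^n\geq\cdots$. By interleaving the choices across all $n$ and all $\Sigma$-homomorphisms $T_\Sigma q\colon T_\Sigma m\to T_\Sigma n$ ($q\colon m\to n$), one arranges compatibility: $e_i^n\o T_\Sigma q$ factors through $e_i^m$. Picking a finite generating set $W_n$ for the kernel of $e_n^n$ and enumerating $\bigcup_n W_n$ as $(s_i=t_i)_{i<\omega}$, one checks that a finite algebra $A$ lies in $\V(\TT)$ iff it eventually satisfies the sequence. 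The forward implication uses that $A\in\V(\TT)$ means every homomorphism $T_\Sigma n\to A$ factors through some $e\in\TT_n$, hence through some $e_i^n$, so $A$ satisfies the equations coming from the levels beyond a point determined by $|A|$; the backward implication uses cofinality to conclude that satisfying the tail of the sequence forces the relevant factorizations. For direction~(2), given $(s_i=t_i)_{i<\omega}$ with $(s_i,t_i)\in T_\Sigma m_i\times T_\Sigma m_i$, one simply \emph{defines} $\TT_n$ to be the set of finite quotients $e\colon T_\Sigma n\epito E$ satisfying the displayed property ($\exists i_0\,\forall i\geq i_0\,\forall g\colon T_\Sigma m_i\to T_\Sigma n:\; e\o g(s_i)=e\o g(t_i)$), and verifies that $\TT=(\TT_n)_n$ is an equational theory (substitution invariance and, since $\E_\X=\E$ here, trivial $\E_\X$-completeness and codirectedness via the ``eventually'' quantifier closing a pair of quotients under a common refinement) and that $\V(\TT)$ is exactly the class of finite algebras eventually satisfying the sequence.

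The main obstacle is the bookkeeping in direction~(1): one must simultaneously choose cofinal chains in all the posets $\TT_n$ \emph{and} make them compatible under every substitution $T_\Sigma q$, which requires a diagonal/simultaneous recursion over $\omega$ rather than $n$-by-$n$ choices, plus the observation from \cite{es76} that finite quotients of $T_\Sigma n$ form a countable codirected poset (this is where finiteness of $\Sigma$ is essential). The verification that eventual satisfaction of $(s_i=t_i)_i$ matches membership in $\V(\TT)$ is then a routine but slightly fiddly chase: in one direction bounding $i_0$ in terms of the size of the algebra and using that $e_{i}^n$ is eventually below every element of $\TT_n$, in the other direction using that $W_n$ generates the kernel of $e_n^n$ so that satisfying those finitely many term equations is equivalent to factoring through $e_n^n$. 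Everything else — the passage from theories to varieties and back, and the identification of varieties with pseudovarieties — is already packaged in \autoref{thm:hspeq} and \autoref{ex:running:variety}.
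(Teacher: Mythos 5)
Your proposal is correct and follows essentially the same route as the paper: reduce via the HSP theorem to the expressive equivalence between equational theories and sequences of term equations, prove direction (1) using the finiteness of $\Sigma$ (via \cite[Prop.~2]{es76}) to extract substitution-compatible cofinal $\omega^{\op}$-chains and finite generating sets $W_n$, and direction (2) by directly defining the filters $\TT_n$ from the ``eventually satisfies'' condition. The only part you compress is the backward implication of direction (1) for homomorphisms out of $T_\Sigma m$ with $m\leq n$, where the paper needs a retraction $T_\Sigma(n+1)\rightleftarrows T_\Sigma m$ to transfer the factorization, but this falls within the ``routine chase'' you flag.
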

In an alternative characterization of pseudovarieties due to Reiterman~\cite{Reiterman1982}, where the restriction to finite signatures $\Sigma$ can be dropped, sequences of term equations are replaced by the topological concept of a \emph{profinite equation}. This result can also be derived from
our general HSP theorem, see Appendix \ref{sec:reiterman}.

\subsection{Quantitative Algebras}
In this section, we derive an HSP theorem for quantitative algebras. 

\medskip
\noindent\textbf{Step 1.} Choose the parameters of Example \ref{ex:running}\ref{ex:running:mardare}. Recall that we work with fixed regular cardinal $c>1$ and that $\E_\X$ consists of all $c$-reflexive quotients.

\medskip
\noindent\textbf{Step 2.} To state the exactness property of $\QAlg{\Sigma}$, recall that an \emph{(extended) pseudometric} on a set $A$ is a map $p\colon A\times A\to [0,\infty]$ satisfying all axioms of an extended metric except possibly the implication $p(a,b)=0\To a=b$. Given a quantitative $\Sigma$-algebra $A$, a pseudometric $p$ on $A$ is called a \emph{congruence} if (i) $p(a,a')\leq d_A(a,a')$ for all $a,a'\in A$, and (ii) every $\Sigma$-operation $\sigma\colon A^n\to A$ ($\sigma\in\Sigma$)
    is nonexpansive w.r.t. $p$. Congruences are ordered by $p\leq q$ iff $p(a,a')\leq q(a,a')$ for all $a,a'\in A$. There is a dual isomorphism of complete lattices
\vspace{-0.1cm}
\begin{equation}\label{eq:homtheoremquant0}
  \text{quotient algebras of $A$} \quad\cong\quad \text{congruences on $A$}
\end{equation}
mapping $e\colon A\epito B$ to the congruence $p_e$ on $A$ given by $p_e(a,b)=d_B(e(a),e(b))$.

\medskip
\noindent\textbf{Step 3.} By Example \ref{ex:running:eq}\ref{ex:running:mardare}, equations can be presented as single quotients $e\colon T_\Sigma X\epito E$, where $X$ is a $c$-clustered space. The exactness property \eqref{eq:homtheoremquant0} suggests to replace equations by the following syntactic concept. A \emph{$c$-clustered equation} over the set $X$ of variables
  is an expression
  \begin{equation}\label{eq:cbasicgen0}
    x_i=_{\epsilon_i} y_i\;(i\in I)\; \vdash \; s=_\epsilon
    t
  \end{equation}
  where (i) $I$ is a set, (ii) $x_i,y_i\in X$ for all $i\in I$, (iii) $s$ and $t$ are $\Sigma$-terms over $X$, 
  (iv) $\epsilon_i,\epsilon\in [0,\infty]$, and (v) the equivalence relation on $X$ generated by the pairs $(x_i,y_i)$ ($i\in I$) has all equivalence classes of cardinality $<c$. In other words, the set of variables can be partitioned into subsets of size $<c$ such that only relations between variables in the same subset appear on the left-hand side of \eqref{eq:cbasicgen0}. A quantitative $\Sigma$-algebra $A$
  \emph{satisfies}~\eqref{eq:cbasicgen0} if for every map
  $h\colon X\to A$ with
 $d_A(h(x_i),h(y_i))\leq \epsilon_i$ for all $i\in I$, one has $d_A(\ext h(s),\ext h(t))\leq \epsilon$.  Here $\ext h\colon T_\Sigma X\to A$ denotes the unique
  $\Sigma$-homomorphism extending $h$.

Equations and $c$-clustered equations are expressively equivalent:
\begin{enumerate}
\item Let $X$ be a $c$-clustered space, i.e. $X=\coprod_{j\in J} X_j$ with $\under{X_j}<c$. Every equation $e\colon T_\Sigma X\epito E$ induces a set of $c$-clustered equations over $X$ given by
    \begin{equation}\label{eq:cbasic0}
      x=_{\epsilon_{x,y}} y\;(j\in J,\, x,y\in X_j) \;\vdash\; s=_{\epsilon_{s,t}}
      t\quad(s,t \in T_\Sigma X),
    \end{equation}
    with $\epsilon_{x,y} = d_X(x,y)$ and
    $\epsilon_{s,t}=d_E(e(s),e(t))$. It is not difficult to show that $e$ and
    \eqref{eq:cbasic0} are equivalent: an algebra satisfies $e$ iff it satisfies all equations \eqref{eq:cbasic0}.
  \item Conversely, to every $c$-clustered equation
    \eqref{eq:cbasicgen0} over a set $X$ of variables, we associate an
    equation in two steps:
  \begin{itemize}
  \item Let $p$ the largest pseudometric on $X$ with $p(x_i,y_i)\leq \epsilon_i$ for all $i$ (that is, the pointwise supremum of all such pseudometrics). Form the corresponding quotient
    $e_p\colon X\epito X_p$, see \eqref{eq:homtheoremquant0}. It is easy to see that $X_p$ is $c$-clustered.
  \item Let $q$ be the largest congruence on $T_\Sigma(X_p)$ with
    $q(T_\Sigma e_p(s), T_\Sigma e_p(t))\leq \epsilon$ (that is, the
    pointwise supremum of all such congruences). Form the corresponding quotient
    $e_q\colon T_\Sigma(X_p)\epito E_q$.
  \end{itemize}
  A routine verification shows that \eqref{eq:cbasicgen0} and $e_q$
  are expressively equivalent, i.e. satisfied by the same quantitative
  $\Sigma$-algebras.
\end{enumerate}

\medskip
\noindent\textbf{Step 4.}
From \autoref{thm:hspeq} and Example \ref{ex:running:variety}\ref{ex:running:mardare}, we deduce the following 

\begin{theorem}[Quantitative HSP Theorem]\label{thm:quanthsp}
  A class of quantitative $\Sigma$-algebras is a $c$-variety (i.e. closed under $c$-reflexive quotients,
  subalgebras, and products) iff it is axiomatizable by $c$-clustered equations.
\end{theorem}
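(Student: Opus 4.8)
The plan is to instantiate the general framework with the parameters of Example~\ref{ex:running}\ref{ex:running:mardare} and then to invoke \autoref{thm:hspeq}. The first task is to verify \autoref{asm:setting} for this choice: $\QAlg{\Sigma}$ has all products (computed on underlying sets, equipped with the $\sup$-metric), it is closed under isomorphisms, products, and $\M$-subobjects (i.e.\ sub-quantitative-algebras carried by metric subspaces), and every quantitative $\Sigma$-algebra $A$ arises as an $\E_\X$-quotient of the free algebra over its underlying space, the counit of the free--forgetful adjunction restricting isometrically to the generators and hence being $c$-reflexive. In the same step one checks the characterization of $\E_\X$ promised in Step~1, namely that a quotient $e\colon A\epito B$ lifts every morphism out of a free algebra $T_\Sigma X$ over a $c$-clustered space $X$ if and only if every subset of $B$ of cardinality $<c$ lifts isometrically along $e$; using the coproduct decomposition of $X$ and the universal property of $T_\Sigma$, this reduces to the analogous statement about the underlying spaces, which is exactly $c$-reflexivity.

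Second, I would establish the exactness property~\eqref{eq:homtheoremquant0}: the map $e\mapsto p_e$ with $p_e(a,b)=d_B(e(a),e(b))$ is a dual isomorphism between the poset of quotient algebras of $A$ and the poset of congruence pseudometrics on $A$, the inverse sending a congruence $p$ to the canonical quotient onto the set of $p$-classes metrized by $p$. Verifying that $p_e$ is a congruence, that a quotient of $A$ by a congruence is again a quantitative $\Sigma$-algebra with all operations nonexpansive, and that the two assignments are mutually inverse and order-reversing is routine.

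Third --- and here lies the real work --- I would prove that the abstract equations of \autoref{D:eq} and the $c$-clustered equations~\eqref{eq:cbasicgen0} are expressively equivalent. Since $\QAlg{\Sigma}$ is $\E$-co-wellpowered (a congruence pseudometric on a fixed carrier is an element of a fixed set) and $\Lambda$ consists of all cardinals, \autoref{rem:singlequot} lets us present an abstract equation as a single quotient $e\colon T_\Sigma X\epito E$ with $X$ a $c$-clustered space. For one direction, \eqref{eq:homtheoremquant0} converts $e$ into the family~\eqref{eq:cbasic0}, where $c$-clusteredness of $X$ furnishes the side condition~(v), and a short argument shows a quantitative $\Sigma$-algebra satisfies $e$ iff it satisfies all equations~\eqref{eq:cbasic0}. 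For the converse, from~\eqref{eq:cbasicgen0} one forms the largest pseudometric $p$ on $X$ with $p(x_i,y_i)\leq\epsilon_i$ for all $i$ and the induced quotient $e_p\colon X\epito X_p$; the key point is that $X_p$ is again $c$-clustered, since the $p$-classes refine the classes (of size $<c$) of the equivalence relation generated by the pairs $(x_i,y_i)$. One then forms the largest congruence $q$ on $T_\Sigma(X_p)$ with $q(T_\Sigma e_p(s),T_\Sigma e_p(t))\leq\epsilon$ and the induced quotient $e_q$, and unravelling the universal properties defining $p$, $q$ and the definition of satisfaction shows that a quantitative $\Sigma$-algebra satisfies~\eqref{eq:cbasicgen0} exactly when it satisfies $e_q$. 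I expect this last verification --- in particular the stability of $c$-clusteredness under the passage $X\mapsto X_p$ --- to be the main obstacle.

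Finally, the theorem follows by assembling these pieces: by Example~\ref{ex:running:variety}\ref{ex:running:mardare} the $c$-varieties in the sense of \autoref{D:var} are precisely the classes of quantitative $\Sigma$-algebras closed under $c$-reflexive quotients, subalgebras, and products; by \autoref{thm:hspeq} such a class is a variety if and only if it is of the form $\V(\Eq)$ for some class $\Eq$ of abstract equations; and by the Step~3 equivalence the latter holds if and only if the class is axiomatizable by a set of $c$-clustered equations. This is the Quantitative HSP Theorem.
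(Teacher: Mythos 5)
Your overall route is exactly the paper's: instantiate the parameters of Example~\ref{ex:running}\ref{ex:running:mardare}, characterize $\E_\X$ as the $c$-reflexive quotients, establish the quotient/congruence correspondence \eqref{eq:homtheoremquant0}, prove expressive equivalence of abstract equations and $c$-clustered equations via the two-step construction $X\mapsto X_p$ followed by $T_\Sigma(X_p)\epito E_q$, and conclude with \autoref{thm:hspeq}. This all matches the appendix, including the key point that $X_p$ stays $c$-clustered because the generated pseudometric only identifies points within clusters (the paper makes this precise by writing $X_p=\coprod_j (X_j)_{p_j}$).

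There is, however, one step that fails as written: your verification of Assumption~\ref{asm:setting}\ref{A3}. You exhibit $A$ as a quotient of ``the free algebra over its underlying space'', i.e.\ of $T_\Sigma(UA)$ where $UA$ is the underlying extended metric space of $A$. That counit is indeed $c$-reflexive (the generators embed isometrically), but its domain is in general \emph{not} an object of $\X$: the class $\X$ contains only free algebras over $c$-clustered spaces, and $UA$ is $c$-clustered only when it decomposes into components of cardinality $<c$ at mutual distance $\infty$ --- for $c=2$, for instance, this would force $UA$ to be discrete. Assumption~\ref{A3} requires an $\E_\X$-quotient \emph{with domain in $\X$}, so your witness does not qualify. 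The paper's fix is to take the family $(m_i\colon A_i\monoto A)_{i\in I}$ of \emph{all} subspaces of $A$ of cardinality $<c$: the space $\coprod_{i\in I}A_i$ is $c$-clustered by construction, the induced map $[m_i]\colon \coprod_{i\in I}A_i\epito A$ is $c$-reflexive (any $B_0\seq A$ with $\under{B_0}<c$ is one of the $A_i$ and lifts isometrically to its own coproduct component), and its extension $T_\Sigma(\coprod_{i\in I}A_i)\epito A$ is the required $\E_\X$-quotient from an object of $\X$. With this replacement the remainder of your argument goes through as in the paper.
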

The above theorem generalizes a recent result of Mardare,
Panangaden, and Plotkin \cite{MardarePP17} who considered only signatures
$\Sigma$ with operations of finite or countably infinite arity and
cardinal numbers $c\leq \aleph_1$. \autoref{thm:quanthsp} holds without any restrictions on $\Sigma$ and $c$. In addition to the quantitative HSP theorem, one can also derive the completeness of quantitative equational logic
\cite{Mardare16} from our general completeness theorem, see Appendix \ref{sec:app:quant}.

\subsection{Nominal Algebras}\label{sec:nomalg}
In this section, we derive an HSP theorem for algebras in the category $\Nom$ of nominal sets and equivariant maps; see Pitts \cite{pitts_2013} for the required terminology. We denote by $\At$ the countably infinite set of atoms, by $\Perm(\At)$ the group of finite permutations of $\At$, and by $\supp_X(x)$ the least support of an element $x$ of a nominal set $X$.  Recall that $X$ is \emph{strong} if, for all $x\in X$ and $\pi\in \Perm(\At)$,
\[ [\forall a\in \supp_X(x): \pi(a)=a] \quad\text{$\iff$}\quad \pi\o x = x.\]
 A \emph{supported set} is a set $X$ equipped with a map $\supp_X\colon X\to \Pow_f(\At)$. A \emph{morphism} $f\colon X \to Y$ of supported sets is a function with $\supp_Y(f(x))\seq \supp_X(x)$ for all $x\in X$. 
Every nominal set $X$ is a supported set w.r.t. its least-support map $\supp_X$. The following lemma, whose first part is a reformulation of \cite[Prop. 5.10]{msw16}, gives a useful description of strong nominal sets in terms of supported sets.

\begin{lemma}\label{lem:nomreflective0}
The forgetful functor from $\Nom$ to $\SuppSet$ has a left adjoint $F\colon \SuppSet\to \Nom$. The nominal sets of the form $FY$  ($Y\in \SuppSet$) are up to isomorphism exactly the strong nominal sets.
\end{lemma}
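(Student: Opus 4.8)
The plan is to build the left adjoint $F$ explicitly as a coproduct of ``injection orbits''. Given a supported set $Y$ with support map $\supp_Y\colon Y\to\Pow_f(\At)$, write $S_y\defeq\supp_Y(y)$ and let $\mathrm{Inj}(S_y,\At)$ be the set of injective maps $S_y\hra\At$, equipped with the $\Perm(\At)$-action $\pi\o j\defeq\pi\circ j$. Each $\mathrm{Inj}(S_y,\At)$ is a single orbit (isomorphic to $\Perm(\At)/\mathrm{Fix}(S_y)$, where $\mathrm{Fix}(S)=\{\pi : \pi(a)=a \text{ for all }a\in S\}$), and the least support of an injection $j$ is its image $j[S_y]$; hence $FY\defeq\Jn_{y\in Y}\mathrm{Inj}(S_y,\At)$ is a nominal set. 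I would define the unit $\eta_Y\colon Y\to U(FY)$ by sending $y$ to the inclusion $S_y\hra\At$ in the $y$-th summand; since that inclusion has least support exactly $S_y=\supp_Y(y)$, $\eta_Y$ is a morphism of supported sets.

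Next I would check the universal property. Let $Z$ be a nominal set and $f\colon Y\to UZ$ a morphism of supported sets, so $\supp_Z(f(y))\seq S_y$ for every $y$. Define $\bar f\colon FY\to Z$ on the $y$-th summand by $\bar f(j)\defeq\pi\o f(y)$ for any $\pi\in\Perm(\At)$ with $\pi|_{S_y}=j$ (such $\pi$ exists as $S_y$ is finite). This is well defined: any two such extensions differ by an element of $\mathrm{Fix}(S_y)$, which fixes $f(y)$ because $S_y$ supports $f(y)$. By construction $\bar f$ is equivariant and $U\bar f\circ\eta_Y=f$ (take $\pi=\id$); and any equivariant $g$ with $Ug\circ\eta_Y=f$ satisfies $g(j)=g(\pi\o\eta_Y(y))=\pi\o f(y)=\bar f(j)$, giving uniqueness. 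Thus $F\dashv U$.

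Finally I would identify the objects of the form $FY$. For ``$FY$ is strong'': in $\mathrm{Inj}(S,\At)$ a permutation $\pi$ stabilizes $j$ iff $\pi\circ j=j$ iff $\pi$ fixes $j[S]=\supp(j)$ pointwise, which is exactly the strongness condition; and a coproduct of strong nominal sets is strong, since support and action are computed summandwise. Conversely, given a strong nominal set $Z$, I would pick a transversal $Y\seq Z$ with one element per orbit, carrying the support inherited from $Z$, and show that $FY\to Z$, $j\mapsto\pi\o y$ (for $j\in\mathrm{Inj}(S_y,\At)$, $\pi|_{S_y}=j$), is an isomorphism in $\Nom$: it is equivariant, surjective since every $z\in Z$ equals $\rho\o y$ for a unique $y\in Y$ and is hit by $\rho|_{S_y}$, and injective since $\pi\o y=\pi'\o y'$ with $y,y'\in Y$ forces $y=y'$ and then strongness of $Z$ gives $\pi^{-1}\pi'\in\mathrm{Fix}(S_y)$, i.e.\ $\pi|_{S_y}=\pi'|_{S_y}$. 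Hence, up to isomorphism, the $FY$ are precisely the strong nominal sets. I expect the only real subtlety to lie in hitting on the right free construction and in the bookkeeping of this last step — the choice of transversal and the use of strongness to separate points; the remaining verifications are routine.
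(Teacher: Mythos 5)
Your proposal is correct and follows essentially the same route as the paper: the left adjoint is built as $FY=\coprod_{y\in Y}\At^{\#\supp_Y(y)}$ (your injection orbits), the unit sends $y$ to the inclusion $\supp_Y(y)\hookrightarrow\At$, well-definedness of the extension uses that two permutations agreeing on $\supp_Y(y)$ agree on the smaller set $\supp_Z(f(y))$, and the converse direction picks a transversal of orbits and uses strongness for injectivity. The only cosmetic difference is that you verify explicitly that each injection orbit is strong, where the paper takes this as known.
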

Fix a finitary signature $\Sigma$. A \emph{nominal $\Sigma$-algebra} is a $\Sigma$-algebra $A$ carrying the structure of a nominal set such that all $\Sigma$-operations $\sigma\colon A^n\to A$ are equivariant. The forgetful functor from the category $\NomAlg{\Sigma}$ of nominal $\Sigma$-algebras and equivariant $\Sigma$-homomorphisms to $\Nom$ has a left adjoint assigning to each nominal set $X$ the \emph{free nominal $\Sigma$-algebra} $T_\Sigma X$, carried by the set of $\Sigma$-terms and with group action inherited from $X$. To derive a nominal HSP theorem from our general categorical results, we proceed as follows.

\medskip \noindent \textbf{Step 1.}
 Choose the parameters of our setting as follows:
 \begin{itemize}
\item $\A = \A_0 = \NomAlg{\Sigma}$;
\item $(\E,\M)$ = (surjective morphisms, injective morphisms);
\item $\Lambda = $ all cardinal numbers;
\item $\X = \{\,T_\Sigma X \;:\; \text{$X$ is a strong nominal set} \,\}$.
\end{itemize}
One can show that a quotient $e\colon A\epito B$ belongs to $\E_\X$ iff it is \emph{support-reflecting}: for every $b\in B$ there exists $a\in A$ with $e(a)=b$ and $\supp_A(a)=\supp_B(b)$.

\medskip \noindent \textbf{Step 2.} A \emph{nominal congruence} on a nominal $\Sigma$-algebra $A$ is a $\Sigma$-algebra congruence $\mathord{\equiv}\seq A\times A$ that forms an equivariant subset of $A\times A$. In analogy to \eqref{eq:homtheorem}, there is an isomorphim of complete lattices
\begin{equation}\label{eq:homtheoremnom0} \text{quotient algebras of $A$}\quad\cong\quad \text{nominal congruences on $A$}.  \end{equation}

\medskip \noindent \textbf{Step 3.} By \autoref{rem:singlequot}, an equation can be presented as a single quotient $e\colon T_\Sigma X\epito E$, where $X$ is a strong nominal set. Equations can be described by syntactic means as follows.
A \emph{nominal $\Sigma$-term} over a set $Y$ of variables is an element of $T_\Sigma(\Perm(\At)\times Y)$. Every map $h\colon Y\to A$ into a nominal $\Sigma$-algebra $A$ extends to the $\Sigma$-homomorphism
\[\hat h = (\,T_\Sigma(\Perm(\At)\times Y) \xra{T_\Sigma(\Perm(\At)\times h)} T_\Sigma(\Perm(\At)\times A) \xra{T_\Sigma (\dash \o \dash)} T_\Sigma A \xra{\ext{\id}} A\,)\]
where $\ext{\id}$ is the unique $\Sigma$-homomorphism extending the identity map $id\colon A\to A$.
 A \emph{nominal equation} over $Y$ is an expression of the form 
\begin{equation}\label{eq:nominaleq0}\supp_Y\vdash s=t,\end{equation}
where $\supp_Y\colon Y\to \Pow_f(\At)$ is a function and $s$ and $t$
are nominal $\Sigma$-terms over $Y$. A nominal $\Sigma$-algebra $A$
\emph{satisfies} the equation $\supp_Y\vdash s=t$ if for every map
$h\colon Y\to A$ with $\supp_A(h(y))\seq \supp_Y(y)$ for all $y\in Y$
one has $\hat h(s)=\hat h(t)$.
%
%
Equations and nominal equations  are expressively equivalent:
\begin{enumerate}
\item Given an equation $e\colon T_\Sigma X\epito E$ with $X$ a strong nominal set, choose a supported set $Y$ with $X=FY$, and denote by $\eta_Y\colon Y\to FY$ the universal map (see \autoref{lem:nomreflective0}). Form the nominal equations over $Y$ given by
\begin{equation}\label{eq:nominaleq20} \supp_Y\vdash s=t \quad (\, s,t\in T_\Sigma (\Perm(\At)\times Y)\text{ and } e\o T_\Sigma m(s) =  e\o T_\Sigma m(t)\,) \end{equation}
where $m$ is the composite
$\Perm(\At)\times Y \xra{\Perm(\At)\times \eta_Y} \Perm(\At)\times X \xra{\dash\cdot\dash} X$.
 It is not difficult to see that a nominal $\Sigma$-algebra satisfies $e$ iff it satisfies \eqref{eq:nominaleq20}.
\item Conversely, given a nominal equation \eqref{eq:nominaleq0} over
  the set $Y$, let $X = FY$ and form the nominal congruence on
  $T_\Sigma X$ generated by the pair $(T_\Sigma m(s),T_\Sigma m(t))$,
  with $m$ defined as above. Let
  $e\colon T_\Sigma X\epito E$ be the corresponding quotient, see \eqref{eq:homtheoremnom0}. One can
  show that a nominal $\Sigma$-algebra satisfies $e$ iff it satisfies
  \eqref{eq:nominaleq0}.
\end{enumerate}

\medskip\noindent\textbf{Step 4.} We thus deduce the following result as an instance of \autoref{thm:hspeq}:

\begin{theorem}[Kurz and Petri\c{s}an \cite{KP10}]\label{thm:hspnominal0}
A class of nominal $\Sigma$-algebras is a variety (i.e. closed under support-reflecting quotients, subalgebras, and products) iff it is axiomatizable by nominal equations.
\end{theorem}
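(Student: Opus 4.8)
The plan is to obtain the theorem as a direct instance of the General HSP Theorem (\autoref{thm:hspeq}) applied to the parameters fixed in Step~1, so that the real work reduces to three tasks: verifying \autoref{asm:setting} for these parameters, identifying $\E_\X$ with the support-reflecting quotients, and matching our categorical notion of equation with the syntactic notion of nominal equation \eqref{eq:nominaleq0}.

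First I would check \autoref{asm:setting}. Condition~\ref{A1} holds because the forgetful functor $\NomAlg{\Sigma}\to\Nom$ creates limits and $\Nom$ is complete; \ref{A2} is immediate since $\A_0=\A$ and the factorization system is (surjective, injective). For \ref{A3} and the description of $\E_\X$, the crucial input is \autoref{lem:nomreflective0}: a nominal set is of the form $FY$ precisely when it is strong. I would then prove that a surjective homomorphism $e\colon A\epito B$ is support-reflecting if and only if every $T_\Sigma X$ with $X$ strong is projective w.r.t.~$e$. For ``only if'', a homomorphism $h\colon T_\Sigma X\to B$ restricts along the generators to a support-non-increasing map $X\to B$; by support-reflection each generator lifts to an element of $A$ with the same support, and since $X$ is free (over a supported set) this lift extends to a map $X\to A$ and then, by freeness of $T_\Sigma X$, to a homomorphism splitting $e$. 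For ``if'', given $b\in B$ one builds a suitable strong nominal set whose generators record $b$ together with its least support, and reads off a witness of support-reflection from a splitting of $e$ along it. Finally \ref{A3} follows by taking $A=B$: the canonical surjection $T_\Sigma F(\mathord{-})\epito A$ induced by the counit of $F\dashv U$, with $F$ applied to the underlying supported set of $A$, is support-reflecting by construction and has strong domain.

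Step~2 establishes the exactness property \eqref{eq:homtheoremnom0}, the nominal refinement of the homomorphism theorem: quotients of a nominal $\Sigma$-algebra $A$ correspond to nominal congruences on $A$ via kernels; this is a routine check. Step~3 uses \eqref{eq:homtheoremnom0} to translate between categorical equations (single quotients $e\colon T_\Sigma X\epito E$ with $X$ strong, available by \autoref{rem:singlequot}) and nominal equations. In one direction, $e$ is replayed as the family \eqref{eq:nominaleq20} of nominal equations recording which pairs of nominal $\Sigma$-terms are identified by $e$; in the other, a nominal equation \eqref{eq:nominaleq0} over $Y$ is sent to the quotient of $T_\Sigma(FY)$ by the least nominal congruence identifying the pair $(T_\Sigma m(s),T_\Sigma m(t))$. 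Each direction requires verifying that an algebra satisfies one presentation exactly when it satisfies the other, which unwinds the definition of satisfaction and uses \eqref{eq:homtheoremnom0} together with the adjunction $F\dashv U$ and the description of $\hat h$.

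With $\E_\X=$ support-reflecting quotients, $\M=$ injective homomorphisms, and $\Lambda=$ all cardinals, \autoref{D:var} specializes to exactly ``closed under support-reflecting quotients, subalgebras, and products'', while ``equationally presentable'' becomes ``axiomatizable by nominal equations'' by the expressive equivalence of Step~3; \autoref{thm:hspeq} then yields the theorem. I expect the main obstacle to be the characterization of $\E_\X$: the projectivity argument has to combine support-reflection with the universal property of $F$ and of the term algebra carefully, and one must confirm that free algebras over strong nominal sets are projective w.r.t.~\emph{every} support-reflecting quotient and no others — everything else is essentially bookkeeping once \eqref{eq:homtheoremnom0} is in place.
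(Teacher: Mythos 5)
Your proposal is correct and follows essentially the same route as the paper: verify Assumptions~\ref{asm:setting}, characterize $\E_\X$ as the support-reflecting quotients via the adjunction $F\dashv U$ of \autoref{lem:nomreflective0} and the identification of strong nominal sets with free ones, establish the quotient--congruence correspondence \eqref{eq:homtheoremnom0}, prove expressive equivalence of equations and nominal equations, and invoke \autoref{thm:hspeq}. The only cosmetic differences are that the paper reduces the $\E_\X$ characterization to the underlying category $\Nom$ (via \autoref{rem:exlift}) and builds the covering strong nominal set from orbit representatives rather than from the whole underlying supported set, neither of which changes the argument.
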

For brevity and simplicity, in this section we restricted ourselves to algebras for a signature. Kurz and Petri\c{s}an proved a more general
HSP theorem for algebras over an endofunctor on $\Nom$ with a suitable
finitary presentation. This extra generality allows to incorporate, for
instance, algebras for binding signatures.

\subsection{Further Applications}
Let us briefly mention some additional instances of our framework, all of which are given a  detailed treatment in the Appendix.

\medskip\noindent \textbf{Ordered algebras.} 
Bloom \cite{bloom76} proved an HSP theorem for $\Sigma$-algebras in the category of posets: a class of such algebras is closed under homomorphic images, subalgebras, and products, iff it is axiomatizable by inequations $s\leq t$ between $\Sigma$-terms. This result can be derived much like the unordered case in \autoref{S:birkhoff}.

\medskip\noindent \textbf{Continuous algebras.}  A more intricate ordered version of Birkhoff's theorem concerns \emph{continuous algebras}, i.e. $\Sigma$-algebras with an $\omega$-cpo structure on their underlying set and continuous $\Sigma$-operations. Ad\'amek, Nelson, and Reiterman \cite{adamek85} proved that a class of continuous algebras is closed under homomorphic images, subalgebras, and products, iff it axiomatizable by inequations between terms with formal suprema (e.g.~$\sigma(x)\leq \vee_{i<\omega}\, c_i$). This result again emerges as an  instance of our general HSP theorem. A somewhat curious feature of this application is that the appropriate factorization system $(\E,\M)$ takes as $\E$ the class of dense morphisms, i.e. morphisms of $\E$ are not necessarily surjective. However, one has $\E_\X$ = surjections, so homomorphic images are formed in the usual sense.

\medskip\noindent \textbf{Abstract HSP theorems.} Our results subsume several existing categorical generalizations of Birkhoff's theorem. For instance, \autoref{thm:hsp} yields Manes'  \cite{manes76} correspondence  between quotient monads $\MT\epito \MT'$ and varieties of $\MT$-algebras for any monad $\MT$ on $\Set$. Similarly, Banaschewski and Herrlich's \cite{BanHerr1976} HSP theorem for objects in categories with enough projectives is a special case of \autoref{thm:hspeq}.


\section{Conclusions and Future Work}\label{S:future}
We have presented a categorical approach to the model theory of algebras with additional structure. Our framework applies to a broad range of different settings and greatly simplifies the derivation of
HSP-type theorems and completeness results for equational deduction systems, as the generic part of such derivations now comes
for free using our Theorems \ref{thm:hsp}, \ref{thm:hspeq} and
\ref{thm:eqlogicsoundcomplete}. There remain a number of interesting directions
and open questions for future work.

As shown in \autoref{S:app}, the key to arrive at a syntactic
notion of equation lies in identifying a correspondence between
quotients and suitable relations, which we informally coined
``exactness''. The similarity of these correspondences in our
applications suggests that there should be a (possibly enriched)
notion of \emph{exact category} that covers our examples; cf. Kurz and Velebil's \cite{kv17} $2$-categorical view of ordered algebras. This would
allow to move more work to the generic theory.

 \autoref{thm:eqlogicsoundcomplete} can be used to recover several known sound
and complete equational logics, but it also applies to settings where no
such logic is known, for instance, a logic of profinite equations
(however, cf.~recent work of Almeida and Kl\'ima \cite{ak17}). In each
case, the challenge is to translate our two abstract proof rules into
concrete syntax, which requires the identification of a syntactic
equivalent of the two properties of an equational theory. While
 substitution invariance always translates into a syntactic
substitution rule in a straightforward manner, $\E_\X$-completeness does not appear to have an
obvious syntactic counterpart. In most of the cases where a concrete
equational logic is known, this issue is obfuscated by the fact that
one has $\E_\X=\E$, so $\E_\X$-completeness becomes a trivial
property. Finding a syntactic account of $\E_\X$-completeness remains
an open problem. One notable case where $\E_\X\neq \E$ is the one of nominal algebras. Gabbay's work \cite{gabbay09} does provide an HSP theorem
and a sound and complete equational logic in a setting slightly different from \autoref{sec:nomalg}, and it should be interesting to
see whether this can be obtained as an instance of our framework. 

Finally, in  previous work \cite{uacm17} we have introduced the
notion of a \emph{profinite theory} (a special case of the equational
theories in the present paper) and shown how the dual concept can be
used to derive Eilenberg-type correspondences between varieties of
languages and pseudovarieties of finite algebras. Our present results
pave the way to an extension of this method to new
settings, such as nominal sets. Indeed, a simple modification of the
parameters in \autoref{sec:nomalg} yields a new HSP theorem for
\emph{orbit-finite} nominal $\Sigma$-algebras. We expect that a
dualization of this result in the spirit of \emph{loc.~cit.} leads to
a correspondence between varieties of data languages
and varieties of orbit-finite nominal monoids, an important step towards an
algebraic theory of data languages.


\bibliographystyle{splncs03}
\bibliography{refs}

\clearpage
\appendix
\section*{Appendix}
This appendix contains all omitted proofs, as well as a detailed treatment of the examples mentioned in the paper.

\section{Proofs}

We first note some useful properties of the class $\E_\X$. Recall the following general properties of categories $\A$ with a factorization system $(\E,\M)$ \cite[Prop.
14.6/14.9]{AdamekEA09}:
\begin{enumerate}
\item The intersection $\E\cap \M$ consists precisely of the isomorphisms in $\A$.
\item The \emph{cancellation law}  holds: if $p$ and $q$ are composable morphisms with $p\in
\E$ and $q\o p\in \E$, then $q\in \E$.
\end{enumerate}

\begin{lemma}\label{lem:ex}
\begin{enumerate}
\item\label{lem:ex:1} The class $\E_\X$ contains all isomorphisms and is closed under composition.
\item\label{lem:ex:2} Let $p\colon A\to B$ and $q\colon B\to C$
  be morphisms in $\A$. If $p\in \E$ and $q\o p\in \E_\X$ then $q\in \E_\X$.
\end{enumerate}
\end{lemma}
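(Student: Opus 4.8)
The plan is to verify both claims directly from the definition $\E_\X=\{\,e\in\E : \text{every }X\in\X\text{ is projective w.r.t.~}e\,\}$, using only that $\E$ contains all isomorphisms and is closed under composition (part of the definition of a factorization system) together with the cancellation law recalled just above.

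For part~\ref{lem:ex:1}, I would first note that every isomorphism $e$ lies in $\E$, and that in fact \emph{every} object is projective w.r.t.\ $e$: if $e\colon A\to B$ is invertible and $h\colon X\to B$, then $g\defeq e^{-1}\o h$ satisfies $e\o g=h$. Hence all isomorphisms lie in $\E_\X$. For closure under composition, let $p\colon A\to B$ and $q\colon B\to C$ be in $\E_\X$. Then $q\o p\in\E$ since $\E$ is closed under composition, and projectivity of any $X\in\X$ w.r.t.\ $q\o p$ follows by a two-step lifting: given $h\colon X\to C$, projectivity w.r.t.\ $q$ yields $h'\colon X\to B$ with $q\o h'=h$, and then projectivity w.r.t.\ $p$ yields $h''\colon X\to A$ with $p\o h''=h'$; so $(q\o p)\o h''=h$.

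For part~\ref{lem:ex:2}, let $p\colon A\to B$ and $q\colon B\to C$ with $p\in\E$ and $q\o p\in\E_\X\seq\E$. The cancellation law immediately gives $q\in\E$. To see that every $X\in\X$ is projective w.r.t.\ $q$, take $h\colon X\to C$; projectivity w.r.t.\ $q\o p$ provides $h'\colon X\to A$ with $(q\o p)\o h'=h$, and then the morphism $p\o h'\colon X\to B$ satisfies $q\o(p\o h')=h$. Hence $q\in\E_\X$.

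Neither part poses a genuine difficulty; the only point requiring attention is applying the cancellation law in the correct direction in part~\ref{lem:ex:2} — from $q\o p\in\E$ and $p\in\E$ one concludes $q\in\E$, which is exactly the form recalled above — so that one is then left only with the (purely diagrammatic) verification of projectivity.
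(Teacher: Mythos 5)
Your proposal is correct and follows essentially the same route as the paper's own proof: isomorphisms are handled by noting that every object is projective w.r.t.\ them, closure under composition is the two-step lifting through $q$ and then $p$, and part~(2) combines the cancellation law with the observation that $p\o h'$ provides the required lift through $q$. The only (harmless) difference is that you make the lift $g=e^{-1}\o h$ for isomorphisms explicit where the paper simply calls it clear.
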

\begin{proof}
  \ref{lem:ex:1} The first statement holds because $\E$ contains all isomorphisms and, clearly, every object $X$ is projective w.r.t. every isomorphism. For the second statement, let $p: A \epito B$ and $q: B \epito C$ be morphisms in
  $\E_\X$. Since $\E$ is closed under composition, we have $q \o p \in
  \E$. Given $X\in \X$, we need to show that $X$ is projective w.r.t. $q\o p$. This follows easily from the corresponding properties of $p$
  and $q$: for any morphism $h: X \to C$, we obtain
  $h': X \to B$ with $q \o h' = h$ because $q\in \E_\X$, and then we
  obtain $h'': X \to A$ with $p \o h'' = h'$ because 
  $p\in \E_\X$. Thus $(q\o p) \o h'' = h$, which proves that $q\o p\in \E_\X$.
  
 \ref{lem:ex:2} Note first that $q\in \E$ by the cancellation law. Let $X\in \X$ and $h: X \to C$. Since $q\o p\in \E_\X$, we get a morphism $h': X \to A$ with $h = (q \o p) \o h' = q\o (p\o h')$. This proves $q\in \E_\X$.
\end{proof}

\subsection*{Proof of \autoref{lem:var}}
Since $\V(\Eq)=\bigcap_{\TT_\X\in \Eq} V(\TT_X)$ and intersections of varieties are varieties, is suffices to show that $\V(\TT_X)$ is a variety for each equation $\TT_X$ over $X\in \X$.
  \begin{enumerate}[wide,labelindent=0pt,itemsep=5pt]
  \item \emph{Closure under $\E_\X$-quotients.} Let
    $q: A \epito B$ be an $\E_\X$-quotient in $\A_0$ where $A \in \V(\TT_X)$, and let
    $h: X \to B$. Since $q$ lies in $\E_\X$, there exists
    $h': X \to A$ with $h = q \o h'$. Then since
    $A \in \V(\TT_X)$, the morphism $h'$ factorizes through some $e\in \TT_X$. Thus also $h$ factorizes through $e$, see the commutative diagram below: 
\[
\xymatrix{
 & X \ar@{->>}[dl]_e \ar[d]^{h'} \ar[dr]^h & \\
E \ar@{-->>}[r] & A \ar@{->>}[r]_q & B
}
\]
This proves that $B\in \V(\TT_X)$.
  \item \emph{Closure under subobjects.} Let $m: A \monoto B$ be a subobject
    in $\A_0$ where $B \in \V(\TT_X)$, and let
    $h: X \to A$. Then $m \o h$ factorizes through some $e\in \TT_X$ since
    $B \models \mathscr{T}_X$. This implies that $h$ factorizes through
    $e$ using diagonal fill-in:
    \[
      \xymatrix{
        X \ar@{->>}[r]^-{e} \ar[d]_h & E \ar[d] \ar@{-->}[ld]
        \\
        A \ar@{ >->}[r]_-m & B
      }
    \]
Therefore, $A\in \V(\TT_X)$.
    
  \item \emph{Closure under $\Lambda$-products.} Let $A_i$
    ($i<\lambda$) be a family of objects in $\V(\TT_X)$, where
    $\lambda\in \Lambda$. We denote by
    $p_i: \prod_{i < \lambda} A_i \to A_i$ the product projections.
    First note that $\prod_{i < \lambda} A_i$ lies in $\A_0$ by
    Assumption \ref{asm:setting}\ref{A2}. Let
    $h: X \to \prod_{i <\lambda} A_i$.  Since $A_i \in \V(\TT_X)$, there exists for every
    $i<\lambda$ some $e_i: X \to E_i$ in $\mathscr{T}_X$ and
    $k_i: E_i\to A_i$ with $k_i\o e_i = p_i\o h$. Since
    $\mathscr{T}_X$ is $\Lambda$-codirected, we may choose $e_i$ independently of $i$, that is, we obtain one
    $e:X \epito E$ in $\mathscr{T}_X$ through which all $p_i \o h$
    factorize. Then $h$ factorizes through $e$ via $\langle k_i\rangle$, as shown by the commutative diagram below:
\[
\xymatrix{
& X \ar@{-->}[dl]_e \ar[d]^h & \\
E \ar@/_2em/[rr]_{k_i} \ar[r]_{\langle k_i\rangle} & \prod_i A_i \ar[r]_{p_i} & A_i 
}
\]
This proves
    that $\prod_i A_i\in \V(\Eq)$. \qed
  \end{enumerate}

\begin{lemma}\label{lem:vtau}
  Let $\mathscr{T}$ be an equational theory. An object $A\in \A_0$ belongs to
  $\V(\mathscr{T})$ if and only if, for some $Y\in\X$, the equation
  $\mathscr{T}_Y$ contains a quotient with codomain $A$.
\end{lemma}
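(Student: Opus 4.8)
The plan is to prove the two implications separately, drawing on the two defining properties of an equational theory from \autoref{D:eqnth} (substitution invariance and $\E_\X$-completeness), the closure of a single equation under $\E_\X$-quotients from \autoref{D:eq}\ref{D:eq:2}, together with \autoref{asm:setting}\ref{A3} and the cancellation property \autoref{lem:ex}\ref{lem:ex:2}. Throughout, $A\in\A_0$ by hypothesis, so to place $A$ in $\V(\mathscr{T})$ it suffices to check that $A$ satisfies every equation $\mathscr{T}_X$ with $X\in\X$.

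For the ``if'' direction, assume $\mathscr{T}_Y$ contains a quotient with codomain $A$ for some $Y\in\X$. Using $\E_\X$-completeness of $\mathscr{T}$ we may pick $Y'\in\X$ and $e_{Y'}\colon Y'\epito A$ lying in $\mathscr{T}_{Y'}\cap\E_\X$. Now take any $X\in\X$ and $h\colon X\to A$. Since $X$ is projective w.r.t.\ $e_{Y'}\in\E_\X$, there is $g\colon X\to Y'$ with $e_{Y'}\o g=h$; forming the $(\E,\M)$-factorization $h=\ol h\o e_X$, substitution invariance shows that the coimage $e_X\colon X\epito E_X$ of $e_{Y'}\o g=h$ lies in $\mathscr{T}_X$, and $h$ factorizes through it via $\ol h$. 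Hence $A\models\mathscr{T}_X$ for all $X\in\X$, i.e.\ $A\in\V(\mathscr{T})$.

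For the ``only if'' direction, let $A\in\V(\mathscr{T})$. By \autoref{asm:setting}\ref{A3} there is some $e\colon Y\epito A$ in $\E_\X$ with $Y\in\X$. As $A\models\mathscr{T}_Y$, the morphism $e$ factorizes as $e=\ol e\o e_Y$ with $e_Y\colon Y\epito E_Y$ in $\mathscr{T}_Y$ and $\ol e\colon E_Y\to A$. Since $e_Y\in\E$ and $\ol e\o e_Y=e\in\E_\X$, \autoref{lem:ex}\ref{lem:ex:2} yields $\ol e\in\E_\X$; its codomain $A$ lies in $\A_0$, so closure of the equation $\mathscr{T}_Y$ under $\E_\X$-quotients (\autoref{D:eq}\ref{D:eq:2}) gives $e=\ol e\o e_Y\in\mathscr{T}_Y$ --- a quotient in $\mathscr{T}_Y$ with codomain $A$, as required.

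I expect the argument to be largely routine bookkeeping with the factorization system and diagonal fill-in. The one genuinely load-bearing step is the appeal to $\E_\X$-completeness in the ``if'' direction: without it the witnessing quotient need not lie in $\E_\X$, and then we could not use projectivity of the objects of $\X$ to lift $h$ along it. Symmetrically, the only subtlety in the ``only if'' direction is to notice that the $\E_\X$-quotient $\ol e$ extracted from the factorization has to be re-absorbed into $\mathscr{T}_Y$ via closure under $\E_\X$-quotients; everything else follows immediately from the definitions.
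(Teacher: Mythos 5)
Your proof is correct and follows essentially the same route as the paper's: $\E_\X$-completeness plus projectivity plus substitution invariance for the ``if'' direction, and Assumption~\ref{asm:setting}\ref{A3} plus the cancellation property of \autoref{lem:ex}\ref{lem:ex:2} plus closure under $\E_\X$-quotients for the ``only if'' direction. Your explicit passage to a possibly different $Y'\in\X$ when invoking $\E_\X$-completeness is in fact slightly more careful than the paper's ``we may assume $e_Y\in\E_\X$''.
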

\begin{proof}
  For the ``if'' direction, suppose that $\mathscr{T}_Y$ contains the
  quotient $e_Y: Y\epito A$. By $\E_\X$-completeness of $\TT$,
  we may assume that $e_Y\in \E_\X$. Let $h: X\to A$ with
  $X\in\X$. Since $e_Y\in \E_\X$, there exists a morphism $g: X\to Y$
  with $e_Y\o g = h$. By substitution invariance, the coimage $e_X$ of $e_Y\o g$ lies in $\TT_X$. 
 Then $h$ factorizes through $e_X$, as shown by the commutative diagram below.
  \[
    \xymatrix{
      X \ar[r]^-g \ar[dr]^h \ar@{->>}[d]_{e_X} & Y \ar@{->>}[d]^{e_Y}\\
      E_X \ar@{>->}[r]  &A 
    }
  \]
This proves that $A\in \V(\TT)$.

  For the ``only if'' direction, let $A\in \V(\mathscr{T})$. By Assumption
  \ref{asm:setting}\ref{A3}, we can express $A$ as an
  $\E_\X$-quotient $e: Y\epito A$ of some $Y\in\X$. Since
  $A\in\V(\mathscr{T})$, we know that $A$ satisfies $\mathscr T_Y$,
  i.e.~there exists $e_Y: Y\epito E_Y$ in $\mathscr{T}_Y$ and a
  morphism $\ol e: E_Y\epito A$ with $\ol e \o e_Y = e$. By
  \autoref{lem:ex}\ref{lem:ex:2}, we have $\ol e\in \E_X$, and thus
  $e\in \mathscr{T}_Y$ because $\mathscr{T}_Y$ is closed under
  $\E_\X$-quotients. This proves that $\TT_Y$ contains a quotient with codomain $A$.
\end{proof}

\subsection*{Proof of \autoref{lem:tvistheory}}
Let $\TT(\V) = (\TT_X)_{X\in \X}$. We first prove that $\TT_X$ is an equation for each $X\in \X$. 

The closure of $\TT_X$ under $\E_\X$-quotients follows immediately from the fact that $\V$ is closed under $\E_\X$-quotients.

To show that $\TT_X$ is $\Lambda$-codirected, let $e_i: X\epito A_i$ ($i< \lambda$) be a family of quotients in
  $\mathscr{T}_X$ with $\lambda\in\Lambda$. Form the
  $\E/\M$-factorization of $\langle e_i \rangle: X\to \prod_i A_i$:
  \[
    \xymatrix{
      & X \ar@{->>}[ld]_e \ar[d]^{\langle e_i \rangle}
      \ar@{->>}[rd]^{e_i}
      \\
      A \ar@{ >->}[r]^-m & \prod_{i <\lambda} A_i \ar[r]^-{p_i} & A_i
    }
  \]
  By Assumption \ref{asm:setting}\ref{A2}, $A$ lies in $\A_0$ and, 
  since $\V$ is closed under subobjects and $\Lambda$-products, one
  has $A\in \V$. Thus $e\in \mathscr{T}_X$ and $e$ is an upper bound
  of the $e_i$'s.

  In order to prove substitution invariance for
  $\mathscr{T}(\V)$, suppose that $e_Y\in \mathscr{T}_Y$ and
  $h\colon X\to Y$ are given, and take the $\E/\M$-factorization
  $e_Y\o h = \ol h\o e_X$ of $e_Y\o h$:
  \[
    \xymatrix{
      X \ar[r]^-{h} \ar@{->>}[d]_{ e_X} & Y 
      \ar@{->>}[d]^{e_Y}
      \\
      E_X \ar@{>->}[r]_-{ \ol h} & E_Y
    }
  \]
  Then $E_X\in \A_0$ because $E_Y \in \A_0$ and $\A_0$ is closed under
  $\X$-generated subobjects. Moreover, since $E_Y\in \V$ and $\V$ is closed under subobjects in $\A_0$, we get $E_X\in \V$. This shows that $e_X\in \mathscr{T}_X$ by
  definition of $\mathscr{T}_X$. Thus, $\TT(\V)$ is substitution invariant.

  For $\E_\X$-completeness of $\TT(\V)$, let $Y\in \X$ and $e_Y\colon Y\epito E$ in $\TT_Y$. By definition, this means that $E\in \A$. By Assumption \ref{asm:setting}(3), there exists an $\E_\X$-quotient $e_X\colon X\epito E$ for some $X\in \X$. Then $e_X\in \TT_X$ be definition of $\TT(\V)$. Thus, $\TT(\V)$ is $\E_\X$-complete. 
\qed

\begin{lemma}\label{lem:vtv}
  For every \variety $\V$, we have $\V=\V(\mathscr{T}(\V))$.
\end{lemma}
\begin{proof} Let $\TT(\V) = (\TT_X)_{X\in \X}$. 

  To prove $\seq$, let $A\in \V$. By Assumption \ref{asm:setting}\ref{A3},
  there exists a quotient $e: X\epito A$ with $X\in \X$. Thus
  $e\in \mathscr{T}_X$ by the definition of $\mathscr{T}_X$, and
  therefore $A\in \V(\mathscr{T}(\V))$ by \autoref{lem:vtau}.

  For $\supseteq$, let $A\in \V(\mathscr{T}(\V))$. By
  \autoref{lem:vtau}, for some $X\in \X$, $\mathscr{T}_X$ contains a quotient
  $e\colon X\epito A$ with codomain $A$. Thus $A\in \V$ by definition of
  $\mathscr{T}_X$.
\end{proof}
\begin{lemma}\label{lem:tvt}
  For every \eqnth $\mathscr{T}$, we have $\mathscr{T} = \mathscr{T}(\V(\mathscr{T}))$.
\end{lemma}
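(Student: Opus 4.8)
The plan is to prove the equality of equational theories $\mathscr{T} = (\mathscr{T}_X)_{X\in\X}$ and $\mathscr{T}(\V(\mathscr{T})) = (\mathscr{T}_X')_{X\in\X}$ componentwise, that is, to establish $\mathscr{T}_X = \mathscr{T}_X'$ for each $X\in\X$ by proving the two inclusions separately. Here $\mathscr{T}_X'$ consists, by \autoref{constr:eqnth}, of exactly those quotients $e\colon X\epito A$ with $A\in\V(\mathscr{T})$.

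The inclusion $\mathscr{T}_X \seq \mathscr{T}_X'$ is the easy direction: if $e\colon X\epito A$ lies in $\mathscr{T}_X$, then $A\in\V(\mathscr{T})$ by \autoref{lem:vtau}, and hence $e\in\mathscr{T}_X'$ by definition of $\mathscr{T}(\V(\mathscr{T}))$.

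For the converse $\mathscr{T}_X' \seq \mathscr{T}_X$, I would start with $e\colon X\epito A$ in $\mathscr{T}_X'$, so that $A\in\V(\mathscr{T})$. Applying \autoref{lem:vtau} together with the $\E_\X$-completeness of $\mathscr{T}$ yields some $Y\in\X$ and an $\E_\X$-quotient $e_Y\colon Y\epito A$ in $\mathscr{T}_Y$. Since $X$ is projective w.r.t.\ $e_Y$, I can lift $e$ to a morphism $h\colon X\to Y$ with $e_Y\o h = e$, and then substitution invariance of $\mathscr{T}$ places the coimage $e_X$ of $e = e_Y\o h$ into $\mathscr{T}_X$, giving a factorization $e = \ol h\o e_X$ with $\ol h\in\M$.

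The one step requiring care — and really the only obstacle — is to deduce $e\in\mathscr{T}_X$ from $e_X\in\mathscr{T}_X$ and $e = \ol h\o e_X$. Since $e,e_X\in\E$, the cancellation law forces $\ol h\in\E$; being also in $\M$, the morphism $\ol h$ is an isomorphism, and hence lies in $\E_\X$ by \autoref{lem:ex}\ref{lem:ex:1}. As $A\in\V(\mathscr{T})\seq\A_0$ and $\mathscr{T}_X$ is closed under $\E_\X$-quotients (\autoref{D:eq}\ref{D:eq:2}), this gives $e = \ol h\o e_X\in\mathscr{T}_X$, completing the proof. Beyond this, I expect no difficulty: the bulk of the argument is simply marshalling the earlier lemmas and keeping track of which property of an equational theory (substitution invariance versus $\E_\X$-completeness) and of an equation (closure under $\E_\X$-quotients) is being invoked at each stage.
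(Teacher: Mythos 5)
Your proof is correct and follows essentially the same route as the paper: the easy inclusion via \autoref{lem:vtau}, and the converse via $\E_\X$-completeness, projectivity, substitution invariance, and the observation that $\ol h\in\E\cap\M$ is an isomorphism. The only (immaterial) difference is the final step, where the paper concludes by noting that $e$ and $e_X$ represent the same quotient, while you invoke closure of $\mathscr{T}_X$ under $\E_\X$-quotients applied to the isomorphism $\ol h$; both are valid.
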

\begin{proof}
  Let $\mathscr{T} = (\mathscr{T}_X)_{X\in\X}$ and
  $\mathscr{T}(\V(\mathscr{T})) = (\mathscr{T}_X')_{X\in\X}$. We need
  to prove $\mathscr{T}_X = \mathscr{T}_X'$ for all $X\in \X$.

  For $\seq$, let $e: X\epito A$ in $\mathscr{T}_X$. Then
  $A\in \V(\mathscr{T})$ by \autoref{lem:vtau}, and thus
  $e\in \mathscr{T}_X'$ by the definition of
  $\mathscr{T}(\V(\mathscr{T}))$.

  For $\supseteq$, let $e: X\epito A$ in $\mathscr{T}_X'$. Then
  $A\in \V(\mathscr{T})$ by the definition of
  $\mathscr{T}(\V(\mathscr{T}))$. Thus, by \autoref{lem:vtau}, there exists some
  $Y\in \X$ and $e_Y: Y\epito A$ in
  $\mathscr{T}_Y$. Since the theory $\TT$ is $\E_\X$-complete, we may assume that $e_Y\in \E_\X$. Since $X$ is projective w.r.t.~$e_Y$, there is 
  a morphism $h: X\to Y$ with $e_Y\o h = e$. Let $e_X$ and $\ol h$ be the $\E$/$\M$-factorization of $e_Y\o h$. By substitution invariance, $e_X$ lies in $\mathscr{T}_X$:
  \[
    \xymatrix{
      X \ar[r]^h \ar@{->>}[d]_{e_X} \ar@{->>}[dr]^e & Y \ar@{->>}[d]^{e_Y} \\
      E_X \ar@{>->}[r]_{\ol h} & A
    }
  \] 
  Since $e = \ol h \o e_X$, the cancellation law implies that $\ol h$ lies in
  $\E$. Since it also lies in $\M$, we have that $\ol h$ is an
  isomorphism. Thus $e_X$ and $e$ represent the same quotient of $X$, which implies $e\in \TT_X$.
\end{proof}

\subsection*{Proof of \autoref{thm:hsp}}
  By \autoref{lem:vtv} and \autoref{lem:tvt}, the two maps $\V\mapsto \TT(\V)$ and $\TT\mapsto \V(\TT)$ are mutually inverse bijections. It only remains to show that they are
  antitone.

  \begin{enumerate}[wide,labelindent=0pt,parsep=0pt]
  \item Suppose that $\V\seq \V'$ are varieties, and let
    $e\colon X\epito A$ be a quotient in $[\TT(\V)]_X$. Then $A\in \V$
    by definition of $\TT(\V)$, and thus $A\in \V'$, i.e. the quotient
    $e$ also lies in $[\TT(\V')]_X$. This shows $\TT'\leq \TT$.

  \item\sloppypar Suppose that $\TT\leq \TT'$ are theories, and let
    $A'\in \V(\TT')$. Then, by \autoref{lem:vtau}, there exists $X\in \X$ and a
    quotient $e'\colon X\epito A'$ in $\TT'_X$ with codomain $A'$. By
    $\E_\X$-completeness of $\TT'_X$, we may assume that $e'\in \E_\X$. Since
    $\TT\leq \TT'$, the quotient $e'$ factorizes through some quotient
    $e\colon X\epito A$ in $\TT_X$, i.e. $e'=q\o e$ for some
    $q\colon A\epito A'$. Since $e'\in \E_\X$ we have $q\in \E_\X$ by
    \autoref{lem:ex}\ref{lem:ex:2}. Moreover, $A\in \V(\TT)$ by
    \autoref{lem:vtau}, and thus $A'\in \V(\TT)$ because $\V(\TT)$ is
    closed under $\E_\X$-quotients. This shows $\V(\TT')\seq \V(\TT)$.\qed
  \end{enumerate}

\subsection*{Proof of \autoref{lem:substclosure}}

  Let $\S=\TT(\V(\TT_X))$. 
  \begin{enumerate}[wide,itemsep=0pt,parsep=0pt]
  \item One has $\TT_X\leq \S_X$. Indeed, suppose that
    $e\colon X\epito E$ is a quotient in $\S_X$. Then, by definition
    of $\TT(\dash)$, one has $E\in \V(\TT_X)$, i.e. $E\models
    \TT_X$. Thus $e\colon X\epito E$ factorizes through some
    $e'\in \TT_X$, which proves $\TT_X\leq \S_X$.
  \item Now suppose that $\S'$ is any theory with $\TT_X\leq
    \S_X'$. We need to show $\S\leq \S'$. Since $\S=\TT(\V(\TT_X))$,
    this is equivalent to showing that $ \V(\S')\seq \V(\TT_X)$ by
    \autoref{thm:hsp}. Thus let $A\in \V(\S')$, and let
    $h\colon X\to A$. Since $A\models \S_X'$, the morphism $h$
    factorizes through some $e'\in \S_X'$. Since $\TT_X\leq \S_X'$,
    the quotient $e'$ factorizes through some $e\in \TT_X$. Thus $h$
    factorizes through $e$, which shows that $A\models \TT_X$,
    i.e. $A\in \V(\TT_X)$.\qed
  \end{enumerate}

\subsection*{Proof of \autoref{thm:eqlogicsoundcomplete}}
  \emph{Soundness.} The soundness of (Weakening) easily follows from the
  definitions of semantic entailment and satisfaction of
  equations. For the soundness of (Substitution), let $\TT_X\seq \epid{X}{\A_0}$
  be an equation and $\ol \TT$ its substitution closure. We need to
  prove that $\TT_X\models \ol\TT_Y$ for all $Y\in \X$, equivalently,
  $\V(\TT_X)\seq \V(\ol \TT)$. In fact, this holds even with equality:
  \begin{align*}
    \V(\TT_X) &= \V(\TT(\V(\TT_X))) & \text{by \autoref{thm:hsp}}\\
    &= \V(\ol\TT) & \text{by Lemma \ref{lem:substclosure}}.
  \end{align*}

  \noindent\emph{Completeness.} Suppose that $\TT_X$ and $\TT_Y'$ are
  equations over $X$ and $Y$, respectively, and denote by $\ol \TT$
  and $\ol \TT'$ their substitution closures. Suppose that
  $\TT_X\models \TT_Y'$. Then $\TT_Y'\leq \ol \TT_Y$ because
  \begin{align*}
    \TT_Y' &\leq \ol \TT_Y' & \text{by def. of $\ol \TT'$}\\
    &= [\TT(\V(\TT_Y'))]_Y & \text{by \autoref{lem:substclosure}}\\
    &\leq [\TT(\V(\TT_X))]_Y & \text{see below}\\
    &= \ol \TT_Y & \text{by \autoref{lem:substclosure}}
  \end{align*}
  In the penultimate step, we use that $\V(\TT_X) \seq \V(\TT_Y')$ by
  assumption and that the map $\TT(\dash)$ is antitone. Thus we obtain
  the proof
  \[
    \TT_X \;\stackrel{}{\vdash}\; \ol \TT_Y
    \;\stackrel{}{\vdash}\; \TT_Y'
  \]
where step first step uses (Substitution) and the second one uses (Weakening).

  \section{Details for the Examples of \autoref{S:app}}
  \label{app:B}

In this section, we provide full details for all the applications mentioned in the paper. Let us start with two general remarks:

\begin{rem}\label{rem:exlift}
To characterize $\E_\X$ in a category $\A$ of algebras with structure, it suffices to look at the category of underlying structures. Indeed, suppose that 
\begin{enumerate}
\item the category $\A$ is part of an adjoint situation $F\dashv U\colon \A\to \B$;
\item there is a subclass $\X'\seq\B$ such that $\X= \{\, FX' \;:\; X'\in \X'\,\}$;
\item there is a class $\E'$ of morphisms in $\B$ such that $\E= \{\, e\in \A\;:\; Ue\in \E'\, \}$.
\end{enumerate}
Let $\E'_{\X'}$ be the class of all $e'\in \E'$ such that every $X'\in\X'$ is projective w.r.t. $e'$. Then \[\E_\X = \{\, e\in \E \; : \; Ue\in \E'_{\X'} \,\}.\] Indeed, for all $e\in \E$, one has 
\begin{align*}
e\in \E_\X &\Lra \forall X\in \X: \A(X,e) \text{ is surjective}\\
& \Lra \forall X'\in \X': \A(FX',e) \text{ is surjective}\\
&\Lra \forall X'\in \X': \B(X',Ue) \text{ is surjective}\\
&\Lra Ue\in \E'_{\X'}
\end{align*} 
\end{rem}

\begin{rem}\label{rem:proofrules}
  In the situation of \autoref{rem:singlequotth}, our equational logic can be stated in terms of single quotients $e_X: X \epito E_X$ in lieu of sets
  $\TT_X$ of them.  More precisely, given a
  quotient $e: X \epito E$ with $X \in \X$ and $E \in \A_0$, its
  \emph{substitution closure} is the smallest substitution invariant family
  $(\bar e_X: X \epito E_X)_{X \in \X}$ with $e \leq \bar e_X$, where
  families are ordered componentwise by the order of quotients in
  $\epid{X}{\A_0}$. Then the two rules of our deduction system are given by

  \smallskip
  \noindent
  \emph{Weakening:} $e_X\vdash e_X'$ for all $e_X' \leq e_X$ in
  $\epid{X}{\A_0}$. 
  
  \noindent
  \emph{Substitution:} $e_X \vdash \bar e_Y$ for every component $\bar
  e_Y$ of the substitution closure of $e$. 
\end{rem}

\subsection{Birkhoff's Equational Logic}\label{sec:app:birkhoff}

In \autoref{S:birkhoff} we derived Birkhoff's HSP theorem from our general HSP theorem. 
In this section, we demonstrate that the completeness of
Birkhoff's equational deduction system follows from our general
completeness result (\autoref{thm:eqlogicsoundcomplete}). A set
$\Gamma$ of term equations \emph{semantically entails} the term
equation $s=t$ (notation: $\Gamma\models s=t$) if every
$\Sigma$-algebra that satisfies all equations in $\Gamma$ also
satisfies $s=t$. Birkhoff's  proof system consists of the following
rules, where $s,t,u,s_i,t_i$ are $\Sigma$-terms over an arbitrary set $X$ of
variables, $\sigma\in \Sigma$ is an $n$-ary operation symbol, and $h\colon T_\Sigma X\to
T_\Sigma Y$ a $\Sigma$-homomorphism: 
\begin{enumerate}
\item[(Refl)] $\vdash\, t=t$
\item[(Sym)] $s=t\,\vdash\, t=s$
\item[(Trans)] $s=t,\, t=u\,\vdash\, s=u$
\item[(Cong)] $s_i=t_i\,(i=1,\ldots n) \,\vdash\, \sigma(s_1,\ldots, s_n)=\sigma(t_1,\ldots, t_n)$
\item[(Subst)] $s=t\,\vdash\, h(s)=h(t)$
\end{enumerate}
We write $\Gamma\vdash s=t$ if there exists a proof of $s=t$ from the axioms in $\Gamma$ using the above rules. Observe that
\begin{enumerate}
\item A set $\Gamma\seq T_\Sigma X\times T_\Sigma X$ is a congruence iff it is closed under (Refl), (Sym), (Trans), and (Cong). 
\item A family of sets $(\Gamma_X\seq T_\Sigma X\times T_\Sigma X)_{X\in \Set}$ corresonds to an equational theory (cf. \autoref{ex:theory}) iff it is closed under (Refl), (Sym), (Trans), (Cong), and (Subst). 
\end{enumerate}
\begin{theorem}[Birkhoff \cite{Birkhoff35}]\label{thm:birkhoffeqcomplete}
  $\Gamma\models s=t$ implies $\Gamma\vdash s=t$.
\end{theorem}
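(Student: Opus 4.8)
The plan is to obtain this as an instance of our generic Completeness Theorem (\autoref{thm:eqlogicsoundcomplete}) for the parameters of Example~\ref{ex:running}\ref{ex:running:birkhoff}, translating the two generic proof rules into Birkhoff's five rules via the correspondence \eqref{eq:homtheorem} between quotients of free algebras and congruences. It is convenient to fix a single set $X$ of variables large enough that every term equation in $\Gamma$, as well as $s=t$, is a pair in $T_\Sigma X\times T_\Sigma X$; this is harmless, since whether a $\Sigma$-algebra satisfies a term equation, and whether the term equation is Birkhoff-derivable, does not depend on the ambient variable set containing the variables that actually occur. Recall that here $\A=\A_0=\Alg{\Sigma}$, that $\E$ and $\M$ are the surjective and injective morphisms, that $\Lambda$ is the class of all cardinals, that $\X$ is the class of all free algebras, and that $\E_\X=\E$. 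Since $\Alg{\Sigma}$ is $\E$-co-wellpowered, \autoref{rem:singlequot}, \autoref{rem:singlequotth} and \autoref{rem:proofrules} let us present equations and equational theories by single quotients and read the two generic proof rules in the single-quotient form given in \autoref{rem:proofrules}.

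First I would set up the translation. Via \eqref{eq:homtheorem}, let $e_\Gamma\colon T_\Sigma X\epito E_\Gamma$ be the quotient with kernel the congruence $\langle\Gamma\rangle$ on $T_\Sigma X$ generated by $\Gamma$, and let $e_{s=t}\colon T_\Sigma X\epito E_{s=t}$ be the quotient with kernel $\langle(s,t)\rangle$. As in Step~3 of \autoref{S:birkhoff} (and its obvious extension from a single pair to a set of pairs), a $\Sigma$-algebra satisfies $e_\Gamma$ iff it satisfies all term equations in $\Gamma$, and it satisfies $e_{s=t}$ iff it satisfies $s=t$; hence $\V(e_\Gamma)=\{A:A\models\Gamma\}$ and $\V(e_{s=t})=\{A:A\models s=t\}$, so that $\Gamma\models s=t$ is equivalent to $\V(e_\Gamma)\seq\V(e_{s=t})$, i.e.\ to $e_\Gamma\models e_{s=t}$ in the sense of our generic semantic entailment. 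Then \autoref{thm:eqlogicsoundcomplete} yields a generic proof $e_\Gamma\vdash e_{s=t}$, and inspecting the proof of that theorem shows it may be taken to consist of one (Substitution) step followed by one (Weakening) step:
\[ e_\Gamma \;\vdash\; \bar e_X \;\vdash\; e_{s=t}, \]
where $\bar e_X$ is the $X$-component of the substitution closure of $e_\Gamma$ and the second step is licensed by $e_{s=t}\leq\bar e_X$.

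Next I would make $\bar e_X$ explicit. By \autoref{ex:theory} an equational theory is precisely a family of congruences $\mathord{\equiv_Y}\seq T_\Sigma Y\times T_\Sigma Y$ ($Y\in\Set$) closed under substitution; so by \autoref{lem:substclosure} the substitution closure of $e_\Gamma$ corresponds to the smallest substitution-closed family of congruences whose $X$-component contains $\langle\Gamma\rangle$. A routine check---using that $\{(p,q):(k(p),k(q))\in C\}$ is again a congruence whenever $C$ is one and $k$ a $\Sigma$-homomorphism---identifies its $X$-component with the congruence $D$ on $T_\Sigma X$ generated by all pairs $(h(u),h(v))$ with $(u,v)\in\langle\Gamma\rangle$ and $h$ ranging over the $\Sigma$-endomorphisms of $T_\Sigma X$. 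Thus $\ker(\bar e_X)=D$, and the (Weakening) step above states precisely that $\langle(s,t)\rangle\seq D$; in particular $(s,t)\in D$. Now I invoke the two observations recorded just above: since the congruence generated by a set of pairs equals its closure under (Refl), (Sym), (Trans) and (Cong), every $(u,v)\in\langle\Gamma\rangle$ admits a proof $\Gamma\vdash u=v$ using only those four rules, whence $\Gamma\vdash h(u)=h(v)$ by (Subst); applying the same observation to $D$, the pair $(s,t)\in D$ is derivable from the generators $(h(u),h(v))$ by (Refl), (Sym), (Trans) and (Cong). Splicing these derivations---that is, using that Birkhoff-provability is transitive---produces a proof $\Gamma\vdash s=t$.

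The step I expect to require the most care is the explicit identification $\ker(\bar e_X)=D$, i.e.\ matching the abstract substitution closure of \autoref{def:closure}/\autoref{lem:substclosure} with the concrete ``congruence generated by all substitution instances of $\langle\Gamma\rangle$'', which goes through the description of equational theories as substitution-closed families of congruences. The remaining ingredients---the quotient/congruence dictionary, the reduction to a single variable set, the shape of the generic proof from \autoref{thm:eqlogicsoundcomplete}, and the rephrasing of the two generic rules in Birkhoff's syntax---are routine.
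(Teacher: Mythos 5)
Your proposal is correct and follows essentially the same route as the paper's own proof: translate $\Gamma$ and $s=t$ into quotients of $T_\Sigma X$ via the congruence correspondence \eqref{eq:homtheorem}, invoke the generic Completeness Theorem to get an abstract derivation, and then unwind (Weakening) and (Substitution) into Birkhoff's five rules using the description of theories as substitution-closed families of congruences. The only (harmless) difference is that the paper inducts over an abstract derivation $e_0\vdash e_1\vdash\cdots\vdash e_n$ of arbitrary length, whereas you exploit the fact that the proof of \autoref{thm:eqlogicsoundcomplete} actually produces a two-step derivation and then identify the substitution closure explicitly as the congruence generated by all substitution instances of $\langle\Gamma\rangle$ --- a correct and slightly more streamlined variant of the same argument.
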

\begin{proof}
We derive this statement from \autoref{thm:eqlogicsoundcomplete}. Choose a set $X$ of variables such that $\Gamma\seq
T_\Sigma X\times T_\Sigma X$ and $(s,t)\in T_\Sigma X\times T_\Sigma
X$, and suppose that $\Gamma\models s=t$. Let $e\colon T_\Sigma X\epito E_X$ and $e'\colon T_\Sigma X \epito E_X'$ be the quotients corresponding to the congruences generated by $\Gamma$ and $(s,t)$, respectively. Then $e\models e'$, so by  \autoref{thm:eqlogicsoundcomplete} (cf. also \autoref{rem:proofrules}), there exists a proof 
\[e=e_0\vdash e_1\vdash \cdots \vdash e_n=e'\] 
in our abstract calculus for some $e_i\colon T_\Sigma X_i\epito E_i$. Denote by $\Gamma_i$ the kernel of $e_i$. We show that for every $i=0,\ldots, n$ and $(s',t')\in \Gamma_i$ one has $\Gamma \vdash s'=t'$; this then implies $\Gamma\vdash s=t$ by putting $i=n$ and $(s',t')=(s,t)$. The proof is by induction on $i$. 

For $i=0$, we have that $\Gamma_0$ is the congruence on $T_\Sigma X$ generated by $\Gamma$, so $\Gamma_0$ is the closure of $\Gamma$ under the rules (Refl), (Sym), (Trans), (Cong). Thus, every pair $(s',t')\in \Gamma_0$ can be proved from $\Gamma$ using these four rules. 

Now suppose that $0<i<n$. If the step $e_i\vdash e_{i+1}$ is an application of the weakening rule, the statement follows trivially by induction because then $\Gamma_{i+1}\seq \Gamma_i$. Thus suppose that $e_i\vdash e_{i+1}$ uses the substitution rule. Identifying equational theories with families of congruences, see \autoref{ex:theory}, the substitution closure of $e_i$  is the family $\ol{\Gamma_i}=(\equiv_Y\,\seq\, T_\Sigma Y\times T_\Sigma Y)_{Y\in \Set}$ obtained by closing $\Gamma_i$ under the rules (Refl), (Sym), (Trans), (Cong), (Subst). Thus $\Gamma_{i+1}$ is equal to $\equiv_{X_{i+1}}$. Therefore, every pair $(s',t')\in \Gamma_{i+1}$ can be proved from $\Gamma_i$ using  (Refl), (Sym), (Trans), (Cong), (Subst). By induction, it follows that $\Gamma\vdash s'=t'$.
\end{proof}

\subsection{Ordered Algebras}\label{sec:app:bloom}

In this section, we show that Bloom's variety theorem for ordered algebras \cite{bloom76} emerges as a special case of our general HSP theorem.  Given a finitary signature $\Sigma$, an \emph{ordered $\Sigma$-algebra} is a $\Sigma$-algebra $A$ in the category of posets; that is, $A$ endowed with a partial order on its underlying set such that all $\Sigma$-operations $\sigma\colon A^n\to A$ are monotone. The category $\OAlg{\Sigma}$ of ordered $\Sigma$-algebras and monotone $\Sigma$-homomorphisms has a factorization system given by surjective morphism and order-embeddings, respectively. Here, a morphism $h\colon A\to B$ is called an \emph{order-embedding} if $a\leq a' \Lra h(a)\leq h(a')$  for all $a,a'\in A$. The forgetful functor from $\OAlg{\Sigma}$ to $\Set$ has a left adjoint mapping to each set $X$ the term algebra $T_\Sigma X$, discretely ordered.

\medskip
\noindent\textbf{Step 1.} To treat ordered algebras in our setting, we choose
\begin{enumerate}
\item $\A = \A_0 = \OAlg{\Sigma}$;
\item $(\E,\M) =$ (surjective morphisms, order-embeddings); 
\item $\Lambda = $ all cardinal numbers;
\item $\X = $ all free algebras $T_\Sigma X$ with $X\in\Set$.
\end{enumerate}

\begin{lemma}
The class $\E_\X$ consists of all surjective morphisms, i.e. $\E_\X=\E$.
\end{lemma}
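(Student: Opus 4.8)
The plan is to prove the two inclusions separately. The inclusion $\E_\X\seq\E$ holds by the very definition of $\E_\X$, so the work is entirely in showing $\E\seq\E_\X$, i.e. that every surjective morphism $e\colon A\epito B$ of ordered $\Sigma$-algebras is such that each free algebra $T_\Sigma X\in\X$ is projective with respect to it.

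First I would exploit the adjunction between $\OAlg{\Sigma}$ and $\Set$: a morphism $h\colon T_\Sigma X\to B$ is uniquely determined by its underlying map $h_0\colon X\to |B|$ on generators. Since $e$ is surjective on underlying sets, the axiom of choice produces a map $g_0\colon X\to|A|$ with $|e|\circ g_0=h_0$; let $g\colon T_\Sigma X\to A$ be the unique $\Sigma$-homomorphism extending $g_0$. The crucial point — and the only place where the specific choice of $(\E,\M)$ and $\X$ is used — is that $g$ is automatically a morphism of $\OAlg{\Sigma}$: the domain $T_\Sigma X$ carries the \emph{discrete} order, so monotonicity of $g$ is vacuous. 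Finally, $e\circ g$ and $h$ are $\Sigma$-homomorphisms $T_\Sigma X\to B$ agreeing on the generators $X$, hence equal by freeness, which gives the desired factorization $h=e\circ g$. Thus $T_\Sigma X$ is projective w.r.t. $e$, so $e\in\E_\X$.

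I expect no genuine obstacle here: the argument is routine once one observes that discreteness of the order on $T_\Sigma X$ makes the monotonicity constraint on the lift disappear. It is worth remarking that this is exactly the feature that fails for continuous algebras (where the free algebras carry a nontrivial order, and correspondingly $\E_\X\subsetneq\E$). Alternatively, one can shortcut the whole argument by invoking \autoref{rem:exlift} with $\B=\Set$, $\X'=\Set$ and $\E'=$ surjections, reducing the claim to the triviality that every set is projective with respect to surjections of sets.
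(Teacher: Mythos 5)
Your proof is correct and is essentially the paper's: the paper simply invokes Remark~\ref{rem:exlift} with $\X'=\Set$ and $\E'=$ surjections and observes that surjections in $\Set$ split, which is exactly the reduction you unfold in detail (and also mention as the shortcut at the end). The only cosmetic difference is that you make explicit why the lift $g$ is a morphism of $\OAlg{\Sigma}$ (discreteness of the order on $T_\Sigma X$), a point the general remark absorbs into the adjunction.
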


\begin{proof} Apply \autoref{rem:exlift}  to the adjunction $\xymatrix@1{\OAlg{\Sigma} \ar@<4pt>[r]\ar@{}[r]|-\top & \Set \ar@<4pt>[l]}$ with $\X'=\Set$ and $\E'$ = surjections. Since every surjection in $\Set$ splits (i.e. has a left inverse), that class $\E'_{\X'}$ consists precisely of the surjective maps.
\end{proof}
Let us check that our \autoref{asm:setting} are satisfied. For (1),
just note that products in $\OAlg{\Sigma}$ are formed on the
level of underlying sets (with partial order and $\Sigma$-structure
taken pointwise). (2) is trivial since $\A=\A_0$. For (3), let $A\in \OAlg{\Sigma}$ and choose a surjective map $e\colon X\epito A$ for
some set $X$. Then the unique extension $\ext e\colon T_\Sigma X\epito
A$ to a morphism in $\OAlg{\Sigma}$ is surjective, i.e. $T_\Sigma X\in \X$ and $\ext e\in \E_\X$.

\medskip
\noindent\textbf{Step 2.} Given an ordered algebra $A$, a preorder $\preceq$ on $A$
is called \emph{stable} if it refines the order of $A$ (i.e.
$a\leq_A a'$ implies $a\preceq a'$)) and every $\Sigma$-operation
$\sigma\colon A^n\to A$ ($\sigma\in\Sigma$) is monotone with respect
to $\preceq$. It is well-known and easy to prove that there is an isomorphism of complete lattices
\begin{equation}\label{eq:homtheoremordered}
  \text{quotients algebras of $A$}
  \quad\cong\quad
  \text{stable preorders on $A$}
\end{equation}
assigning to $e\colon A\epito B$ the stable preorder given by
$a\preceq_e a'$ iff $e(a)\leq_B e(a')$.

\medskip
\noindent\textbf{Step 3.} The exactness property \eqref{eq:homtheoremordered} suggests that one may replace equations $e\colon T_\Sigma X\epito E$ by the following syntactic concept: a \emph{term inequation} over the set $X$ of variables  is a pair
$(s,t)\in T_\Sigma X\times T_\Sigma X$, denoted as $s\leq t$. It is
\emph{satisfied} by an algebra $A\in \OAlg{\Sigma}$ if for every morphism
$h: T_\Sigma X\to A$ one has $h(s)\leq_A h(t)$. Equations
$e\colon T_\Sigma X \epito E$ and term inequations are expressively
equivalent in the following sense:
\begin{enumerate}
\item For every equation $e\colon T_\Sigma X\epito E$, the
  corresponding preorder
  $\mathord{\preceq_e}\seq T_\Sigma X\times T_\Sigma X$ is a set of
  term inequations equivalent to $e$, that is, an algebra $A\in \OAlg{\Sigma}$
  satisfies $e$ iff it satisfies all term inequations given by the
  pairs in $\preceq_e$. This follows immediately from
  \eqref{eq:homtheoremordered}.
\item Conversely, given a term inequation
  $(s,t)\in T_\Sigma X\times T_\Sigma X$, form the smallest stable
  preorder $\preceq$ on $T_\Sigma X$ with $s\preceq t$ (viz. the
  intersection of all such preorders) and let
  $e\colon T_\Sigma X\epito E$ be the corresponding quotient. Then, by \eqref{eq:homtheoremordered} again, an
  algebra $A\in \OAlg{\Sigma}$ satisfies $s\leq t$ iff it satisfies $e$.
\end{enumerate}

\medskip
\noindent\textbf{Step 4.}
We therefore deduce from \autoref{thm:hspeq}:

\begin{theorem}[Bloom \cite{bloom76}]
  A class of ordered $\Sigma$-algebras is a variety (i.e. closed under quotient algebras, subalgebras, and products) iff it is axiomatizable by term inequations.
\end{theorem}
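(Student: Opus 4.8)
The plan is to run the four-step recipe of \autoref{S:app}. Having fixed the parameters $\A=\A_0=\OAlg{\Sigma}$, $(\E,\M)=$ (surjective morphisms, order-embeddings), $\Lambda=$ all cardinals, and $\X=$ all discretely ordered term algebras $T_\Sigma X$, with $\E_\X=\E$ by the lemma above, I would first check \autoref{asm:setting}: \ref{A1} holds since products in $\OAlg{\Sigma}$ are computed on underlying sets with pointwise order and operations; \ref{A2} is vacuous because $\A_0=\A$; and \ref{A3} holds because any surjection of sets $X\epito A$ extends freely to a surjective morphism $T_\Sigma X\epito A$ in $\OAlg{\Sigma}$, which lies in $\E_\X$. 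Since $\E_\X=\E$ is exactly the class of surjective morphisms, $\M$ the class of order-embeddings (hence subalgebra inclusions), and $\Lambda$ all cardinals, \autoref{D:var} specializes precisely to the classical notion: a \variety in our sense is a class of ordered $\Sigma$-algebras closed under quotient algebras, subalgebras, and arbitrary products.

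Next I would record two auxiliary facts. First, $\OAlg{\Sigma}$ is $\E$-co-wellpowered, because a surjective quotient of an ordered algebra $A$ is determined by a stable preorder on $A$ and there is only a set of those; combined with $\Lambda=$ all cardinals, \autoref{rem:singlequot} applies, so every equation over $T_\Sigma X$ may be taken to be a single quotient $e_X\colon T_\Sigma X\epito E$. Second, I would establish the exactness property \eqref{eq:homtheoremordered}: send a quotient $e\colon A\epito B$ to the stable preorder $a\preceq_e a'$ iff $e(a)\leq_B e(a')$, and conversely send a stable preorder $\preceq$ to the quotient $A\epito A/{\approx}$, where $a\approx a'$ iff $a\preceq a'$ and $a'\preceq a$, with $A/{\approx}$ ordered by $[a]\leq[a']$ iff $a\preceq a'$ and with $\Sigma$-operations lifted using stability of $\preceq$; one verifies these assignments are mutually inverse and order-preserving, giving an isomorphism of complete lattices. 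This exactness step is the only genuinely non-trivial ingredient, though it is routine.

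With exactness available, Step~3 is the translation: to an equation $e_X\colon T_\Sigma X\epito E$ associate the set $\{\,s\leq t : s\preceq_{e_X}t\,\}$ of term inequations, and to a term inequation $s\leq t$ over $X$ associate the quotient of $T_\Sigma X$ by the least stable preorder containing $(s,t)$ — which exists because stable preorders are closed under arbitrary intersection — and \eqref{eq:homtheoremordered} shows that in each case the equation and the term inequation(s) are satisfied by exactly the same ordered algebras. Consequently a class $\V\seq\OAlg{\Sigma}$ is axiomatizable by term inequations if and only if it is equationally presentable in the sense of \autoref{constr:var}. Finally, \autoref{thm:hspeq} identifies the equationally presentable classes with the \varieties, and by Step~1 the latter are exactly the classes closed under quotient algebras, subalgebras, and products; chaining these equivalences yields Bloom's theorem. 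I do not expect any deep obstacle here: the only real work is the verification of \eqref{eq:homtheoremordered} and the closure of stable preorders under intersection, since all of the HSP combinatorics has already been discharged, once and for all, in \autoref{thm:hsp} and \autoref{thm:hspeq}.
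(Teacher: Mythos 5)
Your proposal is correct and follows essentially the same route as the paper: the same choice of parameters with $\E_\X=\E$, the same exactness correspondence \eqref{eq:homtheoremordered} between quotients and stable preorders, the same two-way translation between single-quotient equations and term inequations, and the final appeal to \autoref{thm:hspeq}. The extra details you supply (the $\E$-co-wellpoweredness justifying \autoref{rem:singlequot}, and the explicit inverse construction of a quotient from a stable preorder) are left implicit in the paper but are exactly the right checks.
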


\subsection{Eilenberg-Schützenberger Theorem}
\label{sec:ES}

In this section, we derive Eilenberg and Schützenberger's HSP theorem~\cite{es76}
for finite algebras. Fix a finitary signature $\Sigma$ containing only finitely many operation symbols.

\medskip\noindent\textbf{Step 1.} To treat finite algebras in our setting, choose the parameters
\begin{itemize}
\item $\A=\Alg{\Sigma}$;
\item $(\E,\M)=$ (surjective morphisms, injective morphisms);
\item $\A_0 = \FAlg{\Sigma}$, the full subcategory of finite $\Sigma$-algebras;
\item $\Lambda =$ all finite cardinals numbers;
\item $\X=$ all free $\Sigma$-algebras $T_\Sigma X$ with $X\in\FSet$.
\end{itemize}
As in \autoref{S:birkhoff}, we have $\E_\X=\E = $ surjective morphisms because surjections in $\Set$ split. Clearly, all our \autoref{asm:setting} are satisfied.

\medskip\noindent\textbf{Step 2.} The exactness property of $\Alg{\Sigma}$ has already been stated in \eqref{eq:homtheorem}.

\medskip\noindent\textbf{Step 3.} In the present setting, an
equational theory is given by a family $\TT=(\TT_n)_{n<\omega}$, where each
$\TT_n\seq T_\Sigma n\mathord{\epidownarrow} \FAlg{\Sigma}$ is a
filter (i.e. a codirected and upwards closed set) in the poset of finite quotient algebras
of $T_\Sigma n$.
\begin{rem}
  Note that since $\E_\X = \E$, substitution invariance (see
  \autoref{D:eqnth}) has the following equivalent statement: for every
  $e\colon T_\Sigma n \epito E$ in $\TT_n$ and every
  $\Sigma$-homomorphism $h: T_\Sigma m \to T_\Sigma n$, $h \cdot e$
  factorizes through some $e': T_\Sigma m \epito E'$ in $\TT_m$. This
  is easy to see using the upwards closedness of $\TT_m$.
\end{rem}
The syntactic concept corresponding to equational theories involves sequences
$(s_i=t_i)_{i<\omega}$ of term equations, where
$(s_i,t_i)\in T_\Sigma m_i\times T_\Sigma m_i$ for some
$m_i<\omega$. A finite $\Sigma$-algebra $A$ \emph{eventually
  satisfies} $(s_i=t_i)_{i<\omega}$ if there exists $i_0<\omega$ such
that $A$ satisfies the equations $s_i=t_i$ for all $i\geq
i_0$. Equational theories and sequences of term equations are
expressively equivalent in the following sense:

\begin{lemma}\label{lem:theories_vs_sequences}
\begin{enumerate}
\item For each equational theory $\TT$, there exists a sequence
  $(s_i=t_i)_{i<\omega}$ of term equations such that, for all finite
  $\Sigma$-algebras $A$,
  \begin{equation}\label{eq:42}
    A\in \V(\TT)\quad\text{iff}\quad \text{$A$ eventually satisfies
      $(s_i=t_i)_{i<\omega}$}.
  \end{equation}
\item For each sequence $(s_i=t_i)_{i<\omega}$ of term equations,
  there exists an equational theory $\TT$ such that, for all finite
  $\Sigma$-algebras $A$, \eqref{eq:42} holds. 
\end{enumerate}
\end{lemma}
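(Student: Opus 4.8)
The plan is to establish the two directions by a direct translation between the poset-theoretic data of an equational theory and sequences of term equations, relying throughout on the exactness property \eqref{eq:homtheorem} (which lets me pass freely between finite quotients of $T_\Sigma n$ and finite-index congruences on $T_\Sigma n$) and on the fact that, since $\Sigma$ is finite, every finite-index congruence on a term algebra is finitely generated \cite[Prop.~2]{es76}; in particular each $T_\Sigma n$ ($n<\omega$) has only countably many finite quotients, so each $\TT_n\seq T_\Sigma n\mathord{\epidownarrow}\FAlg{\Sigma}$ is a \emph{countable} filter.

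For part (1), I would first choose, for every $n<\omega$, a cofinal $\omega^\op$-chain $e_0^n\geq e_1^n\geq\cdots$ in $\TT_n$ that is moreover \emph{coherent}: for all $m>n$, every $q\colon m\to n$ and every $i$, the morphism $e_i^n\o T_\Sigma q$ factorizes through $e_i^m$. These chains are built simultaneously by induction on $i$, and within each stage by induction on $n$: the candidate for $e_i^n$ is required to lie below finitely many previously fixed quotients of $T_\Sigma n$ — namely $e_{i-1}^n$ (keeping the chain decreasing), the $i$-th member of a fixed enumeration of $\TT_n$ (forcing cofinality), and the coimage of each $e_i^{n'}\o T_\Sigma q$ for $n'<n$ and $q\colon n\to n'$ (forcing coherence). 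The point is that every one of these finitely many quotients already lies in $\TT_n$: for the coherence constraints this is precisely substitution invariance of $\TT$ applied along $T_\Sigma q$. Since $\TT_n$ is a filter, the finite meet of these quotients again lies in $\TT_n$, and we take $e_i^n$ to be that meet. Having fixed the chains, for each $n$ choose a finite generating set $W_n\seq T_\Sigma n\times T_\Sigma n$ of the kernel of $e_n^n$, and let $(s_i=t_i)_{i<\omega}$ enumerate the (pairwise disjoint) union $\bigcup_{n<\omega}W_n$. Two observations, valid for finite $A$, then underlie \eqref{eq:42}: since $A$ is finite and $\TT_n$ codirected, $A\in\V(\TT)$ iff for every $n$ all homomorphisms $T_\Sigma n\to A$ factorize through some $e_i^n$; and since $W_n$ generates the kernel of $e_n^n$, the algebra $A$ satisfies all equations in $W_n$ iff all homomorphisms $T_\Sigma n\to A$ factorize through $e_n^n$.

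The ``if'' direction of \eqref{eq:42} is then short: if $A$ eventually satisfies the sequence, the pairwise disjointness of the $W_n$ yields an $N$ with $A$ satisfying $e_m^m$ for all $m\geq N$; fixing $m_0=\max(N,|A|)$ and a surjection $c\colon T_\Sigma m_0\epito A$ — which factorizes through $e_{m_0}^{m_0}$ — every $h\colon T_\Sigma n\to A$ decomposes as $h=c\o f$ by projectivity of $T_\Sigma n$, hence $h$ factorizes through the coimage of $e_{m_0}^{m_0}\o f$, which lies in $\TT_n$ by substitution invariance; as there are only finitely many such $h$, codirectedness of $\TT_n$ gives $A\models\TT_n$. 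The ``only if'' direction is the main obstacle. Here I would put $k=|A|$, use $A\models\TT_k$ to obtain some $j_0$ with all homomorphisms $T_\Sigma k\to A$ factorizing through $e_{j_0}^k$, and then show that for every $n\geq k$ all homomorphisms $T_\Sigma n\to A$ factorize through $e_{j_0}^n$: the image of such an $h$ has at most $k$ elements, hence is $k$-generated, so $h$ decomposes as $h=c\o T_\Sigma q$ for a suitable $q\colon n\to k$ and $c\colon T_\Sigma k\to A$; now $c$ factorizes through $e_{j_0}^k$, and $e_{j_0}^k\o T_\Sigma q$ factorizes through $e_{j_0}^n$ by coherence, so $h$ factorizes through $e_{j_0}^n$. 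Consequently $A$ satisfies $e_{j_0}^n$, hence $e_n^n$, for all $n\geq\max(k,j_0)$, hence all of $W_n$; so $A$ fails only the finitely many equations lying in $W_n$ with $n<\max(k,j_0)$ and therefore eventually satisfies the sequence.

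For part (2), given $(s_i=t_i)_{i<\omega}$ with $(s_i,t_i)\in T_\Sigma m_i\times T_\Sigma m_i$, I would define $\TT_n$ to be the set of finite quotients $e\colon T_\Sigma n\epito E$ admitting $i_0<\omega$ with $e(g(s_i))=e(g(t_i))$ for all $i\geq i_0$ and all $\Sigma$-homomorphisms $g\colon T_\Sigma m_i\to T_\Sigma n$. A routine check shows that each $\TT_n$ is closed under $\E$-quotients and $\Lambda$-codirected (for the latter, take the coimage of a tuple $\langle e_1,\dots,e_r\rangle$ and use that a jointly monic family reflects the relevant equalities), that $\TT=(\TT_n)_{n<\omega}$ is substitution invariant (precompose the $g$'s with the given substitution), and that $\E_\X$-completeness is vacuous because $\E_\X=\E$; hence $\TT$ is an equational theory. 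The equivalence \eqref{eq:42} then again splits into two short arguments: if $A\in\V(\TT)$, take $k=|A|$ and a surjection $c\colon T_\Sigma k\epito A$, which factorizes through some $e\in\TT_k$ with witness $i_0$, so that for $i\geq i_0$ and any $f\colon T_\Sigma m_i\to A$ we may write $f=c\o g$ by projectivity and obtain $f(s_i)=f(t_i)$; conversely, if $A$ eventually satisfies the sequence with threshold $i_0$, then for every $n$ and every $h\colon T_\Sigma n\to A$ the image quotient $T_\Sigma n\epito h[T_\Sigma n]$ lies in $\TT_n$ (witnessed by $i_0$, since $h$ factors as this quotient followed by the inclusion $h[T_\Sigma n]\hookrightarrow A$) and $h$ factorizes through it, whence $A\models\TT_n$.
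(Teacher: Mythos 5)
Your proposal is correct and follows essentially the same route as the paper's proof: the same cofinal coherent chains $e_i^n$ built inside the countable filters $\TT_n$ (relying on the fact that finite congruences on $T_\Sigma n$ are finitely generated because $\Sigma$ is finite), the same finite generating sets $W_n$ for the kernel of $e_n^n$ in part (1), and in part (2) literally the same theory $\TT_n$ defined by the condition $\exists i_0\,\forall i\geq i_0\,\forall g\colon e\o g(s_i)=e\o g(t_i)$, verified in the same way. The only noteworthy divergence is in the ``if'' direction of part (1), where the paper argues by a three-way case distinction on the arity $m$ (including a retraction trick $q\o j=\id$ for small variable sets), whereas you treat all $n$ uniformly by factoring an arbitrary $h\colon T_\Sigma n\to A$ through one fixed surjection $c\colon T_\Sigma m_0\epito A$ and then applying substitution invariance to the coimage of $e_{m_0}^{m_0}\o f$ --- a slightly cleaner argument whose only (harmless) blind spot is the degenerate case $A=\emptyset$, where no such surjection exists but both sides of \eqref{eq:42} hold vacuously.
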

The proof rests on an observation on congruences (see lemma below) that crucially relies on the finiteness of the signature $\Sigma$. In the following, a congruence
$\mathord{\equiv}\seq\times A\times A$ on a $\Sigma$-algebra $A$ is
called \emph{finite} if the corresponding quotient algebra
$A/\mathord{\equiv}$, see \eqref{eq:homtheorem}, is finite. It is called \emph{finitely generated} if there
exists a finite subset $W\seq \mathord{\equiv}$ such that $\equiv$ is the
least congruence on $A$ containing $W$.
\begin{lemma}[\cite{es76}, Proposition~2]\label{lem:congfinitelygenerated}
Let $\Sigma$ be a finite signature and $n<\omega$. Then every finite congruence on $T_\Sigma n$ is finitely generated.
\end{lemma}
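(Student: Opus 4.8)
The plan is to write down an explicit finite generating set for a given finite congruence, extracted from the ``multiplication table'' of its finite quotient algebra. Let $\mathord{\equiv}$ be a finite congruence on $T_\Sigma n$, and let $q\colon T_\Sigma n\epito A$ be the induced quotient (so $A=T_\Sigma n/\mathord{\equiv}$ is finite), with underlying set $\{a_1,\dots,a_k\}$. Since $q$ is surjective I would first fix, for each $j\in\{1,\dots,k\}$, a representative term $r_j\in T_\Sigma n$ with $q(r_j)=a_j$. Then I define $W\subseteq T_\Sigma n\times T_\Sigma n$ to consist of: (i) for each variable $x\in n$, the pair $(x,\,r_j)$ with $a_j=q(x)$; and (ii) for each operation symbol $\sigma\in\Sigma$ of arity $m$ and each tuple $(a_{i_1},\dots,a_{i_m})\in A^m$, the pair $(\sigma(r_{i_1},\dots,r_{i_m}),\,r_j)$ with $a_j=\sigma^A(a_{i_1},\dots,a_{i_m})$. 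Every pair in $W$ lies in $\mathord{\equiv}$ since its two components have equal $q$-image, so $W\subseteq\mathord{\equiv}$.

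Next I would observe that $W$ is finite: it has $n$ pairs of type (i) and $\sum_{\sigma\in\Sigma}|A|^{\mathrm{ar}(\sigma)}$ pairs of type (ii), and this sum is finite precisely because $\Sigma$ has finitely many operation symbols, each of finite arity (so each $A^{\mathrm{ar}(\sigma)}$ is finite), and $A$ and $n$ are finite. Writing $\mathord{\sim}$ for the congruence on $T_\Sigma n$ generated by $W$, we automatically have $\mathord{\sim}\subseteq\mathord{\equiv}$.

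The substance of the proof is the reverse inclusion $\mathord{\equiv}\subseteq\mathord{\sim}$, which I would obtain from the following claim, proved by structural induction on $t\in T_\Sigma n$: \emph{$t\sim r_j$, where $j$ is determined by $q(t)=a_j$}. For a variable $t=x$ this is exactly a type-(i) pair. For $t=\sigma(t_1,\dots,t_m)$, the induction hypothesis gives $t_\ell\sim r_{i_\ell}$ with $q(t_\ell)=a_{i_\ell}$; since $\mathord{\sim}$ is a congruence, $t\sim\sigma(r_{i_1},\dots,r_{i_m})$, and a type-(ii) pair gives $\sigma(r_{i_1},\dots,r_{i_m})\sim r_j$ with $a_j=\sigma^A(a_{i_1},\dots,a_{i_m})=q(t)$, so $t\sim r_j$. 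Granting the claim: if $s\equiv t$, say $q(s)=q(t)=a_j$, then $s\sim r_j\sim t$, hence $s\sim t$. Therefore $\mathord{\equiv}=\mathord{\sim}$, and $W$ finitely generates $\mathord{\equiv}$.

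I do not expect a real obstacle: the only delicate points are the counting that keeps $W$ finite --- which is exactly where the hypotheses that $\Sigma$ and $n$ are finite enter, together with the fact that finitary arities force each $A^{\mathrm{ar}(\sigma)}$ to be finite --- and the routine structural induction establishing the normal-form claim. Note that the argument genuinely breaks if $n$ is infinite, since the type-(i) pairs then form an infinite set, which is consistent with the hypothesis $n<\omega$.
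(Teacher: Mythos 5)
Your proof is correct. The paper itself gives no proof of this lemma --- it simply cites it as Proposition~2 of Eilenberg--Schützenberger --- and your argument (finitely many ``multiplication-table'' relations $\sigma(r_{i_1},\dots,r_{i_m})\sim r_j$ together with the variable-to-representative pairs, followed by the structural induction reducing every term to its representative) is precisely the standard argument underlying that citation, with the finiteness of $\Sigma$, of the arities, of $A$, and of $n$ entering exactly where you say they do.
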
 

\proof[\autoref{lem:theories_vs_sequences}]
\begin{enumerate}
\item Let $\TT$ be an equational theory. Since $\Sigma$ is finite,
  $T_\Sigma n$ is countable for each $n < \omega$. Hence, there are
  only countably many finitely generated congruences on $T_\Sigma n$,
  whence only countably many finite quotients, by
  \autoref{lem:congfinitelygenerated}. In
  particular, $\TT_n$ is a countable co-directed poset and thus
  contains an $\omega^\op$-chain
  $e_0^n\geq e_1^n \geq e_2^n \geq \cdots$ that is \emph{cofinal},
  which means that for every element $e\in \TT_n$ there exists
  $i<\omega$ with $e\geq e_i^n$. The $e_i^n$ can be chosen in a way that, for each $i,n<\omega$ and each map $q\colon n+1\to n$, the morphism $e_i^n\o T_\Sigma q$ factorizes through $e_i^{n+1}$:
  \begin{equation}\label{eq:ein1}
    \vcenter{
      \xymatrix{
        T_\Sigma (n+1) \ar[r]^{T_\Sigma q} \ar@{->>}[d]_{e_i^{n+1}} 
        &
        T_\Sigma n \ar@{->>}[d]^{e_i^n} \\
        E_i^{n+1} \ar@{-->}[r] & E_i^n
      }}
\end{equation}
To see this, suppose inductively that this property already holds for all $i<\omega$ and $n'<n$. 
 Since $\TT$ is a theory, each $e_i^n\o T_\Sigma q$
  factorizes through some $e\in \TT_{n+1}$. Since there are only finitely
  many maps $q\colon n+1\to n$ and $\TT_{n+1}$ is codirected, we may choose $e$
  independently of $q$. The quotient $e$ lies above some element of
  the cofinal chain $e_0^{n+1}\geq e_1^{n+1} \geq e_2^{n+1} \geq
  \cdots$. Replacing this chain by a suitable subchain, we can ensure that $e\geq e_i^{n+1}$. Then \eqref{eq:ein1} holds.

Iterating \eqref{eq:ein1} shows that for all $i,m,n<\omega$ with $n<m$ and all $q\colon m\to n$, the morphism $e_i^n\o T_\Sigma q$ factorizes through $e_{i}^m$, see the diagram below:
  \begin{equation}\label{eq:ein}
    \vcenter{
      \xymatrix{
        T_\Sigma m \ar[r]^{T_\Sigma q} \ar@{->>}[d]_{e_i^m} 
        &
        T_\Sigma n \ar@{->>}[d]^{e_i^n} \\
        E_i^m \ar@{-->}[r] & E_i^n
      }}
\end{equation}
For each $n<\omega$, the kernel of $e_n^n$ has a finite set $W_n$ of generators  by \autoref{lem:congfinitelygenerated}. Let $(s_i=t_i)_{i<\omega}$ be a sequence of terms equations where $(s_i,t_i)$ ranges over all elements in the countable set $\bigcup_{n<\omega} W_n$. We claim that, for each finite $\Sigma$-algebra $A$, the equivalence \eqref{eq:42} holds.

\medskip\noindent($\To$) Suppose that $A\in \V(\TT)$. Choose a
surjective map $h\colon n\epito A$ with $n<\omega$. Then
$\ext{h}\colon T_\Sigma n\epito A$ factorizes through some $e_i^n$, and
by \eqref{eq:ein} (replacing $n$ by a larger number if necessary), we
may assume that $\ext{h}$ factorizes through $e_n^n$. We claim that $A$
satisfies all equations $s_i=t_i$ with
$(s_i,t_i)\in \bigcup_{m> n} W_m$. To see this, suppose that
$(s_i,t_i)\in W_m$ for some $m> n$, and let $k\colon m\to A$. By
projectivity of $m$ in $\Set$, we may choose $q\colon m\to n$ with
$h\o q = k$, which implies $\ext{h}\o T_\Sigma q = \ext{k}$. Moreover, we have that
$e_n^n\o T_\Sigma q$ factorizes through $e_n^m$ by \eqref{eq:ein}, thus also through $e_m^m$ because $e_m^m\leq e_n^m$. In
other words, we obtain the following commutative diagram, which shows
that $\ext{k}$ factorizes through $e_m^m$.
\[
\xymatrix{
T_\Sigma m \ar@/^10ex/[drr]^{\ext{k}} \ar@{->>}[r]^{T_\Sigma q} \ar@{->>}[d]_{e_m^m} & T_\Sigma n \ar@{->>}[d]^{e_n^n} \ar@{->>}[dr]^{\ext{h}} &  \\
E_m^m \ar@{-->}[r] & E_n^n \ar@{-->}[r]& A
}
\]
Since $e_m^m(s_i)=e_m^m(t_i)$, it follows that $\ext{k}(s_i)=\ext{k}(t_i)$. Thus, $A$ satisfies $s_i=t_i$.

\medskip\noindent($\Leftarrow$) Suppose that $A$ eventually satisfies the term equations $(s_i=t_i)_{i<\omega}$. Then, for some $n<\omega$, the algebra $A$ satisfies all equations $s_i=t_i$ with $(s_i,t_i)\in \bigcup_{m> n} W_m$. To show that $A\in \V(\TT)$, let $m<\omega$ and $h\colon m\to A$. We need to prove that $\ext{h}$ factorizes through some $e\in \TT_m$. 
\begin{enumerate}
\item\label{L:B7:a} If $m> n$, then $\ext{h}$ merges all pairs in $W_m$. Since the kernel of $e_m^m$ is generated by $W_m$, this implies that $h$ factorizes through $e_m^m$.
\item If $m=0$ and $T_\Sigma 0 = 0$ (i.e., the signature $\Sigma$ contains no constant symbol), then the only quotient in $\TT_0$ is the empty quotient $e\colon T_\Sigma 0\epito 0$, through which $h$ trivially factorizes.
\item It remains to consider the case where $m\leq n$ and $T_\Sigma m\neq 0$. Then there exist morphisms $q\colon T_\Sigma (n+1)\epito T_\Sigma m$ and $j\colon T_\Sigma m\monoto T_\Sigma (n+1)$ with $q\o j = \id$. Indeed: (i) if $m=0$, then $T_\Sigma m$ is the initial algebra. Choose $j$ to be unique initial morphism, and $q$ to be an arbitrary morphism, which exists because $T_\Sigma m\neq 0$. Then $q\o j = \id$ by initiality; (ii) If $m>0$, choose $q'\colon n+1\epito m$ and $j'\colon m\monoto n+1$ with $q'\o j'=\id$. Then $j=T_\Sigma j'$ and $q=T_\Sigma q'$ satisfy $q\o j=\id$.
  
 Since $\TT$ is a theory, we know that $e_{n+1}^{n+1}\o j$
  factorizes through some $e\in \TT_m$, say
  $e_{n+1}^{n+1}\o j = k\o e$. Moreover, by \ref{L:B7:a} above, the morphism
  $h^\#\o q$ factorizes as
  $\ext{h}\o q = g\o e_{n+1}^{n+1}$ for some $g$.
  \[
    \xymatrix{
      T_\Sigma (n+1) \ar@<0.5ex>@{->>}[r]^{ q}
      \ar@{->>}[d]_{e_{n+1}^{n+1}}
      & \ar@<0.5ex>@{>->}[l]^{j} T_\Sigma m \ar@{->>}[d]^{e} \ar@{->>}[dr]^{\ext{h}} &  \\
      E_{n+1}^{n+1}\ar@/_1.5em/[rr]_g & E \ar[l]^k & A
    }
  \]
It follows that \[\ext{h} = \ext{h}\o q\o j = g\o e_{n+1}^{n+1}\o  j = g\o k \o e,.\] so $\ext{h}$ factorizes through $e\in \TT_m$, as required.
\end{enumerate}
\item Let $(s_i=t_i)_{i<\omega}$ be a sequence of term equations, where $(s_i,t_i)\seq T_\Sigma m_i\times T_\Sigma m_i$. For each $n<\omega$, form the set $\TT_n\seq \Alg{\Sigma}\mathord{\epidownarrow}\FAlg{\Sigma}$ of all finite quotients $e\colon T_\Sigma n \epito E$ with the following property:
\begin{equation}\label{eq:thdef} \exists i_0<\omega:\forall i\geq i_0:\forall (g\colon T_\Sigma {m_i}\to T_\Sigma n): e\o g(s_i)=e\o g(t_i). \end{equation}
We first show that $\TT=(\TT_n)_{n<\omega}$ is an equational theory. To see this, note first that $\TT_n$ is a filter: upward closure is obvious, and for codirectedness observe that given $e\colon T_\Sigma n \epito E$ and $e'\colon T_\Sigma n\epito E'$ in  $\TT_n$, the subdirect product (i.e. the coimage of the map $\langle e,e'\rangle\colon T_\Sigma n \to E\times E'$)  clearly lies in $\TT_n$. To show that $\TT$ is substitution-invariant, let $e\in \TT_n$ and $h\colon T_\Sigma m\to T_\Sigma n$. Factorize $e\o  h = m\o \ol e$ with $\ol e$ surjective and $m$ injective. Since $e\in \TT_n$, there exists $i_0<\omega$ as in \eqref{eq:thdef}. Then, for every $i\geq i_0$ and $g\colon T_\Sigma m_i\to T_\Sigma m$ we have $e\o h \o g(s_i) = e\o h\o g(t_i)$. This implies $m\o \ol e \o g(s_i) = m\o \ol e\o g(t_i)$, so $\ol e \o g(s_i) = \ol e\o g(t_i)$ because $m$ is injective. This shows that $\ol e\in \TT_m$, i.e. $\TT$ is substitution-invariant. $\E_\X$-completeness is trivial because $\E_\X=\E$ (see \autoref{rem:singlequotth}).

We claim that a finite $\Sigma$-algebra $A$ lies in $\V(\TT)$ iff it eventually satisfies $(s_i=t_i)_{i<\omega}$.

\medskip\noindent ($\To$) Let $A\in \V(\TT)$. Choose a surjective morphism $e\colon T_\Sigma n\epito A$ for some $n<\omega$. Then $e$ factorizes through some element of $\TT_n$, which implies $e\in \TT_n$ because this set is upwards closed. Thus, there exists $i_0<\omega$ as in \eqref{eq:thdef}. We claim that $A$ satisfies all the equations $s_i=t_i$ with $i\geq i_0$. Indeed, let $h\colon T_\Sigma m_i\to A$. By projectivity of $T_\Sigma m_i$, there exists $g\colon T_\Sigma m_i\to T_\Sigma n$ with $h=e\o g$. By \eqref{eq:thdef} we have $e\o g(s_i)=e\o g(t_i)$ and thus $h(s_i)=h(t_i)$. Thus $A$ satisfies $s_i=t_i$ for $i\geq i_0$.

\medskip\noindent($\Leftarrow$) Suppose that $A$ eventually satisfies
$(s_i=t_i)_{i<\omega}$; say, it satisfies $s_i=t_i$ for all
$i\geq i_0$. To show that $A\in \V(\TT)$, let $n<\omega$ and
$h\colon T_\Sigma n\to A$. For all $i\geq i_0$ and
$g\colon T_\Sigma m_i\to T_\Sigma n$ we have $h\o g(s_i)=h\o g(t_i)$
because $A$ satisfies $s_i=t_i$. Letting $e$ denote the coimage of
$h$, this implies $e\o g(s_i)=e\o g(t_i)$ for all $i\geq i_0$, and
thus $e\in \TT_n$ by definition of $\TT_n$. We have thus shown that
$h$ factorizes through $e\in \TT_n$, which proves that
$A\in \V(\TT)$.\qed
\end{enumerate}
\doendproof
\textbf{Step 4.} From the theory version of our HSP theorem (\autoref{thm:hsp}) and the previous lemma, we conclude:
\begin{theorem}[Eilenberg-Schützenberger~\cite{es76}]\\
  A class of finite $\Sigma$-algebras is closed under finite products,
  subalgebras and quotients if and only if it is axiomatizable by a
  sequence of term equations.\sloppypar%
\end{theorem}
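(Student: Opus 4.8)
The plan is to obtain the theorem as an essentially immediate consequence of \autoref{thm:hsp} and \autoref{lem:theories_vs_sequences}, after translating the framework vocabulary into concrete terms. With the parameters fixed in \autoref{sec:ES} --- $\A_0 = \FAlg{\Sigma}$, $(\E,\M) = (\text{surjections},\text{injections})$, $\Lambda =$ all finite cardinals, and (crucially) $\E_\X = \E$ --- \autoref{D:var} unwinds to say that a \variety is precisely a class of finite $\Sigma$-algebras closed under quotient algebras, subalgebras, and finite products. So the statement to prove is equivalent to: \emph{a class $\V\seq\FAlg{\Sigma}$ is a \variety iff it is axiomatizable by a sequence of term equations.} This reduces the task to bridging the two ``axiomatizing devices'' for varieties: \eqnths (tied to varieties by \autoref{thm:hsp}) and sequences of term equations (tied to \eqnths by \autoref{lem:theories_vs_sequences}).

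For the ``only if'' direction, I would start from a \variety $\V$. By \autoref{thm:hsp}, $\V = \V(\TT)$ where $\TT = \TT(\V)$ is a \eqnth. Applying \autoref{lem:theories_vs_sequences}(1) to $\TT$ produces a sequence $(s_i = t_i)_{i<\omega}$ of term equations such that a finite $\Sigma$-algebra $A$ lies in $\V(\TT)$ iff it eventually satisfies $(s_i = t_i)_{i<\omega}$; since $\V = \V(\TT)$, this exhibits $\V$ as the class axiomatized by that sequence.

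For the ``if'' direction, I would take a sequence $(s_i = t_i)_{i<\omega}$ of term equations and let $\V$ be the class of finite $\Sigma$-algebras eventually satisfying it. By \autoref{lem:theories_vs_sequences}(2) there is a \eqnth $\TT$ with $A\in\V(\TT)$ iff $A$ eventually satisfies the sequence, whence $\V = \V(\TT)$; and $\V(\TT)$ is a \variety by \autoref{lem:var} (taking the class of equations $\{\TT_X : X\in\X\}$), so $\V$ is closed under quotient algebras, subalgebras, and finite products.

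The genuine mathematical content is entirely in \autoref{lem:theories_vs_sequences}, which in turn leans on \autoref{lem:congfinitelygenerated}: because $\Sigma$ is finite, every finite congruence on $T_\Sigma n$ is finitely generated, so each filter $\TT_n$ is a countable codirected poset, admits a cofinal $\omega^{\op}$-chain, and these chains can be chosen coherently along the reindexing morphisms $q\colon m\to n$, after which finitely many term equations per level suffice to recover the theory. Thus the ``hard part'' of the present theorem has already been carried out in the proof of that lemma; the step displayed here is pure bookkeeping --- matching the abstract isomorphism of \autoref{thm:hsp} against the syntactic notion of a sequence of term equations --- and I expect no further obstacle.
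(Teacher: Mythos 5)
Your proposal is correct and follows exactly the paper's own route: the paper likewise concludes the theorem in its ``Step 4'' directly from \autoref{thm:hsp} together with \autoref{lem:theories_vs_sequences}, with all the substantive work residing in that lemma (and in \autoref{lem:congfinitelygenerated}). The bookkeeping you spell out --- unwinding \autoref{D:var} via $\E_\X=\E$ to the concrete closure conditions and invoking \autoref{lem:var} for the ``if'' direction --- is precisely what the paper leaves implicit.
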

Our above derivation of this theorem is overall not shorter than the original proof of Eilenberg and Schützenberger, and also rests on their \autoref{lem:congfinitelygenerated}. However, the present approach has the advantage of explicitly relating the syntactic concept of a sequence of term equations to the order-theoretic concept of an equational theory, which is missing in the original paper.

\subsection{Reiterman's Theorem and Pin \& Weil's Theorem}\label{sec:reiterman}

Reiterman \cite{Reiterman1982} proved another HSP theorem for finite
$\Sigma$-algebras, in which one uses profinite equations rather than
sequences of equations as in Eilenberg and Schützenberger's result (see \autoref{sec:ES}). In contrast to the latter, Reiterman's theorem applies to algebras over arbitrary finitary signatures $\Sigma$, not only signatures with finitely many operations. In this section, we show how to derive this theorem from our general results. We omit some of the details because Reiterman's theorem has already been treated categorically in previous work \cite{camu16}. 

A \emph{topological $\Sigma$-algebras} is a $\Sigma$-algebras $A$ with a topology on its
underlying set such that all $\Sigma$-operations
$\sigma\colon A^n\to A$ are continuous. A \emph{profinite
  $\Sigma$-algebra} is a topological $\Sigma$-algebra that can be
expressed as a limit of finite algebras with discrete topology. We
write $\PAlg{\Sigma}$ for the category of profinite $\Sigma$-algebras
and continuous $\Sigma$-homomorphisms. The category $\FAlg{\Sigma}$ of
finite $\Sigma$-algebras forms a full subcategory of $\PAlg{\Sigma}$
by identifying finite $\Sigma$-algebras with profinite
$\Sigma$-algebras with discrete topology. The forgetful functor from
$\PAlg{\Sigma}$ to $\Set$ has a left adjoint assigning to each set $X$
the \emph{free profinite $\Sigma$-algebra} $\wh T_\Sigma X$. The
latter can be computed as the limit of all finite quotient algebras of $T_\Sigma X$, i.e. the limit of the diagram
\[ D\colon T_\Sigma X\mathord{\epidownarrow}\FAlg{\Sigma}\to \PAlg{\Sigma}, \quad (e\colon T_\Sigma \epito A) \to A. \]
To deduce Reiterman's theorem from our HSP theorem, we proceed as follows. 

\medskip
\noindent\textbf{Step 1.} Choose the parameters 
\begin{itemize}
\item $\A= \PAlg{\Sigma}$;
\item $(\E,\M) =$ (surjective morphisms, injective morphisms);
\item $\A_0 = \FAlg{\Sigma}$;
\item $\Lambda = $ all finite cardinal numbers;
\item $\X$ = all finitely generated free profinite algebras $\wh{T}_\Sigma X$ ($X\in \FSet$).
\end{itemize}
The class $\E_\X = \E$ consists of all surjective morphisms. This follows
from \autoref{rem:exlift} applied to  $\xymatrix@1{\PAlg{\Sigma}
  \ar@<4pt>[r] \ar@{}[r]|-\top & \Set \ar@<4pt>[l]}$, $\X'=\Set_f$ and $\E'$ = surjections.

Our \autoref{asm:setting} are satisfied: for~\ref{A1}, note that
 finite products of finite (and thus discrete) profinite
$\Sigma$-algebras are computed in $\Set$. \ref{A2} is clear. For
\ref{A3}, let $A$ be a finite $\Sigma$-algebra and choose a surjective
map $e\colon X\epito A$ for some finite set $X$. Then the unique
extension $\wh{e}\colon \wh{T}_\Sigma X\epito A$ is surjective,
i.e. $\wh{T}_\Sigma X\in \X$ and $\wh{e}\in \E_\X$.

\medskip
\noindent\textbf{Step 2.} Given a profinite $\Sigma$-algebra $A$, a
\emph{profinite congruence} on $A$ is a $\Sigma$-algebra congruence
$\mathord{\equiv}\seq A\times A$ such that the quotient algebra
$A/\mathord{\equiv}$, equipped with the quotient topology, is
profinite. In analogy to \eqref{eq:homtheorem}, there is an
isomorphism of complete lattices
\begin{equation}\label{eq:homtheoremprofinite}
  \text{profinite quotient algebras of $A$}
  \quad\cong\quad
  \text{profinite congruences on $A$}
\end{equation}
mapping a profinite quotient $e\colon A\epito B$ to its kernel
$\mathord{\equiv_e}\seq A\times A$. To see this, one just needs to
show that given profinite congruences $\mathord{\equiv}\seq \mathord{\equiv'}$ on $A$,
one has $e\leq e'$ for the corresponding quotients
$e\colon A\epito A/\mathord{\equiv}$ and
$e'\colon A\epito A/\mathord{\equiv'}$, i.e. $e'$ factorizes through
$e$ in $\PAlg{\Sigma}$. But this follows immediately from the fact
that the codomain $A/\mathord{\equiv}$ of $e$ carries the quotient
topology, i.e., every function $h$ with $e'=h\o e$ is continuous.

\medskip
\noindent\textbf{Step 3.} In the present setting, an equation over a finite set $X$ of variables is given by a filter $\TT_X\seq
\wh{T}_\Sigma X \mathord{\epidownarrow} \FAlg{\Sigma}$ in the poset of finite quotient algebras of $\wh{T}_\Sigma
X$. One can view $\TT_X$ as a diagram of finite algebras in
$\PAlg{\Sigma}$ and take its limit cone $\pi_q\colon P_X \epito A$
(where $q\colon \wh{T}_\Sigma X \epito A$ ranges over $\TT_X$). Its
universal property gives a unique morphism
$e_X\colon \wh{T}_\Sigma X\epito P_X$ with $\pi_q\o e = q$ for all
$q\in \TT_X$. By standard properties of inverse limits of topological spaces, the map $e$ is
surjective \cite[Corollary 1.1.6]{Ribes2010}. Then a finite $\Sigma$-algebra $A$ satisfies the equation
$\TT_X$ iff every $h\colon \wh{T}_\Sigma X\to A$ factorizes through
$e_X$. We have thus shown that every equation $\TT_\X$ can be presented as a single quotient $e_X$.

 A \emph{profinite equation} over a finite set $X$ of variables is a
pair $(s,t)\in \wh{T}_\Sigma X\times T_\Sigma X$, denoted as $s=t$. It
is \emph{satisfied} by a finite $\Sigma$-algebra $A$ if for every map
$h: X\to A$ we have $\ext{h}(s) = \ext{h}(t)$. Here,
$\ext{h}\colon \widehat{T}_\Sigma X\to A$ denotes the unique extension of
$h$ to a morphism in $\PAlg{\Sigma}$, using the universal property of
the free profinite algebra $\widehat{T}_\Sigma X$.

Equations are expressively
equivalent to profinite equations:
\begin{enumerate}
\item For every equation expressed as a profinite quotient
  $e\colon \wh{T}_\Sigma X\epito E$, the corresponding profinite congruence
  $\equiv_e\,\seq\, \wh{T}_\Sigma X\times \wh{T}_\Sigma X$ is a set of profinite
  equations equivalent to $e$, that is, a $\Sigma$-algebra $A$
  satisfies $e$ iff it satisfies all term inequations in
  $\equiv_e$. This follows immediately from the exactness property
  \eqref{eq:homtheoremprofinite}.
\item Conversely, given a profinite equation
  $(s,t)\in \wh{T}_\Sigma X\times \wh{T}_\Sigma X$, form the smallest
  profinite congruence $\equiv$ on $\wh{T}_\Sigma X$ with $s\equiv t$
  (viz. the intersection of all such congruences) and let
  $e\colon \wh{T}_\Sigma X\epito E$ be the corresponding
  quotient. Then a profinite $\Sigma$-algebra $A$ satisfies $s=t$ iff
  it satisfies $e$. This is once again a consequence of the exactness property
  \eqref{eq:homtheoremprofinite}.
\end{enumerate}

\medskip
\noindent\textbf{Step 4.}
From \autoref{thm:hspeq}, we deduce:
\begin{theorem}[Reiterman \cite{Reiterman1982}]
  A class of finite $\Sigma$-algebras is a pseudovariety (i.e. closed under under quotients,
  subalgebras and finite products) iff it is axiomatizable by profinite
  equations.
\end{theorem}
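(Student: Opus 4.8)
The plan is to obtain Reiterman's theorem as a direct instance of the equational version of the general HSP theorem, \autoref{thm:hspeq}, applied to the parameters fixed in Step 1. Since \autoref{asm:setting} has already been verified for this choice of $\A=\PAlg{\Sigma}$, $(\E,\M)$, $\A_0=\FAlg{\Sigma}$, $\Lambda$ and $\X$, \autoref{thm:hspeq} applies and yields: a subclass $\V\seq \A_0$ is a \variety in the sense of \autoref{D:var} if and only if $\V=\V(\Eq)$ for some class $\Eq$ of equations.

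First I would check that this abstract notion of \variety is exactly that of a pseudovariety of finite $\Sigma$-algebras. Recall from Step 1 that $\E_\X=\E$ consists of the surjective homomorphisms, $\M$ of the injective ones, and $\Lambda$ of the finite cardinals. Hence, for a subclass of $\A_0=\FAlg{\Sigma}$, closure under $\E_\X$-quotients is closure under (finite) quotient algebras, closure under $\M$-subobjects is closure under subalgebras, and closure under $\Lambda$-products is closure under finite products. So \autoref{D:var} instantiates precisely to the definition of a pseudovariety.

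Next I would match equational presentability with axiomatizability by profinite equations, using the expressive equivalence established in Step 3. That step shows two things: (i) every equation $\TT_X$ can be presented as a single surjection $e_X\colon \wh{T}_\Sigma X\epito P_X$ onto the limit $P_X$ of the filter $\TT_X$, and this $e_X$ is in turn expressively equivalent to a set of profinite equations, namely its kernel read through the exactness property \eqref{eq:homtheoremprofinite}; (ii) conversely, every profinite equation is expressively equivalent to some such $e_X$, hence to some equation $\TT_X$. Consequently, if $\V=\V(\Eq)$, replacing each equation in $\Eq$ by its equivalent set of profinite equations and taking the union gives an axiomatization of $\V$ by profinite equations; and if $\V$ is the class of finite algebras satisfying a set $P$ of profinite equations, replacing each member of $P$ by an equivalent equation $\TT_X$ exhibits $\V$ as $\V(\Eq)$. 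Chaining the three steps gives $\V\seq \FAlg{\Sigma}$ is a pseudovariety $\iff$ $\V$ is a \variety $\iff$ $\V$ is equationally presentable $\iff$ $\V$ is axiomatizable by profinite equations.

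The part I expect to require the most care — and which I would spell out rather than wave through — is the claim in Step 3 that the canonical comparison morphism $e_X\colon \wh{T}_\Sigma X\epito P_X$ into the limit of the cofiltered diagram $\TT_X$ of finite quotient algebras is itself surjective; this is what legitimises presenting a categorical equation (a filter of quotients) as a single profinite quotient, and it relies on the fact that a cofiltered limit of surjections of profinite spaces is surjective (\cite[Corollary 1.1.6]{Ribes2010}), together with the exactness property \eqref{eq:homtheoremprofinite} identifying profinite quotients of $\wh{T}_\Sigma X$ with profinite congruences. Everything else is routine bookkeeping, since the substantive categorical content is already delivered by \autoref{thm:hspeq}.
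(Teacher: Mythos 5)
Your proposal is correct and follows essentially the same route as the paper's own derivation in Appendix~\ref{sec:reiterman}: instantiate the parameters of Step~1, identify abstract varieties with pseudovarieties, present each filter $\TT_X$ as a single surjection onto the limit $P_X$ (with surjectivity justified exactly as in the paper via \cite[Corollary~1.1.6]{Ribes2010}), translate between that quotient and profinite equations through the exactness property \eqref{eq:homtheoremprofinite}, and conclude by \autoref{thm:hspeq}. Nothing essential is missing.
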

As for Birkhoff's classical HSP theorem, there is an ordered
version of this result. An \emph{ordered profinite $\Sigma$-algebra}
is a profinite $\Sigma$-algebra carrying an additional partial order
such that all operations are continuous and monotone. Morphisms are
monotone continuous $\Sigma$-homomorphisms. Accordingly, take the
parameters
\begin{itemize}
\item $\A= \POAlg{\Sigma}$ (ordered profinite $\Sigma$-algebras);
\item $\A_0 = \FOAlg{\Sigma}$ (finite ordered $\Sigma$-algebras);
\item $(\E,\M) =$ (surjective morphisms, order-embeddings);
\item $\X$ = all finitely generated free ordered profinite algebras
  $\wh{T}_\Sigma X$ ($X\in \FSet$);
\item $\Lambda = $ all finite cardinals.
\end{itemize}
In analogy to the above unordered case, replacing profinite equations
$s=t$ by profinite inequations $s\leq t$, we obtain
\begin{theorem}[Pin and Weil \cite{PinWeil1996}]
  A class of finite ordered $\Sigma$-algebras is closed under
  quotients, subalgebras and finite products iff it can be presented
  by profinite inequations.
\end{theorem}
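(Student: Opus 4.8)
The plan is to derive the Pin--Weil theorem as an instance of our framework, following the four-step recipe used above for Reiterman's theorem, with the ordered refinements supplied by the Bloom-style arguments of \autoref{sec:app:bloom}.

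\textbf{Step 1 (parameters).} I would take $\A = \POAlg{\Sigma}$, $\A_0 = \FOAlg{\Sigma}$, $(\E,\M) = (\text{surjective morphisms},\ \text{order-embeddings})$, $\Lambda =$ all finite cardinals, and $\X = \{\,\wh{T}_\Sigma X : X\in\FSet\,\}$. Exactly as in the unordered case, \autoref{rem:exlift} applied to the forgetful--free adjunction between $\POAlg{\Sigma}$ and $\Set$ with $\X' = \FSet$ and $\E' =$ surjections (which split in $\Set$) shows $\E_\X = \E =$ all surjective morphisms, so homomorphic images are taken in the ordinary sense. I then verify \autoref{asm:setting}: finite products of finite ordered (hence discrete profinite) algebras are computed in $\Pos$, giving~\ref{A1}; an order-embedding into a finite algebra from an $\X$-generated object has finite domain, giving~\ref{A2}; and for~\ref{A3}, every finite ordered $\Sigma$-algebra $A$ is the image under $\wh{e}\colon\wh{T}_\Sigma X\epito A$ of a surjection $e\colon X\epito A$ from a finite set $X$.

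\textbf{Step 2 (exactness).} The crux is an ordered analogue of \eqref{eq:homtheoremprofinite}, obtained by merging it with the stable-preorder correspondence \eqref{eq:homtheoremordered}. Call a preorder $\preceq$ on $A\in\POAlg{\Sigma}$ a \emph{profinite stable preorder} if it refines $\leq_A$, every $\Sigma$-operation is monotone with respect to $\preceq$, and $A/{\preceq}$ equipped with the quotient order and quotient topology is an ordered profinite algebra. I would prove that $e\mapsto\mathord{\preceq_e}$, where $a\preceq_e a'$ iff $e(a)\leq_B e(a')$, is an isomorphism of complete lattices between profinite quotient algebras of $A$ and profinite stable preorders on $A$. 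Monotonicity, stability and the lattice order are routine; the two delicate points are (a) that a pointwise intersection of profinite stable preorders is again one, so that a \emph{smallest} profinite stable preorder containing a prescribed pair exists, and (b) that, since the quotient carries the quotient topology, every mediating map $h$ with $e' = h\circ e$ is automatically continuous, which is what yields $\mathord{\preceq}\,\seq\,\mathord{\preceq'}\Rightarrow e\leq e'$. I expect this reconciliation of topology and order to be the main obstacle; it is the ordered counterpart of the argument given for \eqref{eq:homtheoremprofinite} above, now combined with the monotonicity bookkeeping behind \eqref{eq:homtheoremordered}.

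\textbf{Steps 3--4 (syntax and conclusion).} Since $\Lambda$ consists of finite cardinals, \autoref{rem:singlequot} does not apply directly; instead, as in Step 3 of \autoref{sec:reiterman}, each equation $\TT_X\seq \epid{\wh{T}_\Sigma X}{\FOAlg{\Sigma}}$ is presented as a single surjective quotient $e_X\colon\wh{T}_\Sigma X\epito P_X$ obtained from the limit cone of the codirected diagram $\TT_X$ of finite ordered algebras, surjectivity of the mediating map following from the corresponding property of inverse limits of topological spaces. I then define a \emph{profinite inequation} over a finite set $X$ to be a pair $(s,t)\in\wh{T}_\Sigma X\times\wh{T}_\Sigma X$, written $s\leq t$, satisfied by $A\in\FOAlg{\Sigma}$ iff $\ext{h}(s)\leq_A\ext{h}(t)$ for every $h\colon X\to A$. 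Expressive equivalence of equations and profinite inequations then follows from Step 2 in both directions: a quotient $e\colon\wh{T}_\Sigma X\epito E$ has the same finite models as the set of profinite inequations given by the pairs in $\preceq_e$; and conversely $s\leq t$ has the same models as the quotient corresponding to the smallest profinite stable preorder on $\wh{T}_\Sigma X$ containing $(s,t)$. Finally, observing that under these parameters \autoref{D:var} unfolds to ``closed under surjective quotients, order-embedding subalgebras, and finite products'', \autoref{thm:hspeq} yields the equivalence asserted in the theorem.
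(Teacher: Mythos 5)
Your proposal is correct and follows essentially the same route as the paper: the paper proves Pin--Weil by listing exactly the parameters you chose ($\POAlg{\Sigma}$, $\FOAlg{\Sigma}$, surjections/order-embeddings, finite cardinals, finitely generated free ordered profinite algebras) and then deferring to the analogy with the unordered Reiterman case, replacing profinite equations by profinite inequations. Your elaboration of that analogy -- the profinite stable preorder correspondence merging \eqref{eq:homtheoremordered} with \eqref{eq:homtheoremprofinite}, the presentation of filters as single surjective limit quotients, and the appeal to \autoref{thm:hspeq} -- is precisely what the paper leaves implicit.
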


\subsection{Quantitative Algebras}\label{sec:app:quant}
In this section, we derive an HSP theorem for quantitative algebras
as an instance of our general results. Recall that an \emph{extended metric space} is a set $A$ with a map $d_A\colon A\times A\to [0,\infty]$ (assigning to any two points a possibly infinite distance), subject to the axioms (i) $d_A(a,b)=0$ iff $a=b$, (ii) $d_A(a,b)=d_A(b,a)$ and (iii) $d_A(a,c)\leq d_A(a,b)+d_A(b,c)$ for all $a,b,c\in A$. A map $h\colon A\to B$ between extended metric spaces is \emph{nonexpansive} if $d_B(h(a),h(a'))\leq d_A(a,a')$ for $a,a'\in A$. Let $\Met_\infty$ denote the category of extended metric spaces and nonexpansive maps. Note that products $\prod_{i\in I} A_i$ in $\Met_\infty$ are given by cartesian products with the sup metric $d((a_i)_{i\in I}, (b_i)_{i\in I}) = \sup_{i\in I} d_{A_i}(a_i,b_i)$, and coproducts $\coprod_{i\in I} A_i$ by disjoint unions, where points in distinct components have distance $\infty$.

Fix a, not necessarily finitary, signature $\Sigma$, that is, the arity of an operation symbol $\sigma\in \Sigma$ is any cardinal number. A
\emph{quantitative $\Sigma$-algebra} is a $\Sigma$-algebra $A$ endowed with an extended metric $d_A$ such that all $\Sigma$-operations $\sigma\colon A^n\to A$ are nonexpansive.  The forgetful functor from the category $\QAlg{\Sigma}$ of quantitative $\Sigma$-algebras and nonexpansive $\Sigma$-homomorphisms to $\Met_\infty$ has a left adjoint assigning to each space $X$ the free quantitative $\Sigma$-algebra
$T_\Sigma X$. The latter is carried by the set of all $\Sigma$-terms (equivalently, well-founded $\Sigma$-trees) over $X$,
with metric inherited from $X$ as follows: if $s$ and $t$ are $\Sigma$-terms of the same shape, i.e.~they differ only in the variables, their distance is the
supremum of the distances of the variables in corresponding positions
of $s$ and $t$; otherwise, it is $\infty$.

The HSP theorem for quantitative algebras is parametric in a regular cardinal number $c>1$. In the following, an extended metric space is
called \emph{$c$-clustered} if it is a coproduct of spaces of cardinality
$<c$.

\medskip\noindent \textbf{Step 1.} Choose the parameters of our setting as
\begin{itemize}
\item  $\A = \A_0 = \QAlg{\Sigma}$; 
\item $(\E,\M)$ is given by morphisms carried by surjections and subspaces, resp.;
\item $\Lambda = $ all cardinal numbers;
\item $\X =$ all free algebras $T_\Sigma X$ with $X\in \Met_\infty$ a $c$-clustered space.
\end{itemize}
 Let us characterize the class $\E_\X$:
\begin{lemma}\label{lem:cref}
  A quotient $e\colon A\epito B$ belongs to $\E_\X$ if and only if for
  every subset $B_0\seq B$ of size $<c$ there exists a subset of
  $A_0\seq A$ such that $e[A_0]=B_0$ and the restriction
  $e\colon A_0\to B_0$ is isometric.
\end{lemma}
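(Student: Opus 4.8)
The plan is to reduce the claim to a statement about the underlying category $\Met_\infty$ by invoking \autoref{rem:exlift}, applied to the free/forgetful adjunction $T_\Sigma\dashv U\colon\QAlg{\Sigma}\to\Met_\infty$ with $\X'$ the class of all $c$-clustered spaces and $\E'$ the class of all surjective nonexpansive maps. Since $\X=\{\,T_\Sigma X':X'\in\X'\,\}$ and $\E=\{\,e:Ue\in\E'\,\}$, that remark yields $\E_\X=\{\,e\in\E:Ue\in\E'_{\X'}\,\}$, where $\E'_{\X'}$ consists of the surjective nonexpansive maps $e\colon A\to B$ with respect to which every $c$-clustered space is projective. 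So it suffices to prove: \emph{a surjective nonexpansive map $e\colon A\to B$ in $\Met_\infty$ lies in $\E'_{\X'}$ if and only if for every $B_0\seq B$ with $\under{B_0}<c$ there is $A_0\seq A$ with $e[A_0]=B_0$ and the restriction $e\colon A_0\to B_0$ isometric.}

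For ``$\Leftarrow$'' I would fix a $c$-clustered space $X=\coprod_{j\in J}X_j$ with $\under{X_j}<c$ and a nonexpansive $h\colon X\to B$, and build a nonexpansive $g\colon X\to A$ with $e\o g=h$. For each $j$ the image $h[X_j]\seq B$ has cardinality $<c$, so by hypothesis there is $A_j\seq A$ with $e[A_j]=h[X_j]$ and $e\colon A_j\to h[X_j]$ isometric; since an isometry between extended metric spaces is injective (as $d(a,a')=0$ forces $a=a'$), this restriction is bijective, and its inverse $s_j\colon h[X_j]\to A_j$ is again isometric, hence nonexpansive. Define $g$ on $X_j$ as $s_j$ composed with the (co)restriction of $h$ to $h[X_j]$; then $e\o g$ agrees with $h$ on $X_j$, and each such piece is nonexpansive. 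Since distinct summands of the coproduct $X$ lie at distance $\infty$, the glued map $g\colon X\to A$ is nonexpansive on all of $X$, so $X$ is projective with respect to $e$.

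For ``$\Rightarrow$'' I would take $B_0\seq B$ with $\under{B_0}<c$, regarded as a metric subspace; as a one-summand coproduct of a space of size $<c$ it is $c$-clustered, hence projective with respect to $e$. Applying this to the (nonexpansive) inclusion $\iota\colon B_0\hookrightarrow B$ produces a nonexpansive $g\colon B_0\to A$ with $e\o g=\iota$. Setting $A_0:=g[B_0]$, the identity $e\o g=\iota$ gives $e[A_0]=B_0$ and shows that $g\colon B_0\to A_0$ is a bijection with inverse $e\colon A_0\to B_0$; and for $a=g(b),a'=g(b')\in A_0$ one has $d_A(a,a')\le d_B(b,b')=d_B(e(a),e(a'))\le d_A(a,a')$ because $g$ and $e$ are nonexpansive, so $e\colon A_0\to B_0$ is isometric, as required.

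There is no genuine difficulty here — the argument is a routine lifting-and-gluing. The two points that deserve care are the correct application of \autoref{rem:exlift} (so that projectivity need only be tested against underlying metric spaces), and the precise form of coproducts in $\Met_\infty$, whose infinite inter-summand distances are exactly what make the componentwise map $g$ globally nonexpansive in ``$\Leftarrow$'' and what let a bare subset of size $<c$ count as a $c$-clustered space in ``$\Rightarrow$''.
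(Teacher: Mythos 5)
Your proof is correct and follows essentially the same route as the paper's: reduce to $\Met_\infty$ via Remark~\ref{rem:exlift} and then characterize projectivity of $c$-clustered spaces directly. The only cosmetic difference is that the paper first notes $\E_\X=\E_{\X_c}$ (a coproduct is projective iff each summand is) and then argues with a single space of size $<c$, whereas you inline that coproduct reduction into the lifting argument; the content is identical.
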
 
Following the terminology of Mardare et al.~\cite{MardarePP17}, we
call a quotient with the property stated in the lemma
\emph{$c$-reflexive}. Note that every quotient is $2$-reflexive.

\proof
By \autoref{rem:exlift} applied to the adjunction $\xymatrix@1{\QAlg{\Sigma} \ar@<4pt>[r]\ar@{}[r]|-\top & \Met_\infty \ar@<4pt>[l]}$ with $\X'= $ $c$-clustered spaces and $\E' = $ surjective nonexpansive maps, the statement of the lemma can be reduced to the case where the signature $\Sigma$ is empty, that is, we can assume that $\A=\Met_\infty$ and $\X=$ $c$-clustered spaces.

Note that $\X$ is the closure of the class
  $\X_c= \{ X\in \Met_\infty \;:\; \under{X}<c\,\}$ under coproducts.
  Since a coproduct is projective w.r.t.~some morphism $e$ iff all of the
  coproduct components are, one has $\E_\X = \E_{\X_c}$. Therefore, it
  suffices to show that, for every $e\colon A\epito B$ in $\Met_\infty$,
  \[
    e\in\E_{\X_c} \quad \iff \quad \text{$e$ is $c$-reflexive.}
  \]
    For the ``$\To$'' direction, suppose that $e\in \E_{\X_c}$, and
    let $m\colon B_0\monoto B$ be a subspace of size $<c$. Then
    $B_0\in {\X_c}$ and thus there exists $g\colon B_0\to A$ with
    $e\o g = m$. Let $A_0 = g[B_0]$. It follows that $e[A_0]=B_0$, and
    for every pair of elements $g(b),g(b')\in A_0$ one has
    \[
      d_B(e(g(b)), e(g(b'))) = d_B(m(b),m(b')) = d_B(b,b'),
    \]
    i.e. $e\colon A_0\to B_0$ is isometric. Thus $e$ is $c$-reflexive.

\medskip\noindent
    For the ``$\Leftarrow$'' direction, suppose that $e$ is
    $c$-reflexive and let $h\colon X\to B$ be a nonexpansive map with
    $X\in {\X_c}$, i.e. $\under{X}<c$. Then $h[X]\seq B$ has
    cardinality $<c$, so there exists a subset $A_0\seq A$ such that
    $e[A_0] = h[X]$ and $e\colon A_0\to h[X]$ is isometric. For every
    $x\in X$, let $g(x)$ be the unique element of $A_0$ with
    $h(x)=e(g(x))$. This defines a function $g\colon X\to A$ with
    $e\o g = h$. Moreover, $g$ is nonexpansive: for all $x,y\in X$ we have
    \[
      d_A(g(x),g(y)) = d_B(e(g(x)),e(g(y))) = d_B(h(x),h(y)) \leq d_X(x,y).
    \]
    This proves $e\in \E_{\X_c}$.
\doendproof
\begin{rem}
  It follows that our \autoref{asm:setting} are satisfied. For (1), just observe that products in $\QAlg{\Sigma}$ are formed on the level of underlying metric spaces. (2) is trivial. For (3), we need to show that every
  algebra $A\in \QAlg{\Sigma}$ is a $c$-reflexive quotient of some algebra in
  $\X$. To this end, consider the family $m_i\colon A_i\monoto A$
  ($i\in I$) of all subspaces of $A$ of size $<c$. Then the map
  $[m_i]\colon \coprod_{i\in I} A_i \epito A$ in $\Met_\infty$ is $c$-reflexive, as is
  its unique extension $T_\Sigma (\coprod_{i\in I} A_i)\epito A$ to a
  morphism of $\QAlg{\Sigma}$. Moreover, $T_\Sigma (\coprod_{i\in I} A_i)\in \X$,
  which proves (3).
\end{rem}

\medskip\noindent\textbf{Step 2.} Next, we establish the required exactness property for quantitative algebras. Recall that an \emph{(extended) pseudometric} on a set $A$ is a map $p\colon A\times A\to [0,\infty]$ satisfying all axioms of a metric except possibly the implication $p(a,b)\To a=b$; that is, two distinct points may have distance $0$ with respect to $p$. Given a quantitative $\Sigma$-algebra $A$, a pseudometric $p$ on $A$ is a
  \emph{congruence} if
  \begin{enumerate}
  \item $p(a,a')\leq d_A(a,a')$ for all $a,a'\in A$,  and
  \item every $\Sigma$-operation $\sigma\colon A^n\to A$ ($\sigma\in\Sigma$)
    is nonexpansive with respect to $p$, that is, for each $n$-ary operation symbol $\sigma\in \Sigma$ and $a_i,b_i\in A$ one has
\[ p(\sigma( (a_i)_{i<n}), \sigma((b_i)_{i<n})) \leq \sup_{i<n} p(a_i,b_i). \]
  \end{enumerate}
Congruences are ordered by $p\leq q$ iff $p(a,a')\leq q(a,a')$ for all $a,a'\in A$.
\begin{lemma}\label{lem:metricquot}
For each quantitative $\Sigma$-algebra $A$, there is a dual isomorphism of complete lattices 
\[ \text{quotients of $A$} \quad\cong \quad \text{congruences on $A$}.\]
\end{lemma}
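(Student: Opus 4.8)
The plan is to mimic the proofs of the analogous exactness properties \eqref{eq:homtheorem} and \eqref{eq:homtheoremordered}, with congruence pseudometrics playing the role of congruences, resp.\ stable preorders. First I would set up the two maps. Given a quotient $e\colon A\epito B$ in $\QAlg{\Sigma}$, put $p_e(a,a')=d_B(e(a),e(a'))$; then $p_e$ is a pseudometric since $d_B$ is an extended metric, $p_e\leq d_A$ since $e$ is nonexpansive, and each $\sigma\in\Sigma$ is nonexpansive w.r.t.\ $p_e$ since $e$ is a $\Sigma$-homomorphism and the operations of $B$ are nonexpansive w.r.t.\ $d_B$; hence $p_e$ is a congruence. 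Conversely, given a congruence $p$ on $A$, let $A/p$ be the set of classes under $a\sim a'\iff p(a,a')=0$, equipped with $d_{A/p}([a],[a'])=p(a,a')$ and with $\Sigma$-operations inherited from $A$. Here condition (i) makes the quotient map $e_p\colon A\epito A/p$ nonexpansive; condition (ii), together with the observation that $p(a_i,b_i)=0$ for all $i<n$ forces $p(\sigma((a_i)_{i<n}),\sigma((b_i)_{i<n}))=0$, shows the operations descend to classes and are nonexpansive w.r.t.\ $d_{A/p}$; and symmetry and the triangle inequality for $p$ make $d_{A/p}$ a well-defined extended metric. Thus $e_p$ is a morphism of $\QAlg{\Sigma}$ carried by a surjection, i.e.\ lies in $\E$.

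Next I would verify that the two assignments are mutually inverse up to isomorphism of quotients: $p_{e_p}=p$ holds by definition, and for a quotient $e\colon A\epito B$ the map $[a]\mapsto e(a)$ is a well-defined isometric $\Sigma$-isomorphism $A/p_e\xra{\cong} B$ (injectivity and isometry follow from $p_e(a,a')=0\iff e(a)=e(a')$, surjectivity from that of $e$), and it identifies $e_{p_e}$ with $e$, so the two represent the same quotient of $A$.

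Then comes the order-theoretic core: I would show that $e\leq e'$ (that is, $e'$ factorizes through $e$) if and only if $p_{e'}\leq p_e$, which is what makes the correspondence a \emph{dual} isomorphism. For the forward direction, if $e'=h\o e$ then $p_{e'}(a,a')=d_{B'}(h(e(a)),h(e(a')))\leq d_B(e(a),e(a'))=p_e(a,a')$ since $h$ is nonexpansive. For the converse, assuming $p_{e'}\leq p_e$, define $h\colon B\to B'$ by $h(e(a))=e'(a)$; this is well defined because $p_e(a,a'')=0$ implies $p_{e'}(a,a'')=0$, it is nonexpansive because $d_{B'}(h(e(a)),h(e(a'')))=p_{e'}(a,a'')\leq p_e(a,a'')=d_B(e(a),e(a''))$, and it is a $\Sigma$-homomorphism because $e$ is a surjective $\Sigma$-homomorphism (every tuple in $B^n$ is the pointwise $e$-image of a tuple in $A^n$). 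Hence the pair $(e\mapsto p_e,\ p\mapsto e_p)$ is a dual order isomorphism between the poset of quotients of $A$ and the poset of congruences on $A$. Finally I would note that the congruences on $A$ form a complete lattice: they are closed under pointwise suprema of arbitrary families (the triangle inequality and conditions (i)--(ii) are preserved by suprema) and contain the top element $d_A$, so all meets exist as well; transporting this structure along the dual order isomorphism yields the claimed dual isomorphism of complete lattices.

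I expect the main obstacle to be the converse of the order-reversal step, namely checking that the mediating map $h$ is simultaneously well defined, nonexpansive, and a $\Sigma$-homomorphism; the homomorphism part relies on surjectivity of $e$ and is exactly the bookkeeping that must be handled with care, in particular when $\Sigma$ contains operations of infinite arity.
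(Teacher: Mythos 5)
Your proposal is correct and follows essentially the same route as the paper: it constructs the same two maps $e\mapsto p_e$ with $p_e(a,a')=d_B(e(a),e(a'))$ and $p\mapsto e_p$ via the quotient $A/\mathord{\equiv_p}$ with metric $d_{A_p}([a],[a'])=p(a,a')$, and shows they are mutually inverse and order-reversing. The paper's own proof simply asserts these verifications as clear, whereas you spell them out (including the factorization criterion $e\leq e'\iff p_{e'}\leq p_e$ and the complete-lattice structure on congruences), all of which checks out.
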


\begin{proof}
  Every quotient $e\colon A\epito B$ in $\QAlg{\Sigma}$ defines a congruence
  $p_e$ on $A$ given by $p_e(a,a') = d_B(e(a),e(a'))$ for
  $a,a'\in A$. Conversely, let $p$ be a congruence on $A$. Then the
  equivalence relation $\equiv_p$ on $A$ given by $a\equiv_p a'$ iff
  $p(a,a')=0$ is a $\Sigma$-algebra congruence. This yields the
  quotient $e_p\colon A\epito A_p$, where $A_p$ is the
  $\Sigma$-algebra $A/\mathord{\equiv_p}$ equipped with the metric
  $d_{A_p}([a],[a']) = p(a,a')$ for $a,a'\in A$.

  The two maps $e\mapsto p_e$ and $p\mapsto e_p$ are clearly antitone and mutually
  inverse.
\end{proof}

\begin{rem}
\begin{enumerate}[wide,labelindent=0pt,itemsep=5pt]
\item Given $A\in \QAlg{\Sigma}$ and a family of triples $(a_j,b_j,\epsilon_j)$
  ($j\in J$) with $a_j,b_j\in A$ and $\epsilon_j\in [0,\infty]$, there
  is a largest congruence $p$ on $A$ with $p(a_j,b_j)\leq \epsilon_j$ for all $j$,
  viz. the pointwise supremum of all such congruences. We call $p$ the
  \emph{congruence generated by the relations
    $a_j=_{\epsilon_j} b_j$}. If $A$ is just a set (viewed as a
  discrete algebra over the empty signature) we call $p$ the
  \emph{pseudometric generated by the relations
    $a_j=_{\epsilon_j} b_j$}.
\item As an immediate consequence of the above lemma, we obtain the \emph{homomorphism theorem} for quantitative algebras: given
  any two morphisms $e\colon A\epito B$ and $f\colon A\to C$ in $\QAlg{\Sigma}$
  with $e$ surjective, then $f$ factorizes through $e$ if and only if $p_f\leq p_e$, that is,
  $d_C(f(a),f(a'))\leq d_B(e(a),e(a'))$ for all $a,a'\in A$.

  Note that if the congruence $p_e$ is generated by the
  relations $a_j=_{\epsilon_j} b_j$ ($j\in J$) then it suffices to
  verify that $d_C(f(a_j),f(b_j))\leq \epsilon_j$ for all $j$.
\end{enumerate}
\end{rem}

\medskip\noindent\textbf{Step 3.} By Remark \ref{rem:singlequot}, in the current setting an equation can be presented as a single quotient $e_X\colon T_\Sigma X\epito E$ with $X$ a $c$-clustered space. The corresponding syntactic concept is given by
\begin{defn}\begin{enumerate}
  \item A \emph{$c$-clustered equation} over the set $X$ of variables
  is an expression of the form
  \begin{equation}\label{eq:cbasicgen}
    x_i=_{\epsilon_i} y_i\;(i\in I)\; \vdash \; s=_\epsilon
    t
  \end{equation}
  where (i) $I$ is a set, (ii) $x_i,y_i\in X$ for all $i$, (iii) $s$ and $t$ are $\Sigma$-terms over $X$, (iv) 
  $\epsilon_i,\epsilon\in [0,\infty]$, and (v) $X$ (viewed as a discrete
  metric space, i.e.~with $d(x,x')=\infty$ for
  $x\neq x'$) is $c$-clustered so that for each
  $i \in I$, $x_i,y_i$ lie in the same coproduct component of $X$. In other words, $X$ can be expressed as a disjoint union $X=\coprod_{j\in J} X_j$ of subsets of size $<c$ such that only relations between elements in the same $X_j$ are mentioned on the left-hand side of \eqref{eq:cbasicgen}  
\item A quantitative $\Sigma$-algebra $A$
  \emph{satisfies}~\eqref{eq:cbasicgen} if for every map
  $h\colon X\to A$,
  \[
    d_A(h(x_i),h(y_i))\leq \epsilon_i\text{ for all $i\in I$}
    \quad \text{implies} \quad
    d_A(\ext h(s),\ext h(t))\leq \epsilon.
  \]
  Here we denote by $\ext h\colon T_\Sigma X\to A$ the unique
  $\Sigma$-algebra morphism extending $h$.
\end{enumerate}
\end{defn}

\begin{rem}\label{rem:cbasic} Let us discuss some important special cases:
\begin{enumerate}
\item\label{rem:cbasic:1} A $2$-clustered equation is called an \emph{unconditional
  equation} because it contains only trivial
  conditions of the form $x_i=_{\epsilon_i} x_i$; thus, it is
  equivalent to $\emptyset\vdash s=_\epsilon t$.
\item Mardare et al.~\cite{MardarePP17} introduced \emph{$c$-basic
    conditional equations}, i.e. equations \eqref{eq:cbasicgen} with
  $\under{I}<c$. This concept is closely related to the one of a $c$-clustered equation. First, note that every $c$-basic conditional equation is a $c$-clustered equation (with a single cluster). Conversely, if $\kappa$ is an infinite regular cardinal such that every
  operation symbol in $\Sigma$ has arity $<\kappa$, and one has
  $c\geq \kappa$, then every $c$-clustered equation can be expressed in terms of equivalent $c$-basic conditional equations. To see this, suppose that a $c$-clustered
  equation $\eqref{eq:cbasicgen}$ is given. Remove all conditions
  $x_i=_{\epsilon_i} y_i$ such that the coproduct component containing
  $x_i$, $y_i$ does not contain any variable occurring in $s$ or
  $t$. The resulting equation is clearly equivalent to
  \eqref{eq:cbasicgen}. Moreover, since $s$ and $t$ contain $<\kappa$
  variables, and every cluster of $X$ has size $<c$, it follows that less
  than $c\o \kappa = c$ conditions remain, i.e. we obtain a $c$-basic
  conditional equation.
\end{enumerate}
\end{rem}
%

\begin{lemma}\label{lem:cclusteredeq}
Equations and $c$-clustered equations are expressively equivalent.
\end{lemma}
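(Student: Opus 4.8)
\emph{Proof plan.} The plan is to prove the two directions of the equivalence separately, in each case reducing the claim to the homomorphism theorem for quantitative algebras (see the remark following \autoref{lem:metricquot}) together with the adjunction $T_\Sigma\dashv U\colon\QAlg{\Sigma}\to\Met_\infty$. Recall from \autoref{rem:singlequot} that in this setting an equation may be taken to be a single quotient $e\colon T_\Sigma X\epito E$ with $X$ a $c$-clustered space, that morphisms $T_\Sigma X\to A$ in $\QAlg{\Sigma}$ correspond bijectively to functions $h\colon X\to A$ via the free extension $\ext h$, and that this transpose is natural, so that $\ext{(g\o f)}=\ext g\o T_\Sigma f$ for every $f\colon X\to Y$ in $\Met_\infty$ and every nonexpansive $g\colon Y\to A$. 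I will write $p_h$ for the congruence (or pseudometric) of a morphism $h$, as in that remark, so that by the homomorphism theorem a morphism $h$ factorizes through a surjection $e$ iff $p_h\le p_e$, and if $p_e$ is generated by relations it suffices to verify this on those relations.

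\emph{From equations to $c$-clustered equations.} Given $e\colon T_\Sigma X\epito E$ with $X=\coprod_{j\in J}X_j$ and $\under{X_j}<c$, I would attach to each pair $(s,t)\in T_\Sigma X\times T_\Sigma X$ the $c$-clustered equation with premises $x=_{d_X(x,y)}y$ for $j\in J$ and $x,y\in X_j$, and conclusion $s=_{d_E(e(s),e(t))}t$; its left-hand equivalence relation is exactly the partition $\{X_j\}_{j\in J}$, so this is a legitimate $c$-clustered equation. The pivotal observation is that, since distances between distinct coproduct components of $X$ are $\infty$, a function $h\colon X\to A$ satisfies \emph{all} the premises $d_A(h(x),h(y))\le d_X(x,y)$ ($x,y$ in a common cluster) precisely when $h$ is nonexpansive. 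Consequently $A$ satisfies all of these $c$-clustered equations iff for every nonexpansive $h\colon X\to A$ and all $s,t\in T_\Sigma X$ one has $d_A(\ext h(s),\ext h(t))\le d_E(e(s),e(t))$, i.e.\ iff $p_{\ext h}\le p_e$ for every nonexpansive $h$, which by the homomorphism theorem says exactly that every morphism $\ext h\colon T_\Sigma X\to A$ factorizes through $e$, i.e.\ $A\models e$.

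\emph{From $c$-clustered equations to equations.} Given a $c$-clustered equation \eqref{eq:cbasicgen} over a set $X$ of variables, with $\sim$ the equivalence generated by the pairs $(x_i,y_i)$ (all $\sim$-classes of size $<c$), I would follow the two-step construction already sketched in \autoref{S:app}: let $p$ be the pseudometric on the discrete space $X$ generated by the relations $x_i=_{\epsilon_i}y_i$, with associated quotient $e_p\colon X\epito X_p$, and let $q$ be the congruence on $T_\Sigma(X_p)$ generated by $T_\Sigma e_p(s)=_\epsilon T_\Sigma e_p(t)$, with associated quotient $e_q\colon T_\Sigma(X_p)\epito E_q$. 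The first thing to check is that $X_p$ is $c$-clustered, which makes $e_q$ a legitimate equation: the ``block pseudometric'' $p'$ assigning to $x,y$ the infimum of $\sum_k\epsilon_{i_k}$ over finite chains through the relations joining them when $x\sim y$, and $\infty$ otherwise, is a pseudometric respecting the relations, hence $p'\le p$; therefore $p(x,y)=\infty$ whenever $x\not\sim y$, so the coproduct components of $X_p$ biject with the $\sim$-classes and hence have size $<c$. It then remains to prove that $A$ satisfies \eqref{eq:cbasicgen} iff $A\models e_q$. For the forward implication, every morphism $T_\Sigma(X_p)\to A$ is $\ext g$ for some nonexpansive $g\colon X_p\to A$; the composite $g\o e_p\colon X\to A$ meets the premises of \eqref{eq:cbasicgen} since $d_A(g(e_p(x_i)),g(e_p(y_i)))\le d_{X_p}(e_p(x_i),e_p(y_i))=p(x_i,y_i)\le\epsilon_i$, so the hypothesis gives $d_A(\ext{(g\o e_p)}(s),\ext{(g\o e_p)}(t))\le\epsilon$; rewriting $\ext{(g\o e_p)}=\ext g\o T_\Sigma e_p$ and applying the homomorphism theorem to the single generator of $q$ shows that $\ext g$ factorizes through $e_q$, so $A\models e_q$. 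For the converse, given $h\colon X\to A$ satisfying the premises, $p_h$ is a pseudometric with $p_h(x_i,y_i)\le\epsilon_i$, hence $p_h\le p$ and so $h=\bar h\o e_p$ for a nonexpansive $\bar h\colon X_p\to A$; since $A\models e_q$, $\ext{\bar h}$ factorizes through $e_q$, which gives $d_A(\ext{\bar h}(T_\Sigma e_p(s)),\ext{\bar h}(T_\Sigma e_p(t)))\le\epsilon$, and $\ext{\bar h}\o T_\Sigma e_p=\ext{(\bar h\o e_p)}=\ext h$ then yields $d_A(\ext h(s),\ext h(t))\le\epsilon$, i.e.\ $A$ satisfies \eqref{eq:cbasicgen}.

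\emph{Main obstacle.} I expect the only step that is not routine bookkeeping to be verifying that $X_p$ is $c$-clustered, i.e.\ that forming the \emph{largest} pseudometric compatible with the listed relations never links variables from distinct $\sim$-classes at finite distance; the block-pseudometric lower bound above is the device that settles this. The remaining arguments are a direct chase through naturality of the adjunction transpose and the homomorphism theorem attached to \autoref{lem:metricquot}.
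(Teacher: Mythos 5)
Your proposal is correct, and it uses exactly the same two translations as the paper: in one direction the family of $c$-clustered equations with $\epsilon_{x,y}=d_X(x,y)$ and $\epsilon_{s,t}=d_E(e(s),e(t))$, keyed to the observation that satisfying all premises is the same as being nonexpansive because distinct clusters sit at distance $\infty$; in the other direction the two-step quotient $e_q\o T_\Sigma e_p$ built from the generated pseudometric $p$ on $X$ and the generated congruence $q$ on $T_\Sigma(X_p)$. The differences are only in how the verifications are organized, and both of your variants are sound. For the $c$-clusteredness of $X_p$ the paper decomposes $p$ explicitly as $\coprod_j p_j$ over the given clusters, whereas you exhibit the shortest-path ``block'' pseudometric $p'$ as a lower bound for $p$ to force $p(x,y)=\infty$ across distinct $\sim$-classes; your argument is slightly more robust (it does not presuppose that the generated pseudometric respects the coproduct decomposition) at the cost of checking that $p'$ is indeed a pseudometric. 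For the equivalence of $e_q$ with \eqref{eq:cbasicgen} the paper introduces the auxiliary one-step congruence $r$ on $T_\Sigma X$ and proves $e_q\o T_\Sigma e_p\cong e_r$ by a back-and-forth application of the homomorphism theorem before chasing the chain of equivalences; you bypass $e_r$ entirely and prove the two implications directly, using naturality of the free extension ($\ext{(\bar h\o e_p)}=\ext{\bar h}\o T_\Sigma e_p$) and the generated-relations form of the homomorphism theorem. This is a mild but genuine streamlining: it trades the structural statement $e_q\o T_\Sigma e_p\cong e_r$ (which the paper can reuse in its equivalence chain) for a shorter, purely element-level argument.
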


\proof
  \begin{enumerate}[wide,labelindent=0pt,itemsep=5pt]
  \item Given any equation $e\colon T_\Sigma X\epito E$, where
    $X=\coprod_{j\in J} X_j$ with $\under{X_j}<c$, form the $c$-clustered equations over $X$ given by
    \begin{equation}\label{eq:cbasic}
      x=_{\epsilon_{x,y}} y\;(j\in J,\, x,y\in X_j) \;\vdash\; s=_{\epsilon_{s,t}}
      t\quad(s,t \in T_\Sigma X),
    \end{equation}
    with $\epsilon_{x,y} = d_X(x,y)$ and
    $\epsilon_{s,t}=d_E(e(s),e(t))$. Note that \eqref{eq:cbasic} is  $c$-clustered because $c$ is regular. Then an algebra $A\in \QAlg{\Sigma}$ satisfies the equation $e$ iff it satisfies all the $c$-clustered equations \eqref{eq:cbasic}. Indeed,  we have
    \begin{align*}
      & \text{$A$ satisfies $e$}  \\
      \Lra~
      &\text{for all $h\colon X\to A$ in $\Met_\infty$, $\ext h\colon T_\Sigma X\to A$ factorizes through $e$}\\
      \Lra~
      & \text{for all $h\colon X\to A$ in $\Met_\infty$ and $s,t\in T_\Sigma X$, one has} \\
      & d_A(\ext h(s),\ext h(t)) \leq d_E(e(s),e(t)) \\
      \Lra~
      & \text{for all maps $h\colon {X}\to A$ with $d_A(h(x),h(y))\leq d_X(x,y)$ for all $x,y\in X$,}\\
      &\text{one has $d_A(\ext h(s),\ext h(t)) \leq d_E(e(s),e(t))$ for all $s,t\in T_\Sigma X$} \\
      \Lra~
      & \text{for all maps $h\colon {X}\to A$ with $d_A(h(x),h(y))\leq \epsilon_{x,y}$ for all $j\in J$}\\
      &\text{and $x,y\in X_j$, one has $d_A(\ext h(s),\ext h(t)) \leq \epsilon_{s,t}$ for all $s,t\in T_\Sigma X$} \\
      \Lra~
      &\text{$A$ satisfies \eqref{eq:cbasic}}.
 \end{align*}
 In the penultimate step, we use that for $x\in X_j$ and $y\in X_k$
 with $j\neq k$, the inequality $d_A(h(x),h(y))\leq d_X(x,y)$ holds
 trivially because $d_X(x,y)=\infty$.

\item Conversely, to every $c$-clustered equation
  \eqref{eq:cbasicgen} over a set $X$ of variables, we associate an
  equation in two steps:
  \begin{itemize}
  \item Take the pseudometric $p$ on $X$ generated by the relations
    $x_i=_{\epsilon_i} y_i$ ($i\in I$), and let
    $e_p\colon X\epito X_p$ denote the corresponding quotient.
  \item Take the congruence $q$ on $T_\Sigma(X_p)$ generated by
    the single relation $T_\Sigma e_p(s) =_\epsilon T_\Sigma e_p(t)$,
    and let $e_q\colon T_\Sigma(X_p)\epito E_q$ be the corresponding
    quotient.
  \end{itemize}
  We claim that
  \begin{enumerate*}
  \item\label{p:1}  $X_p$ is $c$-clustered (and thus $e_q$ is an
    equation), and
  \item\label{p:2} $e_q$ and \eqref{eq:cbasicgen} are equivalent,
    i.e. satisfied by the same algebras.
  \end{enumerate*}
  
  For~\ref{p:1}, note that since \eqref{eq:cbasicgen} is a $c$-clustered equation, $X$ can be
  decomposed as a coproduct $X=\coprod X_j$ of subsets of size $<c$
  such that for all $i\in I$ one has $x_i,y_i\in X_j$ for some
  (unique) $j$. Let $p_j$ be pseudometric on $X_j$ generated by the
  relations $x_i =_{\epsilon_i} y_i$ with $i\in I$ and
  $x_i,y_i\in X_j$. Then we have $X_p = \coprod_j (X_j)_{p_j}$, so $X_p$ is a
  coproduct of spaces of size $<c$, i.e. a $c$-clustered space.

  In order to prove~\ref{p:2}, let $r$ denote the congruence on
  $T_\Sigma X$ generated by the relations $x_i=_{\epsilon_i} y_i$
  ($i\in I$) and $s=_\epsilon t$, with corresponding quotient
  $e_r\colon T_\Sigma X\epito E_r$. We claim that the quotients
  $e_q\o T_\Sigma e_p $ and $e_r$ are isomorphic. To prove this, we use
  the homomorphism theorem. We have
  \[
    d_{E_q}(e_q\o T_\Sigma e_p(x_i),e_q\o T_\Sigma e_p(y_i))
    \leq
    d_{X_p}(T_\Sigma e_p(x_i),T_\Sigma e_p(y_i))
    =
    p(x_i,y_i)\leq \epsilon_i
  \]
  for each $i\in I$ and, moreover, 
  \[
    d_{E_q}(e_q\o T_\Sigma e_p(s),e_q\o T_\Sigma e_p(t))
    =
    q(T_\Sigma e_p(s), T_\Sigma e_p(t))
    \leq
    \epsilon.
  \]
  Thus $e_q\o T_\Sigma e_p$ factorizes through $e_r$, i.e.
  $k\o e_r = e_q \o T_\Sigma e_p$ for some $k: E_r \epito E_q$.

  For the converse, note first that $e_r$ factorizes through
  $T_\Sigma e_p$ because $r\leq p$. Thus $e_r = f\o T_\Sigma e_p$ for
  some $f: T_\Sigma X_p \epito E_r$. The morphism $f$ factorizes through
  $e_q$ because
  \[
    d_{E_r}(f\o T_\Sigma e_p(s),f\o T_\Sigma e_p (t))
    =
    d_{E_r}(e_r(s),e_r(t))
    =
    r(s,t)
    \leq
    \epsilon.
  \]
  Thus $f=l\o e_q$ for some $l: E_q \epito E_p $. This yields the commutative diagram
  below, which proves that $k$ and $l$ are mutually inverse since
  $e_r$ an $e_q \o T_\Sigma  e_p$ are epimorphisms:
  \[  
    \xymatrix@-1pc{
      && T_\Sigma X \ar@{->>}[dl]_{T_\Sigma e_p} \ar@{->>}[ddrr]^{e_r} \\
      & T_\Sigma X_p \ar@{->>}[drrr]^-f \ar@{->>}[dl]_{e_q}\\
      E_q \ar[rrrr]^-l &&&&  E_r \ar@<4pt>[llll]^-k
    } 
  \]
  Consequently, for every $A\in \QAlg{\Sigma}$,
  \begin{align*}
    & \text{$A$ satisfies $e_q$}  \\
    \Lra~ & \text{for all $h\colon X_p\to A$ in $\Met_\infty$, $\ext h\colon T_\Sigma X_p\to A$ factorizes through $e_q$}\\
    \Lra~ & \text{for all $h\colon X_p\to A$ in $\Met_\infty$, $\ext h\o T_\Sigma e_p$ factorizes through $e_q\o T_\Sigma e_p \cong e_r$}\\
    \Lra~ & \text{for all $g\colon X\to A$, if $\ext g$ factorizes
      through $T_\Sigma e_p$, then $\ext g$ factorizes} \\
    &\text{through $e_r$}\\
   \Lra~ & \text{for all $g\colon X\to A$ with $d_A(g(x_i),g(y_i))\leq
     \epsilon_i$ ($i\in I$) one has}\\
   & d_A(\ext g(s), \ext g(t))\leq \epsilon\\
   \Lra~& \text{$A$ satisfies \eqref{eq:cbasicgen}}.
 \end{align*}
 The third step might not be immediately clear, and so we now provide further
 details. First a general fact
 about free algebras: let $Y$ be any set, and
 denote by $\eta_Y: Y \to T_\Sigma Y$ the universal map. Then we have
 $\ext{(\ext h \o \eta_Y)} = \ext h$ for every $h: Y \to A$.

 For the ``$\Rightarrow$'' direction of the third equivalence, suppose
 that $\ext g = k \o T_\Sigma e_p$ for some $k: T_\Sigma X_p
 \to A$. Let $h = k \o \eta_{X_p}$ so that $\ext g = \ext h \o
 T_\Sigma e_p$, which factorizes through $e_r$ by assumption.

 For the converse ``$\Leftarrow$'', let $h: X_p \to A$ be in
 $\Met_\infty$. Then $\ext h \o T_\Sigma e_p = \ext g$ where $g = \ext
 h \o T_\Sigma e_p \o \eta_X: X \to A$. Then $\ext g$ factorizes
 through $T_\Sigma e_p$ and therefore through $e_r$, i.e.~$\ext h \o
 T_\Sigma e_p$ factorizes through $e_r$ as desired.\qed
\end{enumerate}
\doendproof

\medskip\noindent\textbf{Step 4.}
From \autoref{lem:cclusteredeq} and \autoref{thm:hspeq}, we conclude:

\begin{theorem}\label{thm:qHSP}
For any regular cardinal $c>1$, a class of quantitative $\Sigma$-algebras is a $c$-variety (i.e. closed under
  $c$-reflexive homomorphic images, subalgebras, and products) if and
  only if it is axiomatizable by $c$-clustered equations.
\end{theorem}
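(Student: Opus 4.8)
The plan is to assemble the theorem from the machinery developed in Steps~1--3 together with the equational version of the general HSP theorem (\autoref{thm:hspeq}). First I would record that the parameters chosen in Step~1 satisfy \autoref{asm:setting}: $\QAlg{\Sigma}$ has all products, formed on underlying metric spaces; $\A_0=\A$, so the closure conditions in \ref{A2} are vacuous; and every quantitative $\Sigma$-algebra is a $c$-reflexive quotient of a free algebra on a $c$-clustered space --- namely, take the coproduct of all $<c$-sized subspaces and extend its canonical surjection to $T_\Sigma(\dash)$. This is exactly the content of the remark following \autoref{lem:cref}, so I would simply cite it.

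Next I would identify the abstract notion of variety (\autoref{D:var}) with the notion of $c$-variety in the present instance. By \autoref{lem:cref} the class $\E_\X$ consists precisely of the $c$-reflexive quotients; the $\M$-subobjects are, by the chosen factorization system, exactly the subalgebras (injective homomorphisms carried by subspace inclusions); and since $\Lambda$ is the class of all cardinals, $\Lambda$-products are all products. Hence a full subcategory $\V\seq \QAlg{\Sigma}$ is a variety in the sense of \autoref{D:var} if and only if it is a $c$-variety in the sense stated in the theorem, i.e.\ closed under $c$-reflexive quotients, subalgebras, and products.

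Then I would invoke \autoref{thm:hspeq}: $\V$ is a variety (equivalently, a $c$-variety) if and only if it is equationally presentable, that is, $\V=\V(\Eq)$ for some class $\Eq$ of equations over objects of $\X$. Since $\QAlg{\Sigma}$ is $\E$-co-wellpowered --- quotients of $A$ correspond by \autoref{lem:metricquot} to congruences on $A$, which form a set --- and $\Lambda$ is the class of all cardinals, \autoref{rem:singlequot} lets me take each equation in $\Eq$ to be a single quotient $e_X\colon T_\Sigma X\epito E$ with $X$ a $c$-clustered space. Finally, \autoref{lem:cclusteredeq} states that equations in this sense and $c$-clustered equations are expressively equivalent: each quotient equation is satisfied by exactly the same algebras as a suitable set of $c$-clustered equations, and conversely. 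Therefore ``$\V$ is cut out by a class of equations'' coincides with ``$\V$ is cut out by a set of $c$-clustered equations'', which is precisely the claim.

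There is no genuine obstacle at this point --- the substantive work is packaged into \autoref{lem:cref}, \autoref{lem:metricquot}, and \autoref{lem:cclusteredeq}. The only points requiring care are the bookkeeping between the abstract equation-as-quotient formalism and the concrete syntax \eqref{eq:cbasicgen}, and checking that the translation does not need regularity of $c$ beyond where it is already used; regularity enters in \autoref{lem:cref} (closure of $\X$ under coproducts) and in \autoref{lem:cclusteredeq} to ensure that the equations \eqref{eq:cbasic} extracted from a quotient are themselves $c$-clustered, and nowhere else in the present step.
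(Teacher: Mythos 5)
Your proposal is correct and follows essentially the same route as the paper: verify Assumptions~\ref{asm:setting} for the Step~1 parameters, identify $\E_\X$ with the $c$-reflexive quotients via \autoref{lem:cref} so that abstract varieties coincide with $c$-varieties, reduce equations to single quotients via \autoref{rem:singlequot}, and combine \autoref{lem:cclusteredeq} with \autoref{thm:hspeq}. The paper's Step~4 is exactly this assembly, so there is nothing to add.
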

\begin{rem}\label{rem:MPP}
  The above theorem is closely related to the quantitative HSP theorem
  in the recent work of Mardare et al.~\cite{MardarePP17}. These
  authors show that for a signature with finite or countably infinite
  arities (i.e. $\kappa\in\{\aleph_0,\aleph_1\}$ in the notation of \autoref{rem:cbasic}) and for
  $c\leq \aleph_1$, $c$-varieties are precisely the classes of
  quantitative algebras axiomatizable by $c$-basic conditional
  equations. By \autoref{rem:cbasic}, \autoref{thm:qHSP} implies this result except for the case $\kappa=\aleph_1$ and
  $c=\aleph_0$.

Note that our above theorem generalizes  the one of Mardare et al. in the sense that we do not impose any restrictions on $\Sigma$ and $c$.
\end{rem}

\subsubsection*{Quantitative equational logic.}

Mardare et al. \cite{Mardare16} also proposed a sound and complete deduction system for unconditional equations (i.e. the case $c=2$, cf.~\autoref{rem:cbasic}\ref{rem:cbasic:1}) over a finitary signature $\Sigma$. It rests on the following proof rules, where $s,t,u,s_i,t_i$ are $\Sigma$-terms over a set $X$ of variables and $\epsilon,\epsilon'\in [0,\infty]$.
\begin{align*}
\text{(Refl)} &~~~ \vdash t=_0 t\\
\text{(Sym)} &~~~ s=_\epsilon t \;\vdash\; t=_\epsilon s\\
\text{(Triang)}&~~~  s=_\epsilon t,\, t=_{\epsilon'} u \;\vdash\; s =_{\epsilon+\epsilon'} u\\
\text{(Max)} &~~~ s=_\epsilon t \;\vdash\; s=_{\epsilon'} t \text{ for $\epsilon'>\epsilon$}\\
\text{(Arch)} &~~~ \{ s=_{\epsilon'} t : \epsilon'>\epsilon\} \;\vdash\; s=_\epsilon t\\
\text{(Cong)} &~~~ s_i=_{\epsilon} t_i\, (i=1,\ldots,n) \;\vdash\; \sigma(s_1,\ldots, s_n)=_\epsilon \sigma(t_1,\ldots, t_n) \text{ for all $\sigma\in \Sigma_n$}\\
\text{(Subst)} &~~~ s=_\epsilon t \;\vdash\; h(s)=_\epsilon h(t) \text{ for all $\Sigma$-homomorphisms $h\colon T_\Sigma X\to T_\Sigma Y$}
\end{align*}
Given a set $\Gamma$ of unconditional equations and an unconditional equation $s=_\epsilon t$, we write $\Gamma\vdash s=_\epsilon t$ if $s=_\epsilon t$ can be proved from the axioms in $\Gamma$ using the above rules. Note that due to the infinitary rule (Arch), a proof can be transfinite. We write $\Gamma\models s=_\epsilon t$ if every quantitative $\Sigma$-algebra that satisfies all equations in $\Gamma$ also satisfies $s=_\epsilon t$. In the following, we demonstrate how to obtain the completeness of this calculus from our general completeness result (\autoref{thm:eqlogicsoundcomplete}). As in our treatment of Birkhoff's equational logic in \autoref{sec:app:birkhoff}, the key lies in the observation that the above rules amount to computing the congruence (or the equational theory, resp.) generated by given a set of equations.
\begin{rem}
  Since $2$-clustered spaces are precisely the discrete spaces (i.e. $d(x,y)=\infty$ for $x\neq y$), the class $\X$ consists of all free algebras $T_\Sigma X$ with $X\in \Set$. Moreover, we have $\E_\X = \E$. Thus, by
  \autoref{rem:singlequotth}, in the current setting an equational
  theory is presented by a family of quotients
  $(e_X: T_\Sigma X \epito E_X)_{X \in \Set}$ which is
  \emph{substitution invariant} in the sense that for every
  $\Sigma$-homomorphism $h:T_\Sigma X \to T_\Sigma Y$ with $X,Y\in \Set$, the morphism $e_Y\o h$ factorizes through $e_X$.
\end{rem}
For any equation
$e\colon T_\Sigma X\epito E$ we denote by
\[\Gamma_{e} = \{\, s=_\epsilon t \;:\; s,t\in T_\Sigma X \text{ and
  } d_{E}(e(s),e(t))\leq \epsilon \,\}\] the set of \emph{unconditional
equations associated to $e$}. More generally, for a family
$(e_X\colon T_\Sigma X\epito E_X)_{X\in \Set}$ of equations we get an
associated family $(\Gamma_{e_X})_{X\in\Set}$ of sets of unconditional
equations.

\begin{lemma}\label{lem:quanttheory}
\begin{enumerate}
\item\label{lem:quanttheory:1} A set of unconditional equations over
  the set $X$ is associated to some equation iff it is closed under
  (Refl), (Sym), (Triang), (Max), (Arch), (Cong).
\item A family $(\Gamma_X)_{X\in \Set}$ of sets of unconditional
  equations is associated to some \eqnth iff it is closed
  under (Refl), (Sym), (Triang), (Max), (Arch), (Cong), (Subst).
\end{enumerate}

\end{lemma}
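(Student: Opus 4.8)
The plan is to reduce both parts to part~(1) together with the isomorphism of \autoref{lem:metricquot} between quotients of a quantitative $\Sigma$-algebra and congruences on it, and the description of equational theories in the present setting: since $c=2$ we have $\E_\X=\E$, so (by \autoref{rem:singlequotth} and \autoref{lem:metricquot}) an equational theory is exactly a family of quotients $(e_X\colon T_\Sigma X\epito E_X)_{X\in\Set}$ that is substitution invariant, meaning that for every $\Sigma$-homomorphism $h\colon T_\Sigma X\to T_\Sigma Y$ the morphism $e_Y\circ h$ factorizes through $e_X$. For the ``only if'' direction of~(1), take an equation $e\colon T_\Sigma X\epito E$ with associated congruence $p_e$, so that $\Gamma_e=\{\,s=_\epsilon t : p_e(s,t)\le\epsilon\,\}$: closure under (Refl), (Sym), (Triang) is just the pseudometric axioms for $p_e$; closure under (Max) is monotonicity of $\le$; closure under (Arch) holds because $p_e(s,t)\le\epsilon'$ for every $\epsilon'>\epsilon$ forces $p_e(s,t)\le\epsilon$; and closure under (Cong) is nonexpansiveness of the $\Sigma$-operations with respect to $p_e$.

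For the ``if'' direction of~(1), given $\Gamma$ closed under the six rules, I would set $p_\Gamma(s,t)=\inf\{\,\epsilon\in[0,\infty] : (s=_\epsilon t)\in\Gamma\,\}$ and check that it is a congruence on $T_\Sigma X$: the pseudometric axioms follow from (Refl)/(Sym)/(Triang); $p_\Gamma$ refines $d_{T_\Sigma X}$ because, $X$ being discrete, $d_{T_\Sigma X}$ takes only the values $0$ and $\infty$ while $p_\Gamma(s,s)=0$ by (Refl); and nonexpansiveness of each $\sigma\in\Sigma$ follows from (Cong) together with (Max). A small but essential point is that the infimum is attained, i.e.\ $(s=_{p_\Gamma(s,t)}t)\in\Gamma$, which follows from (Max) and (Arch) — the case $p_\Gamma(s,t)=\infty$ being covered by (Arch) with empty premise. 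Letting $e$ be the quotient corresponding to $p_\Gamma$ via \autoref{lem:metricquot}, one then gets $\Gamma=\Gamma_e$: the inclusion $\Gamma\subseteq\Gamma_e$ is the definition of the infimum, and $\Gamma_e\subseteq\Gamma$ uses the attainedness of the infimum together with (Max).

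For part~(2), given a substitution-invariant family $(e_X)_{X\in\Set}$, each $\Gamma_{e_X}$ is closed under the six rules by~(1), and the family is closed under (Subst) because substitution invariance provides, for $h\colon T_\Sigma X\to T_\Sigma Y$, a nonexpansive $\Sigma$-homomorphism $k$ with $e_Y\circ h=k\circ e_X$, whence $d_{E_Y}(e_Y(h(s)),e_Y(h(t)))=d_{E_Y}(k(e_X(s)),k(e_X(t)))\le d_{E_X}(e_X(s),e_X(t))$. Conversely, if $(\Gamma_X)_X$ is closed under all seven rules, part~(1) gives quotients $e_X$ with $\Gamma_X=\Gamma_{e_X}$; for $h\colon T_\Sigma X\to T_\Sigma Y$ and $s,t\in T_\Sigma X$, (Subst) together with the attained-infimum point yields $p_{e_Y}(h(s),h(t))\le p_{e_X}(s,t)$, and by the homomorphism theorem for quantitative algebras this means $e_Y\circ h$ factorizes through $e_X$; thus $(e_X)_X$ is an equational theory (its $\E_\X$-completeness being vacuous since $\E_\X=\E$) whose associated family is $(\Gamma_X)_X$.

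The main obstacle I expect is purely bookkeeping around the infinitary rule (Arch): one must verify carefully that infima taken in $[0,\infty]$ land back inside $\Gamma$, that the value $\infty$ is handled consistently on both sides, and that the numeric inequality $p_{e_Y\circ h}\le p_{e_X}$ is correctly upgraded to an honest factorization in $\QAlg{\Sigma}$ via the homomorphism theorem. Everything else is a routine unwinding of \autoref{lem:metricquot} and the definitions.
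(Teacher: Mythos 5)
Your proposal is correct and follows essentially the same route as the paper's proof: the ``only if'' direction by reading off the pseudometric/congruence axioms from $p_e$, the ``if'' direction by defining $p_\Gamma$ as an infimum and using (Arch) and (Max) to show the infimum is attained in $\Gamma$ (hence $\Gamma=\Gamma_e$), and part~(2) by combining part~(1) with substitution invariance and the homomorphism theorem for quantitative algebras. The bookkeeping points you flag (the $\infty$ case, attainedness of the infimum, upgrading the numeric inequality to a factorization) are exactly the ones the paper handles, and your treatment of them is sound.
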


\proof
  \begin{enumerate}[wide,labelindent=0pt,itemsep=5pt]
  \item For the ``only if'' direction let $e: T_\Sigma X \epito E$ be
    an equation and let $p(s,t) :=
    d_E(e(s), e(t))$ be the congruence on $T_\Sigma X$ associated to
    $e$. That $\Gamma_e$ is closed under the required rules now
    follows easily from the congruence properties of $p$. Indeed, $\Gamma_e$ is closed under (Refl),
    (Sym), and (Triang) because $p$ is a pseudometric. 
    For instance, closure under (Triang) is equivalent to the implication
    \begin{equation}\label{eq:tri}
      p(s,t) \leq \eps \text{ and } p(t,u) \leq \delta \implies
      p(s,u) \leq \eps + \delta,
    \end{equation}
    which in turn is equivalent to $p(s,t) + p(t,u) \geq p(s,u)$.

    That operations are nonexpansive w.r.t.~$p$ is equivalent to the statement that, for all $\sigma\in \Sigma_n$,
    \[
      p(s_i,t_i) \leq \eps \text{ for $i = 1, \ldots, n$} \implies
      p(\sigma(s_1, \ldots, s_n), \sigma(t_1, \ldots, t_n)) \leq \eps,
    \]
    which means precisely that $\Gamma_e$ is closed under (Cong).

    Closure under (Max) is clear since $p(s,t) \leq \eps$ implies
    $p(s,t) \leq \eps'$ for all $\eps' > \eps$, and similarly, to see
    closure under (Arch), use that if $p(s,t) \leq \eps'$ for all
    $\eps' > \eps$, then $p(s,t) \leq \eps$.

    For the ``if'' direction, suppose that $\Gamma$ is a set of
    unconditional equations that has the required closure properties. 
    Define $p\colon T_\Sigma X\times T_\Sigma X\to [0,\infty]$ by
    \[
      p(s,t) = \inf \{\epsilon\in [0,\infty]\;:\; (s=_\epsilon t)\in
      \Gamma\}.
    \]
    It is straightforward to verify that $p$ is a congruence on
    $T_\Sigma X$. To see this note that $T_\Sigma X$ is a discrete
    space since so is (the set) $X$. Hence, $p(s,t) \leq
    d_{T_\Sigma X} (s,t)$ is clear. That $p$ is a pseudometric follow
    from closure of $\Gamma$ under (Refl), (Sym), and (Triang). E.g., the triangle inequality is equivalent to the statement that \eqref{eq:tri} holds, and to this end observe that $p(s,t)
    \leq \eps$ is equivalent to $(s =_{\eps +\eps'} t) \in \Gamma$ for
    all $\eps' > 0$, and similarly $p(t,u) \leq \delta$ is
    equivalent to $(t =_{\delta +\delta'} u) \in \Gamma$ for
    all $\delta' > 0$. Thus,
    \[
      (s =_{\eps + \delta + \eps'+\delta'} u) \in \Gamma\quad\text{for
        all $\eps',\delta' >0$,}
    \]
    and this is equivalent to the right-hand side of the implication
    in~\eqref{eq:tri}. That the operations on $T_\Sigma X$ are
    nonexpansive w.r.t.~$p$ follows in a similar way from closure of
    $\Gamma$ under (Cong). 

    Furthermore, we have $\Gamma = \Gamma_e$ for the quotient
    $e\colon T_\Sigma X\epito E$ corresponding to $p$. Indeed, we have
    $d_E(e(s), e(t)) = p(s,t)$ by \autoref{lem:metricquot}. Thus
    $\Gamma \subseteq \Gamma_e$ is clear. For $\Gamma_e \subseteq
    \Gamma$ suppose that $(s =_\eps t) \in \Gamma_e$, i.e.~$p(s,t)
    \leq \eps$. By the definition of $p$ we thus have $(s=_{\eps'} t)
    \in \Gamma$ for all $\eps' > p(s,t)$, whence by the closure of
    $\Gamma$ under (Arch), $(s =_{p(s,t)} t) \in \Gamma$. From the
    closure of $\Gamma$ under (Max), we conclude that $(s =_\eps t)
    \in \Gamma$ (if $\eps > p(s,t)$ and for $\eps = p(s,t)$ we were
    done before). 

  \item For the ``only if'' direction, suppose that $(\Gamma_X)_{X\in \Set}$ is associated to some theory $(e_X\colon T_\Sigma X\epito E_X)_{X\in\Set}$, so $\Gamma_X=\Gamma_{e_X}$ for all $X$. By part (1), each $\Gamma_X$ is closed under (Refl), (Sym), (Triang), (Max), (Arch), (Cong). To show closure under (Subst), let $h: T_\Sigma X \to T_\Sigma Y$ be a
    homomorphism. By substitution closure of the theory $(e_X)_X$, the morphism $e_Y\o h$ factorizes through $e_X$, which implies
    \begin{equation}\label{eq:quantfact}
      d_{E_Y}(e_Y\o h(s),e_Y\o h(t)) \leq d_{E_X}(e_X(s),e_X(t))
      \quad\text{for all $s,t \in T_\Sigma X$.}
    \end{equation}
by the homomorphism theorem.
But this inequality states precisely that for $(s=_\epsilon t)\in \Gamma_X$ one has $h(s)=_\epsilon h(t)\in \Gamma_Y$, i.e. closure under (Subst).

For the ``if'' direction, part (1) implies that each $\Gamma_X$ is associated to some $e_X\colon T_\Sigma X\epito E_X$. Moreover, closure under (Subst) states precisely that, for each homomorphism $h\colon T_\Sigma X\to T_\Sigma Y$ one has \eqref{eq:quantfact}, which by the homomorphism theorem implies that $e_Y\o h$ factorizes through $e_X$. Thus, $(e_X)_{X\in \Set}$ is a theory.
  \end{enumerate}
\doendproof

\noindent The completeness proof is now analogous to the proof of \autoref{thm:birkhoffeqcomplete}:
\begin{theorem}[Mardare et al. \cite{Mardare16}]
  $\Gamma\models s=_\epsilon t$ implies $\Gamma\vdash s=_\epsilon t$.
\end{theorem}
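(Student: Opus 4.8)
The plan is to mirror the derivation of \autoref{thm:birkhoffeqcomplete} from the general completeness theorem. Fix a set $X$ of variables large enough that $\Gamma$ consists of unconditional equations over $X$ and that $(s=_\epsilon t)$ is an unconditional equation over $X$, and suppose $\Gamma\models s=_\epsilon t$. Using \autoref{lem:metricquot} (and the notion of a congruence generated by relations introduced thereafter), let $e\colon T_\Sigma X\epito E$ be the quotient corresponding to the congruence on $T_\Sigma X$ generated by $\Gamma$, and $e'\colon T_\Sigma X\epito E'$ the quotient corresponding to the congruence generated by $s=_\epsilon t$. By the first part of \autoref{lem:quanttheory}, the set $\Gamma_e$ of unconditional equations associated to $e$ is precisely the closure of $\Gamma$ under (Refl), (Sym), (Triang), (Max), (Arch), (Cong), and likewise $\Gamma_{e'}$ is the closure of $\{s=_\epsilon t\}$ under those rules. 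Since these rules are sound, a quantitative $\Sigma$-algebra satisfies $e$ (resp.\ $e'$) iff it satisfies every equation of $\Gamma$ (resp.\ the single equation $s=_\epsilon t$); hence the hypothesis $\Gamma\models s=_\epsilon t$ says exactly $\V(e)\seq\V(e')$, i.e.\ $e$ semantically entails $e'$ in the sense of \autoref{S:logic}.

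Now invoke \autoref{thm:eqlogicsoundcomplete} — in the single-quotient formulation of \autoref{rem:proofrules}, applicable here because $\E_\X=\E$ (every quotient is $2$-reflexive) and $\QAlg{\Sigma}$ is $\E$-co-wellpowered, so that \autoref{rem:singlequotth} applies. It yields a finite chain
\[
  e = e_0 \;\vdash\; e_1 \;\vdash\; \cdots \;\vdash\; e_n = e'
\]
in the abstract calculus, where $e_i\colon T_\Sigma X_i\epito E_i$ and each step is an instance of (Weakening) or (Substitution). Writing $\Gamma_i$ for the set of unconditional equations associated to $e_i$, I would prove by induction on $i$ that $\Gamma\vdash s'=_{\epsilon'}t'$ for every $(s'=_{\epsilon'}t')\in\Gamma_i$. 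Taking $i=n$ then finishes the argument, since $(s=_\epsilon t)\in\Gamma_n=\Gamma_{e'}$.

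For $i=0$, by construction $\Gamma_0=\Gamma_e$ is the closure of $\Gamma$ under the six rules listed above, so each of its members has a (possibly transfinite, because of (Arch)) derivation from $\Gamma$. For the induction step, assume the claim for $\Gamma_i$. If the step $e_i\vdash e_{i+1}$ is an instance of (Weakening), then $e_{i+1}\leq e_i$; since $e\mapsto\Gamma_e$ is order-preserving (it is the composite of the \emph{dual} lattice isomorphism of \autoref{lem:metricquot} with the antitone passage from a pseudometric congruence to the set of equations it satisfies), we get $\Gamma_{i+1}\seq\Gamma_i$ and are done by the induction hypothesis. If the step is an instance of (Substitution), then $e_{i+1}$ is the $X_{i+1}$-component of the substitution closure of $e_i$. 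By the second part of \autoref{lem:quanttheory} — together with \autoref{rem:singlequotth}, which identifies substitution-invariant families of single quotients with equational theories, and \autoref{lem:substclosure} — the family of associated sets of unconditional equations of the substitution closure of $e_i$ is exactly the closure of the one-slot family $\Gamma_i$ under (Refl), (Sym), (Triang), (Max), (Arch), (Cong), (Subst). Hence $\Gamma_{i+1}$ is this closure read off in the $X_{i+1}$-slot, so every $(s'=_{\epsilon'}t')\in\Gamma_{i+1}$ admits a derivation from $\Gamma_i$ using the seven rules; splicing in the derivations of the members of $\Gamma_i$ from $\Gamma$ provided by the induction hypothesis — which is legitimate even for transfinite proofs — yields $\Gamma\vdash s'=_{\epsilon'}t'$.

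I expect the main difficulty to be bookkeeping rather than mathematical depth: one must keep three layers aligned — the abstract quotients with the two abstract proof rules, the pseudometric congruences in between (linked to quotients by the \emph{dual} isomorphism of \autoref{lem:metricquot}), and the seven concrete rules with their closure operators — and in particular verify that an abstract (Substitution) step corresponds precisely to closing under the concrete (Subst) rule, which is what the second part of \autoref{lem:quanttheory} delivers. The one genuinely new point compared with the Birkhoff case is the infinitary rule (Arch): one has to allow transfinite derivations in the concrete calculus and note that transfinite proofs still compose, so that derivability from $\Gamma$ remains transitive.
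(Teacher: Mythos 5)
Your proposal is correct and follows essentially the same route as the paper's own proof: reduce to the abstract completeness theorem via the quotients generated by $\Gamma$ and $s=_\epsilon t$, obtain a finite chain of (Weakening)/(Substitution) steps, and induct along it using the two parts of \autoref{lem:quanttheory} to translate each abstract step into closure under the concrete rules. The extra points you flag (monotonicity of $e\mapsto\Gamma_e$ for the weakening case, and that transfinite derivations compose because of (Arch)) are exactly the bookkeeping the paper's argument relies on implicitly.
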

\begin{proof}
We derive this statement from \autoref{thm:eqlogicsoundcomplete}. Choose a set $X$ of variables such that all equations in $\Gamma$ and the equation $s=_\epsilon t$ are formed over $X$, and suppose that $\Gamma\models s=_\epsilon t$. Let $e\colon T_\Sigma X\epito E_X$ and $e'\colon T_\Sigma X \epito E_X'$ be the quotients corresponding to the congruences generated by the relations in $\Gamma$ and by $s=_\epsilon t$, respectively. Then $e\models e'$ by the homomorphism theorem, so by  \autoref{thm:eqlogicsoundcomplete} (cf. also \autoref{rem:proofrules}), there exists a proof \[e=e_0\vdash e_1\vdash \cdots \vdash e_n=e'\] in our abstract calculus, where $e_i\colon T_\Sigma X_i\epito E_i$.  We show that for every $i=0,\ldots, n$ and $(s'=_{\epsilon'} t')\in \Gamma_{e_i}$ one has $\Gamma \vdash s'=_{\epsilon'} t'$; this then implies $\Gamma\vdash s=_\epsilon t$ by putting $i=n$ and $(s'=_{\epsilon'}t') = (s=_\epsilon t)$. The proof is by induction on $i$. For $i=0$, we have that the set $\Gamma_{e_0}=\Gamma_e$ corresponds to the congruence generated by $\Gamma$, so it is the closure of $\Gamma$ under the rules  (Refl), (Sym), (Triang), (Max), (Arch), (Cong) by \autoref{lem:quanttheory}(1). Thus, every equation $s'=_{\epsilon'} t'$ in $\Gamma_{e_0}$ can be proved from $\Gamma$ using these rules. Now suppose that $0<i<n$. If the step $e_i\vdash e_{i+1}$ is an application of the weakening rule, the statement follows trivially by induction because then $\Gamma_{e_{i+1}}\seq \Gamma_{e_i}$. Thus suppose that $e_i\vdash e_{i+1}$ uses the substitution rule. By \autoref{lem:quanttheory}(2), the substitution closure of $e_i$ is given by the family of sets of equations $(\ol\Gamma_Y)_{Y\in \Set}$ obtained by closing $\Gamma_{e_i}$ under all the rules (Refl), (Sym), (Triang), (Max), (Arch), (Cong), (Subst). Since $\Gamma_{i+1}=\ol{\Gamma}_{X_{i+1}}$, we have $\Gamma_i\vdash s'=_{\epsilon'} t'$ for each $(s'=_{\epsilon'}t')\in \Gamma_{i+1}$. Thus $\Gamma\vdash s'=_{\epsilon'} t'$ by induction.
\end{proof}

\subsection{Nominal Algebras}
In this section, we derive an HSP theorem for algebras in the category of nominal sets. We first recall some terminology; see Pitts \cite{pitts_2013} for details. Fix a countably infinite set $\At$ of atoms and denote by $\Perm(\At)$ the group of all permutations $\pi\colon \At\to \At$ moving only finitely many elements of $\At$. A \emph{nominal set} is a set $X$ equipped with a group action $\Perm(\At)\times X \to X$, $(\pi,x)\mapsto \pi\o x$, such that every element of $X$ has a finite \emph{support}; that is, for every $x\in X$ there exists a finite set $S\seq \At$ such that for every $\pi\in \Perm(\At)$ one has 
\[ \left[\,\forall a\in S: \pi(a)=a\,\right] \quad\To\quad \pi\o x = x.\]
This implies that $x$ has a least support $\supp_X(x)\seq \At$,
viz. the intersection of all supports of $x$. Every nominal set $X$ can be partitioned into the subsets of the form $\{ \pi\o x\;:\; \pi\in \Perm(\At)\}$ ($x\in X$), called the \emph{orbits} of $X$. 
An \emph{equivariant map} between nominal sets $X$ and $Y$ is a function $f\colon X\to Y$ such that $f(\pi\o x)=\pi\o f(x)$ for all $x\in X$ and $\pi\in\Perm(\At)$. Equivariance implies that $\supp_Y(f(x))\seq \supp_X(x)$ for all $x\in X$. We denote by $\Nom$ the category of nominal sets and equivariant maps. $\Nom$ has the factorization system of epimorphisms and monomorphisms (= surjective and injective equivariant maps). The product of a family of nominal sets $X_i$ ($i\in I$) is given by 
\[ \prod X_i = \{\, (x_i)_{i\in I} \in \prod_{i\in I}\under{X_i} \;:\; \bigcup_{i\in I} \supp(x_i)\text{ is finite} \, \}, \]
where $\under{X_i}$ denotes the underlying set of $X_i$ and the group action is given pointwise. The coproduct $\coprod_{i\in I} X_i$ is formed on the level of underlying sets.
A nominal set $X$ is called \emph{strong} if for every element $x\in X$ and $\pi \in \Perm(\At)$ one has 
\[[\,\forall a\in \supp_X(x): \pi(a)=a\,] \quad\text{$\Lra$}\quad \pi\o x = x.\] For any finite set $I$ let $\At^I = \prod_{i\in I} \At$ denote the $I$-fold power of $\At$. Then \[\At^{\#I} = \{\,a\in \At^{I}\;:\; \text{$a$ injective} \,\},\]
is a strong nominal set with group action $(\pi\o a)(i) := \pi(a(i))$ for $\pi\in\Perm(\At)$.

\begin{defn}
A \emph{supported set} is a set $X$ together with a map $\supp_X\colon X\to \Pow_f(\At)$. A \emph{morphism} between supported sets $X$ and $Y$ is a function $f\colon X\to Y$ with $\supp_Y(f(x))\seq \supp_X(x)$ for all $x\in X$. 
\end{defn}
Every nominal set $X$ is a supported set w.r.t. its least-support function $\supp_X$. The following result is a reformulation of \cite[Prop. 5.10]{msw16}:

\begin{lemma}\label{lem:nomreflective}
The forgetful functor from $\Nom$ to $\SuppSet$ has a left adjoint.
\end{lemma}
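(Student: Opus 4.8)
The plan is to construct the left adjoint $F\colon\SuppSet\to\Nom$ explicitly, together with the unit of the adjunction, and then verify the universal property. Given a supported set $(X,\supp_X)$, set
\[
FX \;=\; \coprod_{x\in X}\ \Perm(\At)/\mathsf{Fix}(\supp_X(x)),
\qquad
\mathsf{Fix}(S)=\{\,\pi\in\Perm(\At):\pi(a)=a\text{ for all }a\in S\,\},
\]
writing $[\pi,x]$ for the element in the $x$-th summand represented by $\pi$ (so $[\pi,x]=[\pi',x]$ iff $\pi$ and $\pi'$ agree on $\supp_X(x)$). Equip $FX$ with the group action $\pi'\o[\pi,x]:=[\pi'\pi,x]$, which is well defined because each $\mathsf{Fix}(S)$ is a subgroup, and define the unit $\eta_X\colon X\to U(FX)$ by $\eta_X(x)=[\id,x]$.

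First I would check that $FX$ is genuinely a nominal set: the action axioms follow from those of $\Perm(\At)$, and a short computation shows that $\pi[\supp_X(x)]$ is a (finite) support of $[\pi,x]$, so least supports exist; in particular $\supp_{FX}([\id,x])=\supp_X(x)$, whence $\eta_X$ is a morphism of supported sets (indeed support-reflecting). Next I would establish the universal property. Given a nominal set $Y$ and a $\SuppSet$-morphism $f\colon X\to U(Y)$, so that $\supp_Y(f(x))\seq\supp_X(x)$ for all $x$, define $\bar f\colon FX\to Y$ by $\bar f([\pi,x])=\pi\o f(x)$. This is equivariant and satisfies $U\bar f\circ\eta_X=f$; uniqueness follows since $FX$ is generated as a nominal set by the orbits of the $[\id,x]$, forcing any equivariant $g$ with $g([\id,x])=f(x)$ to agree with $\bar f$.

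The routine points — that the defining identification is compatible with the action, and that $\pi[\supp_X(x)]$ is actually the least support of $[\pi,x]$ (only ``a support'' is needed for the adjunction) — are straightforward. The one step carrying real content is the well-definedness of $\bar f$: if $\pi$ and $\pi'$ agree on $\supp_X(x)$, then they agree on the smaller set $\supp_Y(f(x))$, hence $\pi^{-1}\pi'\in\mathsf{Fix}(\supp_Y(f(x)))$ and therefore $\pi'\o f(x)=\pi\o f(x)$. This is precisely the point where the inequality defining a morphism in $\SuppSet$ is used, so it is essential that the base category is $\SuppSet$ rather than $\Set$. (Alternatively, one could invoke an adjoint functor theorem after noting that $\Nom$ is locally presentable and $U$ preserves limits, but the explicit construction is shorter and, as a bonus, exhibits the image of $F$ as the class of strong nominal sets, matching \autoref{lem:nomreflective0}.)
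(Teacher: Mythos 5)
Your proof is correct and is essentially the paper's own argument: the paper defines $FX=\coprod_{x\in X}\At^{\#\supp_X(x)}$ with $\eta_X(x)$ the inclusion $\supp_X(x)\monoto\At$, which is isomorphic as a nominal set to your coset presentation $\coprod_{x\in X}\Perm(\At)/\mathsf{Fix}(\supp_X(x))$ via $\pi\,\mathsf{Fix}(S)\mapsto \pi\circ\iota_S$. The decisive step is also the same in both: well-definedness of $\bar f([\pi,x])=\pi\o f(x)$ rests on $\supp_Y(f(x))\seq\supp_X(x)$, i.e.\ on $f$ being a morphism of supported sets.
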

\begin{rem}\label{rem:nomreflective}
The left adjoint $F\colon \SuppSet\to \Nom$ sends a supported set $X$ to the nominal set $FX = \coprod_{x\in X} \At^{\#{\supp_X(x)}}$, and the universal map $\eta_X\colon X\to FX$ maps an element $x\in X$ to the inclusion map $\supp_X(x)\monoto \At$ in $\At^{\#\supp_X(x)}$. 
\end{rem}
\begin{proof}
 Let $X$ be a supported set and let $Y$ be a nominal set. We need to show that every morphism $h\colon X\to Y$ in $\SuppSet$ uniquely extends to an equivariant map $\ol h\colon FX\to Y$ with $\ol h\o \eta_X = h$. Note that every element of $FX$ is of the form $\pi\o \eta_X(x)$ for a (unique) $x\in X$ and some $\pi\in \Perm(\At)$. Thus the formula
\[ \ol h(\pi\o \eta_X(x)) := \pi\o h(x) \quad(\pi\in\Perm(\At)) \]
gives a total function $\ol h\colon FX\to Y$, provided that we can prove it to be well-defined. To this end, suppose that $\pi\o \eta_X(x) = \sigma \o \eta_X(x)$ for $x\in X$ and $\pi,\sigma\in \Perm(\At)$. Since $FX$ is strong, $\pi$ and $\sigma$ agree on $\supp_{FX} (\eta_X(x)) = \supp_X(x)$. In particular, they agree on $\supp_Y(h(x))\seq \supp_X(x)$, which implies $\pi\o h(x)=\sigma\o h(x)$. Thus $\ol h$ is a well-defined map.

From its definition it is immediately clear that $\ol h$ is equivariant and satisfies $\ol h\o \eta_X(x)=h(x)$ for all $x\in X$. Moreover, since the elements $\eta_X(x)$ ($x\in X$) meet every orbit of $FX$, the map $\ol h$ is unique with this property.
\end{proof}

\begin{corollary}\label{cor:strongprops}
\begin{enumerate}
\item For each nominal set $Z$, there exists a strong nominal set $X$ and a surjective equivariant map $e\colon X\epito Z$ preserving least supports, i.e. with $\supp_Z(e(x))=\supp_X(x)$ for all $x\in X$.
\item Every strong nominal set is isomorphic to $FY$ for some $Y\in \SuppSet$.
\end{enumerate}
\end{corollary}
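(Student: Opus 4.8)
The plan is to derive both parts from the explicit description of the left adjoint $F\colon\SuppSet\to\Nom$ given in \autoref{rem:nomreflective} together with its universal property (\autoref{lem:nomreflective}). Recall from the proof of \autoref{lem:nomreflective} that every element of $FW$ (for $W\in\SuppSet$) has the form $\pi\o\eta_W(w)$ with $w\in W$ and $\pi\in\Perm(\At)$, that $\supp_{FW}(\eta_W(w))=\supp_W(w)$, and that the equivariant extension of a $\SuppSet$-morphism $h\colon W\to Y$ acts by $\ol h(\pi\o\eta_W(w))=\pi\o h(w)$. Note also that $FW=\coprod_{w\in W}\At^{\#\supp_W(w)}$ is a coproduct of strong nominal sets, hence itself strong, since strongness is a condition on individual elements and coproducts are formed on underlying sets.

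For part (2), let $X$ be a strong nominal set, pick a set $Y\seq X$ containing exactly one element from each orbit of $X$, and regard $Y$ as a supported set via $\supp_X|_Y$. The inclusion $\iota\colon Y\hookrightarrow X$ is a $\SuppSet$-morphism (the required inequality $\supp_X(y)\seq\supp_X(y)$ being trivial), so by \autoref{lem:nomreflective} it extends to an equivariant map $q=\ol\iota\colon FY\to X$ with $q(\pi\o\eta_Y(y))=\pi\o y$. Then $q$ is surjective, since for fixed $y$ the elements $\pi\o y$ exhaust the orbit of $y$ and every orbit of $X$ meets $Y$. For injectivity suppose $q(\pi\o\eta_Y(y))=q(\sigma\o\eta_Y(y'))$, i.e. $\pi\o y=\sigma\o y'$; then $y$ and $y'$ lie in the same orbit, so $y=y'$, and $\sigma^{-1}\pi$ stabilises $y$. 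Strongness of $X$ forces $\sigma^{-1}\pi$ to fix $\supp_X(y)=\supp_{FY}(\eta_Y(y))$ pointwise, and then strongness of $FY$ gives $\sigma^{-1}\pi\o\eta_Y(y)=\eta_Y(y)$, that is $\pi\o\eta_Y(y)=\sigma\o\eta_Y(y)$. Hence $q$ is an isomorphism and $X\cong FY$.

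For part (1), given a nominal set $Z$, view its carrier as a supported set with support map $\supp_Z$; the identity function is a $\SuppSet$-morphism $Z\to Z$, which by \autoref{lem:nomreflective} extends to an equivariant map $e\colon FZ\to Z$ with $e\o\eta_Z=\id_Z$. Then $FZ$ is strong by the remark above, and $e$ is surjective because $e\o\eta_Z=\id_Z$ is. Finally $e$ preserves least supports: equivariance gives $\supp_Z(e(w))\seq\supp_{FZ}(w)$, and writing $w=\pi\o\eta_Z(z)$ one computes that both sides equal $\{\pi(a):a\in\supp_Z(z)\}$, using $\supp_{FZ}(\eta_Z(z))=\supp_Z(z)$ and the standard fact $\supp(\pi\o x)=\{\pi(a):a\in\supp(x)\}$. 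Thus $X:=FZ$ together with $e$ is as required.

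I do not anticipate a real obstacle; the one point to handle with care is the interplay, in the injectivity argument of part (2), between strongness of $X$ and strongness of $FY$, which works precisely because $\supp_{FY}(\eta_Y(y))=\supp_Y(y)=\supp_X(y)$ — an identity already recorded in the proof of \autoref{lem:nomreflective}. The remainder is routine bookkeeping with orbits and stabilisers.
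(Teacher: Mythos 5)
Your proof is correct and follows essentially the same route as the paper: both parts rest on the explicit description of the left adjoint $F$ and the identity $\supp_{FY}(\eta_Y(y))=\supp_Y(y)$, and your injectivity argument in part (2) (reduce to a single orbit representative, use strongness of $X$ to see that $\sigma^{-1}\pi$ fixes the least support pointwise, then conclude in $FY$) is the paper's argument almost verbatim. The only, harmless, deviation is in part (1), where you take $X=FZ$ over the whole carrier of $Z$ with $e$ extending the identity, whereas the paper applies $F$ to a set of orbit representatives and reuses that same map for part (2); both constructions yield a strong, support-preserving surjective cover.
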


\proof
\begin{enumerate}[wide,labelindent=0pt,itemsep=5pt]
\item\label{cor:strongprops:1} Choose a subset $Y\seq Z$ containing exactly one element of every orbit of $Z$. Then $Y$ is a supported set, with $\supp_Y$ being the restriction of $\supp_Z$. By \autoref{lem:nomreflective}, the inclusion map $Y\monoto Z$ uniquely extends to an equivariant map $e\colon FY\epito Z$. The map $e$ is surjective because its image meets every orbit of $Z$. Moreover, it preserves least supports: for all $y\in Y$ and $\pi \in \Perm(\At)$, one has
\begin{align*} \supp_Z(e(\pi\o \eta_Y(y))) &= \pi\o \supp_Z(e(\eta_Y(y))) = \pi\o \supp_Y (y) = \pi\o  \supp_{FY} (\eta_Y(y))\\
  & = \supp_{FY}(\pi\o \eta_Y(y)),
\end{align*}
where the middle equation in the first line follows since $e \o
\eta_Y$ is the inclusion map $Y \subto Z$.
\item Suppose that $Z$ is a strong nominal set. We show that the map $e\colon FY\epito Z$ constructed in part~\ref{cor:strongprops:1} of the proof is injective, and thus an isomorphism. By the choice of $Y\seq Z$, the map $e$ sends elements of distinct orbits of $FY$ to distinct orbits of $Z$.  It therefore suffices to verify that $e$ does not merge any two elements of $FY$ that belong to the same orbit. Thus let $y\in Y$ and $\pi,\sigma\in\Perm(\At)$ with $e(\pi\o \eta_Y(y))= e(\sigma\o \eta_Y(y))$, i.e. $\pi\o y = \sigma\o y$. Since $Z$ is strong, $\pi$ and $\sigma$ agree on $\supp_Z(y) = \supp_{FY}(\eta_Y(y))$. Thus $\pi\o \eta_Y(y) = \sigma\o \eta_Y(y)$, which proves that $e$ is injective.\qed
\end{enumerate}
\doendproof

\noindent  Fix a finitary signature $\Sigma$. A \emph{nominal $\Sigma$-algebra} is a nominal set $A$ with a $\Sigma$-algebra structure such that all operations $\sigma: A^n\to A$ ($\sigma\in \Sigma$) are equivariant. Morphisms of nominal $\Sigma$-algebras are equivariant $\Sigma$-homomorphisms. The forgetful functor from the category $\NomAlg{\Sigma}$ of nominal $\Sigma$-algebras to $\Nom$ has a left adjoint associating to each $X\in \Nom$ the term algebra $T_\Sigma X$, with group action inherited from the one of $X$. To get an HSP theorem for nominal $\Sigma$-algebras, we follow the four steps indicated at the beginning of \autoref{S:app}.

\medskip\noindent\textbf{Step 1.} We choose the parameters of our setting as follows:

\begin{itemize}
\item $\A = \A_0 = \NomAlg{\Sigma}$;
\item $(\E,\M)$ = (surjective morphisms, injective morphisms);
\item $\Lambda = $ all cardinal numbers;
\item $\X = \{\,T_\Sigma X \;:\; \text{$X$ is a strong nominal set} \,\}$.
\end{itemize}
The quotients in $\E_\X$ are characterized as follows:

\begin{lemma}\label{lem:suppref}
A quotient $e\colon A\epito B$ belongs to $\E_\X$ if and only if for every $b\in B$ there exists $a\in A$ with $e(a)=b$ and $\supp_A(a)=\supp_B(b)$.
\end{lemma}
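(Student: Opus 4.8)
The plan is to reduce the statement to a statement about the category $\Nom$ via \autoref{rem:exlift}, and then prove that statement directly using the machinery of strong nominal sets developed above. First I would apply \autoref{rem:exlift} to the free--forgetful adjunction $T_\Sigma\dashv U\colon\NomAlg{\Sigma}\to\Nom$, taking $\X'$ to be the class of strong nominal sets (so that $\X=\{\,T_\Sigma X' : X'\in\X'\,\}$ by definition) and $\E'$ the class of surjective equivariant maps (so that $\E=\{\,e : Ue\in\E'\,\}$, since a $\Sigma$-homomorphism is surjective precisely when its underlying map is). The remark then shows that $e\colon A\epito B$ lies in $\E_\X$ if and only if $Ue$ is a surjective equivariant map such that every strong nominal set is projective with respect to $Ue$. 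Hence it suffices to prove: for a surjective equivariant map $e\colon A\epito B$ of nominal sets, \emph{every strong nominal set is projective w.r.t. $e$} if and only if $e$ is \emph{support-reflecting}, i.e. for every $b\in B$ there is $a\in A$ with $e(a)=b$ and $\supp_A(a)=\supp_B(b)$. I would note at the outset that $\supp_B(b)\subseteq\supp_A(a)$ holds automatically for any preimage $a$ of $b$, by equivariance of $e$, so support-reflection merely asks for a preimage of least possible support.

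For the implication from support-reflection to projectivity, I would take an arbitrary strong nominal set, write it as $FY$ for some supported set $Y$ using the second part of \autoref{cor:strongprops} together with the explicit description of $F$ in \autoref{rem:nomreflective}, and invoke the adjunction of \autoref{lem:nomreflective}. Given an equivariant $h\colon FY\to B$, for each $y\in Y$ support-reflection supplies a preimage $a_y\in A$ of $h(\eta_Y(y))$ with $\supp_A(a_y)=\supp_B(h(\eta_Y(y)))\subseteq\supp_{FY}(\eta_Y(y))=\supp_Y(y)$; thus $y\mapsto a_y$ is a morphism $Y\to UA$ of supported sets, and its adjoint transpose $g\colon FY\to A$ is equivariant with $e\circ g$ agreeing with $h$ on the image of $\eta_Y$, which meets every orbit of $FY$, so $e\circ g=h$. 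For the converse, given $b\in B$ I put $S=\supp_B(b)$ and use the strong nominal set $\At^{\#S}$: writing $i_S\in\At^{\#S}$ for the inclusion tuple $S\hookrightarrow\At$, the assignment $\pi\cdot i_S\mapsto\pi\cdot b$ defines an equivariant map $h\colon\At^{\#S}\to B$ on the single orbit of $\At^{\#S}$, and projectivity of $\At^{\#S}$ yields equivariant $g\colon\At^{\#S}\to A$ with $e\circ g=h$; then $a:=g(i_S)$ satisfies $e(a)=b$ and $\supp_A(a)\subseteq\supp_{\At^{\#S}}(i_S)=S$, whence $\supp_A(a)=S=\supp_B(b)$ by the automatic reverse inclusion.

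The main obstacle — and the point where it is essential that $\X$ is built from \emph{strong} nominal sets rather than arbitrary ones — is the well-definedness of the maps constructed in both directions, which is precisely where one needs that an element's stabilizer is determined by its least support. In the first direction, $g$ is well defined because any permutation fixing $\supp_Y(y)$ fixes $a_y$, its support being contained in $\supp_Y(y)$; in the second, $h$ is well defined because $\At^{\#S}$ is strong, so a permutation fixing $\supp(i_S)=S=\supp_B(b)$ fixes both $i_S$ and $b$. Everything else is routine bookkeeping: that the $\eta_Y(y)$ meet every orbit, that adjoint transposition along $F\dashv U$ behaves as claimed, and that two equivariant maps agreeing on a set of orbit representatives coincide.
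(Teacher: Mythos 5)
Your proof is correct, and its overall architecture coincides with the paper's: both reduce to the signature-free case $\A=\Nom$, $\X'=$ strong nominal sets via \autoref{rem:exlift}, and your argument that support-reflection implies projectivity is essentially the paper's verbatim (write the strong nominal set as $FY$, choose support-minimal preimages of the $h(\eta_Y(y))$ to get a $\SuppSet$-morphism $Y\to UA$, transpose along $F\dashv U$, and conclude $e\circ g=h$ since the $\eta_Y(y)$ meet every orbit). Where you genuinely diverge is the converse. The paper fixes one global support-preserving surjection $h\colon X\epito B$ from a strong nominal set (Corollary~\ref{cor:strongprops}, part 1), lifts it once through $e$ by projectivity, and reads off a support-minimal preimage for every $b\in B$ simultaneously from $\supp_A(g(x))\subseteq\supp_X(x)=\supp_B(b)$. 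You instead work locally: for each $b$ you test projectivity against the single-orbit strong nominal set $\At^{\#\supp_B(b)}$, mapping the inclusion tuple to $b$; well-definedness of that map uses exactly that $\supp_B(b)$ supports $b$ and that $\At^{\#S}$ is strong, and the lift of the inclusion tuple is the desired preimage. Your version is slightly more elementary in that it bypasses the global covering construction and only ever invokes the concrete generators $\At^{\#S}$ of the class of strong nominal sets; the paper's version is more uniform and reuses a corollary it needs elsewhere anyway. Both closing inclusions ($\supp_B(e(a))\subseteq\supp_A(a)$ by equivariance of $e$, and the containment coming from equivariance of the lift) are handled correctly in your write-up.
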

In the following, a quotient with this property is called \emph{support-reflecting}.

\begin{proof}
By \autoref{rem:exlift} applied to the adjunction  $\xymatrix@1{\NomAlg{\Sigma} \ar@<4pt>[r]\ar@{}[r]|-\top & \Nom \ar@<4pt>[l]}$ with $\X'=$ strong nominal sets and $\E'$ = surjective equivariant maps, it suffices to consider the case where the signature $\Sigma$ is empty, i.e. $\A=\Nom$ and $\X=$ strong nominal sets.

\medskip ($\To$) Suppose that $e\colon A\epito B$ lies in $\E_\X$. Choose a strong nominal set $X$ and a quotient $h\colon X\epito B$ preserving least supports, see \autoref{cor:strongprops}. Since $X$ is projective w.r.t. $e$, there exists an equivariant map $g\colon X\to A$ with $e\o g = h$. To prove that $e$ is support-reflecting, let $b\in B$. Choose $x\in X$ with $h(x)=b$, and put $a:=g(x)$. Then $\supp_A(a) \seq \supp_X(x) = \supp_B(h(x)) = \supp_B(b)$. Moreover, $\supp_B(b) = \supp_B(e(a))\seq \supp_A(a)$ because $e$ is equivariant. Thus $\supp_A(a)=\supp_B(b)$ and $e(a)=b$, which shows that $e$ is support-reflecting.

\medskip
($\Leftarrow$) Suppose that $e\colon A\epito B$ is support-reflecting, and let $h\colon X\to B$ be an equivariant map whose domain $X$ is a strong nominal set. By \autoref{cor:strongprops}, we may assume that $X=FY$ for some $Y\in \SuppSet$. For each $y\in Y$, choose an element $g(y)\in A$ with $e(g(y))=h(\eta_Y(y))$ and $\supp_A(g(y)) = \supp_B(h(\eta_Y(y)))$, using that $e$ is support-reflecting. This defines a map $g\colon Y\to A$ with $e\o g = h\o \eta_Y$. Moreover, $g$ is a morphism in $\SuppSet$ because
\[ \supp_A(g(y)) = \supp_B (e(g(y))) = \supp_B(h(\eta_Y(y))) \seq \supp_X(\eta_Y(y)) = \supp_Y(y). \]
By \autoref{lem:nomreflective}, $g$ extends uniquely to an equivariant map $\ol g\colon X\to A$ with $\ol g\o \eta_Y = g$. Then also $e\o \ol g = h$, since this holds when precomposed with the universal map $\eta_Y$; see the diagram below.
\[
\xymatrix{
Y \ar[r]^g \ar[d]_{\eta_Y} & A \ar@{->>}[d]^e \\
X \ar[ur]^{\ol g} \ar[r]_h & B
}
\]
This proves that each $X\in \X$ is projective w.r.t. $e$, that is, $e\in \E_\X$.
\end{proof}
It follows that our data satisfies the \autoref{asm:setting}. For \ref{A1} use that products in $\NomAlg{\Sigma}$ are formed in $\Nom$. \ref{A2} holds trivially. For \ref{A3}, let $A$ be a nominal $\Sigma$-algebra, and express $A$ as a  quotient $e\colon X\epito A$ in $\Nom$ preserving least supports, with $X$ a strong nominal set; see \autoref{cor:strongprops}. Then the unique extension $e^{\#}\colon T_\Sigma X \epito A$ to a morphism in $\NomAlg{\Sigma}$ is support-reflecting. Indeed, given $a\in A$, choose $x\in X$ with $e(x)=a$. Then $e^{\#}(x)=a$ and $\supp_{T_\Sigma X}(x) = \supp_X(x) = \supp_A(e(x))=\supp_A(a)$.

\medskip\noindent\textbf{Step 2.} The exactness property of $\NomAlg{\Sigma}$ is a straightforward generalization of the one of $\Alg{\Sigma}$, see \eqref{eq:homtheorem}. An \emph{equivariant congruence relation} on a nominal $\Sigma$-algebra $A$ is a congruence relation $\mathord{\equiv}\seq A\times A$ that forms an equivariant subset of $A\times A$, i.e., $a\equiv a'$ implies $\pi\o a \equiv \pi\o a'$ for all $\pi\in \Perm(\At)$.

\begin{lemma}
For each nominal $\Sigma$-algebra $A$, there is an isomorphism of complete lattices
\[ \text{quotients of $A$}\quad\cong\quad \text{equivariant congruences on $A$} \]
mapping $e\colon A\epito B$ to its \emph{kernel} $\mathord{\equiv}_e\seq A\times A$, given by $a\equiv_e a'$ iff $e(a)=e(a')$.
\end{lemma}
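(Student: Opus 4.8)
The plan is to mimic the classical homomorphism theorem \eqref{eq:homtheorem}, adding the two nominal-specific ingredients: equivariance of the congruence as a relation, and finiteness of supports in the quotient. First I would check that the kernel $\equiv_e$ of a quotient $e\colon A\epito B$ in $\NomAlg{\Sigma}$ is an equivariant congruence. That it is a $\Sigma$-algebra congruence is exactly the classical argument, since it is the kernel pair of the $\Sigma$-homomorphism $e$. For equivariance as a subset of $A\times A$: if $a\equiv_e a'$, i.e. $e(a)=e(a')$, then for any $\pi\in\Perm(\At)$ we have $e(\pi\o a)=\pi\o e(a)=\pi\o e(a')=e(\pi\o a')$ because $e$ is equivariant, hence $\pi\o a\equiv_e\pi\o a'$.

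Conversely, given an equivariant congruence $\equiv$ on $A$, I would equip the quotient set $A/\!\equiv$ with (i) the $\Sigma$-operations inherited from $A$ via representatives, well-defined because $\equiv$ is a $\Sigma$-congruence, and (ii) the group action $\pi\o[a]=[\pi\o a]$, well-defined because $\equiv$ is equivariant. The crucial nominal point is that each class $[a]$ is finitely supported, in fact supported by $\supp_A(a)$: if $\pi$ fixes $\supp_A(a)$ pointwise then $\pi\o a=a$, so $\pi\o[a]=[\pi\o a]=[a]$. Thus $A/\!\equiv$ is a nominal set, the operations are equivariant by construction, and the canonical map $q_\equiv\colon A\epito A/\!\equiv$ is a surjective equivariant $\Sigma$-homomorphism, i.e. lies in $\E$.

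Next I would verify that the two assignments are mutually inverse up to isomorphism of quotients. Clearly $\equiv_{q_\equiv}=\equiv$. Given a quotient $e\colon A\epito B$, applying the diagonal fill-in of the factorization system to the commutative square $\id_B\o e = e\o\id_A$ with the $\E$-morphism $q_{\equiv_e}$ on the other side yields a unique morphism $A/\!\equiv_e\to B$ commuting with $e$ and $q_{\equiv_e}$; it is a bijection, hence an isomorphism in $\Nom$ (where isos are precisely the bijective equivariant maps), and a $\Sigma$-homomorphism, so an isomorphism in $\NomAlg{\Sigma}$. Therefore $e$ and $q_{\equiv_e}$ represent the same quotient of $A$.

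Finally, monotonicity: for quotients $e\leq e'$ (i.e. $e'$ factors through $e$) one gets $\equiv_e\seq\equiv_{e'}$ at once, and conversely $\equiv\seq\equiv'$ yields a factorization $q_{\equiv'}=h\o q_\equiv$, so the bijection is an order isomorphism between the poset of quotients of $A$ and the poset of equivariant congruences on $A$ ordered by inclusion. Since the equivariant congruences on $A$ are closed under arbitrary intersections inside $\Pow(A\times A)$, they form a complete lattice, and the order isomorphism transports this structure to the quotients of $A$. I expect no real obstacle here: the argument is essentially bookkeeping, and the only genuinely nominal step — that $A/\!\equiv$ is again a nominal set — is dispatched by the finite-support observation above, with everything else being the classical homomorphism theorem.
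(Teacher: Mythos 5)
Your proof is correct and follows essentially the same route as the paper, which simply compresses the argument to the observation that the classical correspondence \eqref{eq:homtheorem} restricts along the fact that a congruence is equivariant iff the quotient map is; your explicit check that $A/\mathord{\equiv}$ has finite supports is the one genuinely nominal point and you handle it correctly. (One cosmetic slip: the square you feed to the diagonal fill-in is not the right shape for that axiom, but the comparison map $A/\mathord{\equiv_e}\to B$ you want is just the directly defined bijective equivariant $\Sigma$-homomorphism $[a]\mapsto e(a)$, so nothing is lost.)
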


\begin{proof}
This follows immediately from the corresponding statement for ordinary $\Sigma$-algebras, together with the observation that an equivalence relation $\mathord{\equiv}\seq A\times A$ on a nominal set $A$ is equivariant iff the corresponding surjection $e\colon A\epito A/\mathord{\equiv}$ is equivariant.
\end{proof}

\medskip\noindent\textbf{Step 3.} By \autoref{rem:singlequot}, in our current setting an equation can be presented as a single quotient $e\colon T_\Sigma X\epito E_X$ in $\NomAlg{\Sigma}$. The corresponding syntactic concept is the following:

\begin{defn}\label{def:nominaleq} Let $Y$ be a set of variables.
\begin{enumerate}[wide,labelindent=0pt, itemsep=5pt]
\item 
A \emph{nominal $\Sigma$-term} over $Y$ is an element of $T_\Sigma(\Perm(\At)\times Y)$.  Every map $h\colon Y\to A$ into a nominal $\Sigma$-algebra $A$ extends to a $\Sigma$-algebra homomorphism
\[\hat h = (\,T_\Sigma(\Perm(\At)\times Y) \xra{T_\Sigma(\Perm(\At)\times h)} T_\Sigma(\Perm(\At)\times A) \xra{T_\Sigma (\dash \o \dash)} T_\Sigma A \xra{\id^{\#}} A\,)\]
where $\id^{\#}$ the unique extension of the identity map $id\colon A\to A$.
\item A \emph{nominal equation} over $Y$ is an expression of the form \[\supp_Y\vdash s=t\] where $\supp_Y\colon Y\to \Pow_f(\At)$ is a function and $s$ and $t$ are nominal $\Sigma$-terms over $Y$. A nominal $\Sigma$-algebra $A$ \emph{satisfies} the equation  $\supp_Y\vdash s=t$ if for every map $h\colon (Y,\supp_Y)\to (A,\supp_A)$ of supported sets one has $\hat h(s)=\hat h(t)$.
\end{enumerate}
\end{defn}
%

\begin{lemma}\label{lem:nomeq}
  Equations and nominal equations are expressively equivalent.
\end{lemma}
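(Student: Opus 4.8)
The plan is to prove the two halves of expressive equivalence separately, following the two constructions indicated in Step~3, and to reduce both to a short translation dictionary relating the syntactic extension $\hat h$ of \autoref{def:nominaleq} to morphisms out of a free algebra over a strong nominal set.

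First I would set up that dictionary. Fix a supported set $(Y,\supp_Y)$, put $X=FY$, which is a strong nominal set by \autoref{cor:strongprops}, write $\eta_Y\colon Y\to FY$ for the unit of the adjunction of \autoref{lem:nomreflective}, and let $m\colon\Perm(\At)\times Y\to X$ be the map $(\pi,y)\mapsto\pi\o\eta_Y(y)$. Four facts are needed. (i)~By the free-algebra adjunction $\NomAlg{\Sigma}\to\Nom$, the $\Sigma$-homomorphisms $T_\Sigma X\to A$ correspond bijectively to equivariant maps $k\colon X\to A$ via $k\mapsto\ext{k}$. (ii)~By \autoref{lem:nomreflective} and \autoref{cor:strongprops}, equivariant maps $k\colon FY\to A$ correspond bijectively to supported-set morphisms $h\colon Y\to A$ via $h\mapsto\ol h$, the unique equivariant extension with $\ol h\o\eta_Y=h$. (iii)~By \autoref{rem:nomreflective} every element of $FY$ has the form $\pi\o\eta_Y(y)$, so $m$ is surjective and hence so is $T_\Sigma m\colon T_\Sigma(\Perm(\At)\times Y)\epito T_\Sigma X$, since $T_\Sigma$ preserves surjections. (iv)~For every supported-set morphism $h\colon Y\to A$ one has, as maps $T_\Sigma(\Perm(\At)\times Y)\to A$,
\[ \hat h \;=\; \ext{(\ol h)}\o T_\Sigma m. \]
I would prove (iv) by evaluating both sides on a generator $(\pi,y)$: unfolding \autoref{def:nominaleq} gives $\hat h(\pi,y)=\pi\o h(y)$, whereas $\ext{(\ol h)}\bigl(T_\Sigma m(\pi,y)\bigr)=\ol h(\pi\o\eta_Y(y))=\pi\o\ol h(\eta_Y(y))=\pi\o h(y)$ using equivariance of $\ol h$ and $\ol h\o\eta_Y=h$; since both sides are $\Sigma$-homomorphisms, agreement on generators suffices.

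For the first direction, let $e\colon T_\Sigma X\epito E$ be an equation with $X$ strong, choose $Y$ with $X=FY$, and form the nominal equations \eqref{eq:nominaleq20} over $Y$, that is, $\supp_Y\vdash s=t$ for all $s,t\in T_\Sigma(\Perm(\At)\times Y)$ with $(T_\Sigma m(s),T_\Sigma m(t))\in\ker e$. By the exactness property \eqref{eq:homtheoremnom0}, $A\models e$ iff every $\Sigma$-homomorphism $g\colon T_\Sigma X\to A$ factorizes through $e$, equivalently $\ker e\seq\ker g$. By (iii), every pair in $\ker e$ has the form $(T_\Sigma m(s),T_\Sigma m(t))$, and by (i), (ii) and (iv), letting $g$ range over all $\Sigma$-homomorphisms amounts to letting $\hat h$ range over all supported-set morphisms $h\colon Y\to A$. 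Hence $A\models e$ iff $\hat h(s)=\hat h(t)$ for every such $h$ and every index pair $(s,t)$, which is exactly satisfaction of all the equations \eqref{eq:nominaleq20}.

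The second direction mirrors this. Given a nominal equation $\supp_Y\vdash s=t$ over a variable set $Y$, put $X=FY$ and let $e\colon T_\Sigma X\epito E$ be the quotient corresponding under \eqref{eq:homtheoremnom0} to the equivariant congruence generated by the single pair $(T_\Sigma m(s),T_\Sigma m(t))$. Since this is the least equivariant congruence containing that pair, a $\Sigma$-homomorphism $g$ factorizes through $e$ iff $g(T_\Sigma m(s))=g(T_\Sigma m(t))$; translating $g$ into $\hat h$ via (i), (ii) and (iv), this says $A\models e$ exactly when $\hat h(s)=\hat h(t)$ for every supported-set morphism $h\colon Y\to A$, i.e.\ exactly when $A$ satisfies $\supp_Y\vdash s=t$. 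I expect the only delicate points to be identity~(iv) and the surjectivity of $m$ in~(iii), both of which come down to carefully unwinding the explicit description of $F$ and $\eta_Y$ in \autoref{rem:nomreflective}; everything else is a routine combination of the two adjunctions with the exactness property of $\NomAlg{\Sigma}$.
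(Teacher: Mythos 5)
Your proposal is correct and follows essentially the same route as the paper's proof: the decisive ingredients are identical, namely writing $X=FY$, the surjectivity of $m$ (hence of $T_\Sigma m$), and the identity $\hat h=\ext{(\ol h)}\o T_\Sigma m$ verified on generators, which is precisely the paper's equation \eqref{eq:hatgeq}. The only difference is presentational: you phrase both directions uniformly through kernels and the exactness property \eqref{eq:homtheoremnom0}, where the paper chases the corresponding factorization diagrams explicitly.
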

\proof
\begin{enumerate}[wide,labelindent=0pt,itemsep=5pt]
\item\label{lem:nomeq:1} To every equation $e\colon T_\Sigma X\epito E$, with $X$ a strong nominal set, we associate a set of nominal equations as follows. By \autoref{cor:strongprops}, we may assume that $X=FY$ for some supported set $Y$. For notational simplicity, we identify $Y$ with a subset of $FY$ and the universal map $\eta_Y\colon Y\to FY$ with the inclusion. Form the nominal equations over $Y$ given by
\begin{equation}\label{eq:nominaleq} \supp_Y\vdash s=t \quad (\, s,t\in T_\Sigma (\Perm(\At)\times Y)\text{ and } e\o T_\Sigma m(s) =  e\o T_\Sigma m(t)\,), \end{equation}
where the map $m\colon \Perm(\At)\times Y\to X$ is given by
$(\pi,y)\mapsto \pi\o y$. It follows from the definition of $F$ in
\autoref{rem:nomreflective} that the map $m$ is surjective, thus so is $e
\cdot T_\Sigma m$. We claim that, for every nominal $\Sigma$-algebra $A$,
\[
  \text{$A$ satisfies $e$} \quad\iff \quad \text{$A$ satisfies \eqref{eq:nominaleq}}. \]
To prove ($\Leftarrow$), suppose that $A$ satisfies the nominal
equations \eqref{eq:nominaleq}, and let $h\colon X\to A$ be an
equivariant map. Then the restriction $g\colon Y\to A$ of $h$
satisfies $\supp_A(g(y))\seq \supp_Y(y)$ for all $y\in Y$, that is, it
is a map of supported sets. Thus, since $A$ satisfies
\eqref{eq:nominaleq}, the kernel of $e\o T_\Sigma m$ is contained in
the kernel of $\hat g$. It follows that there exists $k\colon E\to A$
with $k\o e\o T_\Sigma m = \hat g$, i.e., the outside of the diagram below commutes:
\begin{equation}\label{eq:nominaleqdiagram}
  \vcenter{
    \xymatrix{
      T_\Sigma(\Perm(\At)\times Y) \ar[dr]^{\hat g} \ar@{->>}[d]_{T_\Sigma m}  &  \\
      T_\Sigma X \ar[r]^{h^{\#}} \ar@{->>}[d]_e & A \\
      E \ar[ur]_k & 
    }
  }
\end{equation}
 The upper triangle also commutes because, for all $(\pi,y)\in \Perm(\At)\times Y$,
\begin{equation}\label{eq:hatgeq} h^{\#}\o T_\Sigma m(\pi,y) = h^{\#}(\pi\o y) = \pi \o  h^{\#}(y) = \pi\o h(y) = \pi \o g(y) = \hat g(\pi, y) \end{equation}
and both $h^{\#}\o T_\Sigma m$ and $\hat g$ are $\Sigma$-algebra homomorphisms.
Since $T_\Sigma m$ is an epimorphism, it follows that the lower
triangle commutes, i.e., $h^{\#}$ factors through $e$. Thus $A$ satisfies $e$.

For the proof of ($\To$), suppose that $A$ satisfies $e$, and let $g\colon Y\to A$ be a map of supported sets. By \autoref{lem:nomreflective}, $g$ extends uniquely to an equivariant map $h\colon X\to A$. Since $A$ satisfies $e$, we have $h^{\#} = k\o e$ for some $k\colon E\to A$. Then the diagram \eqref{eq:nominaleqdiagram} commutes: the lower triangle commutes by definition, and the upper one by \eqref{eq:hatgeq}.
Therefore, for all $s,t\in T_\Sigma(\Perm(\At)\times Y)$ with $e\o T_\Sigma m(s) = e\o T_\Sigma m(t)$ one has 
\[ \hat g(s) = k\o e\o T_\Sigma m(s) = k\o e\o T_\Sigma m(t) = \hat g(t), \]
i.e. $A$ satisfies \eqref{eq:nominaleq}.
\item To every nominal equation $\supp_Y\vdash s=t$ over the set $Y$ we associate an  equation as follows. Put $X = FY$; as before, we view $Y$ as a subset of $X$. Form the nominal congruence generated by the pair $(T_\Sigma m(s),T_\Sigma m(t))$ (viz. the intersection of all nominal congruences containing this pair), and let  $e\colon T_\Sigma X\epito E$ be the corresponding quotient. Then for every nominal $\Sigma$-algebra $A$ one has 
\[ \text{$A$ satisfies $e$} \quad\Lra \quad \text{$A$ satisfies $\supp_Y\vdash s=t$}. \]
To prove ($\To$), note that $\supp_Y\vdash s=t$ is one of the nominal equations \eqref{eq:nominaleq} associated to $e$, and we have already shown in part~\ref{lem:nomeq:1} that every algebra that satisfies $e$ also satisfies its associated nominal equations.

For ($\Leftarrow$), suppose that $A$ satisfies $\supp_Y\vdash s=t$, and let $h\colon X\to A$ be an equivariant map. Then its restriction $g\colon Y\to A$ is a map of supported sets, and $h^{\#}\o T_\Sigma m = \hat g$ by \eqref{eq:hatgeq}. Then
\[ h^{\#}(T_\Sigma m(s)) = \hat g(s) = \hat g(t) = h^{\#}(T_\Sigma m(t)), \]
which implies that the kernel of $e$ (being generated by $(T_\Sigma m(s), T_\Sigma m(t))$) is contained in the kernel of $h^{\#}$. It follows that $h^{\#}$ factorizes through $e$. Thus $A$ satisfies $e$.\qed
\end{enumerate}
\doendproof

\medskip\noindent\textbf{Step 4.} From the previous lemma and \autoref{thm:hspeq}, we deduce:
\begin{theorem}[Nominal HSP Theorem]\label{thm:hspnominal}
A class of nominal $\Sigma$-algebras is a variety (i.e. closed under support-reflecting quotients, subalgebras and products) iff it is axiomatizable by nominal equations.
\end{theorem}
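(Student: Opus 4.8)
The plan is to obtain this theorem as a direct instance of the equational version of the generalized HSP theorem, \autoref{thm:hspeq}, applied to the parameters fixed in Step~1: $\A=\A_0=\NomAlg{\Sigma}$, $(\E,\M)=$ (surjective morphisms, injective morphisms), $\Lambda=$ all cardinal numbers, and $\X=\{\,T_\Sigma X : X\text{ a strong nominal set}\,\}$. First I would check that these data satisfy \autoref{asm:setting}. Condition~\ref{A1} holds because $\NomAlg{\Sigma}$ has all products, computed as in $\Nom$; condition~\ref{A2} is trivial since $\A=\A_0$; and condition~\ref{A3} follows from \autoref{cor:strongprops}: every nominal $\Sigma$-algebra $A$ is a least-support-preserving quotient $e\colon X\epito A$ of a strong nominal set $X$, and the induced $\Sigma$-homomorphism $e^{\#}\colon T_\Sigma X\epito A$ is support-reflecting, hence lies in $\E_\X$ by \autoref{lem:suppref}.

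Next I would set up the dictionary between the abstract and the nominal vocabulary. By \autoref{lem:suppref} the derived class $\E_\X$ consists of exactly the support-reflecting quotients; since $\M$-subobjects are subalgebras and $\Lambda$-products are arbitrary products, the notion of \variety of \autoref{D:var} coincides verbatim with ``class of nominal $\Sigma$-algebras closed under support-reflecting quotients, subalgebras, and products''. Moreover $\NomAlg{\Sigma}$ is $\E$-co-wellpowered (the equivariant congruences on a fixed algebra form a set) and $\Lambda$ is the class of all cardinals, so \autoref{rem:singlequot} allows me to present every abstract equation as a single quotient $e\colon T_\Sigma X\epito E$ with $X$ a strong nominal set.

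It then remains to identify ``equationally presentable'' with ``axiomatizable by nominal equations''. This is exactly what \autoref{lem:nomeq} provides: each single abstract equation $e$ is satisfied by the same nominal $\Sigma$-algebras as an associated \emph{set} of nominal equations, and conversely each nominal equation is equivalent to a single abstract equation. Taking unions of the translated families in both directions shows that a class $\V\seq\NomAlg{\Sigma}$ has the form $\V(\Eq)$ for some class $\Eq$ of abstract equations iff it is the class of models of some set of nominal equations; hence the two notions of axiomatizability describe the same classes. Combining this equivalence with \autoref{thm:hspeq} and the dictionary of the previous paragraph yields the theorem.

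I do not expect a genuine obstacle at this final stage — the argument is a bookkeeping assembly. The real work sits in the two lemmas invoked above, \autoref{lem:suppref} and \autoref{lem:nomeq}, both of which rely on the adjunction $F\dashv U\colon \SuppSet\to\Nom$ of \autoref{lem:nomreflective} and on the characterization of strong nominal sets as the objects of the form $FY$; the construction of nominal $\Sigma$-terms as elements of $T_\Sigma(\Perm(\At)\times Y)$ and the verification that the resulting encoding is sound and complete for satisfaction is the delicate point of \autoref{lem:nomeq}. Once those are granted, the theorem follows as above.
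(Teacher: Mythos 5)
Your proposal is correct and follows essentially the same route as the paper: instantiate the parameters of Step~1, verify \autoref{asm:setting} via \autoref{cor:strongprops} and \autoref{lem:suppref}, reduce abstract equations to single quotients by \autoref{rem:singlequot}, and then combine the expressive equivalence of \autoref{lem:nomeq} with \autoref{thm:hspeq}. The paper's own Step~4 is precisely this assembly, with the substantive work likewise delegated to \autoref{lem:suppref} and \autoref{lem:nomeq}.
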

The above theorem is a special case of a result of Kurz and Petri\c{s}an \cite{KP10}, who in lieu of $\Sigma$-algebras considered algebras for an endofunctor on $\Nom$ with a suitable finitary presentation.

\subsection{Continuous $\Sigma$-algebras}

In this section, we derive the HSP theorem for continuous
$\Sigma$-algebras proved by Ad\'amek, Nelson, and Reiterman
\cite{adamek85}. Let us first recall some terminology. An
\emph{$\omega$-cpo} is a poset with a least element $\bot$ and suprema
of $\omega$-chains. A monotone map $h\colon A\to B$ between
$\omega$-cpos is \emph{continuous} if it preserves all suprema of
$\omega$-chains, and \emph{strict continuous} if it additionally
preserves $\bot$. We denote by $\wCPO$ the category of $\omega$-cpos
and strict continuous maps. Given an $\omega$-cpo $B$, a subset
$A\seq B$ of is called \emph{closed} if it is closed under
$\omega$-suprema, that is, for every $\omega$-chain
$a_0\leq a_1\leq a_2 \leq \cdots$ in $A$ one has
$\bigvee_{n<\omega}\, a_n\in A$. The \emph{closure} of a subset
$A\seq B$ is the least closed subset containing $A$, i.e.
$\ol A = \bigcap \{\, A'\seq B\;:\; \text{$A\seq A'$ closed}\}$. The
closure can be computed by transfinitely closing $A$ under
$\omega$-suprema. More precisely, one has $\ol A=\bigcup_i A_i$, where
$i$ ranges over all ordinal numbers and the sets $A_i\seq B$ are
inductively defined by
\begin{itemize}
\item $A_0=A$;
\item $A_{i} = \{\, \bigvee_{n<\omega} a_n\;:\; (a_n)_{n<\omega} \text{ $\omega$-chain in $A_{i-1}$} \, \}$ \quad if $i$ is an successor ordinal;
\item $A_i = \bigcup_{j<i} A_j$ \quad if $i$ is a limit ordinal.
\end{itemize}
Note that $A_{i}=A_{\omega_1}$ for all $i\geq \omega_1$, so the closure process terminates after $\omega_1$ steps. We say that $A\seq B$ is a \emph{dense} subset if $\ol A= B$. By extension, a continuous map $h\colon A\to B$ is called \emph{closed}/\emph{dense} if its image $h[A]\seq B$ is a closed/dense subset of $B$. The category $\wCPO$ has a factorization system given by dense continuous maps and closed continuous order-embeddings. The factorization of $h\colon A\to B$ is given by $h = (A\xra{e} \ol{h[A]} \xra{m} B)$, where $e$ is the codomain restriction of $h$ to the closure of its image $h[A]\seq B$, and $m$ is the embedding of the subspace $\ol{h[A]}$ into $B$.

A \emph{continuous $\Sigma$-algebra} is a $\Sigma$-algebra with an
$\omega$-cpo structure on its underlying set and continuous
operations. Note that the operations are not required to be strict.
We denote by $\wAlg{\Sigma}$ the category of continuous
$\Sigma$-algebras and strict continuous $\Sigma$-homomorphisms. 

\begin{lemma}
The factorization system of $\wCPO$ lifts to $\wAlg{\Sigma}$.
\end{lemma}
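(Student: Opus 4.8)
The plan is to lift the factorization system along the forgetful functor in the standard way. Declare $\E$ (resp.\ $\M$) in $\wAlg{\Sigma}$ to consist of those strict continuous $\Sigma$-homomorphisms whose underlying maps are dense continuous maps (resp.\ closed continuous order-embeddings) in $\wCPO$. Since the forgetful functor $\wAlg{\Sigma}\to\wCPO$ is faithful and preserves composition, the classes $\E$ and $\M$ automatically contain all isomorphisms and are closed under composition, and every $\M$-morphism is monic (its underlying map is injective). So the two things to establish are: (i) every morphism of $\wAlg{\Sigma}$ admits an $(\E,\M)$-factorization, and (ii) the diagonal fill-in property holds.

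For (i), let $f\colon A\to B$ be a morphism in $\wAlg{\Sigma}$, form the image $f[\under{A}]\seq\under{B}$ and let $C=\ol{f[\under{A}]}$ be its closure under $\omega$-suprema inside $\under{B}$. The key claim is that $C$ is a sub-$\Sigma$-algebra of $B$. First, $f[\under{A}]$ is itself a subalgebra (the set-theoretic image of a $\Sigma$-homomorphism) and contains $\bot_B=f(\bot_A)$ because $f$ is strict. Second, each $n$-ary operation $\sigma^B\colon\under{B}^n\to\under{B}$ is continuous, hence continuous in each argument separately (an $\omega$-chain in one coordinate with the others held fixed is an $\omega$-chain in $\under{B}^n$). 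A transfinite induction on the stages $B_i$ of the closure construction described just before the lemma then shows $\sigma^B(c_1,\dots,c_n)\in C$ for all $c_1,\dots,c_n\in C$: in a successor step one writes some coordinate $c_j$ as an $\omega$-supremum $\bigvee_k a_k$ of elements of an earlier stage, uses continuity of $\sigma^B$ in that coordinate to get $\sigma^B(\dots,c_j,\dots)=\bigvee_k\sigma^B(\dots,a_k,\dots)$, applies the induction hypothesis to each term, and concludes since $C$ is closed under $\omega$-suprema; limit stages and the bookkeeping over the $n$ coordinates (e.g.\ via a multiset ordering on the tuple of stages) are routine. Granting the claim, $C$ contains $\bot_B$ and is closed under $\omega$-suprema, hence is a sub-$\omega$-cpo of $B$ on which the restricted operations are continuous, i.e.\ $C$ is a continuous $\Sigma$-algebra; and $f$ factors as $e$ followed by $m$, where $e\colon A\epito C$ is $f$ with codomain restricted to $C$ and $m\colon C\monoto B$ is the inclusion. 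One readily checks $e$ and $m$ are strict continuous $\Sigma$-homomorphisms; by construction $e[\under{A}]=f[\under{A}]$ is dense in $C$, so $e\in\E$, and $m$ is a closed continuous order-embedding, so $m\in\M$.

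For (ii), consider a commutative square $g\o e=m\o f$ in $\wAlg{\Sigma}$ with $e\colon W\epito X$ in $\E$ and $m\colon Y\monoto Z$ in $\M$, and $f\colon W\to Y$, $g\colon X\to Z$. Applying the diagonal fill-in of $\wCPO$ to the underlying square yields a unique strict continuous map $d\colon X\to Y$ with $d\o e=f$ and $m\o d=g$; it remains only to check that $d$ is a $\Sigma$-homomorphism, after which it is automatically the required (and unique) diagonal in $\wAlg{\Sigma}$. For each $n$-ary $\sigma$, writing $\sigma_{(\dash)}$ for its interpretation and using that $g$ and $m$ are homomorphisms, $m\o d\o\sigma_X=g\o\sigma_X=\sigma_Z\o g^n=\sigma_Z\o(m\o d)^n=m\o\sigma_Y\o d^n$; since $m$ is monic this gives $d\o\sigma_X=\sigma_Y\o d^n$, as desired. (The same style of argument shows $\E$-morphisms are epic — dense maps are epic in $\wCPO$ by a closedness-of-equalizer argument — so the lifted factorization system is moreover proper.)

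The only genuinely substantial step is the key claim that the $\omega$-closure of a subalgebra of a continuous $\Sigma$-algebra is again a subalgebra; this is precisely where continuity of the $\Sigma$-operations enters, and the transfinite induction over the closure stages together with the interplay between the several arguments of an $n$-ary operation is the one delicate point. Everything else is the routine machinery of transporting a factorization system along a forgetful functor.
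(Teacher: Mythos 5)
Your proof is correct and follows essentially the same route as the paper: show by transfinite induction on the closure stages that the $\omega$-closure of a subalgebra is a subalgebra (this is the only substantive point, and continuity of the operations is exactly where it is used), then lift the factorization and obtain the diagonal from the one in $\wCPO$ using that $\M$-morphisms are injective. The only cosmetic difference is that the paper proves the key claim by writing \emph{all} $n$ arguments simultaneously as suprema of chains from the previous stage and invoking joint continuity of $\sigma\colon B^n\to B$, which avoids your coordinate-by-coordinate argument and the multiset bookkeeping.
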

\proof
\begin{enumerate}
\item For each cpo $B$ and each $\Sigma$-subalgebra $A\seq B$, the closure $\ol{A}\seq B$ forms a $\Sigma$-subalgebra. To see this, it suffices to show that each of the sets $A_i$ defined above is a subalgebra. For $i=0$, this holds by assumption since $A_0=A$. If $i$ is a limit ordinal, the claim is  clear by induction because directed unions of subalgebras are subalgebras. Thus suppose that $i$ is a successor ordinal, let $\sigma\in \Sigma$ be an $n$-ary operation symbol and $a_1,\ldots, a_n\in A_i$. Thus, for each $j=1,\ldots,n$ one has $a_j = \bigvee_{k<\omega} a^j_k$ for some chain $(a^j_k)_{k<\omega}$ in $A_{i-1}$. Since $\sigma\colon B^n\to B$ is continuous, we have that
\[ \sigma(a_1,\ldots,a_n) = \bigvee_{k<\omega}\sigma(a^1_k,\ldots,a^n_k)  \]
is an element on $A_i$, using that $\sigma(a^1_k,\ldots,a^n_k)\in A_{i-1}$ for all $k<\omega$ by induction.
\item Now let $h\colon A\to B$ be a morphism of continuous $\Sigma$-algebras. Its canonical factorization $A\to \ol{h[A]}\to B$ in $\wCPO$ is also one in $\wAlg{\Sigma}$ because $\ol{h[A]}$ is a $\Sigma$-subalgebra of $B$ by part (1). Moreover, given a commutative square $h\o e = m\o g$ in $\wAlg{\Sigma}$ with $e$ dense and $m$ a closed embedding, the unique diagonal fill-in $d$ in $\wCPO$ with $d\o e = g$ and $m\o d = h$ is also a $\Sigma$-homomorphism because $m$ and $h$ are $\Sigma$-homomorphisms and $m$ is injective.\qed
\end{enumerate}
\doendproof
The forgetful functor from the category $\wAlg{\Sigma}$ to $\Set$ has
a left adjoint mapping to each set $X$ the \emph{free continuous
  $\Sigma$-algebra} $T_{\Sigma} (X_\bot)$. The latter is carried by
the set of all finite or infinite $\Sigma$-trees with leaves labelled
in $X\cup \{\bot\}$ \cite{goguen77}. To establish the continuous HSP
theorem, we follow our four-step procedure:

\medskip\noindent\textbf{Step 1.}
Choose the following parameters:
\begin{itemize}
\item $\A=\A_0=\wAlg{\Sigma}$;
\item $\Lambda =$ all cardinal numbers;
\item $(\E,\M) = $ (dense morphisms, closed order-embeddings);
\item $\X =$ all free algebras $T_\Sigma(X_\bot)$ with $X\in \Set$;
\end{itemize}
Note that, in contrast to all applications discussed in the previous sections, the morphisms in $\E$ are not necessarily surjective. However, we have

\begin{lemma}
$\E_\X$ consists precisely of the surjective morphisms. 
\end{lemma}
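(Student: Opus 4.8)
The plan is to prove the two inclusions $\{\text{surjections}\}\seq\E_\X$ and $\E_\X\seq\{\text{surjections}\}$ separately and directly. Note that \autoref{rem:exlift} does not apply here, because in the present setting $\E$ is the class of \emph{dense} morphisms, and this is not of the form $\{\,e : Ue\in\E'\,\}$ for a class $\E'$ of maps between underlying sets. Write $U\colon\wAlg{\Sigma}\to\Set$ for the forgetful functor and $\eta_X\colon X\to UT_\Sigma(X_\bot)$ for the unit of the free--forgetful adjunction. The one tool I would use throughout is the universal property of the free continuous $\Sigma$-algebra on a \emph{set} $X$: morphisms $T_\Sigma(X_\bot)\to A$ in $\wAlg{\Sigma}$ correspond bijectively, via $f\mapsto Uf\circ\eta_X$, to functions $X\to UA$; in particular two parallel morphisms out of $T_\Sigma(X_\bot)$ agreeing on $X$ coincide.

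For "surjections $\seq\E_\X$" I would first note that a surjective morphism automatically lies in $\E$, since its image is the whole codomain, hence closed, hence dense. Then, given a surjection $e\colon A\epito B$ and an arbitrary morphism $h\colon T_\Sigma(X_\bot)\to B$ with $X\in\Set$, I would invoke the axiom of choice to lift the function $Uh\circ\eta_X\colon X\to UB$ along the surjection $Ue$ to a function $g_0\colon X\to UA$, extend $g_0$ by freeness to a morphism $g\colon T_\Sigma(X_\bot)\to A$, and deduce $e\circ g=h$ from the uniqueness part of the universal property, since both $e\circ g$ and $h$ restrict to $Uh\circ\eta_X$ on $X$. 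Hence every object of $\X$ is projective with respect to $e$, i.e. $e\in\E_\X$.

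For the converse, let $e\colon A\to B$ lie in $\E_\X$. I would test projectivity against the free algebra $T_\Sigma((UB)_\bot)\in\X$ and the canonical morphism $h\colon T_\Sigma((UB)_\bot)\to B$ extending the identity function $\id_{UB}$. Since $Uh\circ\eta_{UB}=\id_{UB}$, the underlying map $Uh$ is surjective; projectivity yields a morphism $g$ with $e\circ g=h$, so from $Uh=Ue\circ Ug$ it follows that $Ue$ is surjective. This is exactly what we wanted.

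I do not anticipate any serious obstacle. In contrast to the rest of this subsection, the lemma requires nothing about the density factorization, the transfinite closure construction, or the tree description of $T_\Sigma(X_\bot)$ --- only the universal property of free algebras on sets together with the axiom of choice. The single point worth stating carefully is the uniqueness half of that universal property, which is what forces $e\circ g$ to equal $h$ rather than merely to agree with it on generators.
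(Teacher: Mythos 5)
Your proof is correct: both inclusions are established soundly, and the points that need care --- that a surjection is automatically dense (so lies in $\E$ at all), that the lift of $Uh\circ\eta_X$ along $Ue$ uses choice, and that the uniqueness half of the universal property is what forces $e\circ g=h$ rather than mere agreement on generators --- are all in place. The paper reaches the conclusion by the same underlying mechanism (hom-sets out of free algebras are computed in $\Set$, where surjections split), but packages it as the usual two-stage reduction: first to the empty signature, i.e.\ to $\wCPO$, and then via the adjunction $(-)_\bot\dashv U\colon\wCPO\to\Set$. Your objection to \autoref{rem:exlift} is right for a one-step reduction all the way to $\Set$ --- density is indeed not detected by the underlying function alone --- but the remark does apply with $\B=\wCPO$, $\E'$ the dense $\wCPO$-morphisms and $\X'=\{\,X_\bot : X\in\Set\,\}$, since a morphism of continuous algebras is dense iff its underlying $\wCPO$-morphism is (the closure of a subalgebra is a subalgebra, as shown earlier in that subsection). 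Your direct argument is self-contained and avoids having to justify that intermediate step; the paper's route buys brevity by reusing the reduction machinery already set up for the other examples.
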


\begin{proof}
  As usual (cf.~the proofs of \autoref{lem:cref} and
  \autoref{lem:suppref}), it suffices to consider the case of an empty
  signature, i.e. where $\A=\wCPO$. To this end, just
  observe that for every set $X$ and every $\omega$-cpo $A$ there is a
  bijective correspondence between maps $X \to A$ and strict
  continuous maps $X_\bot \to A$. Thus, the statement of the lemma 
  follows from the fact that in $\Set$, a map $e$ is
  surjective iff every set $X$ is projective w.r.t. $e$.
\end{proof}
We conclude that our \autoref{asm:setting} are satisfied by our
data. For~\ref{A1}, use that products $\wAlg{\Sigma}$ are formed on the
level of underlying sets, with $\Sigma$-algebra structure and partial
order computed pointwise. Condition~\ref{A2} is trivial. For~\ref{A3}, let
$A\in \wAlg{\Sigma}$, and choose a surjective map $e\colon X\epito A$
for some set $X$. Then the unique extension
$\ext e\colon T_\Sigma(X_\bot)\epito A$ to a nonexpansive map is also surjective. Moreover, $\ext e\in \E_\X$ by the above lemma and
$T_\Sigma(X_\bot)\in \X$ by definition of $\X$.

\medskip\noindent\textbf{Step 2/3.} By \autoref{rem:singlequot}, in our current setting an equation can be presented as a single quotient $e\colon T_\Sigma{X_\bot}\to E$ in $\wAlg{\Sigma}$. The corresponding syntactic concept involves terms endowed with formal join operations. Given a set $X$ of variables, put $S_\Sigma(X) = \bigcup_i S_{\Sigma,i}(X)$ where $i$ ranges over all ordinal numbers and $S_{\Sigma,i}(X)$ is defined by transfinite induction as follows:
\begin{itemize}
\item  $S_{\Sigma,0}(X) = $ set of all $\Sigma$-terms in the variables $X_\bot = X\cup \{ \bot\}$;
\item $S_{\Sigma,i}(X) = \bigcup_{j<i} S_{\Sigma,j}(X)$ for $i$ a limit ordinal.
\item $S_{\Sigma,i}(X) = \{\, \bigvee_{k<\omega} t_k\;:\; t_k\in
  S_{\Sigma,{i-1}}(X) \text{ for all $k<\omega$} \,\}$ for $i$ a successor ordinal.
\end{itemize}
Note that $S_{\Sigma,i}(X)= S_{\Sigma,\omega_1}(X)$ for all $i\geq \omega_1$, so $S_\Sigma(X)$ is a set. Every map $h\colon X\to A$ into a continuous $\Sigma$-algebra $A$ extends to a \emph{partial} map $\wh h\colon S_{\Sigma}(X)\to A$, defined by structural induction as follows:
\begin{itemize}
\item For $t\in S_{\Sigma,0}(X)$, let $\wh{h}(t)$ be the evaluation of the term $t$ in $A$;
\item If $t=\bigvee_{k<\omega} t_k$, all the values $\wh{h}(t_k)$ are defined, and $(\wh{h}(t_k))_{k<\omega}$ forms a $\omega$-chain in $A$, put 
  \[
    \wh{h}(t) = \bigvee_{k<\omega} \wh{h}(t_k),
    \quad\text{otherwise $\wh{h}(t)$ is undefined.}
  \]
\end{itemize}
The above definition of $S_\Sigma(X)$ and $\wh h$ is due to Ad\'amek
et al.~\cite{adamek85}.

\begin{lemma}\label{lem:hextprops} Let $e\colon A\to B$ be a morphism of continuous $\Sigma$-algebras.
\begin{enumerate}
\item\label{lem:hextprops:1} For every map $h\colon X\to A$ one has
  $e\o \wh{h} = \wh {e\o h}$. More precisely, for all
  $t\in S_\Sigma(X)$ such that $\wh h(t)$ is defined, the value
  $\wh{e\o h}(t)$ is defined and $e\o \wh{h}(t) = \wh {e\o
    h}(t)$. If moreover $e$ is an order-embedding, then $\wh h(t)$ is
  defined iff $\wh{e \cdot h}(t)$ is defined. 
  
\item The image of $\wh{e}\colon S_\Sigma(A)\to B$ is equal to the
  closure $\ol{e[B]}\seq B$.
\end{enumerate}
\end{lemma}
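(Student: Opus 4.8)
The plan is to prove both parts by transfinite induction along the stratification $S_\Sigma(X)=\bigcup_i S_{\Sigma,i}(X)$ (equivalently, by structural induction on terms with formal joins), throughout exploiting that a morphism $e$ of continuous $\Sigma$-algebras is by definition a strict continuous $\Sigma$-homomorphism: it preserves $\bot$, all $\Sigma$-operations and suprema of $\omega$-chains, and is in particular monotone.

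For part (1), I would first treat the base stage $t\in S_{\Sigma,0}(X)$ by a routine induction on the ordinary term structure of $t$: evaluation of $\Sigma$-terms commutes with $\Sigma$-homomorphisms and $e$ is strict, so $\wh h(t)$ and $\wh{e\o h}(t)$ are both defined and $e(\wh h(t))=\wh{e\o h}(t)$. For the successor stage, write $t=\bigvee_{k<\omega}t_k$ with $t_k\in S_{\Sigma,i-1}(X)$ and suppose $\wh h(t)$ is defined; then each $\wh h(t_k)$ is defined and $(\wh h(t_k))_{k<\omega}$ is an $\omega$-chain in $A$, so by the induction hypothesis each $\wh{e\o h}(t_k)=e(\wh h(t_k))$ is defined, and by monotonicity of $e$ these form an $\omega$-chain in $B$; hence $\wh{e\o h}(t)$ is defined and equals $\bigvee_k e(\wh h(t_k))=e\bigl(\bigvee_k\wh h(t_k)\bigr)=e(\wh h(t))$, the middle equality by continuity of $e$. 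Limit stages are vacuous. For the refinement when $e$ is moreover an order-embedding, I would establish the missing implication ``$\wh{e\o h}(t)$ defined $\Ra$ $\wh h(t)$ defined'' by the same induction, now using the equality just proved: in the successor stage, if $(\wh{e\o h}(t_k))_{k<\omega}=(e(\wh h(t_k)))_{k<\omega}$ is an $\omega$-chain in $B$, then because $e$ reflects the order, $(\wh h(t_k))_{k<\omega}$ is an $\omega$-chain in $A$, whose supremum exists since $A$ is an $\omega$-cpo.

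For part (2) --- reading the codomain subset in the statement as $\ol{e[A]}$ --- I would show the two inclusions directly. For ``$\subseteq$'', I argue by induction on the stage of $t$ that $\wh e(t)\in\ol{e[A]}$ whenever it is defined: at the base stage $\wh e(t)$ is the evaluation in $B$ of the ordinary term $t\in S_{\Sigma,0}(A)$ with each variable $a$ interpreted as $e(a)$ and $\bot$ as $\bot_B=e(\bot_A)$, which lands in the $\Sigma$-subalgebra $e[A]\seq B$; at a successor stage $\wh e(\bigvee_k t_k)=\bigvee_k\wh e(t_k)$ is an $\omega$-supremum of elements of $\ol{e[A]}$ by the induction hypothesis, and $\ol{e[A]}$ is closed. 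For ``$\supseteq$'', note that $e[A]$ lies in the image since each $a\in A$ is a variable in $S_{\Sigma,0}(A)$ with $\wh e(a)=e(a)$, and the image of $\wh e$ is itself a closed subset of $B$: given an $\omega$-chain $(b_k)_{k<\omega}$ in it with $b_k=\wh e(t_k)$, all the $t_k$ lie in a common stage $S_{\Sigma,j}(A)$ for some $j<\omega_1$ (a countable supremum of countable ordinals), so $\bigvee_k t_k\in S_{\Sigma,j+1}(A)\seq S_\Sigma(A)$ and $\wh e(\bigvee_k t_k)=\bigvee_k b_k$. Hence the image of $\wh e$ is a closed subset of $B$ containing $e[A]$, so it contains $\ol{e[A]}$, and equality follows.

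The hard part will not be any single step but the bookkeeping forced by the partiality of $h\mapsto\wh h$ interacting with the transfinite stratification: I must carry ``definedness'' and the asserted equality in lockstep through the induction, and I must verify that the formal join of countably many elements of $S_\Sigma(X)$ is again an element of $S_\Sigma(X)$ --- which is precisely where regularity of $\omega_1$ enters, both here and in the cumulativity of the hierarchy $S_{\Sigma,i}(X)$. Everything else reduces to the defining clauses of a strict continuous $\Sigma$-homomorphism.
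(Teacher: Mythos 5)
Your proof is correct and takes the same route as the paper, whose entire proof of this lemma is the single line ``Obvious by structural induction'': you have simply carried out that transfinite/structural induction in full, tracking definedness alongside the equation in part (1) and using the regularity of $\omega_1$ to see that the image of $\wh{e}$ is closed in part (2). You also correctly read the typo $\ol{e[B]}$ in the statement as $\ol{e[A]}$.
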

\begin{proof}
  Obvious by structural induction.
\end{proof}

\begin{lemma}[Homomorphism Theorem]\label{L:homcont}
Let $e\colon A\to B$ and $h\colon A\to C$ be morphisms in $\wAlg{\Sigma}$ with $e$ dense. Then the following are equivalent:
\begin{enumerate}
\item\label{L:homcont:1} There exists a morphism $g\colon B\to C$ with $g\o e = h$.
\item\label{L:homcont:2} For every pair of terms $t,t'\in S_\Sigma(A)$, if both
  $\wh{e}(t)$ and $\wh{e}(t')$ are defined and
  $\wh{e}(t)\leq_A \wh{e}(t')$, then also $\wh{h}(t)$ and $\wh{h}(t')$
  are defined and $\wh{h}(t)\leq_B \wh{h}(t')$.
\end{enumerate} 
\end{lemma}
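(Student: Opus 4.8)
\textbf{Proof plan for \autoref{L:homcont}.}

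The statement is a ``homomorphism theorem'' for continuous $\Sigma$-algebras in the chosen factorization system, so the plan is to mimic the familiar pattern: the implication \ref{L:homcont:1}$\To$\ref{L:homcont:2} is the easy direction, and \ref{L:homcont:2}$\To$\ref{L:homcont:1} requires actually \emph{constructing} the mediating morphism $g$ out of the combinatorial data supplied by the extended (partial) evaluation maps $\wh{e}$ and $\wh h$.

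For \ref{L:homcont:1}$\To$\ref{L:homcont:2}: assume $g\o e=h$ with $g$ a strict continuous $\Sigma$-homomorphism. Suppose $t,t'\in S_\Sigma(A)$ with $\wh e(t)$, $\wh e(t')$ defined and $\wh e(t)\leq_B\wh e(t')$. By \autoref{lem:hextprops}\ref{lem:hextprops:1} applied to $g$, the values $\wh{g\o e}(t)=\wh h(t)$ and $\wh{g\o e}(t')=\wh h(t')$ are defined and equal to $g(\wh e(t))$ and $g(\wh e(t'))$ respectively; since $g$ is monotone, $\wh e(t)\leq_B\wh e(t')$ gives $\wh h(t)=g(\wh e(t))\leq_C g(\wh e(t'))=\wh h(t')$. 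That is all that is needed here.

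For \ref{L:homcont:2}$\To$\ref{L:homcont:1}: this is the main work. Since $e\colon A\to B$ is dense, \autoref{lem:hextprops}\ref{lem:hextprops} (part 2, with the roles of the letters matched to $\wh e\colon S_\Sigma(A)\to B$) tells us that $\wh e$ is surjective onto $\ol{e[A]}=B$; so every $b\in B$ is $\wh e(t)$ for some $t\in S_\Sigma(A)$. The plan is to \emph{define} $g(b):=\wh h(t)$ for any such choice of $t$. Condition~\ref{L:homcont:2}, read with both inequalities $\wh e(t)\leq\wh e(t')$ and $\wh e(t')\leq\wh e(t)$ (i.e.\ $\wh e(t)=\wh e(t')$), shows that $\wh h(t)$ is defined and independent of the chosen representative $t$, so $g$ is a well-defined total map $B\to C$; the one-sided form of \ref{L:homcont:2} then shows $g$ is monotone. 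It remains to check that $g$ is (i) a $\Sigma$-homomorphism, (ii) strict, and (iii) continuous, and that (iv) $g\o e=h$. For (iv): $e(a)=\wh e(\eta(a))$ where $\eta(a)\in S_{\Sigma,0}(A)$ is the variable-term, and $\wh h(\eta(a))=h(a)$ by definition of $\wh h$ on $S_{\Sigma,0}$; hence $g(e(a))=h(a)$. For (i): given an $n$-ary $\sigma$ and $b_i=\wh e(t_i)$, the term $\sigma(t_1,\dots,t_n)\in S_\Sigma(A)$ satisfies $\wh e(\sigma(t_1,\dots,t_n))=\sigma^B(b_1,\dots,b_n)$ (by the inductive clause for $\wh e$ on $S_{\Sigma,0}$ together with continuity of $\sigma^B$ when some $t_i$ lie in higher levels), and likewise $\wh h(\sigma(t_1,\dots,t_n))=\sigma^C(\wh h(t_1),\dots,\wh h(t_n))$; so $g(\sigma^B(\bar b))=\sigma^C(g(b_1),\dots,g(b_n))$. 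For (ii): $\bot_B=\wh e(\bot)$ with $\bot\in S_{\Sigma,0}(A)$ and $\wh h(\bot)=\bot_C$, so $g(\bot_B)=\bot_C$. For (iii): given an $\omega$-chain $b_0\leq b_1\leq\cdots$ in $B$ with supremum $b$, pick representatives $t_k$ with $\wh e(t_k)=b_k$; then $t:=\bigvee_{k<\omega}t_k\in S_\Sigma(A)$ has $\wh e(t)=\bigvee_k b_k=b$ (the $\omega$-chain condition is met since $\wh e(t_k)=b_k$ is increasing) and $\wh h(t)=\bigvee_k\wh h(t_k)=\bigvee_k g(b_k)$, whence $g(b)=\bigvee_k g(b_k)$.

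The step I expect to be the main obstacle is verifying well-definedness and the homomorphism/continuity clauses \emph{uniformly} across the transfinite hierarchy $S_\Sigma(A)=\bigcup_i S_{\Sigma,i}(A)$: one must be careful that the partial map $\wh e$ is defined on enough terms (which is exactly the content of \autoref{lem:hextprops}\ref{lem:hextprops}, part 2, guaranteeing surjectivity onto the closure), and that when combining representatives $t_i$ that live at different ordinal levels, the composite terms $\sigma(t_1,\dots,t_n)$ and $\bigvee_k t_k$ still lie in $S_\Sigma(A)$ and have their $\wh e$- and $\wh h$-values computed by the expected structural clauses. Once the interplay between \ref{L:homcont:2} and \autoref{lem:hextprops} is set up cleanly, each individual verification is a routine structural induction.
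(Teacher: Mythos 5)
Your overall strategy coincides with the paper's: the implication \ref{L:homcont:1}$\To$\ref{L:homcont:2} via \autoref{lem:hextprops}, and for the converse the definition $g(b):=\wh h(t)$ for any $t$ with $\wh e(t)=b$ (using density and \autoref{lem:hextprops} for surjectivity of $\wh e$ onto $B$), with well-definedness and monotonicity extracted from condition~\ref{L:homcont:2}, $g\o e=h$ from variable terms, and continuity from formal joins of representatives. All of that matches the paper's argument.

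There is, however, one step that would fail as written: the verification that $g$ is a $\Sigma$-homomorphism. You form ``the term $\sigma(t_1,\dots,t_n)\in S_\Sigma(A)$'' for arbitrary representatives $t_i\in S_\Sigma(A)$. But $S_\Sigma(A)$ is \emph{not} closed under applying operation symbols: $S_{\Sigma,0}(A)$ consists of ordinary $\Sigma$-terms over $A_\bot$, and the successor stages only add formal $\omega$-joins of previously constructed terms. So if some $t_i$ is itself a formal join (i.e.\ lies above level $0$), the expression $\sigma(t_1,\dots,t_n)$ is simply not an element of the syntax, and the claimed identity $\wh e(\sigma(t_1,\dots,t_n))=\sigma^B(b_1,\dots,b_n)$ has no meaning. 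You flag this yourself as ``the main obstacle'' and gesture at continuity of $\sigma^B$, which is indeed the right ingredient, but the proposal does not actually carry it out. The paper's proof replaces this step by a transfinite induction on the least ordinal $j$ with $t_1,\dots,t_n\in S_{\Sigma,j}(A)$: in the base case $j=0$ the $t_i$ are genuine $\Sigma$-terms, so one evaluates them to elements $a_i\in A$ via the canonical map $T_\Sigma(A_\bot)\to A$ and uses that $e$ and $h$ are $\Sigma$-homomorphisms together with $g\o e=h$; in the successor case, where say $t_1=\bigvee_k s_k$, one never forms $\sigma(t_1,\dots,t_n)$ but instead uses continuity of $\sigma^B$ and $\sigma^C$, the already-established preservation of $\omega$-suprema by $g$, and the induction hypothesis applied to $(s_k,t_2,\dots,t_n)$. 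Note also that this forces the order of verification: continuity of $g$ must be proved \emph{before} the homomorphism property, since the induction step uses it. With that induction supplied, the rest of your outline goes through.
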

\proof
  \ref{L:homcont:1} $\To$ \ref{L:homcont:2} follows immediately from the first
  part of the previous lemma. For the converse, assume
  that~\ref{L:homcont:2} holds and let $b\in B$. Since $e$ is dense, we
  have $\ol{e[A]}=B$, so by the previous lemma, there exists
  $t\in S_\Sigma(A)$ with $\wh{e}(t)=b$. Put $g(b) :=
  \wh{h}(t)$. By~\ref{L:homcont:2}, this gives a well-defined monotone map
  $g\colon B\to C$ with $g\o e = h$. To see that $g$ preserves
  $\omega$-suprema, let $(b_k)_{k<\omega}$ be an $\omega$-chain in
  $B$, and choose $t_k\in S_\Sigma(A)$ with $\wh{e}(t_k) = b_k$ for
  all $k<\omega$. Then the value $\wh{e}(\bigvee_{k<\omega} t_k)$ is
  defined, so by~\ref{L:homcont:2} the value
  $\wh{h}(\bigvee_{k<\omega} t_k)$ is also defined. This implies
  \[
    g(\bigvee_k b_k) = g(\wh{e}(\bigvee_k t_k)) = \wh{h}(\bigvee_k
    t_k) = \bigvee_k \wh{h} (t_k) = \bigvee_k g \cdot \wh e(t_k)
    = \bigvee_k g(b_k).
  \]
  To see that $g$ is a $\Sigma$-homomorphism, let $\sigma\in \Sigma$
  be $n$-ary and $b_1,\ldots, b_n\in B$. Choose $t_i\in S_\Sigma(A)$
  with $\wh{e}(t_i)=b_i$. Let $j$ be the least ordinal number such that  $t_i \in S_{\Sigma,j}(A)$ for all $i=1,\ldots, n$. If $j = 0$,
  then all $t_i$ lie in $S_{\Sigma,0}(A) \seq T_\Sigma(A_\bot)$. Let
  $q: T_\Sigma(A_\bot) \to A$ be the extension to a continuous
  $\Sigma$-homomorphism of the identity map on $A$. Let $a_i =
  q(t_i)$. Then we clearly have
  \[
    b_i = \wh e(t_i) = e \cdot q(t_i) = e(a_i).
  \]
  Moreover we obtain
  \begin{align*}
    g \cdot \sigma^B(b_1, \ldots, b_n) &= g \cdot \sigma^B(e(a_1),
    \ldots, e(a_n)) \\
    &= g \cdot e (\sigma^A(a_1, \ldots, a_n)) \\
    &= h(\sigma^A(a_1, \ldots, a_n)) \\
    &= \sigma^C(h(a_1), \ldots, h(a_n))\\
    &= \sigma^C(g \cdot e(a_1), \ldots, g \cdot e(a_n))\\
    &= \sigma^C(g(b_1), \ldots, g(b_n)).
  \end{align*}

  For the induction step assume that $t_i = \bigvee_k s_k$ for some
  $i$. Wlog, we assume that $i = 1$. Then we compute
  \begin{align*}
    g(\sigma^B(b_1, \ldots, b_n)) &= g(\sigma^B(\wh e(t_1), \ldots,
    \wh e(t_n))\\
    &=g (\sigma^B(\bigvee_k \wh e(s_k), \wh e(t_2),\ldots, \wh e
    (t_n))\\
    &= \bigvee_k g(\sigma^B(\wh e(s_k), \wh e(t_2), \ldots, \wh e(t_n))) \\
    &= \bigvee_k \sigma^C(g \cdot \wh e(s_k), g\cdot \wh e(t_2),
    \ldots, g\cdot \wh e(t_n))\\
    &= \bigvee_k \sigma^C(\wh h(s_k), \wh h(t_2), \ldots, \wh h(t_n)) \\
    &= \sigma^C(\bigvee_k \wh h(s_k), \wh h(t_2), \ldots, \wh h(t_n))\\
    &= \sigma^C(\wh h(t_1), \wh h(t_2), \ldots, \wh h(t_n))\\
    &= \sigma^C(g\cdot \wh e(t_1), \ldots, g\cdot \wh e(t_n))\\
    &= \sigma^C(g(b_1), \ldots, g(b_n)).\tag*{\qed}
  \end{align*}
\doendproof

\begin{rem}\label{rem:homtheoremgen}
  If $A_0\seq A$ is a set of generators of the continuous
  $\Sigma$-algebra $A$, i.e. $A$ is the closure of $A_0$ under
  $\Sigma$-operations and $\omega$-suprema, then it suffices to check
  condition~\ref{L:homcont:2} for terms $t,t'\in S_\Sigma(A_0)$. More
  precisely, let $e_0: A_0 \to B$ and $h_0: A_0 \to C$ be the
  restrictions of $e$ and $h$. Then the condition~\ref{L:homcont:2}
  holds for $e$ and $h$ if it holds for $e_0$ and $h_0$.  
\end{rem} 

\begin{defn}
  A \emph{continuous inequality} over a set $X$ of variables is a pair
  of terms $s,t$ in $S_\Sigma(X)$, denoted as $s\leq t$.  A continuous
  $\Sigma$-algebra $A$ \emph{satisfies} the inequality $s\leq t$ if
  for every map $h\colon X\to A$, both $\wh{h}(s)$ and $\wh{h}(t)$ are
  defined and one has $\wh{h}(s) \leq \wh{h}(t)$.
\end{defn}

\begin{lemma}
  Equations and continuous inequalities are expressively equivalent.
\end{lemma}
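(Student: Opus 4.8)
The goal is to prove that equations $e\colon T_\Sigma(X_\bot)\epito E$ in $\wAlg{\Sigma}$ and continuous inequalities $s\leq t$ over a set $X$ of variables are mutually expressible. This follows the same pattern as the classical and ordered cases (Sections~\ref{S:birkhoff} and \ref{sec:app:bloom}), using the Homomorphism Theorem (\autoref{L:homcont}) as the continuous-algebra analogue of the exactness property, so my plan is to set up the two directions explicitly and reduce each to that theorem.

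\medskip\noindent\textbf{From equations to inequalities.} Given an equation $e\colon T_\Sigma(X_\bot)\epito E$, I would associate to it the set of all continuous inequalities $s\leq t$ over $X$ (with $s,t\in S_\Sigma(X)$) such that $\wh{i}(s)$ and $\wh{i}(t)$ are defined and $\wh{i}(s)\leq_E\wh{i}(t)$, where $i\colon X\to T_\Sigma(X_\bot)$ is the insertion of generators followed by $e$; more uniformly, one may take the inequalities indexed by pairs of terms $s,t\in S_\Sigma(T_\Sigma(X_\bot))$ with $\wh e(s)\leq_E\wh e(t)$, exactly as in \eqref{eq:cbasic}. The claim is that a continuous $\Sigma$-algebra $A$ satisfies $e$ iff it satisfies all these inequalities. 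For a map $h\colon X\to A$, let $\ext h\colon T_\Sigma(X_\bot)\to A$ be its unique strict-continuous extension; then $A\models e$ means $\ext h$ factorizes through $e$ for every $h$, which by \autoref{L:homcont} (and \autoref{rem:homtheoremgen}, using $X_\bot$ as generators of $T_\Sigma(X_\bot)$) holds iff for all $s,t\in S_\Sigma(X)$ with $\wh i(s)\leq_E\wh i(t)$ both $\wh h(s),\wh h(t)$ are defined and $\wh h(s)\leq_A\wh h(t)$ — i.e.\ iff $A$ satisfies all the associated inequalities. The compatibility $\ext h\o\wh{(\dash)}=\wh{\ext h\o(\dash)}$ needed here is \autoref{lem:hextprops}\ref{lem:hextprops:1}.

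\medskip\noindent\textbf{From inequalities to equations.} Conversely, given a continuous inequality $s\leq t$ over $X$, I would form the least \emph{closed} stable preorder $\preceq$ on $T_\Sigma(X_\bot)$ that relates the two terms $\wh{\eta}(s)$ and $\wh{\eta}(t)$, where $\eta\colon X\to T_\Sigma(X_\bot)$ is the generator insertion (note $\wh\eta(s),\wh\eta(t)$ need not be defined in $T_\Sigma(X_\bot)$ itself, so the construction should instead quotient by the least closed congruence forcing the relevant joins to exist and be ordered appropriately — this is the analogue of ``smallest stable preorder'' in \autoref{sec:app:bloom}, but over $\wAlg{\Sigma}$ with its dense/closed-embedding factorization). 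Concretely: take the quotient $e\colon T_\Sigma(X_\bot)\epito E$ obtained as the (dense, closed-embedding)-image of $T_\Sigma(X_\bot)$ under the universal map into the free continuous algebra presented by the single relation $s\leq t$; equivalently, $E$ is the largest quotient of $T_\Sigma(X_\bot)$ on which $s\leq t$ holds, which exists because the class of quotients on which a fixed inequality holds is closed under the relevant joins. Then a routine application of \autoref{L:homcont} shows $A\models e$ iff $A\models (s\leq t)$: an arbitrary $h\colon X\to A$ extends to $\ext h$, which factorizes through $e$ exactly when $\ext h$ respects the generating relation, i.e.\ when $\wh h(s)\leq_A\wh h(t)$ with both sides defined.

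\medskip\noindent\textbf{Main obstacle.} The delicate point — and the one I would spend the most care on — is the second direction: unlike the discrete or ordered cases, the terms $s,t\in S_\Sigma(X)$ involve formal $\omega$-joins, and $\wh h(s)$ is only a \emph{partial} operation, so ``the least quotient satisfying $s\leq t$'' must be constructed so that the formal joins in $s$ and $t$ become genuine $\omega$-suprema in $E$ and the requisite chains are monotone. One must check this quotient is well-defined (the intersection/supremum of the appropriate family of closed stable preorders is again one, using that $\E_\X$ = surjections and that closure under $\omega$-suprema is preserved), lands in $\wAlg{\Sigma}$ (which is where the lifting of the factorization system, already established, is used), and that $E\in\X$-generated so that \autoref{asm:setting}\ref{A2} applies. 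Once this is in place, \autoref{lem:hextprops} and \autoref{L:homcont} do the rest, and combining the two directions with \autoref{thm:hspeq} and the fact that $\E_\X$ = surjections yields the continuous HSP theorem of Ad\'amek, Nelson, and Reiterman \cite{adamek85}: a class of continuous $\Sigma$-algebras is closed under homomorphic images, subalgebras, and products iff it is axiomatizable by continuous inequalities.
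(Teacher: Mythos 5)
Your first direction coincides with the paper's proof: associate to $e$ the set $\Gamma_e$ of inequalities $s\leq t$ with $\wh{e_0}(s),\wh{e_0}(t)$ defined and $\wh{e_0}(s)\leq\wh{e_0}(t)$, and reduce equivalence to the Homomorphism Theorem (\autoref{L:homcont}) together with \autoref{rem:homtheoremgen} and \autoref{lem:hextprops}\ref{lem:hextprops:1}. That part is fine.

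The gap is in the second direction. You correctly identify that the whole difficulty is the existence of a least quotient (in the quotient ordering) of $T_\Sigma(X_\bot)$ validating $s\leq t$, but you then assert it: ``which exists because the class of quotients on which a fixed inequality holds is closed under the relevant joins.'' That closure claim is exactly what has to be proved, and your surrounding apparatus (least \emph{closed stable preorder}, a congruence ``forcing the relevant joins to exist'') is not set up anywhere for $\wAlg{\Sigma}$ -- the paper deliberately does not establish a quotients-vs-relations exactness correspondence in this setting, precisely because $\wh{h}$ is partial and $\E$ consists of dense rather than surjective maps. The paper's resolution is concrete: take the family of \emph{all} quotients $e_i\colon T_\Sigma(X_\bot)\epito E_i$ validating $s\leq t$, factorize $\langle e_i\rangle\colon T_\Sigma(X_\bot)\to\prod_i E_i$ as a dense map $e$ followed by a closed order-embedding $m$, and then use the second clause of \autoref{lem:hextprops}\ref{lem:hextprops:1} -- an order-embedding \emph{reflects} definedness of $\wh{(\dash)}$ as well as the order -- to conclude that $\wh{e_0}(s),\wh{e_0}(t)$ are defined and ordered, i.e.\ that the subdirect product again validates $s\leq t$ and is therefore the least such quotient. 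That reflection property is the missing ingredient in your argument; once you have it, your worry about formal joins ``becoming genuine suprema in $E$'' dissolves, since definedness of $\wh{e_0}$ at $s$ and $t$ is inherited from the product and no separate congruence-generation step is needed. Also note that in the ($\To$) direction one should reduce to the case where $h$ is itself dense (so that $h$ is literally one of the $e_i$), rather than arguing that $\ext h$ ``respects the generating relation''; the latter phrasing presupposes the relational presentation you have not constructed.
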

\proof
In the following, for any equation $e\colon T_\Sigma(X_\bot)\to E$ we denote by $e_0\colon X\to E$ its restriction to the generators.
\begin{enumerate}
\item Given an equation $e\colon T_\Sigma(X_\bot)\epito E$,
  define $\Gamma_e$ to be the set of continuous inequalities $s\leq t$
  over $X$ such that both $\wh{e_0}(s)$ and $\wh{e_0}(t)$ are defined
  and $\wh{e_0}(s)\leq \wh{e_0}(t)$. Then a continuous
  $\Sigma$-algebra $A$ satisfies the equation $e$ iff it satisfies all
  the continuous inequalities in $\Gamma_e$:

  \medskip\noindent ($\To$) Suppose that $A$ satisfies $e$, let
  $s\leq t$ be a continuous inequality in $\Gamma_e$, and let
  $h\colon X\to A$. By the universal property of $T_\Sigma(X_\bot)$,
  the map $h$ extends uniquely to a continuous $\Sigma$-homomorphism
  $\ol h\colon T_\Sigma(X_\bot)\to A$. Since $A$ satisfies $e$, there
  exists a continuous $\Sigma$-homomorphism $k\colon E\to A$ with
  $k\o e = \ol h$, which implies that $h = k\o e_0$. Now suppose that
  $\wh{e_0}(s)$ and $\wh{e_0}(t)$ are defined and
  $\wh{e_0}(s)\leq \wh{e_0}(t)$. By \autoref{lem:hextprops}, it
  follows that $\wh{h}(s)$ and $\wh{h}(t)$ are defined and
  \[ \wh{h}(s) = k\o \wh{e_0}(s)\leq k\o \wh{e_0}(t) = \wh{h}(t). \]
  Thus, $A$ satisfies $s\leq t$. 

  \medskip\noindent ($\Leftarrow$) Suppose that $A$ satisfies every
  inequality in $\Gamma_e$, and let $h\colon T_\Sigma(X_\bot) \to A$
  be a continuous $\Sigma$-homomorphism and $h_0: X \to A$ its
  restriction to $X$. To show that $h$ factorizes through $e$, we
  apply the homomorphism theorem (\autoref{L:homcont}). Since the continuous
  $\Sigma$-algebra $T_\Sigma(X_\bot)$ is generated by the subset $X$,
  it suffices to verify condition~\ref{L:homcont:2} of the theorem for
  all terms $t,t'\in S_\Sigma(X)$ (see
  \autoref{rem:homtheoremgen}). Thus suppose that $\wh{e}(t)$ and
  $\wh{e}(t')$ are defined and $\wh{e}(t)\leq \wh{e}(t')$. This
  means that $t\leq t'$ lies in $\Gamma_e$. Since $A$ satisfies all
  inequalities in $\Gamma_e$, it follows that $\wh{h_0}(t)$ and
  $\wh{h_0}(t')$ are defined and $\wh{h_0}(t)\leq\wh{h_0}(t')$. The
  homomorphism theorem yields the desired factorization of $h$ through
  $e$. Thus $A$ satisfies $e$.
  
\item Given a continuous inequality $s\leq t$ over the set $X$, let
  $e_i\colon T_\Sigma(X_\bot)\epito E_i$ ($i\in I$) be the family of all
  quotients of $T_\Sigma(X_\bot)$ such that $\wh{e_{i,0}}(s)$ and
  $\wh{e_{i,0}}(t)$ are defined and
  $\wh{e_{i,0}}(s)\leq \wh{e_{i,0}}(t)$, where
  $e_{i,0}\colon X\to E_i$ denotes the restriction of $e_i$ to
  $X$. Form the subdirect product $e\colon T_\Sigma(X_\bot)\epito E$ of
  the $e_i$'s, obtained by factorizing the continuous
  $\Sigma$-homomorphism $\langle e_i \rangle\colon T_\Sigma(X_\bot)\to
  \prod_i E_i$ into a dense morphism $e\colon T_\Sigma(X_\bot) \epito E$
  followed by an order-embedding $m\colon E \to \prod_i E_i$. By
  \autoref{lem:hextprops}\ref{lem:hextprops:1} and since $m$ is an order-embedding, it
  follows that $\wh{e_0}(s)$ and $\wh{e_0}(t)$ are defined and
  $\wh{e_0}(s)\leq \wh{e_0}(t)$,
  where $e_0\colon X \to E$ is the restriction of $e$ to $X$.
  In other words, $e$ is the least
  quotient among the $e_i$'s. We claim that a continuous
  $\Sigma$-algebra $A$ satisfies $s\leq t$ iff it satisfies $e$.

  \medskip\noindent($\To$) Suppose that $A$ satisfies $s\leq t$ and
  let $h\colon T_\Sigma(X_\bot)\to A$. To show that $h$ factorizes
  through $e$, we may assume wlog.~that $h$ is dense, i.e. a quotient. By
  assumption, we have that $\wh{h}(s)$ and $\wh{h}(t)$ are defined and
  $\wh{h}(s)\leq \wh{h}(t)$. Thus $h=e_i$ for some $i\in I$, and since
  $e$ is the subdirect product of all $e_i$'s, we have that $e_i$
  factorizes through $e$. This shows that $A$ satisfies $e$.

  \medskip\noindent($\Leftarrow$) Suppose that $A$ satisfies $e$, and
  let $h_0\colon X\to A$. Extend $h_0$ to a continuous $\Sigma$-homomorphism
  $h\colon T_\Sigma(X_\bot)\to A$. By assumption, there exists
  $g\colon E\to A$ with $h=g\o e$. This implies $h_0 = g\o e_0$. Since
  $\wh{e_0}(s)$ and $\wh{e_0}(t)$ are defined and
  $\wh{e_0}(s)\leq \wh{e_0}(t)$, \autoref{lem:hextprops}(1) shows that
  $\wh{h_0}(s) = g\o \wh{e_0}(s) \leq g\o \wh{e_0}(t) =
  \wh{h_0}(t)$. Thus, $A$ satisfies $s\leq t$.\qed
\end{enumerate}
\doendproof
\textbf{Step 4.} From the above lemma and \autoref{thm:hspeq}, we obtain the following result of Ad\'amek, Nelson, and Reiterman:
\begin{theorem}[Continuous HSP Theorem \cite{adamek85}]
A class of continuous $\Sigma$-algebras is a variety (i.e. closed under homomorphic images with respect to surjective maps, subalgebras, and products) iff it is axiomatizable by continuous inequalities.
\end{theorem}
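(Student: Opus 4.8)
The plan is to read off the theorem as a direct instance of the equational HSP theorem \autoref{thm:hspeq}, after translating the abstract data of Step~1 into the concrete vocabulary of continuous $\Sigma$-algebras.

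First I would apply \autoref{thm:hspeq} with the parameters fixed in Step~1, namely $\A=\A_0=\wAlg{\Sigma}$, the factorization system $(\E,\M)$ of dense morphisms and closed order-embeddings, $\Lambda$ all cardinals, and $\X=\{\,T_\Sigma(X_\bot):X\in\Set\,\}$, for which \autoref{asm:setting} were verified above. The theorem then states that $\V\seq\wAlg{\Sigma}$ is equationally presentable, in the sense of \autoref{constr:var}, if and only if it is closed under $\E_\X$-quotients, $\M$-subobjects and $\Lambda$-products. It remains only to unwind these notions. Closure under $\Lambda$-products is closure under arbitrary products; closure under $\M$-subobjects is closure under (closed) subalgebras; and by the lemma identifying $\E_\X$ in this setting as the class of surjective morphisms, closure under $\E_\X$-quotients is closure under homomorphic images along surjective maps. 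Hence the categorical notion of \variety coincides exactly with the one appearing in the statement.

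Second I would translate ``equationally presentable''. By \autoref{rem:singlequot}, as invoked in Step 2/3, each categorical equation over an object $T_\Sigma(X_\bot)\in\X$ may be presented by a single quotient $e\colon T_\Sigma(X_\bot)\epito E$, and the immediately preceding lemma proves that such quotients are expressively equivalent to continuous inequalities $s\le t$ (with $s,t\in S_\Sigma(X)$, interpreted via the partial extension $\wh h$). Consequently a class is of the form $\V(\Eq)$ for some family of categorical equations $\Eq$ precisely when it is axiomatizable by continuous inequalities. Combining this with the previous paragraph yields the theorem.

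The only genuinely non-routine work lies in that expressive-equivalence lemma, and it has already been carried out: its ``$\Leftarrow$'' direction rests on the Homomorphism Theorem \autoref{L:homcont} for continuous algebras together with careful bookkeeping of when the partial map $\wh h$ is defined --- an analysis forced on us precisely because here $\E$ consists of dense, rather than surjective, morphisms. Given that lemma and \autoref{thm:hspeq}, the present statement is a one-line corollary, so I anticipate no further obstacle at this step.
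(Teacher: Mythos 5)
Your proposal is correct and follows exactly the paper's own derivation: Step 4 of the paper likewise obtains the theorem by combining the expressive-equivalence lemma between single-quotient equations and continuous inequalities with \autoref{thm:hspeq}, using the verified \autoref{asm:setting} and the identification of $\E_\X$ as the surjective morphisms to unwind the closure conditions. No further comment is needed.
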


\subsection{Algebras for a Monad}

In this section, we show how to recover Manes's HSP theorem
\cite{manes76} for algebras for an arbitrary monad $\MT = (T,\mu,\eta)$ on
$\Set$. Choose the parameters
\begin{itemize}
\item $\A=\A_0=\Set^\MT$, the category of $\MT$-algebras and $\MT$-homomorphisms;
\item $(\E,\M)=$ (surjective $\MT$-homomorphisms, injective $\MT$-homomorphisms);
\item $\Lambda =$ all cardinal numbers;
\item $\X=$ all free $\MT$-algebras $TX=(TX,\mu_X)$ with $X\in \Set$.
\end{itemize}
 Since all sets are projective, we get $\E_\X=\E$ (again by \autoref{rem:exlift}). Thus our \autoref{asm:setting} are satisfied: for (1), use that products of $\MT$-algebras are formed on the level of sets.  (2) is trivially satisfied, and (3) is obvious. Instantiating \autoref{D:var}, a \emph{variety of $\MT$-algebras} is a class of $\MT$-algebras closed under quotient algebras, subalgebras, and products. \emph{Quotient monads} of $\MT$ are represented by monad morphisms $q\colon \MT\epito \MT'$ with surjective components. The following result is an easy consequence of our general correspondence between varieties and equational theories (see \autoref{thm:hsp}):

\begin{theorem}[Manes]\label{thm:hspmonad}
  Varieties of $\MT$-algebras correspond bijectively to quotient
  monads of $\MT$.
\end{theorem}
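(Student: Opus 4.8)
The plan is to derive this from the general HSP theorem (\autoref{thm:hsp}) by identifying the poset of equational theories in the chosen setting with the poset of quotient monads of $\MT$. First I would observe that, since $\E_\X = \E$ = surjective $\MT$-homomorphisms and $\Lambda$ = all cardinals, we are in the situation of \autoref{rem:singlequotth}: an equational theory is a family of quotients $(e_X\colon TX \epito E_X)_{X\in \Set}$ with $E_X\in \Set^\MT$ such that for every $\MT$-homomorphism $h\colon TX\to TY$ (equivalently, every map $h_0\colon X\to TY$) the composite $e_Y\circ h$ factorizes through $e_X$. The key step is then to show that such families correspond exactly to quotient monads. Given a monad morphism $q\colon \MT\epito\MT'$ with surjective components, put $e_X\colon TX \epito T'X$ to be the component $q_X$, viewed as a surjective $\MT$-homomorphism onto $T'X$ with its $\MT$-algebra structure restricted along $q$; substitution invariance is exactly naturality of $q$ together with compatibility with multiplication. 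Conversely, given an equational theory in the above sense, I would build the quotient monad by setting $T'X = E_X$, defining the unit as $q_X\circ \eta_X$ (via the factorization) and the multiplication using the universal property of the coimages $e_X$, then check the monad laws and that $q = (e_X)_X$ is a monad morphism.

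The main technical content is that these two constructions are mutually inverse and order-preserving, so that they restrict to an isomorphism of posets between quotient monads of $\MT$ (ordered by "$q$ factors through $q'$") and equational theories (ordered by $\TT\leq\TT'$). Combining this poset isomorphism with \autoref{thm:hsp}, which gives a dual isomorphism between equational theories and varieties of $\MT$-algebras, yields the claimed bijective correspondence between varieties of $\MT$-algebras and quotient monads of $\MT$ (with the orders matching up as expected: smaller quotient monad $\leftrightarrow$ larger variety). Concretely, to a variety $\V$ one associates the theory $\TT(\V)$ of \autoref{constr:eqnth}, whose $X$-component collects all quotients of $TX$ landing in $\V$; the least such quotient $e_X\colon TX\epito E_X$ has its codomain the free $\V$-algebra on $X$, and the assignment $X\mapsto E_X$ is the quotient monad.

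The step I expect to be the main obstacle is verifying that an abstract equational theory $(e_X)_{X\in\Set}$ genuinely assembles into a \emph{monad}: one must produce the multiplication $\mu'_X\colon T'T'X \to T'X$ and check associativity and the unit laws. The multiplication is obtained by applying the substitution-invariance condition to a carefully chosen $\MT$-homomorphism built from $\mu_X$ and the $e_X$'s — specifically, one needs $e_X\circ \mu_X$ to factor through $e_{T'X}\circ T e_X$ (or an analogous composite), and this requires knowing that the family is closed under the relevant substitutions, which is where the interplay between the free-algebra structure and the quotients must be handled with care. The remaining monad-law verifications are routine diagram chases using the epimorphism property of the $e_X$'s (every surjection in $\Set^\MT$ is epic), and the order-preservation and inverse-bijection claims are straightforward once the monad structure is in place.
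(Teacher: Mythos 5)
Your proposal matches the paper's proof in essentially every respect: both reduce the statement via \autoref{thm:hsp} to a correspondence between equational theories (in the single-quotient form of \autoref{rem:singlequotth}) and quotient monads, construct the monad structure on $X\mapsto E_X$ by applying substitution invariance to a lift (obtained from projectivity of the free algebra $TT'X$) of the $\MT$-algebra structure on $E_X$, and check that the two constructions are mutually inverse. The step you flag as the main obstacle --- producing $\mu'_X$ so that $e_X\o\mu_X$ factors through $e_{T'X}\o Te_X$ --- is exactly how the paper proceeds, so no changes are needed.
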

\begin{rem}
  Recall from \autoref{rem:singlequotth} that
  in the current setting an equational theory is given by a family of
  single quotients $(e_X: TX \epito E_X)_{X \in \Set}$ which is
  \emph{substitution invariant} in the sense that for every
  $\MT$-homomorphism $h: TX \to TY$ there exists a $\MT$-homomorphism
  $\bar h: E_X \to E_Y$ with $\bar h \o e_X = e_Y \o h$.
\end{rem}
\proof
In view of \autoref{thm:hsp}, we only need to verify that
equational theories correspond to quotient monads of $\MT$. 

\begin{enumerate}[wide,labelindent=0pt,itemsep=5pt]
\item\label{thm:hspmonad:1} Every quotient monad $q\colon \MT\epito
  \MT'$ induces an equational theory
  $(q_X\colon TX\epito T'X)_{X\in \Set}$, where the $\MT$-algebra
  structure on $T'X$ is given by
  \[
    TT'X\xra{q_{T'X}} T'T'X \xra{\mu_X'} T'X.
  \]
  Indeed, let $h\colon TX\to TY$ be a $\MT$-homomorphism. Then
  the map $q_Y\o h\o \eta_X\colon X\to T'Y$ uniquely
  extends to a $\MT'$-homomorphism $\ol h\colon T'X \to T'Y$ with
  $\ol h\o \eta_X' = q_Y\o h\o \eta_X\colon X\to T'Y $. By the
  naturality of $q$, $\bar h$ is then also a $\MT$-homomorphism.
  It follows that the square of $\MT$-homomorphisms below commutes, as it
  commutes when precomposed with the universal map $\eta_X:X \to TX$:
  \[
    \xymatrix{
      TX \ar[r]^-{h} \ar@{->>}[d]_{q_X} & TY 
      \ar@{->>}[d]^{q_Y}
      \\
      T'X \ar[r]_-{ \ol h} & T'Y
    }
  \]
  Thus $(q_X)_{X\in \Set}$ is an equational theory.

\item\label{thm:hspmonad:2} Conversely, suppose that
  $(q_X\colon TX \epito T'X)_{X\in \Set}$ is an equational theory. Let
  us denote by
  $\alpha'_X: TT'X \to T'X$ the $\MT$-algebra structure on $T'X$. We
  show that the object map $X\mapsto T'X$ can be extended to a monad
  $\MT' = (T',\mu',\eta')$ on $\A$ such that $q\colon \MT\epito \MT'$ is a monad
  morphism. The action of $T'$ on morphisms, the unit and the
  multiplication of $\MT'$ are uniquely determined by the commutative
  diagrams below:
  \begin{equation}\label{diag:mon}
    \xymatrix{
      TX  \ar[r]^-{Th} \ar@{->>}[d]_{q_{X}} & TY 
      \ar@{->>}[d]^{q_Y}
      \\
      T'X \ar[r]_-{T'h} & T'Y
    }
    \qquad
    \xymatrix{
      X \ar[dr]_{\eta_X'} \ar[r]^{\eta_X} & TX \ar@{->>}[d]^{q_X}& \\
      & T'X 
    }
    \quad
    \xymatrix{
      TT'X  \ar[dr]^{\alpha_X'}\ar[r]^-{\alpha_X} \ar@{->>}[d]_{q_{T'X}} & TX  
      \ar@{->>}[d]^{q_X}
      \\
      T'T'X \ar[r]_-{\mu_X'} & T'X
    }
  \end{equation}
  In more detail:
  \begin{enumerate}
  \item For each map $h\colon X\to Y$, by substitution invariance, there
    exists a (necessarily unique) $\MT$-homomorphism $T'h\colon T'X\to T'Y$ making the
    left-hand square commute. This makes $T'\colon \Set\to\Set$ a functor
    and $q\colon T\epito T'$ a natural transformation.
  \item The unit $\eta'$ is defined by
    $\eta' := q\o \eta$.
  \item To define the multiplication $\mu':T'T' \to T'$, note that for
    every set $X$, the map $\alpha_X'$ is a $\MT$-homomorphism
    $\alpha_X'\colon TT'X\to T'X$ by the associative law of the $\MT$-algebra $(T'X,\alpha_X')$. By projectivity of
    $TT'X$ there exists some $\MT$-homomorphism
    $\alpha_X\colon TT'X \to TX$ with
    $q_X\o\alpha_X = \alpha_X'$, and thus substitution invariance
    gives a (necessarily unique) $\mu_X'$ making the outside of the 
    right-hand diagram commute. Note that $\mu_X'$ is
    independent of the choice of $\alpha_X'$ because
    $\mu_X' \o q_{T'X}= \alpha_X'$ and $q_{T'X}$ is epimorphic.
  \end{enumerate}
  Using that $q_X: TX \epito T'X$ is a $\MT$-homomorphism we
  furthermore obtain the following commutative diagram:
  \[
    \xymatrix{
      TTX \ar@{->>}[d]_{Tq_X} \ar[r]^-{\mu_X} & TX \ar@{->>}[dd]^{q_X}
      \\
      TT'X\ar[rd]^-{\alpha_X'} \ar@{->>}[d]_{q_{T'X}}\\
      T'T'X \ar[r]_-{\mu_X'}
      &
      T'X
    }
  \]
  From the commutativity of this diagram, the left-hand and middle
  diagram in~\eqref{diag:mon}, and using that $q_X$, $Tq_X$ and
  $q_{T'X}$ are epimorphic, it is now a
  straightforward calculation to prove that $\eta$ and $\mu$ are natural
  transformations, that they satisfy the monad laws, and that $q$ is a
  monad morphism. We leave this easy task to the
  reader. 
  
\item Finally, the two constructions described in~\ref{thm:hspmonad:1}
  and~\ref{thm:hspmonad:2} are easily seen to be mutually inverse
  (using again that $q_X$ is epimorphic to see that one gets back to
  $\mu'$ when going from \ref{thm:hspmonad:1} to \ref{thm:hspmonad:2}
  and then back).
  \qed
\end{enumerate}
\doendproof

\takeout{
This above proof uses the following lemma:
\begin{lemma}
  Let $\MT = (T,\eta,\mu)$ be a monad, $T'$ a functor and $q: T \epito
  T'$ be a natural transformation with surjective components. Suppose
  we have families of morphisms $\eta'_X: X \to T'X$ and $\mu'_X: T'T'X
  \to T'X$ such that
  \[
    \xymatrix{
      X \ar[r]^-{\eta_X} 
      \ar[rd]_{\eta'_X} & TX \ar@{->>}[d]^{q_X} \\
      & T'X
    }
    \qquad
    \xymatrix{
      TT \ar[r]^-\mu
      \ar@{->>}[d]_{q*q}
      &
      T
      \ar@{->>}[d]^q
      \\
      T'T'\ar[r]_-{\mu'} & T'
      }
  \]
  Then $(T',\eta',\mu')$ is a quotient monad of $\MT$ via the monad
  morphism $q$. 
\end{lemma}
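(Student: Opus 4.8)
The plan is to verify directly that $\MT' = (T',\eta',\mu')$ satisfies the monad laws and that $q$ is then a monad morphism, the bulk of the work being a handful of diagram chases that all follow the same pattern: precompose the identity to be proved with a suitable epimorphism assembled from components of $q$ and iterates of $T$, rewrite using the two hypotheses (the triangle $q_X\o\eta_X = \eta'_X$ and the square $q\o\mu = \mu'\o(q\ast q)$), naturality of $q$, and the corresponding law of $\MT$, and finally cancel the epimorphism. The one structural fact that makes this work is that we are over $\Set$: each component $q_Z\colon TZ\epito T'Z$ is a surjection, hence a split epimorphism, and split epimorphisms are preserved by every functor; consequently $T$ (and its iterates), as well as $T'$, carry the components of $q$ to epimorphisms, and every composite of such maps — in particular the components of the horizontal composite $q\ast q$ (given by $q_{T'}\o Tq = T'q\o q_T$, the two descriptions agreeing by naturality of $q$) and of the triple composite $q\ast q\ast q$ — is again epic.

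First I would check naturality of $\eta'$ and $\mu'$. For $\eta'\colon\Id\Ra T'$ this is immediate: for $h\colon X\to Y$,
\[ T'h\o\eta'_X = T'h\o q_X\o\eta_X = q_Y\o Th\o\eta_X = q_Y\o\eta_Y\o h = \eta'_Y\o h, \]
using naturality of $q$ and of $\eta$. For $\mu'\colon T'T'\Ra T'$ one precomposes with the epimorphism $(q\ast q)_X$ and rewrites using the square for $q$ at $X$ and at $Y$, naturality of $q$, of $\mu$, and of $q\ast q$; cancelling the epimorphism yields $T'h\o\mu'_X = \mu'_Y\o T'T'h$.

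Next come the monad laws for $\MT'$. The right unit law follows by precomposing $\mu'_X\o T'\eta'_X$ with $q_X$ and using naturality of $q$, the triangle, the formula $(q\ast q)_X = q_{T'X}\o Tq_X$, the square, and the right unit law of $\MT$:
\[ \mu'_X\o T'\eta'_X\o q_X = \mu'_X\o q_{T'X}\o T\eta'_X = \mu'_X\o q_{T'X}\o Tq_X\o T\eta_X = \mu'_X\o(q\ast q)_X\o T\eta_X = q_X\o\mu_X\o T\eta_X = q_X; \]
the left unit law is analogous (precompose with $q_X$, use naturality of $\eta$ and the left unit law of $\MT$). Associativity $\mu'_X\o\mu'_{T'X} = \mu'_X\o T'\mu'_X$ is obtained the same way after precomposing with the epimorphism $e = T'T'q_X\o T'q_{TX}\o q_{TTX}\colon TTTX\epito T'T'T'X$: using naturality of $\mu'$ (just established), the square for $q$ at $X$ and at $TX$, naturality of $q$, and the associativity law of $\MT$, both sides reduce to $q_X\o\mu_X\o\mu_{TX} = q_X\o\mu_X\o T\mu_X$, and $e$ cancels. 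Finally, $q\colon\MT\to\MT'$ is a monad morphism because its two defining equations are exactly the hypothesized triangle and square, and its components are surjective by assumption; hence $\MT'$ is a quotient monad of $\MT$ via $q$. The main obstacle is purely bookkeeping in the associativity step — one must pick the correct epimorphism out of $TTTX$ and invoke the square for $q$ at two different objects in the right order — since no property of $q$ is used beyond its being a componentwise-epic natural transformation compatible with $\eta$ and $\mu$, which is precisely the data the lemma isolates.
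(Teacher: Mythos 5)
Your proof is correct and follows exactly the route the paper intends: the paper itself leaves this verification to the reader (``using that $q_X$, $Tq_X$ and $q_{T'X}$ are epimorphic, it is now a straightforward calculation\dots''), and your diagram chases -- precomposing each law with an epimorphism assembled from components of $q$ and iterates of $T$, rewriting via the triangle, the square, naturality, and the monad laws of $\MT$, then cancelling -- are precisely that calculation, including the one genuinely load-bearing observation that surjections in $\Set$ split and hence stay epic under every functor. Nothing further is needed.
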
}

\subsection{Banaschewski-Herrlich Theorem}
Let $\A$ be a category with a proper factorization system $(\E,\M)$, and suppose that $\A$ (1) has products, (2) is $\E$-co-wellpowered, and (3) has enough $\E$-projectives, i.e. every object is a quotient of some $\E$-projective object. Choose the parameters of our framework as follows:
\begin{itemize}
\item  $\A_0= \A$;
\item   $\Lambda =$ all cardinal numbers;
\item   $\X  = $ all $\E$-projectives.
\end{itemize}
By definition we $\E_\X=\E$, and our \autoref{asm:setting} are clearly satisfied.
Recall from \autoref{rem:singlequot} that in
this case an equation is given by a single quotient $e: X\epito E$
with $X\in \X$. \autoref{thm:hspeq} then
gives the following classical result:

\begin{theorem}[Banaschewski and Herrlich~\cite{BanHerr1976}]
  Let $\A$ be a category with a proper factorization system satisfying (1), (2), (3).  Then a subclass $\V\seq\A$ is equationally presentable iff it is closed under quotients, subalgebras
  and products.
\end{theorem}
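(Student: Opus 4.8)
The plan is to derive the theorem as a direct instance of the equational version of our HSP theorem, \autoref{thm:hspeq}. The entire argument consists of checking that the chosen parameters $\A_0 = \A$, $\Lambda =$ all cardinal numbers, and $\X =$ the class of $\E$-projective objects satisfy our standing hypotheses, of identifying the derived class $\E_\X$, and of matching up terminology; once this dictionary is set up, \autoref{thm:hspeq} gives the result with nothing left to prove.

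First I would record that $\E_\X = \E$. By definition an $\E$-projective object is one that is projective with respect to \emph{every} morphism in $\E$; since $\X$ consists of all such objects, each $X \in \X$ is projective with respect to each $e \in \E$, so $\E_\X = \{\, e \in \E : \text{every } X \in \X \text{ is projective w.r.t.~} e \,\} = \E$. Next I would verify \autoref{asm:setting}: \ref{A1} holds because $\A$ has products by hypothesis (1); \ref{A2} is immediate since $\A_0 = \A$ is trivially closed under isomorphisms, products (which exist by (1)), and all subobjects, in particular $\X$-generated ones; and \ref{A3} holds because hypothesis (3) says every object of $\A = \A_0$ is an $\E$-quotient --- hence, by $\E_\X = \E$, an $\E_\X$-quotient --- of some $\E$-projective object, i.e.\ of some object of $\X$.

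It then remains to translate the abstract conclusion of \autoref{thm:hspeq} into the classical vocabulary. With $\A_0 = \A$, $\E_\X = \E$, and $\Lambda =$ all cardinals, \autoref{D:var} literally says that a \variety is a subclass of $\A$ closed under $\E$-quotients, $\M$-subobjects, and arbitrary products, which is exactly ``closed under quotients, subalgebras, and products'' in the sense fixed in the Preliminaries. On the syntactic side I would appeal to \autoref{rem:singlequot}: since $\A$ is $\E$-co-wellpowered by hypothesis (2) and $\Lambda =$ all cardinals, every equation is a set with a least element and may therefore be presented as a single quotient $e\colon X \epito E$ with $X \in \X$ --- precisely Banaschewski and Herrlich's notion of equation --- so that ``equationally presentable'' in their sense coincides with ``equationally presentable'' in the sense of \autoref{constr:var}. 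Plugging these identifications into \autoref{thm:hspeq} yields the stated equivalence.

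Since every step is routine verification, there is no genuine mathematical obstacle; the only points demanding a moment's attention are (a) being explicit that $\E$-projectivity means projectivity against all of $\E$, which is what forces $\E_\X = \E$ rather than a merely smaller class, and (b) invoking $\E$-co-wellpoweredness at exactly the right point, namely in \autoref{rem:singlequot}, so that our general notion of equational presentability genuinely matches the single-quotient one used in the classical statement.
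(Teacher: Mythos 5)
Your proposal is correct and follows exactly the paper's own route: instantiate the parameters with $\A_0=\A$, $\Lambda=$ all cardinals, $\X=$ the $\E$-projectives, observe $\E_\X=\E$, check Assumptions~\ref{asm:setting}, use \autoref{rem:singlequot} (via $\E$-co-wellpoweredness) to reduce equations to single quotients, and invoke \autoref{thm:hspeq}. The paper compresses all of this into two sentences, but your more explicit verification is the same argument.
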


\end{document}
